\def\danupon#1{\textcolor{orange}{DN: #1}}
\def\sayan#1{\textcolor{blue}{SB: #1}}
\def\monika#1{\textcolor{red}{MH: #1}}
\newcommand{\xiaowei}[1]{\textcolor{red}{#1}}
\def\danupon#1{}
\def\sayan#1{}
\def\monika#1{}
\newcommand{\xiaowei}[1]{}
\newenvironment{proof}{{\bf Proof:  }}{\hfill\rule{2mm}{2mm}\vspace*{5pt}}
\numberwithin{figure}{section}
\numberwithin{equation}{section}
\newtheorem{definition}{Definition}[section]
\newtheorem{property}{Property}[section]
\newtheorem{corollary}{Corollary}[section]
\newtheorem{theorem}{Theorem}[section]
\newtheorem{lemma}{Lemma}[section]
\newtheorem{claim}{Claim}[section]
\newtheorem{observation}{Observation}[section]
\newtheorem{invariant}{Invariant}[section]
\newtheorem{assumption}{Assumption}[section]
\renewcommand{\poly}{\operatorname{poly}}
\newcommand{\opt}{\mathsf{OPT}}
\title{Dynamic Set Cover: Improved Amortized and Worst-Case Update Time}
\author{Sayan Bhattacharya\thanks{University of Warwick, UK. Email: {\tt S.Bhattacharya@warwick.ac.uk}} \and Monika Henzinger\thanks{University of Vienna, Austria. Email: {\tt monika.henzinger@univie.ac.at}} \and Danupon Nanongkai\thanks{KTH, Stockholm, Sweden. Email: {\tt danupon@kth.se }} \and Xiaowei Wu\thanks{IOTSC, University of Macau, China. Email: {\tt xiaoweiwu@um.edu.mo}. This work was done in part while the author was a postdoc at the University of Vienna.}}
\date{}
\begin{document}

\renewcommand{\S}{\mathcal{S}}
\newcommand{\U}{\mathcal{U}}

\begin{titlepage}
	\maketitle
	\pagenumbering{roman}
	\begin{abstract}


In the dynamic minimum set cover problem, a challenge is to minimize the update time while guaranteeing close to the optimal $\min(O(\log n), f)$ approximation factor. (Throughout,  $m$, $n$, $f$, and $C$ are parameters denoting the maximum number of sets, number of elements, frequency, and the cost range.)  In the {\em high-frequency} range, when $f=\Omega(\log n)$, this was achieved by a deterministic $O(\log n)$-approximation algorithm with  $O(f \log n)$ amortized update time [Gupta~et~al. STOC'17]. In the {\em low-frequency} range,  the line of work by Gupta et al.~[STOC'17], Abboud et al.~[STOC'19], and  Bhattacharya et al.~[ICALP'15, IPCO'17, FOCS'19] led to  a deterministic $(1+\epsilon)f$-approximation algorithm with $O(f \log (Cn)/\epsilon^2)$ amortized update time. In this paper we improve the latter update time and provide the first bounds that subsume (and sometimes improve) the state-of-the-art dynamic vertex cover algorithms. We obtain:

{\bf 1. $(1+\epsilon)f$-approximation ratio in $O(f\log^2 (Cn)/\epsilon^3)$  worst-case update time:} No non-trivial worst-case update time was previously known for dynamic set cover. Our bound subsumes and improves by a logarithmic factor the $O(\log^3 n/\poly(\epsilon))$ worst-case update time for unweighted dynamic vertex cover (i.e., when $f=2$ and $C=1$) by  Bhattacharya et al.~[SODA'17].

{\bf 2. $(1+\epsilon)f$-approximation ratio in	$O\left((f^2/\epsilon^3)+(f/\epsilon^2) \log C\right)$ amortized update time:}
This result improves the previous $O(f \log (Cn)/\epsilon^2)$ update time bound for most values of $f$ in the low-frequency range, i.e. whenever $f=o(\log n)$. It is the first that is independent of $m$ and $n$. It subsumes the constant amortized update time of Bhattacharya and Kulkarni [SODA'19] for unweighted dynamic vertex cover (i.e., when $f = 2$ and $C = 1$).

These results are achieved by leveraging the {\em approximate complementary  slackness} and {\em background schedulers} techniques. These techniques were used in the {\em local} update scheme for dynamic vertex cover. Our main technical contribution is to adapt these techniques within the {\em global} update scheme of Bhattacharya~et~al. [FOCS'19] for the dynamic set cover problem.

	\end{abstract}
	\newpage
	\setcounter{tocdepth}{2}
	\tableofcontents
\end{titlepage}

\newpage

\part{EXTENDED ABSTRACT}

\newpage
\pagenumbering{arabic}

\section{Introduction}
\label{main:sec:intro}
\label{sec:perspective}

In the {\em minimum set cover} problem, we get a universe of elements $\U$ and a collection of sets $\S \subseteq 2^{\U}$  as input, where $\bigcup_{s \in \S} s = \U$ and each set $s \in \S$ has a {\em cost} $c_s > 0$ associated with it. A collection of sets $\S' \subseteq \S$ forms a {\em set-cover} of $\U$ iff $\bigcup_{s \in \S'} s = \U$. The goal is to compute a set cover $\S'$ of $\U$ with minimum total cost $c(\S') = \sum_{s \in \S'} c_s$. This is one of the most fundamental problems in approximation algorithms. In recent years, this problem has also received significant attention in the {\em dynamic setting}, where the input keeps changing over time. Specifically, here we want to design a {\em dynamic algorithm} for minimum set cover that can handle the following operations:

\smallskip
\noindent {\em Preprocessing:} Initially, the algorithm receives as input a universe of elements $\U$, a collection of sets $\S \subseteq 2^{\U}$ with $\bigcup_{s \in \S} s = \U$, and a cost $c_s \geq 0$ for each set $s \in \S$.  

\smallskip
\noindent {\em Updates:} Subsequently, the input keeps changing via a sequence of updates, where each update either (1) deletes an element $e$ from the universe $\U$ and from every set $s \in \S$ that contains $e$, or (2) inserts an element $e$ into the universe $\U$ and specifies the sets in $\S$ that the element $e$ belongs to.

\smallskip
After each update, we would like to maintain an approximate cost of the optimal set cover of the updated set system. (Some algorithms also allow accessing a solution with such cost. See the remark after Theorem~\ref{th:main:intro:both results}.) 
\sayan{I removed this line: The time taken by a dynamic algorithm to handle the preprocessing step is referred to as its {\em preprocessing time}.}
A dynamic algorithm has an {\em amortized update time} of $O(t)$ iff it takes $O((\alpha + \beta) \cdot t)$ total time (including the time spent on preprocessing) to handle any sequence of $\beta \geq 1$ updates, where $\alpha$ is the number of elements being preprocessed.
We want to design a dynamic algorithm with small approximation ratio and  update time.
We get two main results:




\begin{theorem}
	\label{th:main:intro:both results}\label{th:main:intro:amortized time}
	There are deterministic dynamic algorithms for the minimum set cover problem with $(1+\epsilon)f$-approximation ratio and 
\begin{enumerate}
	\item a worst-case update time of  $O(f\log^2 (Cn)/\epsilon^3)$, and 
	\item an amortized update time of $O\left(\frac{f^2}{\epsilon^3}+\frac{f}{\epsilon^2}\log C\right)$. 
\end{enumerate}	
 Here, the symbol $f$ denotes an upper bound on the maximum frequency of any element across all the updates,\footnote{Frequency of an element $e \in \U$ is defined as the number of sets in $\S$ that contain $e$.},  $C \geq 1$ is a parameter such that $1/C \leq c_s \leq 1$ for all sets $s \in \S$, $m$ is the number of sets in $\S$, and $n$ is the maximum number of elements in the universe $\U$ across all the updates. 
\end{theorem}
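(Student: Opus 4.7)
The natural starting point is the LP primal--dual framework. Write the set cover LP with variables $x_s \in [0,1]$ and the covering constraint $\sum_{s \ni e} x_s \geq 1$, and its dual with variables $y_e \geq 0$ and packing constraint $\sum_{e \in s} y_e \leq c_s$. By approximate complementary slackness, if we maintain an integral primal $\hat x$ and a feasible dual $y$ such that every set $s$ with $\hat x_s = 1$ satisfies $\sum_{e \in s} y_e \geq c_s/(1+\epsilon)$, and every element is covered, then picking the support of $\hat x$ gives a $(1+\epsilon)f$-approximation, because each element's dual $y_e$ is charged at most $f$ times and the charge pays the cost of every chosen set up to a $(1+\epsilon)$ factor.

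I would implement this through a hierarchical level data structure in the spirit of Bhattacharya~et~al.~[FOCS'19]. Each set $s$ is assigned a level $\ell_s \in \{0,1,\ldots, L\}$ with $L = \Theta(\log(Cn)/\epsilon)$, and each element $e$ is assigned $\ell_e = \max_{s \ni e} \ell_s$. The dual is defined by $y_e = (1+\epsilon)^{-\ell_e}$ (suitably scaled by $\epsilon$), so dual feasibility follows automatically from the level assignment. The primal selects every set whose dual constraint is (approximately) tight at its own level. The algorithm maintains the invariant that for every set $s$, the slack $c_s - \sum_{e \in s} y_e$ lies in a bounded window that depends on $\ell_s$ and on an additional $(1+\epsilon)$ \emph{approximate} slackness parameter borrowed from the Bhattacharya--Kulkarni~[SODA'19] vertex cover analysis; moving $\ell_s$ up by one unit strictly reduces this slack geometrically, so a single set can only migrate across $O(1/\epsilon)$ levels before the invariant is restored.

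For the amortized bound, each insertion/deletion of an element $e$ first touches $O(f)$ sets to adjust the incident duals, then possibly triggers a cascade of level moves. I would control the cascade via a potential function $\Phi$ that charges $\Theta(f/\epsilon)$ credits per unit of dual slack at each level. The key computation is that a set moving one level pays $O(f)$ rescanning work but releases $\Theta(f/\epsilon)$ potential, while an element update changes $\Phi$ by at most $O(f \log C / \epsilon^2)$ (the $\log C$ term arises only from the initial placement of a newly inserted element and is independent of $n$, because the level is determined solely by existing neighbouring sets). Combining these two contributions gives the claimed $O(f^2/\epsilon^3 + (f/\epsilon^2)\log C)$ amortized update time; the independence of $m$ and $n$ is obtained precisely because every accounting step is charged to an incident element--set pair rather than to the global solution.

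For the worst-case bound, I would run, in parallel, $O(\log(Cn)/\epsilon)$ \emph{background schedulers}, one per level of the hierarchy, in the style of Bhattacharya~et~al.~[SODA'17]. Each scheduler is responsible for rebuilding its level in $\Theta(1/\epsilon)$ phases while the ``live'' copy of the data structure continues to answer updates using a slightly stale but still $(1+\epsilon)f$-approximate dual. The amortized work at level $\ell$, which is $O(f/\epsilon^2)$ per update, is spread over an epoch of $\Theta(1/\epsilon)$ operations, and summed over the $O(\log(Cn)/\epsilon)$ levels one gets the $O(f\log^2(Cn)/\epsilon^3)$ worst-case bound. The main obstacle I foresee is exactly this de-amortization step: in the \emph{local} vertex-cover schedulers one rebuilds the neighbourhood of a single vertex in isolation, whereas the \emph{global} set-cover scheme requires the rebuilt levels to remain mutually consistent (the dual of an element depends on the maximum level of all containing sets, which a background scheduler might have only partially updated). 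Overcoming this will require carefully interleaving the per-level schedulers so that invariants hold for the composite $(x, y)$ pair at every intermediate step, and bounding the lag between the scheduled and the current state by $O(1/\epsilon)$ units of slack so that the approximate complementary slackness inequality is preserved throughout.
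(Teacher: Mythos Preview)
Your proposal captures the primal--dual hierarchy correctly, but it has a substantive gap in the amortized analysis and mis-identifies the mechanism behind the worst-case bound.

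\textbf{Amortized bound.} Your cascade analysis only treats the case where a set's dual load $\sum_{e\in s} y_e$ becomes too \emph{large}: you move $\ell_s$ up, pay $O(f)$ work, release $\Theta(f/\epsilon)$ potential. You never say what happens when an element is \emph{deleted} and a tight set's load becomes too small. If you try to handle this symmetrically by moving $\ell_s$ down, the weights of elements in $s$ increase, which can over-saturate \emph{other} sets $s'\ni e$, forcing them up, which in turn may under-saturate yet other sets, and so on. For $f=2$ (vertex cover) one can bound this with a potential as in \cite{soda/BhattacharyaK19}, but the paper explicitly notes that this local potential analysis does not extend to $f>2$. The paper's fix is \emph{asymmetric}: upward violations are handled locally by promotion, but downward violations are absorbed into a per-set \emph{dead weight} $\phi(s)$ that keeps $w(s)+\phi(s)$ above the tightness threshold without moving anything. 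Only when the aggregate dead weight exceeds $\epsilon\cdot(c(T)+f\cdot w(\U))$ is a \emph{global} rebuild of all levels $\le k$ triggered, for the smallest $k$ where the inequality fails; that rebuild costs $O(f|E_{\le k}|/\epsilon^2+\log C/\epsilon)$ and is paid for by the released dead-weight potential $\sum_s (1+\epsilon)^{\ell(s)}\phi(s)$. Without this hybrid local-up/global-down scheme your potential argument cannot close, and in particular you have no source for the $f^2/\epsilon^3$ term (in the paper it comes from the fact that one insertion can promote up to $f$ sets, each releasing $O(f/\epsilon^3)$ up-potential that later becomes down-potential).

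\textbf{Worst-case bound.} You have the right skeleton---$L=\Theta(\log(Cn)/\epsilon)$ parallel background schedulers---but two points are off. First, your arithmetic (``$O(f/\epsilon^2)$ per level spread over $\Theta(1/\epsilon)$ epochs, summed over $L$ levels'') yields only one $\log(Cn)$ factor. In the paper the second factor arises because Scheduler($k$) must, upon finishing, \emph{synchronize} all lower schedulers in $O(1)$ time; to make that possible it runs $k+1\le L$ copies of its rebuild simultaneously, one destined for each local view $i\le k{+}1$. So each of the $L$ schedulers does $O(fL/\epsilon)$ work per update, giving $O(fL^2/\epsilon)=O(f\log^2(Cn)/\epsilon^3)$. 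Second, you propose to ``carefully interleave the per-level schedulers so that invariants hold for the composite $(x,y)$ at every intermediate step.'' The paper does the opposite: it abandons strict consistency and lets each scheduler keep its own \emph{local view} (its own levels $l_k(\cdot)$, weights, and extra-weights $\delta_k(s)$ encoding contributions from above). Only two mild invariants tie the views together (lower-view set consistency $S_{i+1}(\le i)=S_i(\le i)$ and upper-view passive/dead monotonicity), and the output set cover is read off by taking, for each level $k$, the sets that \emph{Scheduler($k$) itself} places at level $k$. Trying to enforce a single consistent $(x,y)$ across all schedulers is exactly the obstruction the paper sidesteps.
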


%
\noindent {\bf Remark:} Both our algorithms maintain an $(1+\epsilon)f$-approximation to the {\em cost} of the minimum set cover after every update and can return this value in constant time.
In addition, the algorithm for amortized update time  maintains a \emph{solution} of such value throughout the updates (i.e. it outputs the change of the maintained solution after every update), while the algorithm for worst-case update time does not and instead outputs the whole solution in time linear to the solution size whenever the solution is asked for (similar to the dynamic matching algorithm in \cite{BernsteinFH-soda19}).



\smallskip
\noindent {\bf Perspective:}
The minimum set cover problem is very well understood in the static setting. There is a simple primal-dual algorithm that gives an $f$-approximation in $\Theta(f  n)$ time, whereas  a simple greedy algorithm gives a $\Theta(\log n)$-approximation in $\Theta(f  n)$ time. Furthermore, there are strong inapproximability results which imply that  the approximation guarantees achieved by these simple primal-dual and greedy algorithms are essentially the best possible~\cite{stoc/DinurS14,siamcomp/DinurGKR05,ccc/KhotR03}.

\begin{table}
	\centering
	\begin{centering}
		\begin{tabular}{|c|c|c|c|c|}
			\hline 
			\textbf{Reference} & \textbf{Approximation} & \textbf{Update Time} & \textbf{Deterministic?} & \textbf{Weighted?} \tabularnewline
			\hline  
			\cite{GuptaKKP17} & $O(\log n)$	&	$O(f \log n)$ & yes & yes\\
			\hline 
			\cite{GuptaKKP17,BhattacharyaCH17} & $O(f^3)$	&	$O(f^2)$ & yes & yes\\
			\hline
			\cite{BhattacharyaHI15} & $O(f^2)$ & $O(f \log(m + n))$ & yes & yes\\
			\hline
			\cite{AbboudAGPS19} & $(1+\epsilon)f$ & $O\left(\frac{f^2}{\epsilon} \log n \right)$ & no & no\\ 
			\hline
			\cite{focs/BhattacharyaHN19} &  $(1+\epsilon)f$ & $O\left(\frac{f}{\epsilon^2} \log (Cn)\right)$ & yes & yes\\
			\hline
			{\small Our result (amortized)} & $(1+\epsilon)f$ & $O\left(\frac{f^2}{\epsilon^3}+ \frac{f}{\epsilon^2} \log C\right)$ & yes & yes\\
			\hline
			{\small Our result (worst case)} &  $(1+\epsilon)f$ & $O(f\log^2 (Cn)/\epsilon^3)$ & yes & yes\\
			\hline
		\end{tabular}
	\end{centering}
	\caption{Summary of known results on dynamic set cover. All the previous update times are amortized. The last column indicates if the result holds when different sets have different costs.}
	\label{tab:1}
\end{table}

In the dynamic setting, an important challenge is to match the approximation ratio of the (static) greedy and primal-dual algorithms, while minimizing the update time. In recent years, a series of papers on dynamic algorithms have been devoted to this topic. See Table~\ref{tab:1} for a concise description of the results obtained in these papers.  To summarize, we currently know how to get a $\Theta(\log n)$-approximation in $O(f \log n)$ update time, and a $(1+\epsilon)f$-approximation in $O(f \log (Cn)/\epsilon^2)$ update time. 
In addition, there is a strong conditional lower bound~\cite{AbboudAGPS19} which states that any  dynamic set cover algorithm with nontrivial approximation ratio must have an update time of $\Omega(f^{1-\delta})$, for any constant $\delta > 0$.  
This explains the $O(\poly (f))$ factor inherent in all the update time bounds of Table~\ref{tab:1}, but leaves open the following question.
%

\smallskip
\noindent {\bf (Question 1) Must we necessarily incur a $\polylog(m,n)$ factor in the update time if we want to aim for near-optimal approximation ratio?} 
%
%

\smallskip
The above question falls within the study of {\em constant update time} (see below). Besides helping us understand the best possible update time for a dynamic problem to its limit, this study is useful in ruling out non-trivial cell-probe lower bounds \cite{PatrascuD06,Larsen12,LarsenWY18}. Another important line of work in dynamic graph algorithms is achieving {\em worst-case} update time. All previous dynamic set cover algorithms can guarantee only {\em amortized} update time, leaving it widely open the following. 
%

\smallskip
\noindent
{\bf (Question 2): Is there a dynamic  algorithm with non-trivial worst-case update time?} 
%
%

\smallskip
When $f=2$ and $C=1$, the above questions are equivalent to asking whether there are $2$-approximation algorithms for {\em dynamic (unweighted) vertex cover} with (i) {constant} update time and (ii) non-trivial worst-case update time. There exists a long line of work on this dynamic (unweighted) vertex cover problem~\cite{OnakR10,BaswanaGS11,GuptaP13,NeimanS13,soda/BhattacharyaHI15,PelegS16,Solomon16,stoc/BhattacharyaHN16,soda/BhattacharyaK19}. Currently, the state of the art results on this problem are as follows.
\begin{itemize}[noitemsep]
	\item The deterministic  algorithm  of~\cite{soda/BhattacharyaK19}  achieves $(2+\epsilon)$-approximation in $O(1/\epsilon^2)$ amortized update time for unweighted vertex cover, and the  randomized algorithm of~\cite{Solomon16} achieves $2$-approximation in $O(1)$ amortized update time for unweighted vertex cover. 
	\item  The deterministic algorithm of~\cite{soda/BhattacharyaHN17} achieves $(2+\epsilon)$-approximation  in $O(\log^3 n/\poly(\epsilon))$ worst-case update time for unweighted vertex cover (also see \cite{icalp/CharikarS18,icalp/ArarCCSW18,soda/BernsteinFH19}).  
\end{itemize}
%
%

Our $O(f\log^2 (Cn))$ worst-case bound in Theorem~\ref{th:main:intro:both results},  when restricted to unweighted vertex cover, improves the  $O(\log^3 n)$ bound of \cite{soda/BhattacharyaHN17} by a logarithmic factor. Moreover, ours is the first non-trivial worst-case update time that holds for $f>2$. On the other hand, our amortized bound in Theorem~\ref{th:main:intro:both results} is the first generalization of the vertex cover results from \cite{Solomon16,soda/BhattacharyaK19}: 
When $f > 2$ and $C > 1$, a possible generalization  of the constant amortized update time obtained in~\cite{soda/BhattacharyaK19,Solomon16} is the one guaranteeing $(1+\epsilon)f$-approximation ratio and $O(\poly(f, C))$ update time. The only previous result of this kind is the $O(f^2)$ update time achieved by~\cite{GuptaKKP17,BhattacharyaCH17}; however this comes with a higher approximation ratio of $O(f^3)$. Our amortized update time is the first to achieve the target  $O(\poly (f,C))$ bound  {\em simultaneously} with a  $(1+\epsilon)f$-approximation ratio.

Finally, note that our amortized update time improves the previous one in \cite{focs/BhattacharyaHN19} in almost the whole range of parameters that we should be interested in: 
%
%
For a fixed $\epsilon > 0$, we get an update time of $O(f^2 + f \log C)$, whereas \cite{focs/BhattacharyaHN19} obtained an update time of $O(f \log (Cn))$. Note that in the {\em high-frequency range}, when $f = \omega(\log n)$, the $\Theta(\log n)$-approximation ratio obtained by \cite{GuptaKKP17} is already better than a $(1+\epsilon)f$-approximation. In other words, we are typically interested in getting an $(1+\epsilon)f$-approximation only in the {\em low-frequency range}, when $f=O(\log n)$. In this regime, our $O(f^2 + f \log C)$ update time strictly improves upon the update time of \cite{focs/BhattacharyaHN19} for most values of $f$, i.e. whenever $f=o(\log n)$.


\subsection{Techniques}

Both our results build on the recent algorithm of Bhattacharya, Henzinger and Nanongkai~\cite{focs/BhattacharyaHN19}. This algorithm and most previous deterministic algorithms for dynamic set cover and vertex cover (e.g. \cite{soda/BhattacharyaK19,soda/BhattacharyaHI15,stoc/BhattacharyaHN16,soda/BhattacharyaHN17}) are based on the following static primal-dual algorithm. (For ease of exposition, in this section we assume that $C = 1$; i.e., every set has the same cost.)

The static primal-dual algorithm assigns a fractional {\em weight} $w_e \geq 0$ to every element $e \in \U$, as follows. Initially, we set $w_e \leftarrow 0$ for all elements $e\in\U$ and $F \leftarrow \U$. Subsequently, the algorithm proceeds in rounds. In each round, we continuously raise the weights of all the elements in $F$ until some set $s \in \S$ becomes {\em tight} (a set $s$ becomes tight when its total weight $w_s = \sum_{e \in s} w_e$ becomes equal to $1$). At this point, we delete the elements contained in the newly tight sets from $F$, and after that we proceed to the next round. The process stops when $F$ becomes empty. At that point, we return the collection of tight sets as a valid set cover and the weights $\{w_e\}$ as the dual certificate. Specifically, it turns out that the weights $\{w_e\}$ returned at the end of the algorithm form a valid solution to the dual {\em fractional packing problem}, which asks us to assign a weight $w_e \geq 0$ to each element in $\U$ so as to maximize the objective $\sum_{e\in \U} w_e$, subject to the constraint that $\sum_{e \in s} w_e \leq 1$ for all sets $s \in \S$. We can also show that the collection of tight sets returned at the end of the static algorithm forms a valid set cover, whose cost is at most $f$ times the cost of the dual objective $\sum_{e \in \U} w_e$. This leads to an approximation guarantee of $f$. In the dynamic setting, the main challenge now is to maintain the (approximate) output of the static algorithm described above in small update time. This is where \cite{focs/BhattacharyaHN19} and previous deterministic algorithms use radically different approaches.

More specifically, previous deterministic algorithms (e.g. \cite{soda/BhattacharyaK19,soda/BhattacharyaHI15,stoc/BhattacharyaHN16,soda/BhattacharyaHN17}) follow some {\em local update rules} and maintain the approximate complementary slackness conditions. Thus, whenever the weight $w_s$ of a tight set $s \in \S$ becomes  too large (resp. too small) compared to $1$, these algorithm decrease (resp. increase) the weights of some of the elements contained in $s$. This step affects the weights of some other sets that share these elements with $s$, and hence it might lead to a chain of cascading effects. Using very carefully chosen potential functions, these algorithms are able to bound these cascading effects over any sufficiently long sequence of updates. 
For technical reasons, however, this approach seems to work only when $f = 2$.
Thus, although previous works could get constant and worst-case update time for maintaining a $(2+\epsilon)$-approximate vertex cover, it seems very difficult to extend their potential function analysis to the more general minimum set cover problem (or, equivalently, to minimum vertex cover on hypergraphs).

In contrast,~\cite{focs/BhattacharyaHN19} makes no attempt at maintaining the approximate complementary slackness conditions. It simply waits until the overall cost of the dual solution changes by a significant amount (compared to the cost of the set cover maintained by the algorithm). (This approach shares some similarities with the earlier randomized algorithm by \cite{AbboudAGPS19}, although \cite{AbboudAGPS19} is not based on the static algorithm described above.)  
At that point, the algorithm identifies a critical collection of affected elements and recomputes their weights from scratch using a {\em global rebuilding} subroutine. The time taken for this recomputation step is, roughly speaking, proportional to the number of critically affected elements, which   leads to a bound on the amortized update time. The strength of this framework is that this global rebuilding strategy extends seamlessly to the general set cover problem (where $f > 2$). Unfortunately this strategy incurs an additional $\Theta(\log n)$ factor in the update time that seems impossible to overcome. 

Our algorithm with amortized update time results from carefully combining the these two sharply different approaches, namely the algorithm of  \cite{soda/BhattacharyaK19} that uses some local update rules to obtain an $O(1/\epsilon^2)$ amortized update time, and the new approach of \cite{focs/BhattacharyaHN19}. 
%
In our {\em hybrid} approach, whenever the weight $w_s$ of a tight set $s$ becomes too large compared to $1$, we decrease the weights of some of the elements contained in $s$ using the same local rule as in~\cite{soda/BhattacharyaK19}. In contrast, whenever the weight $w_s$ of a tight set $s$ becomes too small compared to $1$, we follow a lazy strategy and try to wait it out. After some period of time, when the total cost of the dual solution becomes significantly small compared to the size of the set cover maintained by the algorithm, we apply a {\em global rebuilding} subroutine as in~\cite{focs/BhattacharyaHN19} to fix the weights of some critical elements. This hybrid approach allows us to combine the best of both worlds, leading to a dynamic algorithm that has $(1+\epsilon)f$-approximation ratio for any  $f \geq 2$ and an amortized update time of $O(f^2/\epsilon^3)$. 

Our algorithm with worst-case update time extends the approach of \cite{focs/BhattacharyaHN19} by having many {\em schedulers} working in parallel. This general idea has been used in many dynamic algorithms with worst-case update time (e.g. \cite{siamcomp/ChanPR11,GuptaP13,CharikarS18,NanongkaiSW17,NanongkaiS17,Wulff-Nilsen17}). The main challenge is typically how to make the schedulers {\em consistent} in what they maintain, especially if they maintain an overlapping part of the solution.
%
More specifically, we have $k=O(\log n)$ schedulers, where the $i^{th}$ scheduler is associated with an integer $r_i$ such that $r_1\geq r_2 \geq \ldots \geq r_k$. The $i^{th}$ scheduler is responsible for running the global rebuilding subroutine of \cite{focs/BhattacharyaHN19} on sets that get tight {\em at and after round $r_i$} in the static algorithm described above. Thus, the sets that the $i^{th}$ scheduler is responsible for are also under the responsibilities of the $j^{th}$ schedulers for all $j>i$. A complication arises when these schedulers want to rebuild these sets at the same time, since it is not clear which solution of which scheduler we should use as a final solution. Typically, this can be resolved by forcing all schedulers to be {\em consistent}; i.e. the $i^{th}$ and $j^{th}$ schedulers agree on what happens to each set that they are both responsible for. This seems very hard to achieve in our case. 
At a high level, we get around this issue by requiring the schedulers to be only {\em loosely consistent}: Schedulers may maintain drastically different {\em local views} on the sets they are responsible for, except that there are some mild consistency conditions that tie their behaviors together. This way, each scheduler can work independently while our conditions guarantee that we can still combine results from the schedulers when needed.  More specifically,  we use the solution that the $i^{th}$ scheduler maintains for level $r_i.$ Due to the consistency conditions this results in an approximately  minimum set cover.

\section{Preliminaries: A Static Primal-Dual Algorithm} \label{main:sec:prelim}



\noindent {\bf Uniform-cost case:}
Recall the  notations defined in the beginning of Section~\ref{main:sec:intro}. In order to highlight the main ideas behind our algorithms, in this extended abstract we  only consider the special case where every set has the same cost ($c_s = c_{s'}$ for all $s, s' \in \S$) and our goal is to compute a set cover in $(\U, \S)$ of minimum size. The full version of the paper is presented in the appendix. 


\smallskip
\noindent {\bf The dual:} In the  {\em maximum fractional packing} problem, we get a set system $(\U, \S)$ as input. We have to assign a weight $w_e \in [0,1]$ to every element $e \in \U$, subject to the constraint that $\sum_{e \in s} w_e \leq 1$ for all sets $s \in \S$. The goal is to maximize  $\sum_{e \in \U} w_e$. We let $w_s = \sum_{e \in s} w_e$ denote the total weight received by a set $s \in \S$. LP-duality  implies the following lemma.

\begin{lemma}
\label{main:lm:dual}
Consider any instance $(\U, \S)$ of the  set cover problem. Let $\opt$ denote the size of the minimum set cover on this instance, and let $\{w_e\}_{e \in \U}$ denote any feasible fractional packing solution on the same input instance. Then we have $\sum_{e \in \U} w_e \leq \opt$. 
\end{lemma}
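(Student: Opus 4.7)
The plan is to prove this by the standard weak-duality charging argument: exhibit a minimum set cover as a witness and bound the total packing weight by summing the feasibility constraints over the sets in that cover.

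First, I would fix an optimal set cover $\S^* \subseteq \S$ with $|\S^*| = \opt$, and fix an arbitrary feasible fractional packing $\{w_e\}_{e \in \U}$. Since $\S^*$ covers $\U$, every element $e \in \U$ lies in at least one set of $\S^*$, i.e.\ $|\{s \in \S^* : e \in s\}| \geq 1$. Because $w_e \geq 0$, this gives
$$\sum_{e \in \U} w_e \;\leq\; \sum_{e \in \U} w_e \cdot |\{s \in \S^* : e \in s\}|.$$

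Next, I would swap the order of summation on the right-hand side, rewriting it as $\sum_{s \in \S^*} \sum_{e \in s} w_e = \sum_{s \in \S^*} w_s$. Applying the packing feasibility constraint $w_s \leq 1$ for each $s \in \S^*$ yields $\sum_{s \in \S^*} w_s \leq |\S^*| = \opt$, completing the chain of inequalities.

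There is no real obstacle here — the statement is textbook LP weak duality for the set cover / fractional packing pair, and the only subtlety is noting that the ``overcounting'' step uses nonnegativity of the dual variables, while the feasibility constraint is then invoked exactly once per set of the optimal cover.
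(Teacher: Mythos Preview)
Your proof is correct and is essentially the same as the paper's own argument. In the extended abstract the lemma is stated without proof (attributed to LP duality), but the full version proves the weighted analogue (Lemma~\ref{lemma:primal-dual}) by exactly your charging argument: take an optimal cover $\S^*$, use $\bigcup_{s\in\S^*} s = \U$ to bound $\sum_{e\in\U} w_e \leq \sum_{s\in\S^*}\sum_{e\in s} w_e$, and then apply the feasibility constraint per set.
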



\newcommand{\T}{\mathcal{T}}
\renewcommand{\A}{\mathcal{A}}
\newcommand{\F}{\mathcal{F}}

We now describe an $f$-approximation  algorithm for minimum set cover  in the static setting. The algorithm works as follows. There is a {\em time-variable} $t$ that is initially set to $-\infty$, and it keeps increasing continuously throughout the duration of the algorithm. At every time $t$, the algorithm maintains a partition of the universe of elements $\U$ into two subsets: $\A(t) \subseteq \U$ and $\F(t) = \U \setminus \A(t)$. The elements in $\A(t)$ and $\F(t)$ are respectively called {\em alive} and {\em frozen} at time $t$. In the beginning, we have $\A(-\infty) = \U$ and $\F(-\infty) = \emptyset$. As $t$ increases starting from $-\infty$, alive elements  become frozen one after the other. Specifically, we have $\A(t) \supseteq \A(t')$  and $\F(t) \subseteq \F(t')$ for all $t \leq t'$.
Let $t_e$ be the time at which  an element $e \in \U$ becomes frozen (i.e., moves from  $\A$ to $\F$). We refer to $t_e$ as the {\em freezing time} of $e$.  Note that  $t_e \leq t$ for all $e \in \F(t)$.  At any time $t$,  the weight  of an element $e \in \U$ is determined as follows: $\text{If } e \in \A(t), \text{ then } w_e(t) = (1+\epsilon)^{t}. \text{ Otherwise, } e \in \F(t) \text{ and } w_e(t) = (1+\epsilon)^{t_e}.$ 
 
 Let $w_s(t) = \sum_{e \in s} w_e(t)$ denote the weight of a set $s \in \S$ at time $t$. We say that the set $s$ is {\em tight} (resp. {\em slack}) at time $t$ if $w_s(t) = 1$ (resp. $w_s(t) < 1$). Let $\T(t) \subseteq \S$ denote the collection of tight sets at time $t$.  When $t = -\infty$, we have $w_e(t) = (1+\epsilon)^{t} = 0$ for all elements $e \in \A(t) = \U$, and hence $w_s(t) = 0$ for all sets $s \in \S$. This implies that $\T(-\infty) = \emptyset$.  Now, the following invariant completes the description of the algorithm: At any time $t$, we have $\F(t) = \bigcup_{s \in \T(t)} s$. 

To summarize, the algorithm starts at time $t = -\infty$. At that point every element is alive and has weight $0$, and all the sets are slack with weight $0$. As $t$ starts increasing continuously, the weights of the alive elements keep increasing according to the equation $w_e(t) = (1+\epsilon)^{t}$. Whenever a set $s$ becomes tight during this process, every alive element $e \in s$ becomes frozen at the same time-instant, which ensures that the weights of all the elements $e \in s$ (and that of the set $s$) do not increase any further as the value of $t$ keeps increasing. The process stops at time $t = 0$. Note that at time $t = 0$, if an element $e$ is alive, then all sets containing $e$ must be tight. This means that any element $e$ has freezing time $t_e \leq 0$, which leads to the following claim.

\begin{claim}
\label{main:cl:stopping}
At time $t = 0$, we have $\A(0) = \emptyset$ and $\F(0) = \U$.
\end{claim}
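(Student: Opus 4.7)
The plan is to argue by contradiction: suppose some element $e \in \U$ is still alive at time $t = 0$, i.e., $e \in \A(0)$. The goal is to derive a contradiction by exhibiting a set containing $e$ that must have become tight at some time $\leq 0$, thereby forcing $e$ to have been frozen.

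First I would record two easy monotonicity facts. Since the partition $(\A(t), \F(t))$ can only move elements from $\A$ to $\F$ as $t$ grows, and since the weight $w_e(t)$ of any element is non-decreasing in $t$ (it rises as $(1+\epsilon)^t$ while $e$ is alive, and freezes at its value upon being moved to $\F$), the set-weight $w_s(t) = \sum_{e \in s} w_e(t)$ is continuous and non-decreasing in $t$ for every $s \in \S$. In particular, once $s$ first reaches weight $1$, it stays at weight exactly $1$ forever, because by the invariant $\F(t) = \bigcup_{s' \in \T(t)} s'$ every alive element of $s$ is frozen at that very moment, so no element of $s$ can grow afterwards.

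Now I would exploit the fact that $\bigcup_{s \in \S} s = \U$, which gives us some set $s^{\star} \in \S$ with $e \in s^{\star}$. Since $e \in \A(0)$, by definition $w_e(0) = (1+\epsilon)^{0} = 1$, and therefore
\[
w_{s^{\star}}(0) \;=\; \sum_{e' \in s^{\star}} w_{e'}(0) \;\geq\; w_e(0) \;=\; 1.
\]
Combined with continuity, monotonicity, and the starting value $w_{s^{\star}}(-\infty) = 0$, there is a smallest time $t_{s^{\star}} \leq 0$ at which $w_{s^{\star}}$ reaches $1$, i.e., at which $s^{\star}$ first becomes tight. By the invariant, every alive element of $s^{\star}$ at time $t_{s^{\star}}$ is moved into $\F$ at that instant, and in particular $e$ is frozen by time $t_{s^{\star}} \leq 0$. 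This contradicts $e \in \A(0)$, completing the argument that $\A(0) = \emptyset$; then $\F(0) = \U \setminus \A(0) = \U$ follows immediately.

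The only subtle point is the reliance on the continuity/monotonicity of $w_s(\cdot)$ and on the invariant to turn ``$w_{s^{\star}}(0) \geq 1$'' into ``$s^{\star}$ became tight at some $t \leq 0$.'' This is routine given how the process is defined, so I do not expect a real obstacle here; the claim is essentially a direct consequence of the stopping rule of the static primal-dual procedure.
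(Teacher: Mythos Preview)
Your proposal is correct and follows essentially the same approach as the paper: the paper's one-line justification (immediately preceding the claim) observes that an element alive at time $0$ has weight $(1+\epsilon)^0 = 1$, so any set containing it is tight, forcing the element to be frozen---a contradiction. Your write-up expands this with explicit monotonicity and continuity details, but the underlying argument is identical.
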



\noindent {\bf Levels of elements and sets:} Claim~\ref{main:cl:stopping} implies that the continuous process describing the static algorithm ends at time $t = 0$. At that point,  every element $e \in \U = \F(0)$ has a well-defined freezing time $t_e \leq 0$. We define the {\em level} of an element $e \in \U$ to be $\ell(e) = - t_e$. The {\em level} of a set $s \in \T(0)$ is defined as $\ell(s) = -t_s$, where $t_s$ is the time at which the set $s$ became tight. The levels of the remaining sets $s \in \S \setminus \T(0)$ are defined to be $\ell(s) = 0$. 

Henceforth, we use the symbol $w_e$ to denote the weight of an element $e$ at time $t = 0$ (i.e., $w_e = w_e(0)$). Similarly, we use the symbol $w_s$ to denote $w_s(0)$, and the symbol $\T$ to denote $\T(0)$.  Finally, when we say that a set $s$ is tight, we mean that it is tight at time $t = 0$. 


\begin{property}
\label{main:static:property:level}
We have $\ell(e) = \max_{s \in \S : e \in s} \ell(s)$  and $w_e = (1+\epsilon)^{-\ell(e)}$ for all  elements $e \in \U$. 
\end{property}

\begin{property}
\label{main:static:property:tight}
We have $w_s \leq 1$ for all   $s \in \S$. Further, every  set $s \in \S$ at level $\ell(s) > 0$ is tight.
\end{property}



\begin{lemma}
\label{main:cor:static:dual}
\label{main:static:approx}
The weights $\{w_e\}_{e\in \U}$ form a fractional packing and the collection of sets $\T$ forms an $f$-approximate minimum set cover in $(\U, \S)$. 
\end{lemma}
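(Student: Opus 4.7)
The plan is to verify the three claims separately: feasibility of $\{w_e\}$ as a fractional packing, validity of $\T$ as a set cover, and the $f$-approximation bound on $|\T|$.

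First I would establish feasibility of the dual. Since $t_e \leq 0$ for every $e \in \U$ (by Claim~\ref{main:cl:stopping}), we have $\ell(e) = -t_e \geq 0$, so Property~\ref{main:static:property:level} gives $w_e = (1+\epsilon)^{-\ell(e)} \in (0,1]$. The per-set constraint $\sum_{e \in s} w_e = w_s \leq 1$ is then immediate from Property~\ref{main:static:property:tight}. So $\{w_e\}_{e \in \U}$ is a feasible fractional packing, and Lemma~\ref{main:lm:dual} applies, yielding $\sum_{e \in \U} w_e \leq \opt$.

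Next I would show that $\T$ covers $\U$. By Claim~\ref{main:cl:stopping}, $\F(0) = \U$, and by the invariant maintained throughout the continuous process, $\F(0) = \bigcup_{s \in \T(0)} s = \bigcup_{s \in \T} s$. Hence every element of $\U$ is contained in some tight set, so $\T$ is a valid set cover.

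The main content is the approximation bound. Since every $s \in \T$ is tight, $w_s = 1$, so
\[
|\T| \;=\; \sum_{s \in \T} w_s \;=\; \sum_{s \in \T} \sum_{e \in s} w_e \;=\; \sum_{e \in \U} w_e \cdot \bigl| \{ s \in \T : e \in s \} \bigr|.
\]
Because each element belongs to at most $f$ sets of $\S$ (and hence to at most $f$ sets of $\T$), the inner count is at most $f$, giving $|\T| \leq f \sum_{e \in \U} w_e \leq f \cdot \opt$ by the dual feasibility established above. Combining with the previous paragraph, $\T$ is an $f$-approximate minimum set cover.

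The argument is essentially a textbook primal-dual chain, so there is no serious obstacle: the only subtlety is being careful that the invariant $\F(t) = \bigcup_{s \in \T(t)} s$ at $t=0$ together with Claim~\ref{main:cl:stopping} is what makes $\T$ cover $\U$, and that $w_e \le 1$ (needed for the fractional packing formulation) follows solely from $\ell(e) \geq 0$, which in turn relies on the process terminating at $t = 0$ rather than any later time.
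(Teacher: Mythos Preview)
Your proof is correct and follows essentially the same primal-dual template as the paper: verify $w_s\le 1$ for dual feasibility, show $\T$ covers $\U$, then double-count $\sum_{s\in\T} w_s$ against $f\sum_e w_e$ and invoke Lemma~\ref{main:lm:dual}. The only minor difference is in the coverage step: you appeal directly to the process invariant $\F(0)=\bigcup_{s\in\T(0)} s$ together with Claim~\ref{main:cl:stopping}, whereas the paper argues via a level-based case split (if some containing set has $\ell(s)>0$ it is tight by Property~\ref{main:static:property:tight}; otherwise $\ell(e)=0$ forces $w_e=1$ and hence tightness of every containing set)---your route is slightly more direct here, while the paper's version is phrased so that it can be reused verbatim in the dynamic setting (Theorem~\ref{main:th:approx}).
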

\begin{proof}
Since $w_s \leq 1$ for all $s \in \S$, the weights $\{w_e\}_{e \in \U}$ form a fractional packing. Consider any element $e \in \U$. If at least one set $s \in \S$ containing $e$ lies at level $\ell(s) > 0$, then $s \in \T$ and hence the element $e$ is covered by $\T$. Otherwise,  every set $s \in \S$ containing the element $e$ lies at level $\ell(s) = 0$. So Property~\ref{main:static:property:level} implies that $\ell(e) = 0$, and hence $w_e = (1+\epsilon)^{-0} = 1$. Thus, every set $s$ containing $e$ has weight $w_s \geq w_e = 1$. In other words, every set $s$ containing $e$ is tight, and so the element $e$ is again covered by $\T$. This implies that $\T$ forms a valid set cover. 

Since each element $e$ contributes to the weight $w_s$ of only the (at most $f$) sets that contain it, we have $\sum_{e \in \U} f \cdot w_e \geq \sum_{s \in \S} w_s \geq \sum_{s \in \T} w_s =   |\T|$. The last equality holds since $w_s = 1$ for all sets $s \in \T$. Now,  Lemma~\ref{main:lm:dual}  implies that $f \cdot \opt \geq f \cdot \sum_{e \in \U} w_e \geq |\T|$. 
\end{proof}




\section{Our Algorithm for Amortized Update time: An Overview}
\label{main:sec:algo}





\noindent {\bf Preprocessing:} We start by computing the solution returned by the static algorithm from Section~\ref{main:sec:prelim}. Let $\ell(s)$ be the level of a set $s \in \S$ in the output of this static algorithm. The level $\ell(e)$ and the weight $w_e$  of every element $e \in \U$ are determined by the levels of the sets containing it, in accordance with Property~\ref{main:static:property:level}. The weight of a set $s \in \S$ is defined as $w_s = \sum_{e \in s} w_e$.   In addition, we associate a variable $\phi_s$ with every set $s\in \S$. The value of $\phi_s$ is called  the {\em dead-weight} of $s$. In contrast, the value of $w_s$ denotes the {\em real-weight} of $s$. The {\em total-weight} of a set $s \in \S$ is given by $w^*_s = w_s + \phi_s$. Just after preprocessing, we have $\phi_s = 0$ for all $s \in \S$, so that the total-weight of every set is equal to its dead-weight. Throughout Section~\ref{main:sec:algo}, we will say that a set $s \in \S$ is {\em tight} if $w^*_s = 1$ and {\em slack} if $w^*_s < 1$. Accordingly, the total-weights $\{w^*_s\}_{s \in \S}$ satisfy Property~\ref{main:static:property:tight} just after preprocessing.  The significance of the notion of dead-weights will become clear shortly.



\subsection{Handling deletions of elements}
\label{main:sec:deletions}

When an element $e$ gets deleted, the real-weight $w_s$ of every set $s$ containing $e$ decreases by $w_e$. To compensate for this loss,  we set $\phi_s \leftarrow \phi_s + w_e$ for all sets $s \in \S$ that contained  $e$. Thus, the total-weight of every set remains unchanged due to an element-deletion.  It should now be apparent that our  algorithm satisfies Property~\ref{main:static:property:tight} if we replace the real-weights $w_s$ by the total-weights $w^*_s$. 
\begin{invariant}
\label{main:inv:deletions:weights}
We have $w^*_s \leq 1$ for all   $s \in \S$. Further, every  set $s \in \S$ at level $\ell(s) > 0$ is tight. 
\end{invariant}

As the elements keep getting deleted the size of minimum set cover keeps decreasing. But the set cover maintained by the  algorithm we have described so far remains unchanged. Hence, after sufficiently many deletions, the approximation ratio of our algorithm will degrade by a significant amount. To address this concern, our algorithm {\em rebuilds} part of the solution once the sum of the dead-weights of the sets becomes too large. Specifically, we maintain the following invariant.

\begin{invariant}
\label{main:inv:deletions:rebuild}
We have $\sum_{s \in \S} \phi_s \leq \epsilon \cdot f \cdot \sum_{e \in \U} w_e$.
\end{invariant}

After preprocessing, the above invariant holds since $\phi_s = 0$ for all $s \in \S$. Subsequently, after handling each element-deletion in the manner described above, we perform the following operations. 
\begin{itemize}
\item {\sc While} Invariant~\ref{main:inv:deletions:rebuild} is violated:
\begin{itemize}
\item Identify the smallest level $k \geq 0$ such that $\sum_{s \in \S : \ell(s) \leq k} \phi_s > \epsilon \cdot f \cdot \sum_{e \in \U : \ell(e) \leq k} w_e$. 
\item Call the subroutine {\sc Rebuild}($\leq k$) as described below.
\end{itemize}
\end{itemize}


\noindent{\bf The subroutine {\sc Rebuild}$(\leq k)$:} Let $\S'_k$ (resp.~$\U'_k$) be the collection of sets $s \in \S$ at levels $\ell(s) \leq k$ (resp.~the collection of elements $e \in \U$ at levels $\ell(e) \leq k$) just before the call to  {\sc Rebuild}$(\leq k)$. The subroutine works in two steps: Step I (clean-up) and Step II (rebuild). To simplify the analysis, we make the following crucial assumption in this extended abstract.

\begin{assumption}
\label{main:assume:rebuild}
Every set $s \in \S'_k$ contains at least one element from $\U'_k$. 
\end{assumption}

\noindent {\bf Step I (clean-up):}
We set $\phi_s \leftarrow 0$, $\ell(s) \leftarrow k$  for all $s \in \S'_k$. This  resets $\ell(e) \leftarrow k$ and $w_e \leftarrow (1+\epsilon)^{-k}$ for all $e \in \U'_k$, as per Property~\ref{main:static:property:level}. The real-weights $\{w_s\}_{s \in \S'_k}$  get updated accordingly. 

The clean-up step as described above can only decrease the weight $w_e$ of an element $e \in \U'_k$, since it moves up from its earlier level (which was $\leq k$) to level $k$. Hence, the real-weights $w_s$ of the sets $s \in \S'_k$ can also only decrease due to this step. Furthermore, since $\ell(e) = \max_{s \in \S : e \in s} \ell(s)$ for every element $e$ (see Property~\ref{main:static:property:level}),  all the sets containing an element $e \in \U'_k$ belong to $\S'_k$. Accordingly,  we do not change the real-weight $w_s$  of any set $s \in \S$ at level $\ell(s) > k$ during the clean-up step. Neither  do we change the dead-weight $\phi_s$ of any set $s$  or the level/weight of any element $e$  at level $> k$. Since  Invariant~\ref{main:inv:deletions:weights} was satisfied just before the clean-up step, we get:


\begin{observation}
\label{main:ob:rebuild}
Just after the clean-up step, every set $s \in \S$ at level $\ell(s) > k$ is tight. All the remaining sets $s \in \S'_k$ are at level $\ell(s) = k$ with real-weights $w_s \leq 1$ and dead-weights $\phi_s = 0$. 
\end{observation}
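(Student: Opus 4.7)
The plan is to split the argument into two disjoint cases according to whether a set lies above level $k$ or at level $\leq k$, and to show in each case that the claimed post-clean-up description holds.

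First I would handle the sets $s \in \S$ with $\ell(s) > k$. By Property~\ref{main:static:property:level} (which we may assume is maintained by the algorithm, since deletions do not touch levels and the clean-up preserves the identity $\ell(e) = \max_{s \ni e} \ell(s)$), every element $e \in s$ satisfies $\ell(e) \geq \ell(s) > k$, so $e \notin \U'_k$. Consequently, the clean-up does not alter $w_e$ for any $e \in s$, hence $w_s$ is unchanged; and since $s \notin \S'_k$, its dead-weight $\phi_s$ is also untouched. Thus $w^*_s$ is unchanged, and because Invariant~\ref{main:inv:deletions:weights} held just before the clean-up, $s$ was tight at that point and remains tight.

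Next I would treat the sets $s \in \S'_k$. The definition of the clean-up step directly gives $\phi_s = 0$ and $\ell(s) = k$, so only the bound $w_s \leq 1$ needs verification. The key observation is a monotonicity claim: during the clean-up, every element weight can only decrease. Indeed, any $e \in \U'_k$ moves from its old level $\ell(e) \leq k$ to level $k$, so its weight drops from $(1+\epsilon)^{-\ell(e)}$ to $(1+\epsilon)^{-k}$; and any $e \in s$ lying outside $\U'_k$ (i.e.\ with $\ell(e) > k$) keeps its old weight. Summing over $e \in s$, the new value of $w_s$ is at most its pre-clean-up value, which by Invariant~\ref{main:inv:deletions:weights} satisfied $w_s \leq w^*_s \leq 1$. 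Hence $w_s \leq 1$ after the clean-up.

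There is no real obstacle: the proof is essentially bookkeeping. The only subtle point is to justify that Property~\ref{main:static:property:level} (used in the first case to deduce $\ell(e) \geq \ell(s)$) persists throughout the dynamic process---this follows because element deletions never alter any level and the clean-up simultaneously pushes the levels of all sets in $\S'_k$ and all elements in $\U'_k$ to exactly $k$, leaving the $\max$-identity intact. Combining both cases yields the observation.
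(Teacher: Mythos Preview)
Your proposal is correct and follows essentially the same approach as the paper: the paper's justification (the paragraph immediately preceding the observation) likewise argues that sets at level $> k$ are untouched by the clean-up and hence stay tight by Invariant~\ref{main:inv:deletions:weights}, while sets in $\S'_k$ get $\phi_s=0$ and $\ell(s)=k$ by definition and have $w_s\le 1$ because element weights in $\U'_k$ can only decrease when pushed up to level $k$. Your explicit remark about Property~\ref{main:static:property:level} persisting through the dynamic process is a nice bit of extra care that the paper leaves implicit.
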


\noindent {\bf Step II (rebuild):} Recall that the static algorithm from Section~\ref{main:sec:prelim} starts at time $t = -\infty$ and stops when $t$ becomes equal to $0$. Observation~\ref{main:ob:rebuild} implies that after the clean-up step, we are in exactly the same state as the static algorithm from Section~\ref{main:sec:prelim} at time $t = -k$ (provided we  replace the real-weights $w_s$ by the total-weights $w^*_s$ for all sets $s$ at level $\ell(s) > k$). At this point, we perform the remaining steps prescribed by the static algorithm from Section~\ref{main:sec:prelim} as its time-variable moves from $t = -k$ to $t = 0$. We emphasize that while executing these remaining steps, we do not change the dead-weights of the sets in $\S'_k$ (these dead-weights were reset to zero during the clean-up step, and they continue to remain zero). This leads us to the following observation.

\begin{observation}
\label{main:ob:rebuild:2}
At the end of the call to the subroutine {\sc Rebuild}$(\leq k)$, Invariant~\ref{main:inv:deletions:weights} is satisfied. Furthermore, we also have $\phi_s = 0$ for all sets $s \in \S$ at levels $\ell(s) \leq k$. 
\end{observation}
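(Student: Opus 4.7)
\begin{proofof}{Observation~\ref{main:ob:rebuild:2}}
The plan is to handle the two assertions in turn. For the dead-weight part, the argument is essentially bookkeeping: the clean-up step explicitly sets $\phi_s \leftarrow 0$ for every $s \in \S'_k$, and the rebuild step never touches dead-weights. What I would verify carefully is the identification of the sets ending at level $\leq k$ after rebuild. Rebuild does not alter the level of any set outside $\S'_k$ (those remain at their pre-rebuild levels $> k$), while each $s \in \S'_k$ starts the rebuild at level $k$ and, over the time interval $[-k, 0]$, either becomes tight at some time $t \in [-k, 0]$ (receiving level $-t \in [0, k]$) or never becomes tight (receiving level $0$). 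Thus the sets at level $\leq k$ after rebuild are exactly $\S'_k$, all of which carry $\phi_s = 0$.

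For Invariant~\ref{main:inv:deletions:weights}, I would split into the same two cases. For a set $s$ with $\ell(s) > k$ after rebuild, Observation~\ref{main:ob:rebuild} already gives that $s$ is tight just after clean-up. By Property~\ref{main:static:property:level} applied to the pre-rebuild state, every element $e \in s$ satisfies $\ell(e) \geq \ell(s) > k$, so $s$ contains no element of $\U'_k$. Rebuild therefore modifies neither $w_s$ nor $\phi_s$, and $w^*_s = 1$ is preserved. For a set $s \in \S'_k$, the state just after clean-up matches exactly the state of the static algorithm of Section~\ref{main:sec:prelim} at time $t = -k$, once we account for the (fixed) contributions of elements at level $> k$ that are already frozen. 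The rebuild then executes the remaining portion of the static process from $t = -k$ to $t = 0$. Invoking Property~\ref{main:static:property:tight} for this execution gives $w_s \leq 1$, and since $\phi_s$ stays zero throughout rebuild, $w^*_s = w_s \leq 1$; moreover, any set ending at level $> 0$ must have been frozen at some $t \in [-k, 0)$, and is therefore tight.

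The main obstacle I anticipate is justifying this ``restart the static algorithm from $t = -k$'' viewpoint cleanly, given that a set $s \in \S'_k$ may contain elements outside $\U'_k$ which contribute a fixed, already-frozen weight. Assumption~\ref{main:assume:rebuild} is exactly what I would invoke to avoid trouble: it guarantees that every $s \in \S'_k$ has at least one alive element whose weight grows as $t$ advances, so the restricted process is a legitimate continuation of the static algorithm and each set's tightness can be resolved by time $t = 0$. Once this translation is established, the conclusions follow by directly applying Property~\ref{main:static:property:tight} together with the bookkeeping above, with no additional potential-function argument required.
\end{proofof}
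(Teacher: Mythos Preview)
Your proof is correct and follows essentially the same reasoning the paper gives implicitly in the paragraph preceding the observation (the paper offers no separate proof, treating the observation as immediate from the description of the rebuild step).

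One minor point: you do not actually need Assumption~\ref{main:assume:rebuild} for this observation. If some $s \in \S'_k$ happens to contain no element of $\U'_k$, its real-weight simply stays fixed throughout the rebuild; if $w_s < 1$ the set never becomes tight and is assigned level $0$ by the static algorithm's convention for non-tight sets, which is perfectly consistent with Invariant~\ref{main:inv:deletions:weights}. The paper invokes Assumption~\ref{main:assume:rebuild} only for the running-time bound in Lemma~\ref{main:lm:rebuild:runtime}, not for correctness of the rebuild.
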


We note that  using appropriate data structures this subroutine  can be implemented efficiently.

\begin{lemma}
\label{main:lm:rebuild:runtime} Under Assumption~\ref{main:assume:rebuild}, the subroutine {\sc Rebuild}$(\leq k)$ runs in $O(f \cdot |\U'_k|)$ time. 
\end{lemma}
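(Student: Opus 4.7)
The plan is to bound the clean-up and rebuild steps separately.

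As a preparatory observation, Assumption~\ref{main:assume:rebuild} says that every set $s \in \S'_k$ contains at least one element of $\U'_k$, and since every element belongs to at most $f$ sets, double-counting the element-set incidences gives $|\S'_k| \leq f \cdot |\U'_k|$. Moreover, for every $e \in \U'_k$ we have $\ell(e) \leq k$ just before the call, so Property~\ref{main:static:property:level} forces $\ell(s) \leq \ell(e) \leq k$ for every set $s \ni e$; hence every set containing such an $e$ already lies in $\S'_k$. In particular, the total number of element-set incidences that the subroutine ever needs to touch is $O(f \cdot |\U'_k|)$.

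Step~I is then straightforward: sweep $\S'_k$ once to zero the dead-weights and lower the set-levels, costing $O(|\S'_k|)$; then iterate over each $e \in \U'_k$, compute $\Delta_e = (1+\epsilon)^{-k} - w_e$, add $\Delta_e$ to $w_s$ for each of the (at most $f$) sets $s \ni e$, and finally set $w_e \leftarrow (1+\epsilon)^{-k}$ and $\ell(e) \leftarrow k$. Total cost: $O(f \cdot |\U'_k|)$.

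For Step~II I would simulate the static process of Section~\ref{main:sec:prelim}, restricted to the subsystem $(\U'_k, \S'_k)$ and run from time $t = -k$ to $t = 0$, in an event-driven manner. For each slack set $s \in \S'_k$ maintain the projected time $\tau_s$ at which $s$ would become tight if its current alive-element count stayed fixed (computed from its frozen-weight contribution plus $|s \cap \A(t)| \cdot (1+\epsilon)^{\tau_s} = 1$). Process events in non-decreasing order of $\tau_s$: when $s$ becomes tight, freeze its remaining alive elements, and for each newly frozen element $e$ scan the $O(f)$ sets $s' \ni e$ and refresh $\tau_{s'}$. Since each element of $\U'_k$ is frozen exactly once and contributes $O(f)$ refreshes, the total number of priority-queue operations is $O(f \cdot |\U'_k|)$, and the number of extract-mins is at most $|\S'_k| = O(f \cdot |\U'_k|)$.

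The main obstacle is implementing the priority queue so that each operation costs $O(1)$ amortized rather than the classical $O(\log)$ of a Fibonacci heap. The plan is to exploit the discretized structure of the level hierarchy so that the relevant values of $\tau_s$ fall into a bounded collection of buckets, making a bucket-based priority queue sufficient and giving the target $O(f \cdot |\U'_k|)$ total time.
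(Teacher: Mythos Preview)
Your approach is essentially what the paper does: the extended abstract states this lemma without proof, and the full version implements Step~II via the \textsc{Fix-Level} subroutine of~\cite{focs/BhattacharyaHN19}, which is exactly your event-driven scheme with sets bucketed by their projected tightening level (the full version calls this the \emph{target level} $l_T(s)$, your $\tau_s$). Your Step~I analysis and the incidence bound $|\S'_k|\le f\cdot|\U'_k|$ via Assumption~\ref{main:assume:rebuild} are correct and match the paper.

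There is one genuine gap in your final paragraph. Even with $O(1)$ bucket insert/delete, sweeping the buckets from level $k$ down to $0$ costs an additive $O(k)$, so the bound you actually obtain is $O(f\cdot|\U'_k|+k)$, not $O(f\cdot|\U'_k|)$. Since $|\U'_k|$ can be much smaller than $k$, this is not a technicality. The full version fixes it by first shifting all slack sets and elements to an intermediate level $k'=\min\{k,\log_{1+\epsilon}(O(|\U'_k|))\}$ (Lemma~\ref{lemma:before-fix-level} checks that no set overshoots tightness), after which the bucket sweep costs only $O(k')$ and is absorbed into $O(f\cdot|\U'_k|)$. Your phrase ``bounded collection of buckets'' may be gesturing at this, but the mechanism needs to be stated explicitly for the claimed bound to follow.
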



\subsection{Handling insertions of elements}
\label{main:sec:insertions}

\newcommand{\N}{\mathcal{N}}

We handle the insertion of an element $e'$   by calling the procedure  in Figure~\ref{main:fig:insertion}, where  $\S_{e'} = \{ s \in \S : e' \in s\}$ denotes the collection of sets  containing  $e'$. From the outset,  we  often do not explicitly specify how the real-weight  and dead-weight of a set $s$ changes due to the execution of the procedure in Figure~\ref{main:fig:insertion}. Instead, they will be implicitly determined as: $w_s = \sum_{e \in s} w_e$ and $w^*_s = w_s + \phi_s$.

Step (01) assigns the element $e'$ a level and a weight in  accordance with Property~\ref{main:static:property:level}. This increases the real-weight  and the total-weight  of every set $s \in \S_{e'}$ by $w_{e'}$. So the sets in $\S_{e'}$ can now potentially violate Invariant~\ref{main:inv:deletions:weights}.  We say that a set $s$ is {\em dirty} if it violates Invariant~\ref{main:inv:deletions:weights}, and  {\em clean} otherwise. Note that Observation~\ref{main:ob:dirty:1} is satisfied at this juncture.   The {\sc For} loop in Step (02) takes care of these dirty sets. Before proceeding  further, we need to define a few important notations.

\smallskip
\noindent {\bf Notations:} For any set $s \in \S$, we let $\N(s) = \{s' \in \S \setminus \{s\} : s \cap s' \neq \emptyset \}$ denote the {\em neighbors} of $s$. Next, fix any set $s \in \S$ and consider the following thought experiment. Suppose that we move the set $s$ from its current level   to some other level $j$, while keeping the levels of all the remaining sets $s' \in \S \setminus \{s\}$ unchanged. This potentially changes the levels and weights of some of the elements $e \in s$ in accordance with Property~\ref{main:static:property:level}, and hence the real-weights $w_{s'}$ of some of the sets  $s' \in \N(s)$  also get changed. Let $w_{s'}(s \rightarrow j)$ denote the resulting real-weight of a set $s'$ after  $s$ has moved  to level $j$.  It is easy to check that $w_{s'}(s \rightarrow j)$ is a continuous non-increasing function of $j$ for all $s', s \in \S$, and that $w_{s}(s \rightarrow \infty) = 0$ for all $s \in \S$. This leads us to the  concept of the {\em target level} $\ell^*(s)$ of a set $s \in \S$ with real-weight $w_s > 1$: If a set $s \in \S$ has real-weight $w_s > 1$, then  $\ell^*(s) = \min\{ j : w_s(s \rightarrow j) = 1\}$.  Note that $\ell^*(s) > \ell(s)$. 

\begin{observation}
\label{main:ob:dirty:1}
A set $s'$ is dirty only if $s' \in \S_{e'}$ and $w^*_{s'}  > 1$. 
\end{observation}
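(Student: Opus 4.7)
The plan is to argue that Step~(01) perturbs the state in a very controlled way, and that the observation then follows by a simple case split on whether $s' \in \S_{e'}$ and on the previous level of $s'$. My starting point is the inductive observation that just before Step~(01), Invariant~\ref{main:inv:deletions:weights} holds for every $s \in \S$ (the invariants are restored after the previous update, and we are at the top of the procedure for handling the insertion of $e'$). Next I will use the semantics of Step~(01): it only assigns $e'$ a level and a weight per Property~\ref{main:static:property:level}; it does not change the level of any set, nor the level or weight of any preexisting element. Consequently, the only sets whose real-weight---and therefore total-weight $w^*_s = w_s + \phi_s$---changes are those in $\S_{e'}$, and for each such $s$ the total-weight increases by exactly the positive amount $w_{e'}$.

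From here I split into two cases. For $s' \notin \S_{e'}$, nothing about $s'$ has changed: both $\ell(s')$ and $w^*_{s'}$ are the same as before Step~(01), so Invariant~\ref{main:inv:deletions:weights} still holds at $s'$ and $s'$ is clean. For $s' \in \S_{e'}$, the level $\ell(s')$ is unchanged but the total-weight strictly increased. If $\ell(s') > 0$, then just before Step~(01) we had $w^*_{s'} = 1$ by the tightness clause of Invariant~\ref{main:inv:deletions:weights}, so just after Step~(01) we have $w^*_{s'} = 1 + w_{e'} > 1$; this both makes $s'$ dirty (the first clause of the invariant is now violated) and certifies $w^*_{s'} > 1$. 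If instead $\ell(s') = 0$, the tightness clause of the invariant is vacuous at $s'$, so the only way $s'$ can be dirty is for the first clause $w^*_{s'} \leq 1$ to fail, which again forces $w^*_{s'} > 1$.

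Combining the two cases, whenever $s'$ is dirty we must have $s' \in \S_{e'}$ and $w^*_{s'} > 1$, which is the claim. There is no real obstacle here: the observation is essentially a bookkeeping statement that follows from the controlled nature of Step~(01) together with $w_{e'} > 0$. The only subtlety worth flagging is that one must invoke the tightness clause of Invariant~\ref{main:inv:deletions:weights} (not merely $w^*_s \leq 1$) when handling positive-level sets; without that clause one could not immediately rule out a positive-level set becoming dirty because its total-weight dropped below $1$, but that situation cannot arise here since Step~(01) only ever increases total-weights.
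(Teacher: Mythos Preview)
Your proof is correct and follows essentially the same reasoning the paper uses: the paper does not give a standalone proof of this observation, but the paragraph immediately preceding it (``Step (01) assigns the element $e'$ a level and a weight\ldots This increases the real-weight and the total-weight of every set $s \in \S_{e'}$ by $w_{e'}$. So the sets in $\S_{e'}$ can now potentially violate Invariant~\ref{main:inv:deletions:weights}'') is exactly the argument you spell out. Your case split on $\ell(s')$ and your explicit invocation of the tightness clause for positive-level sets are a faithful expansion of that one-line justification.
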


\begin{figure}[htbp]
                                                \centerline{\framebox{
                                                                \begin{minipage}{5.5in}
                                                                        \begin{tabbing}                                                                            
                                                                                01.  \=  Assign the element $e'$ a level $\ell(e') \leftarrow \max_{s \in \S_{e'}} \ell(s)$ and weight $w_{e'} \leftarrow (1+\epsilon)^{-\ell(e')}$. \\     
                                                                                02. \> {\sc For} every set $s \in \S_{e'}$:   Call the subroutine FIX$(s)$. \\                                                                                 
                                                                                03. \> {\sc While} Invariant~\ref{main:inv:deletions:rebuild} is violated: \\
                                                                                04. \> \ \ \ \ \ \= Identify the smallest level $k \geq 0$ such that $\sum_{s \in \S : \ell(s) \leq k} \phi_s > \epsilon \cdot f \cdot \sum_{e \in \U : \ell(e) \leq k} w_e$. \\
                                                                                05. \> \> Call the subroutine {\sc Rebuild}($\leq k$) as described in Section~\ref{main:sec:deletions}.
                                                                        \end{tabbing}
                                                                \end{minipage}
                                                        }}
                                                        \caption{\label{main:fig:insertion} Handling the insertion of an element $e'$.}
                                                \end{figure}

\noindent 
{\bf The subroutine {\sc Fix}$(s)$:} By induction, suppose that Observation~\ref{main:ob:dirty:1} holds at the start of a given call to  {\sc Fix}$(s)$. The subroutine  first checks if $w^*_s > 1$. If not, then Observation~\ref{main:ob:dirty:1} implies that the set $s$ is already clean and hence the subroutine finishes execution and returns the call. From now onward, we assume that $w^*_s = 1+\mu_s$ for some $\mu_s > 0$ at the beginning of the call. If $\phi_s \geq \mu_s$, then we set $\phi_s \leftarrow \phi_s - \mu_s$. This makes the set $s$ clean, and again the subroutine finishes execution. Hence, from now onward, we assume that $\mu_s > \phi_s$ at the beginning of the call. We first set $\phi_s \leftarrow 0$, in order to reduce the total-weight of $s$ as much as possible. At this stage, we have $w^*_s = w_s > 1$ and $\phi_s = 0$. The subroutine now moves the set $s$ up to its target-level $\ell^*(s) = j$ (say). This upward movement is achieved via a continuous process. Informally, as the set $s$ keeps moving up, some of its neighbors $s' \in \N(s)$ keep losing their real-weights (because the weights of the some of the elements  $e \in s' \cap s$  keep decreasing). In order to compensate for this loss, the affected neighbors $s' \in \N(s)$ keep increasing their dead-weights $\phi_{s'}$ in a continuous manner, whenever possible.

To be more specific, consider an infinitesimal time-interval during this continuous process when the set $s$ moves up from level $\lambda$ to $\lambda+d\lambda$.  As a result, some of the elements $e \in s$ have their weights decreased. This in turn change the real-weights $w_{s'}$ of some of the neighbors $s' \in \N(s)$ by $d w_{s'}(s \rightarrow \lambda)$. Note that $d w_{s'}(s \rightarrow \lambda) \leq 0$. If $w_{s'}(s \rightarrow \lambda) \leq 1$, then  we set  $\phi_{s'} \leftarrow \phi_{s'} - d w_{s'}(s \rightarrow \lambda)$, in order to compensate for the loss of real-weight of $s'$ during this infinitesimally small time-interval.

Thus, from the perspective of a neighbor $s' \in \N(s)$, the process looks like this: As the set $s$ keeps moving up, the real-weight of $s'$ keep decreasing  in a continuous manner, until  $w_{s'}$ becomes $\leq 1$. From this point onward, the dead-weight $\phi_{s'}$ keeps increasing at the same rate at which the real-weight $w_{s'}$ decreases (thereby keeping the total-weight $w^*_{s'}$ constant). 

 
\begin{observation}
\label{main:ob:dirty:2}
A call to {\sc Fix}$(s)$ never leads to an already clean set becoming dirty. Furthermore, if Observation~\ref{main:ob:dirty:1} holds in the beginning of the call, then it continues to hold at the end of the call.
\end{observation}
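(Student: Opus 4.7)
My plan is to prove Observation~\ref{main:ob:dirty:2} by a case analysis on which sets can possibly change during a call to {\sc Fix}$(s)$, combined with a monotonicity property for the total-weights $w^*_{s'}$ of the neighbors $s' \in \N(s)$.

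First, I would observe that the only sets whose real-weight, dead-weight, or level can change during the call are $s$ and its neighbors in $\N(s)$; every other set is disjoint from $s$, so its state is trivially preserved. For the set $s$ itself, inspecting the three branches of the procedure shows that $s$ always ends the call with $w^*_s \leq 1$ and, whenever $s$ was dirty at the start, with $w^*_s = 1$ (tight, hence clean): either the subroutine returns immediately with no change, or we reset $\phi_s \leftarrow \phi_s - \mu_s$ so that $w^*_s = 1$, or we move $s$ up to its target level $\ell^*(s)$ where by construction $w_s = 1$ and $\phi_s = 0$. In particular $s$ is never dirty at the end, and if it was clean to begin with it remains clean.

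The heart of the argument is the following monotonicity claim: for every neighbor $s' \in \N(s)$, the total-weight $w^*_{s'}$ is non-increasing throughout the continuous upward movement of $s$. In an infinitesimal step where $s$ moves from level $\lambda$ to $\lambda + d\lambda$, if $w_{s'}(s \to \lambda) \leq 1$ the update rule increments $\phi_{s'}$ by exactly $-d w_{s'}(s \to \lambda) \geq 0$, so $w^*_{s'}$ is unchanged; if $w_{s'}(s \to \lambda) > 1$, then $\phi_{s'}$ is not touched and $w^*_{s'}$ decreases by $-d w_{s'}(s \to \lambda)$. Integrating, the total drop in $w^*_{s'}$ over the whole call is at most $\max\{w_{s'} - 1,\, 0\}$, where $w_{s'}$ is evaluated at the start of the call. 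In particular, if $w^*_{s'} \leq 1$ at the start then $\phi_{s'} \geq 0$ forces $w_{s'} \leq 1$, so $w^*_{s'}$ stays constant; hence any clean neighbor (whether tight at level $> 0$, or with $w^*_{s'} \leq 1$ at level $0$) remains clean. Combined with the preceding paragraph, this establishes the first assertion of Observation~\ref{main:ob:dirty:2}.

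For the second assertion, suppose at the end of the call some set $s''$ is dirty. By the first assertion $s''$ was already dirty at the start, so Observation~\ref{main:ob:dirty:1} at the start gives $s'' \in \S_{e'}$ and $w^*_{s''} > 1$. Membership in $\S_{e'}$ is clearly preserved, and the monotonicity claim bounds the drop in $w^*_{s''}$ by $w_{s''} - 1$, so at the end $w^*_{s''} \geq 1 + \phi_{s''} \geq 1$ (with $\phi_{s''}$ evaluated at the start). If $w^*_{s''} = 1$ at the end, then $s''$ would be tight and hence clean, a contradiction; so $w^*_{s''} > 1$, re-establishing Observation~\ref{main:ob:dirty:1}. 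The main technical obstacle is the monotonicity claim in the third paragraph, which requires carefully separating the two regimes of the dead-weight update rule and verifying that the compensation by $\phi_{s'}$ kicks in exactly at the threshold $w_{s'} = 1$; the remaining steps are routine bookkeeping.
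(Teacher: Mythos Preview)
Your proposal is correct and aligns with the paper's own (implicit) justification. The paper does not give a formal proof of this observation; it simply states it right after the paragraph describing the neighbor's perspective (``As the set $s$ keeps moving up, the real-weight of $s'$ keeps decreasing \ldots\ until $w_{s'}$ becomes $\leq 1$. From this point onward, the dead-weight $\phi_{s'}$ keeps increasing at the same rate \ldots\ thereby keeping the total-weight $w^*_{s'}$ constant''), which is exactly your monotonicity claim. Your case analysis on $s$, on neighbors in $\N(s)$, and on unaffected sets, together with the observation that a clean set has $w^*_{s'} \leq 1$ and hence $w_{s'} \leq 1$, is precisely the reasoning the paper leaves to the reader.
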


At the end of the {\sc For} loop in Figure~\ref{main:fig:insertion}, every set is clean and hence Invariant~\ref{main:inv:deletions:weights} is satisfied.  However, the dead-weights of some of the sets are increased due to the calls to {\sc Fix}$(s)$. This might lead to a violation of Invariant~\ref{main:inv:deletions:rebuild}. This is addressed by the {\sc While} loop in steps (03)-(05). Observation~\ref{main:ob:rebuild:2} implies that both the invariants hold at the end of procedure in Figure~\ref{main:fig:insertion}.

\subsection{Bounding the approximation ratio and amortized update time}
\label{main:sec:approx}
\label{main:sec:time}


The following theorem upper bounds the approximation ratio of our dynamic algorithm. 

\begin{theorem}
\label{main:th:approx}
The collection of tight sets $\T^* = \{ s \in  \S : w^*_s = 1\}$ forms a $(1+\epsilon) f$-approximate minimum set cover in $(\U, \S)$.
\end{theorem}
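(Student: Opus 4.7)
The plan is to mimic the proof of Lemma~\ref{main:cor:static:dual}, but using the total-weights $w^*_s$ (which obey Property~\ref{main:static:property:tight} via Invariant~\ref{main:inv:deletions:weights}) in the role played by the real-weights in the static algorithm, and then paying for the gap between $w_s$ and $w^*_s$ using Invariant~\ref{main:inv:deletions:rebuild}.

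First I would check that $\mathcal{T}^*$ is a valid set cover. Fix any element $e \in \mathcal{U}$. If some set $s \ni e$ has $\ell(s) > 0$, then Invariant~\ref{main:inv:deletions:weights} gives $w^*_s = 1$ and we are done. Otherwise every $s \ni e$ is at level $0$, so by Property~\ref{main:static:property:level} we have $\ell(e) = 0$ and $w_e = 1$. Then for any $s \ni e$, $w_s \geq w_e = 1$, and combining with $\phi_s \geq 0$ and Invariant~\ref{main:inv:deletions:weights} we get $1 \leq w_s \leq w_s + \phi_s = w^*_s \leq 1$, so $s \in \mathcal{T}^*$.

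Second, I would bound $|\mathcal{T}^*|$. Since every $s \in \mathcal{T}^*$ satisfies $w^*_s = 1$,
\[
|\mathcal{T}^*| \;=\; \sum_{s \in \mathcal{T}^*} w^*_s \;\leq\; \sum_{s \in \mathcal{S}} w^*_s \;=\; \sum_{s \in \mathcal{S}} w_s \;+\; \sum_{s \in \mathcal{S}} \phi_s.
\]
Swapping the order of summation in the first term and using that every element lies in at most $f$ sets gives $\sum_{s \in \mathcal{S}} w_s = \sum_{e \in \mathcal{U}} |\{s : e \in s\}| \cdot w_e \leq f \sum_{e \in \mathcal{U}} w_e$. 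Invariant~\ref{main:inv:deletions:rebuild} bounds the second term by $\epsilon f \sum_{e \in \mathcal{U}} w_e$.

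Finally, Invariant~\ref{main:inv:deletions:weights} tells us $w_s \leq w^*_s \leq 1$ for every $s$, so $\{w_e\}_{e \in \mathcal{U}}$ is a feasible fractional packing and Lemma~\ref{main:lm:dual} yields $\sum_{e \in \mathcal{U}} w_e \leq \mathsf{OPT}$. Chaining the bounds gives $|\mathcal{T}^*| \leq (1+\epsilon) f \sum_{e \in \mathcal{U}} w_e \leq (1+\epsilon) f \cdot \mathsf{OPT}$. There is no real obstacle here; the only subtle point is remembering to use the real-weights (not the total-weights) as the dual fractional packing, since feasibility requires $w_s \leq 1$, which we only get for free from $w_s \leq w^*_s \leq 1$.
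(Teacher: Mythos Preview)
Your proof is correct and follows essentially the same approach as the paper's: verify $\mathcal{T}^*$ is a set cover via Invariant~\ref{main:inv:deletions:weights} and Property~\ref{main:static:property:level}, bound $|\mathcal{T}^*|$ by $f\sum_e w_e + \sum_s \phi_s$, invoke Invariant~\ref{main:inv:deletions:rebuild} on the dead-weight term, and conclude via Lemma~\ref{main:lm:dual}. The only cosmetic difference is that you relax to $\sum_{s\in\mathcal{S}} w^*_s$ immediately whereas the paper keeps the sum over $\mathcal{T}^*$ and adds $\sum_{s\in\mathcal{T}^*}\phi_s$ to both sides before relaxing; the two routes yield the same bound.
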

\begin{proof}
Following the argument in the proof of Lemma~\ref{main:cor:static:dual},  Invariant~\ref{main:inv:deletions:weights} implies that the collection of tight sets $\T^*$  forms a set cover in $(\U, \S)$, and the element-weights $\{ w_e \}_{e \in \U}$ form a  fractional packing in $(\U, \S)$. Next, as in the proof of Lemma~\ref{main:static:approx}, we first derive that $\sum_{e \in \U} f \cdot w_e \geq \sum_{s \in \T^*} w_s$. Adding the term $\sum_{s \in \T^*} \phi_s$ to both sides of this inequality, we get: $\sum_{e \in \U} f \cdot w_e + \sum_{s \in \T^*} \phi_s \geq \sum_{s \in \T^*} (w_s+\phi_s) = \sum_{s \in \T^*} w^*_s = |\T^*|$. Next, from Invariant~\ref{main:inv:deletions:rebuild} we derive that: $(1+\epsilon)f \cdot \sum_{e \in \U} w_e \geq \sum_{e \in \U} f \cdot w_e + \sum_{s \in \T^*} \phi_s \geq  |\T^*|$. In other words, there is a fractional packing $\{w_e\}_{e \in \U}$ whose value is within a multiplicative $(1+\epsilon) f$ factor of the size of a valid set cover $\T^*$.  Hence, $\T^*$ is a $(1+\epsilon)f$-approximate minimum set cover in $(\U, \S)$ according to Lemma~\ref{main:lm:dual}.
\end{proof}

\renewcommand{\E}{\mathcal{E}}

We spend the rest of this section explaining the main ideas behind the analysis of the amortized update time of our algorithm. We start with an assumption that helps simplify this analysis. 
\begin{assumption}
\label{main:assume:nice:insertion}
Suppose that an element $e'$ getting inserted  is assigned to a level $\ell(e') = j'$ in step (01) of Figure~\ref{main:fig:insertion}. After that, the level of  $e'$ does not change  during the {\sc For} loop in step (02). 
\end{assumption}

The update time of our algorithm is dominated by two main types of operations: (1) an iteration of the {\sc For} loop in Figure~\ref{main:fig:insertion}  where a set $s$ potentially moves up to its target-level, and (2) a call to the subroutine {\sc Rebuild}$(\leq k)$. For an operation of type  (1), in this section we bound the {\em fractional work} done by our algorithm instead of the actual time taken to implement it. We  give an intuitive justification as to why  fractional work is a useful proxy for the actual running time that is analyzed in the full version. In order to bound the time spent on operations of type (2), we introduce the notion of {\em down-tokens}. We now explain each of these concepts in  more details.

\smallskip
\noindent {\bf Fractional work:} Consider an event where a set $s \in \S$ moves up from level $j_0$ to level $j_1$, and the level of every other set remains unchanged. This event can change the level of an element $e \in \U$ only if $e \in s$. For all $e \in s$, let $\ell_0(e)$ and $\ell_1(e)$ respectively denote the level of $e$ just before and just after the event. Then the total {\em fractional work} done  during this event  $= \sum_{e \in s} f \cdot (\ell_1(e) - \ell_0(e))$. 

\smallskip
\noindent {\bf Justification for fractional work:} In the full version our starting point will be a {\em discretized variant} of the static algorithm from Section~\ref{main:sec:prelim}, where in each round the weights of of the alive elements increase by a multiplicative factor of $(1+\epsilon)$ and the level of every set and element is an integer in the range $\{0, 1, \ldots, \lceil \log_{(1+\epsilon)} n \rceil\}$. Using appropriate data structures, we can ensure that our algorithm spends $O(f)$ time for each element increasing its level by one unit.  This precisely corresponds to the notion of fractional work defined above (when the levels are integers).

\smallskip
\noindent {\bf Down-tokens:} We associate  $(1+\epsilon)^{\ell(s)} \cdot \phi_s$ amount of {\em down-tokens} with each set $s \in \S$. The {\em total volume} of down-tokens is given by $\sum_{s \in \S} (1+\epsilon)^{\ell(s)} \cdot \phi_s$.  

\smallskip
\noindent {\bf The parameter $\gamma_{\epsilon}$:} In the rest of this section, to ease notations we define  $\gamma_{\epsilon} = (\ln(1+\epsilon))^{-1} = O(\frac{1}{\epsilon})$. 

\smallskip
\noindent {\bf Overview of our analysis:} By Lemma~\ref{main:lm:insertions:2},  the total fractional work done per update  due to operations of type (1) is at most $O(f^2 \gamma_{\epsilon}) = O(f^2/\epsilon)$.
It now remains to bound the total time spent on operations of type (2). Towards this end, we make the following important observations: (a) Excluding the calls to {\sc Rebuild}$(\leq k)$, the procedure for handling the insertion of an element increases the total volume of down-tokens by at most $O(f^2)$ (see Corollary~\ref{main:cor:lm:insertions:1}). (b) Excluding the calls to {\sc Rebuild}$(\leq k)$, the procedure for handling  the deletion of an element increases the total-volume of down-tokens by at most $O(f)$. This holds because when an element $e$ gets deleted, the dead-weight $\phi_s$ associated with each set $s$ containing it increases by $w_e = (1+\epsilon)^{-\ell(e)}$ (see the first paragraph in Section~\ref{main:sec:deletions}).  Hence, the total volume of down-tokens increases by $\sum_{s \in \S : e \in s}(1+\epsilon)^{\ell(s)} \cdot (1+\epsilon)^{-\ell(e)} \leq \sum_{s \in \S : e \in s} (1+\epsilon)^{\ell(e)} \cdot (1+\epsilon)^{-\ell(e)} \leq f$. (c) The total time spent on all the calls to {\sc Rebuild}$(\leq k)$ is at most $O(1/\epsilon)$ times the decrease in the total volume of down-tokens (see Lemma~\ref{main:lm:deletions}). Since the total volume of down-tokens is always nonnegative, all these observations taken together imply an amortized update time of $O(f^2/\epsilon)$. This is slightly better than the bound  in Theorem~\ref{th:main:intro:both results}, because in the full version we do not have the luxury of analyzing {\em fractional work}.

\begin{lemma}
\label{main:lm:insertions:1}
Consider a call to the subroutine {\sc Fix}$(s)$ in Figure~\ref{main:fig:insertion}, which moves the set $s$ up to its target-level. Consider an infinitesimally small time-interval during this iteration when the set $s$ moves up from level $\lambda$ to level $\lambda+d\lambda$. During this infinitesimally small time-interval, (a) the  fractional work done   $= -f \gamma_{\epsilon} (1+\epsilon)^{\lambda} \cdot d w_s(s \rightarrow \lambda)$, and (b) the total volume of down-tokens increases by  $\leq -f  (1+\epsilon)^{\lambda} \cdot d w_s(s \rightarrow \lambda)$. Here, we have $dw_s(s \rightarrow \lambda) = w_s(s \rightarrow \lambda+d\lambda) - w_s(s \rightarrow \lambda)$. 
\end{lemma}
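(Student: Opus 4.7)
My plan is to reduce both parts to a single infinitesimal computation of how the weights of elements $e \in s$ change when $s$ is pushed from level $\lambda$ to level $\lambda + d\lambda$. Let $E_s \subseteq s$ be the collection of those elements $e \in s$ with $\ell(e) = \ell(s) = \lambda$ for which $s$ is the unique maximum-level set containing $e$ at this instant. By Property~\ref{main:static:property:level}, these are exactly the elements whose level is pulled along with $s$'s level; for each $e \in E_s$ we have $d\ell(e) = d\lambda$, while for every other $e \in s$ the level (and hence weight) stays put. Differentiating $w_e = (1+\epsilon)^{-\ell(e)}$ at $\ell(e) = \lambda$ gives
\[
dw_e \;=\; -\ln(1+\epsilon)\,(1+\epsilon)^{-\lambda}\,d\lambda \;=\; -\gamma_{\epsilon}^{-1}\,(1+\epsilon)^{-\lambda}\,d\lambda.
\]
Summing over $E_s$ yields $dw_s(s \to \lambda) = -|E_s|\,\gamma_{\epsilon}^{-1}\,(1+\epsilon)^{-\lambda}\,d\lambda$, which rearranges to the key identity $|E_s|\,d\lambda = -\gamma_{\epsilon}\,(1+\epsilon)^{\lambda}\,dw_s(s \to \lambda)$; this single identity will drive both bounds.

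For part (a), I apply the definition of fractional work directly: only elements of $E_s$ contribute, and each contributes $f \cdot d\lambda$, so the total fractional work is $f \cdot |E_s| \cdot d\lambda$. Multiplying the identity above by $f$ then gives exactly $-f\gamma_{\epsilon}(1+\epsilon)^{\lambda}\,dw_s(s \to \lambda)$, as claimed.

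For part (b), note first that the down-tokens of $s$ itself do not move, because $\phi_s = 0$ throughout the current invocation of \textsc{Fix}$(s)$. The only sets whose dead-weight changes are neighbors $s' \in \mathcal{N}(s)$ with $w_{s'}(s \to \lambda) \leq 1$; for each such $s'$ the algorithm sets $d\phi_{s'} = -dw_{s'}(s \to \lambda)$, while $\ell(s')$ stays fixed, so $s'$ contributes $(1+\epsilon)^{\ell(s')}\bigl(-dw_{s'}(s \to \lambda)\bigr)$ to the increase in the total volume of down-tokens. I will then write $dw_{s'}(s \to \lambda) = \sum_{e \in E_s \cap s'} dw_e$ and swap the order of summation to get a sum indexed by elements $e \in E_s$. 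For each such $e$, Property~\ref{main:static:property:level} gives $\ell(s') \leq \ell(e) = \lambda$ for every $s' \ni e$, so $(1+\epsilon)^{\ell(s')} \leq (1+\epsilon)^{\lambda}$; moreover $e$ lies in at most $f-1$ sets other than $s$. Combining these two facts with the explicit formula for $dw_e$ yields an upper bound of
\[
(f-1)\,(1+\epsilon)^{\lambda} \cdot |E_s| \cdot \gamma_{\epsilon}^{-1}(1+\epsilon)^{-\lambda}\,d\lambda \;\leq\; f\,|E_s|\,\gamma_{\epsilon}^{-1}\,d\lambda \;=\; -f\,(1+\epsilon)^{\lambda}\,dw_s(s\to\lambda),
\]
which is exactly the stated bound.

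The only real subtlety I foresee lies in defining $E_s$ cleanly at instants where two sets containing some $e \in s$ are simultaneously at level $\lambda$: at such tie-instants, moving $s$ alone does not move $\ell(e)$, so $e$ must not be counted in $E_s$ at that instant. Since these ties form a measure-zero subset of the continuous time axis on which \textsc{Fix}$(s)$ runs, they do not affect the infinitesimal accounting and can safely be ignored in the integrated bounds used by Section~\ref{main:sec:approx}.
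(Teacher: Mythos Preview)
Your proof is correct and follows essentially the same route as the paper: your set $E_s$ is exactly the paper's $X_s(\lambda)$, the key identity $|E_s|\,d\lambda = -\gamma_\epsilon (1+\epsilon)^\lambda\,dw_s(s\to\lambda)$ is derived in the same way, and part~(b) proceeds by the same double-counting over element--set incidences (you even sharpen the paper's factor $f$ to $f-1$ before relaxing it back). One small inaccuracy in your closing remark: at a tie instant where some other $s'\ni e$ also sits at level $\lambda$, moving $s$ up to $\lambda+d\lambda$ \emph{does} drag $\ell(e)$ along (the right-derivative is $1$, not $0$); but as you say, this affects only a measure-zero set of $\lambda$ values and is irrelevant to the integrated bounds.
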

\begin{proof}(Sketch) Consider the collection of elements $X_{s}(\lambda) = \{ e \in \U : e \in s, \text{ and } \ell(s') \leq \lambda \text{ for all } s' \in \S \setminus \{s\} \text{ with } e \in s'  \}$. As the set $s$ moves up from level $\lambda$ to level $\lambda+d\lambda$, each element $e \in X_{s}(\lambda)$ changes its level by $d\lambda$ and no other element changes its level.\footnote{Since we consider an infinitesimally small interval and  the collection $\{ \ell(e) : e \in \U\}$ is finite,  $X_{s}(\lambda) = X_{s}(\lambda+d\lambda)$. } Hence, we get: 
\begin{equation}
\label{main:eq:work:1}
\text{The fractional work done during this interval }= f |X_s(\lambda)| \cdot d\lambda.
\end{equation} 
When the set $s$ is at level $\lambda$, each element $e \in X_s(\lambda)$ has weight $(1+\epsilon)^{-\lambda}$. Starting from $\lambda$, if we increase the level of $s$ by an infinitesimal amount, then only the elements $e \in X_s(\lambda)$ change their weights, while the weights of every other element in $s$ remains unchanged. Hence, we derive that:
$$\frac{d w_{s}(s \rightarrow \lambda)}{d\lambda} = \frac{d}{d\lambda} \left( |X_s(\lambda)| \cdot (1+\epsilon)^{-\lambda}\right) = - \gamma_{\epsilon}^{-1} \cdot (1+\epsilon)^{-\lambda} \cdot   |X_s(\lambda)|.$$
Thus, we get: $|X_s(\lambda)| \cdot d\lambda = - \gamma_{\epsilon} \cdot (1+\epsilon)^{\lambda} \cdot d w_{s}(s \rightarrow \lambda)$. Part (a) of the lemma now follows from~(\ref{main:eq:work:1}).

Let  $dw$ be the change in the weight of an element $e \in X_s(\lambda)$  as the set $s$ moves up from level $\lambda$ to level $\lambda+d\lambda$. Note that $dw < 0$. From the preceding discussion, it follows that: 
\begin{equation}
\label{main:eq:work:2}
d w_s(s \rightarrow \lambda) = |X_s(\lambda)| \cdot dw.
\end{equation} 
Consider the collection of sets $\S^*(\lambda) = \{ s' \in \S \setminus \{s\} :  s' \cap X_s(\lambda) \neq \emptyset \text{ and } \ell(s') \leq \lambda\}$. As the set $s$ moves up from level $\lambda$ to level $\lambda+d\lambda$, each set $s' \in  \S^*(\lambda)$ decreases its real-weight $w_{s'}$ by $|s' \cap X_s(\lambda)| \cdot (-dw) \geq 0$, and the real-weight of every other set $s' \notin \S^*(\lambda) \cup \{s\}$ remains unchanged. For each of these sets $s' \in \S^*(\lambda)$, the increase in its dead-weight $\phi_{s'}$  is upper bounded by the decrease in its real-weight (see the description of  {\sc Fix}$(s)$ in Section~\ref{main:sec:insertions}). Hence, for each set $s' \in \S^*(\lambda)$, we get $0 \leq d\phi_{s'} \leq |s' \cap X_s(\lambda)| \cdot (-dw)$. None of the other sets change their dead-weights as $s$ moves up from level $\lambda$ to level $\lambda+d\lambda$. Thus,  total volume of the down-tokens increases by:
\begin{eqnarray*}
\sum_{s' \in \S^*(\lambda)} (1+\epsilon)^{\ell(s')} \cdot d\phi_{s'} \leq (1+\epsilon)^{\lambda}  \cdot \sum_{s' \in \S^*(\lambda)}  d\phi_{s'} \leq (1+\epsilon)^{\lambda}  \cdot (-dw) \cdot \sum_{s' \in \S^*(\lambda)} |s' \cap X_s(\lambda)| \\
\leq  (1+\epsilon)^{\lambda}  \cdot (-dw) \cdot f \cdot |X_s(\lambda)|.
\end{eqnarray*}
The last inequality holds since each element in $X_s(\lambda)$ is contained in at most $f$ sets from $\S^*(\lambda)$. By~(\ref{main:eq:work:2}),  the increase in the total volume of down-tokens is $\leq -f (1+\epsilon)^{\lambda} \cdot d w_s(s \rightarrow \lambda)$. 
\end{proof}

\begin{lemma}
\label{main:lm:insertions:2}
During steps (01)-(02) in Figure~\ref{main:fig:insertion}, the total fractional work done is  $O(f^2 \gamma_{\epsilon})$.
\end{lemma}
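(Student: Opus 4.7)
(Plan)
The plan is to show that each individual call to {\sc Fix}$(s)$ inside the {\sc For} loop does $O(f\gamma_\epsilon)$ fractional work, and then multiply by the number of calls. Since $|\S_{e'}|\le f$, step (02) triggers at most $f$ calls to {\sc Fix}$(s)$, and step (01) does no fractional work (it only assigns a level and a weight to the single element $e'$). So the desired bound $O(f^2\gamma_\epsilon)$ follows once we establish the per-call bound.

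Fix a call to {\sc Fix}$(s)$. If the subroutine returns without moving $s$ (i.e., $w^*_s\le 1$, or the surplus $\mu_s$ is absorbed entirely by reducing $\phi_s$), then no fractional work is done and we are done. Otherwise $s$ moves continuously from its initial level $\ell(s)$ up to its target level $\ell^*(s)$, and its real-weight changes from $w_s=1+\mu_s$ down to $w_s(s\to \ell^*(s))=1$; in particular, $-\!\int_{\ell(s)}^{\ell^*(s)} dw_s(s\to\lambda) \;=\; \mu_s$. Applying Lemma~\ref{main:lm:insertions:1}(a) and integrating, the fractional work done in this call is
\[
\text{Work}(s)\;=\;f\gamma_\epsilon \int_{\ell(s)}^{\ell^*(s)} (1+\epsilon)^{\lambda}\cdot\bigl(-dw_s(s\to\lambda)\bigr)
\;\le\; f\gamma_\epsilon\cdot(1+\epsilon)^{\ell^*(s)}\cdot \mu_s,
\]
where the inequality uses that $(1+\epsilon)^\lambda$ is monotone and $-dw_s\ge 0$ throughout the movement.

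It remains to argue two bounds:
\begin{enumerate}
\item $\ell^*(s)\le \ell(e')$. This is a direct consequence of Assumption~\ref{main:assume:nice:insertion}: since $e'\in s$ and $\ell(e')=\max_{s''\in \S_{e'}}\ell(s'')$, if $s$ moved strictly above $\ell(e')$ then $\ell(e')$ would have to update, contradicting the assumption that the level of $e'$ is unchanged during the {\sc For} loop.
\item $\mu_s\le w_{e'}=(1+\epsilon)^{-\ell(e')}$. Just before the update, Invariant~\ref{main:inv:deletions:weights} gives $w^*_s\le 1$. Step (01) raises $w^*_s$ by exactly $w_{e'}$ for every $s\in\S_{e'}$, and nothing else. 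Subsequent calls to {\sc Fix}$(s'')$ for earlier $s''\in\S_{e'}$ in the loop can only decrease $w^*_s$ (movements of $s''$ decrease real-weights of neighbors, and the dead-weight compensation rule in {\sc Fix} keeps the total-weight non-increasing for the neighbor). Hence at the moment {\sc Fix}$(s)$ is invoked, $w^*_s\le 1+w_{e'}$, i.e.\ $\mu_s\le w_{e'}$.
\end{enumerate}

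Combining these two bounds, $(1+\epsilon)^{\ell^*(s)}\mu_s \le (1+\epsilon)^{\ell(e')}\cdot(1+\epsilon)^{-\ell(e')}=1$, and therefore $\text{Work}(s)\le f\gamma_\epsilon$. Summing over the at most $f$ calls to {\sc Fix} in the {\sc For} loop, the total fractional work in steps (01)--(02) is at most $f\cdot f\gamma_\epsilon = O(f^2\gamma_\epsilon)$, as claimed.

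The main subtlety (and likely the only part that needs care in the full write-up) is item 2 above: verifying that a call {\sc Fix}$(s'')$ earlier in the {\sc For} loop cannot increase $w^*_s$ for a later set $s\in\S_{e'}$. This relies on the exact bookkeeping of real- and dead-weights inside {\sc Fix} and on the fact that resetting $\phi_{s''}\leftarrow 0$ affects only $s''$ itself.
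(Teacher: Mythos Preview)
Your proof is correct and follows essentially the same approach as the paper: bound the fractional work in each call to {\sc Fix}$(s)$ by $O(f\gamma_\epsilon)$ via Lemma~\ref{main:lm:insertions:1}(a), using $\ell^*(s)\le\ell(e')$ (from Assumption~\ref{main:assume:nice:insertion}) and the surplus bound $\le (1+\epsilon)^{-\ell(e')}$, then multiply by $|\S_{e'}|\le f$. One minor slip: when the set actually moves, the real weight at the start of the movement is $w_s=1+\mu_s-\phi_s^{\mathrm{old}}$ (after resetting $\phi_s\leftarrow 0$), not $1+\mu_s$; but since this only shrinks the integral $-\!\int dw_s$, your upper bound remains valid.
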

\begin{proof}
Suppose that the {\sc For} loop in step (02) runs for $r$ iterations, where in each iteration $i \in \{1, \ldots, r\}$ it deals with a distinct set $s_i \in \S_{e'}$. We will show that the fractional work done during each iteration is $O(f \gamma_{\epsilon})$. Since $r = |\S_{e'}| \leq f$, this will imply the lemma. 

For the rest of the proof, focus on any iteration $i \in \{1, \ldots, r\}$, and the call to {\sc Fix}$(s_i)$. At the start of this call, we have $w^*_s = w_s = 1+\delta_i$, for some $\delta_i > 0$, even after resetting the dead-weight $\phi_{s_i} \leftarrow 0$. (Otherwise, the set $s_i$ does not change its level and the fractional work done $= 0$).  At the end of this iteration, the set $s_i$ has moved up to its target-level $\ell^*(s) \leq \ell(e')$ (this inequality follows from Assumption~\ref{main:assume:nice:insertion}), and it is clean with weights $w^*_s = w_s = 1$. Part (a) of Lemma~\ref{main:lm:insertions:1} now implies that the fractional work done during this iteration is:
\begin{eqnarray}
= \int_{1+\delta_i}^{1} -f\gamma_{\epsilon} (1+\epsilon)^{\lambda} dw_{s_i}(s_i \rightarrow \lambda) \leq f\gamma_{\epsilon}(1+\epsilon)^{\ell^*(s)} \int_{1+\delta_i}^{1} -dw_{s_i}(s_i \rightarrow \lambda) = \delta_i f \gamma_{\epsilon} (1+\epsilon)^{\ell^*(s)} \nonumber \\
\leq \delta_i f \gamma_{\epsilon} (1+\epsilon)^{\ell(e')}. \label{main:eq:100}
\end{eqnarray}
  Let $w_{s_i}^{'}, w_{s_i}^{''}, w_{s_i}^{'''}, w_{s_i}^{''''}$ respectively denote the real-weight of the set $s_i$ just before the insertion of the element $e'$, just after step (01), just before it starts moving up towards its target-level during the call to {\sc Fix}$(s_i)$, and  just after the call to {\sc Fix}$(s_i)$. Thus, we have $w_{s_i}^{'''} = 1+\delta_i$ and $w_{s_i}^{''''} = 1$.  Since a call to {\sc Fix}$(.)$ never  increases  the real-weight of any set, it follows that $w_{s_i}^{'''} \leq w_{s_i}^{''}$.  Since the set $s_i$ was clean just before the insertion of the element $e'$,  we get $w_{s_i}^{'} \leq 1$.  Finally, step (01) in Figure~\ref{main:fig:insertion}  implies that  $w_{s_i}^{''} = w_{s_i}^{'}+(1+\epsilon)^{-\ell(e')}$. To summarize, we have:
\begin{equation}
\label{main:eq:200}
w_{s_i}^{'} \leq 1 = w_{s_i}^{''''} < w_{s_i}^{'''} = 1+\delta_i \leq w_{s_i}^{''} = w_{s_i}^{'}+(1+\epsilon)^{-\ell(e')}.
\end{equation}
From~(\ref{main:eq:200}) we derive that $1+\delta_i \leq w_{s_i}^{'}+(1+\epsilon)^{-\ell(e')} \leq 1 + (1+\epsilon)^{-\ell(e')}$, which gives us: $\delta_i \leq (1+\epsilon)^{-\ell(e')}$. This observation, along with~(\ref{main:eq:100}), concludes the proof of the lemma.
\end{proof}


\begin{corollary}
\label{main:cor:lm:insertions:1}
 Steps (01)-(02) in Figure~\ref{main:fig:insertion} increase the total volume of down-tokens by  $O(f^2)$.
\end{corollary}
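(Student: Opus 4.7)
The plan is to mirror the proof of Lemma~\ref{main:lm:insertions:2} almost verbatim, using part (b) of Lemma~\ref{main:lm:insertions:1} in place of part (a). The two bounds differ only by the factor $\gamma_{\epsilon}$, so the same integral computation carried out with part (b) will give a bound smaller by this factor. First I would observe that step (01) of Figure~\ref{main:fig:insertion} only assigns a level and weight to the new element $e'$ and does not touch any dead-weight $\phi_s$; hence the entire increase in the total volume of down-tokens comes from the {\sc For} loop in step (02).

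Next, I would fix an iteration $i \in \{1, \ldots, r\}$ of this loop, which handles a distinct set $s_i \in \S_{e'}$. As in the proof of Lemma~\ref{main:lm:insertions:2}, let $\delta_i \geq 0$ be defined so that $w_{s_i} = 1 + \delta_i$ at the moment $s_i$ starts moving up (if $\delta_i = 0$, the set does not move and there is nothing to bound). By Part (b) of Lemma~\ref{main:lm:insertions:1}, integrating over the trajectory of $s_i$ from its initial level up to its target level $\ell^*(s_i) \leq \ell(e')$ (the last inequality by Assumption~\ref{main:assume:nice:insertion}), the increase in total down-token volume during this iteration is at most
\begin{equation*}
\int_{1+\delta_i}^{1} -f (1+\epsilon)^{\lambda} \, dw_{s_i}(s_i \rightarrow \lambda) \ \leq\  f (1+\epsilon)^{\ell^*(s_i)} \int_{1+\delta_i}^{1} -dw_{s_i}(s_i \rightarrow \lambda) \ =\ \delta_i f (1+\epsilon)^{\ell^*(s_i)} \ \leq\ \delta_i f (1+\epsilon)^{\ell(e')}.
\end{equation*}

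Now I would invoke inequality~(\ref{main:eq:200}) from the proof of Lemma~\ref{main:lm:insertions:2}, which established that $\delta_i \leq (1+\epsilon)^{-\ell(e')}$. Substituting this bound, each iteration $i$ increases the total volume of down-tokens by at most $f$. Finally, since $r = |\S_{e'}| \leq f$, summing over all iterations yields a total increase of $O(f^2)$, as desired. There is no significant obstacle here — the entire argument is a direct reuse of the proof of Lemma~\ref{main:lm:insertions:2} with part (b) substituted for part (a), which is precisely why the statement is presented as a corollary.
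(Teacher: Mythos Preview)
Your proposal is correct and follows essentially the same approach as the paper. The paper's own proof is more concise: it simply observes that parts (a) and (b) of Lemma~\ref{main:lm:insertions:1} differ only by the factor $\gamma_{\epsilon}$, so the increase in down-tokens is at most $\gamma_{\epsilon}^{-1}$ times the total fractional work, and then invokes the already-proved $O(f^2\gamma_{\epsilon})$ bound from Lemma~\ref{main:lm:insertions:2} directly---but your explicit re-derivation of the integral with part (b) in place of part (a) is exactly the computation underlying that one-line observation.
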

\begin{proof}
By Lemma~\ref{main:lm:insertions:1},   the  increase in the total volume of down-tokens during steps (01)-(02) is at most $\gamma_{\epsilon}^{-1}$ times the total fractional work done. The corollary now  follows from Lemma~\ref{main:lm:insertions:2}. 
\end{proof}


\begin{claim}
\label{main:cl:imp}
Let $\alpha_1 \ldots \alpha_j$ and $\beta_1 \ldots \beta_j$ be nonnegative real numbers satisfying the following property: $j$ is the smallest index  $j' \in \{1, \ldots, j\}$ such that  $\sum_{i=1}^{j'} \alpha_i >  \sum_{i=1}^{j'} \beta_i$. Then for all $0 \leq \lambda_1 \leq \cdots \leq \lambda_j$, we have  $\sum_{i=1}^j (1+\epsilon)^{\lambda_i} \cdot \alpha_i \geq \sum_{i=1}^j (1+\epsilon)^{\lambda_i} \cdot \beta_i$.
\end{claim}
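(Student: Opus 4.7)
The plan is to prove this by Abel summation (summation by parts), reducing the weighted inequality to the prefix-sum hypothesis. Let $A_k = \sum_{i=1}^k \alpha_i$ and $B_k = \sum_{i=1}^k \beta_i$ denote the prefix sums, and abbreviate $w_i = (1+\epsilon)^{\lambda_i}$. The monotonicity $0 \leq \lambda_1 \leq \cdots \leq \lambda_j$ gives $w_1 \leq w_2 \leq \cdots \leq w_j$ with all $w_i \geq 1 > 0$, and the minimality of $j$ translates into $A_i \leq B_i$ for every $i \in \{1,\ldots,j-1\}$, together with $A_j > B_j$.

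Next I would apply summation by parts to each of $\sum_i w_i \alpha_i$ and $\sum_i w_i \beta_i$ (using the convention $A_0 = B_0 = 0$) and subtract, obtaining the identity
\[
\sum_{i=1}^j w_i (\alpha_i - \beta_i) \;=\; w_j\,(A_j - B_j) \;-\; \sum_{i=1}^{j-1} (w_{i+1} - w_i)\,(A_i - B_i).
\]
From here the claim follows purely by sign-inspection: the leading term $w_j(A_j - B_j)$ is non-negative because $w_j \geq 0$ and $A_j - B_j > 0$; and for every $i < j$, the factor $(w_{i+1} - w_i)$ is non-negative by the monotonicity of the $\lambda_i$'s while the factor $(A_i - B_i)$ is non-positive by the minimality of $j$, so each subtracted term is non-positive. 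Hence the right-hand side is non-negative, giving $\sum_{i=1}^j w_i \alpha_i \geq \sum_{i=1}^j w_i \beta_i$ as desired.

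I do not expect any real obstacle: once the hypotheses are rephrased in terms of $A_i$ and $B_i$, the whole claim reduces to a single Abel-summation identity and a sign check. The only bookkeeping needed is to set up the summation-by-parts boundary terms correctly (in particular, noting that $A_0 = 0$ absorbs the $w_1 A_0$ term and that the sum of telescoping differences runs only up to $i = j-1$).
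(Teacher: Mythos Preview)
Your proof is correct and is the standard Abel-summation argument for this kind of weighted prefix-sum inequality. The paper states the claim without proof in the extended abstract; in the full version the corresponding step (bounding the down potential released by a rebuild, in the proof of Lemma~\ref{lemma:potential-released-dead}) is justified by a brief minimization argument instead---observing that since the conversion ratios $(1+\epsilon)^{\lambda_i}$ are nondecreasing in $i$, the weighted sum $\sum_i (1+\epsilon)^{\lambda_i}\alpha_i$ is minimized, subject to the prefix constraints, when every constraint for $i<j$ holds with equality. Your summation-by-parts identity makes exactly this rigorous and is the cleaner route; there is nothing to add.
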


\renewcommand{\L}{\mathcal{L}}

\begin{lemma}
\label{main:lm:deletions}
The  time spent to implement a call to the subroutine {\sc Rebuild}$(\leq k)$ is at most $O(1/\epsilon)$ times the decrease in the total volume of down-tokens during the same call.
\end{lemma}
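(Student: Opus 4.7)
The plan is to combine the $O(f\cdot|\U'_k|)$ running-time bound from Lemma~\ref{main:lm:rebuild:runtime} with a matching lower bound of $\Omega(\epsilon \cdot f \cdot |\U'_k|)$ on the drop in the total volume of down-tokens $\sum_{s\in\S}(1+\epsilon)^{\ell(s)}\phi_s$. First I would pin down exactly which sets contribute to the change. By the paragraph following Observation~\ref{main:ob:rebuild}, the clean-up step does not touch the level or either weight of any set at level $>k$, and by construction the rebuild step only simulates the static process for $t\in[-k,0]$, so sets originally at level $>k$ are entirely untouched by {\sc Rebuild}$(\leq k)$ and contribute $0$ to the change. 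For each $s\in\S'_k$, the clean-up step resets $\phi_s\leftarrow 0$, and Observation~\ref{main:ob:rebuild:2} guarantees that $\phi_s$ stays at $0$ throughout the rebuild step (even if $s$ later rises above level $k$). Hence the decrease in total down-tokens equals exactly $\sum_{s\in\S'_k}(1+\epsilon)^{\ell_{\mathrm{old}}(s)}\phi_s^{\mathrm{old}}$, evaluated just before the call.

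The technical heart of the proof is to lower bound this quantity in terms of $|\U'_k|$, and this is exactly the scenario Claim~\ref{main:cl:imp} is designed for. I would set $\alpha_i=\sum_{s\in\S:\ell(s)=i}\phi_s$ and $\beta_i=\epsilon f\cdot\sum_{e\in\U:\ell(e)=i}w_e$ for $i=0,1,\ldots,k$, and take the weights $\lambda_i=i$. The driver {\sc While} loops in Sections~\ref{main:sec:deletions} and~\ref{main:sec:insertions} choose $k$ precisely as the smallest level at which $\sum_{i\le k}\alpha_i>\sum_{i\le k}\beta_i$, which is exactly the hypothesis required by Claim~\ref{main:cl:imp} (after a harmless re-indexing from $\{0,\ldots,k\}$ to $\{1,\ldots,k+1\}$). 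The claim then yields
\[
\sum_{s\in\S'_k}(1+\epsilon)^{\ell(s)}\phi_s \;\ge\; \epsilon f\sum_{e\in\U'_k}(1+\epsilon)^{\ell(e)}w_e.
\]
By Property~\ref{main:static:property:level} we have $w_e=(1+\epsilon)^{-\ell(e)}$ for every element, so the right-hand side telescopes to $\epsilon f\cdot|\U'_k|$. Combining this with the running-time bound of Lemma~\ref{main:lm:rebuild:runtime} gives $O(f|\U'_k|)=O(1/\epsilon)\cdot\epsilon f|\U'_k|$, which is exactly the claimed $O(1/\epsilon)$ ratio.

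The main obstacle I anticipate is a bookkeeping subtlety rather than a conceptual one: I need to be completely certain that no set contributes a positive change (i.e., an increase) to the down-token potential during {\sc Rebuild}$(\leq k)$, since the argument above bounds only the gross decrease coming from $\S'_k$. Once it is verified that (a) sets at level $>k$ are inert, (b) dead-weights of sets originally in $\S'_k$ remain at $0$ after the clean-up, and (c) the rebuild step never creates new dead-weight on any set, the decrease in down-tokens is exact and the Claim~\ref{main:cl:imp} application goes through cleanly. These three facts are all explicitly stated or immediate from the description of {\sc Rebuild}, so the lemma should follow with no further machinery.
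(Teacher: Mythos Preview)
Your proposal is correct and follows essentially the same route as the paper's proof: both apply Claim~\ref{main:cl:imp} to the per-level dead-weights and element-weights, then use $w_e=(1+\epsilon)^{-\ell(e)}$ to collapse the right-hand side to $\epsilon f|\U'_k|$, and combine with Lemma~\ref{main:lm:rebuild:runtime}. The one cosmetic difference is that you index the sums over integer levels $i=0,1,\dots,k$, whereas in this continuous extended-abstract setting levels need not be integers; the paper handles this by summing over the finite set $\L_k=\{\lambda\le k:\lambda=\ell(s)\text{ for some }s\}$ of levels that actually occur, but this does not change the argument.
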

\begin{proof}
Unless specified otherwise, throughout this proof we focus on the time-instant just before the  call to {\sc Rebuild}$(\leq k)$.  At that time, from Section~\ref{main:sec:deletions} and Figure~\ref{main:fig:insertion} we infer that: 
\begin{equation}
\label{main:eq:level:1}
k \text{ is the smallest level such that } \sum_{s \in \S : \ell(s) \leq k} \phi_s >  \sum_{e \in \U : \ell(e) \leq k} \epsilon f \cdot w_e. 
\end{equation} Define $\L_k = \{ \lambda : \lambda \leq k \text{ and } \lambda = \ell(s) \text{ for some set } s \in \S\}$.  Since each set gets assigned to exactly one level, we have $|\L_{k}| \leq |\S|$. In particular, the collection $\L_{k}$ is finite. Let $\L_k = \{\lambda_1, \ldots, \lambda_j\}$ where $\lambda_1 < \lambda_2 < \cdots < \lambda_j  = k$. For any $\lambda_i \in \L_k$, let $\phi_{i} = \sum_{s \in \S : \ell(s) = \lambda_i} \phi_s$ and $w_{i} = \sum_{e \in \U : \ell(e) = \lambda_i} w_e$ respectively denote the total dead-weight and element-weight stored at level $\lambda_i$.  Note that  $\sum_{s \in \S : \ell(s) \leq k} \phi_s = \sum_{i=1}^j \phi_{i}$ and $\sum_{e \in \U : \ell(e) \leq k} w_e = \sum_{i=1}^j w_{i}$. From~(\ref{main:eq:level:1}) we get:
$j \text{ is the smallest index } j' \in \{1, \ldots, j\} \text{ s.t. } \sum_{i=1}^{j'} \phi_i > \sum_{i=1}^{j'} \epsilon f \cdot w_i.$  Now,  Claim~\ref{main:cl:imp} gives us: 
\begin{equation}
\label{main:eq:updatetime:1}
\sum_{i=1}^{j} (1+\epsilon)^{\lambda_i} \cdot \phi_i \geq \sum_{i=1}^{j'} (1+\epsilon)^{\lambda_i} \cdot \epsilon f \cdot w_i
\end{equation}
Each element $e \in \U$ at level $\ell(e) = \lambda_i$ has weight $w_e = (1+\epsilon)^{-\lambda_i}$.  Hence, the quantity $(1+\epsilon)^{\lambda_i} w_i$  equals the number of elements at levels $\lambda_i$. Summing over all the levels in $
\L_k$, we infer that the right hand side (RHS) of~(\ref{main:eq:updatetime:1}) equals $\epsilon f \cdot \U'_{\leq k}$, where $\U'_{\leq k}$ is the collection of elements at levels $\leq k$ just before the call to {\sc Rebuild}$(\leq k)$. In contrast, the left hand side (LHS) of~(\ref{main:eq:updatetime:1}) equals the total volume of down-tokens at level $\leq k$ just before the call to {\sc Rebuild}$(\leq k)$. The call to {\sc Rebuild}$(\leq k)$ does not change the dead-weight of any set $s \in \S$ at level $\ell(s) > k$, and Observation~\ref{main:ob:rebuild:2} states that at the end of the call to this subroutine every set at level $\leq k$ has zero dead-weight. To summarize, the LHS of~(\ref{main:eq:updatetime:1}) equals the amount by which the total volume of down-tokens decreases during the call to {\sc Rebuild}$(\leq k)$, whereas the RHS of~(\ref{main:eq:updatetime:1}) equals $\epsilon$ times the total time spent by our algorithm to implement this call (see Lemma~\ref{main:lm:rebuild:runtime}). This completes the proof of the lemma.
\end{proof}



\section{Our Algorithm for Worst Case Update Time: An Overview}
\label{main:sec:worstcase}


Our complete algorithm needs to deal with a lot of subtle issues, and is presented in the full version. Here, to highlight the main ideas, we only focus on the decremental (deletions only) setting. 

\begin{observation}
\label{main:lm:static:level}
The static algorithm in Section~\ref{main:sec:prelim} never assigns a set $s \in \S$ to a level $\ell(s) \geq L =  1+ \lceil \log_{(1+\epsilon)} n \rceil$, where $n$ is the total number of elements.
\end{observation}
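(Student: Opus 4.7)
The plan is to bound the freezing time $t_s$ of any set that becomes tight, and then translate this bound into a bound on the level $\ell(s) = -t_s$. The key observation is that at any time $t$, no element can have weight larger than $(1+\epsilon)^{t}$: if $e$ is alive at time $t$, then $w_e(t) = (1+\epsilon)^{t}$, and if $e$ is frozen at time $t$, then $w_e(t) = (1+\epsilon)^{t_e}$ for some $t_e \leq t$.

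Fix any set $s$ that becomes tight, and let $t_s$ be its freezing time. By definition, $w_s(t_s) = \sum_{e \in s} w_e(t_s) = 1$. Applying the weight bound above to each element $e \in s$ at time $t_s$, together with the trivial fact that $|s| \leq n$, we obtain
\[
1 \;=\; \sum_{e \in s} w_e(t_s) \;\leq\; |s| \cdot (1+\epsilon)^{t_s} \;\leq\; n \cdot (1+\epsilon)^{t_s}.
\]
Rearranging this inequality yields $t_s \geq -\log_{(1+\epsilon)} n$, and hence $\ell(s) = -t_s \leq \log_{(1+\epsilon)} n \leq \lceil \log_{(1+\epsilon)} n \rceil < L$.

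For the remaining sets $s \in \S \setminus \T(0)$, the definition in Section~\ref{main:sec:prelim} sets $\ell(s) = 0$, which is trivially less than $L$. Combining both cases shows that $\ell(s) < L$ for every $s \in \S$, which proves the observation. No step here is a real obstacle; the entire argument reduces to the one-line inequality $1 \leq n \cdot (1+\epsilon)^{t_s}$ together with the definition of $L$.
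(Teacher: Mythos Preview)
Your proof is correct and follows essentially the same approach as the paper: both use the bound $w_e(t) \leq (1+\epsilon)^t$ together with $|s|\leq n$ to conclude that no set can be tight at any time $t \leq -\log_{(1+\epsilon)} n$, hence $\ell(s) < L$. The paper phrases it contrapositively (at time $-L$ the set is not yet tight), while you phrase it directly at the tightening time $t_s$, but the underlying inequality is identical.
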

\begin{proof}
Recall the continuous process from Section~\ref{main:sec:prelim} that starts at time $t = -\infty$. When $t = L$,  every element $e \in s$ has weight $w_e  \leq (1+\epsilon)^{-L} < 1/n$, and hence  $w_s = \sum_{e \in s} w_e < |s| \cdot (1/n) \leq 1$. Since  $s$ is not yet tight at time $t = L$,  it gets assigned to a level $< L$ at the end of the algorithm.
\end{proof}

\noindent {\bf Schedulers:} Our dynamic algorithm uses $L$ subroutines called {\sc schedulers} -- one for each level in $[L] = \{1, \ldots, L\}$. Informally, for each $k \in [L]$ the subroutine  {\sc Scheduler}$(k)$ is  responsible for all the sets and elements at levels $\leq k$. Each scheduler works on its own local memory that is disjoint from the memory locations used by the other schedulers.  Another key feature of our algorithm is that we allow different schedulers to hold mutually inconsistent views regarding the level of an individual set or element. Before proceeding any further, we  introduce some important concepts and notations. Most of the concepts defined below -- such as the notions of real-weights, dead-weights and total-weights -- closely mirror their counterparts from Section~\ref{main:sec:algo}.

 Let $\S^{(k)} \subseteq \S$ and $\U^{(k)} \subseteq \U$ respectively denote the collection of sets and elements {\sc Scheduler}$(k)$ is responsible for. Let $\ell^{(k)}(e)$ and $\ell^{(k)}(s)$ respectively denote the level of an element $e \in \U^{(k)}$ and a set $s \in \S^{(k)}$ from the perspective of {\sc Scheduler}$(k)$.  As usual, the level of an element $e \in \U^{(k)}$ according to {\sc Scheduler}$(k)$ is completely determined by the levels of the sets in $\S^{(k)}$ that contain it: We have $\ell^{(k)}(e) = \max_{s \in \S^{(k)} : e \in s} \ell^{(k)}(s) \leq k$ for all $e \in \U^{(k)}$. Let  $w^{(k)}_{e} = (1+\epsilon)^{-\ell^{(k)}(e)}$  be the  weight of an element $e \in \U^{(k)}$ according to {\sc Scheduler}$(k)$.  The {\em real-weight} of a set $s \in \S^{(k)}$ according to {\sc Scheduler}$(k)$ equals $w^{(k)}_{s} = \sum_{e \in s \cap \U^{(k)}} w^{(k)}_{e} + \delta^{(k)}_{s}$, where  $\delta^{(k)}_{s}$ is the {\em extra-weight} of $s$. We will shortly see that the concept of extra-weight has a natural  explanation. Intuitively, the quantity $\delta_{s}^{(k)}$ measures the weight received by a set $s \in \S^{(k)}$ from elements that are at level $> k$. Each set $s \in \S^{(k)}$ has a {\em dead-weight} $\phi_{s}^{(k)}$, and its {\em total-weight} is given by $w^{*(k)}_s = w^{(k)}_s + \phi^{(k)}_s$. Finally, {\sc Scheduler}$(k)$  maintains a  collection  $D^{(k)}$ of some elements that got deleted from $\U^{(k)}$  in the past due to an external update operation. The elements in $D = \bigcup_{k=1}^{L} D^{(k)}$ are called {\em dead elements}. 
 
 


\smallskip
\noindent {\bf Preprocessing:} We first run the static algorithm from Section~\ref{main:sec:prelim}.  Let  $\ell(s), \ell(e), w_s, w_e$ denote the levels and weights of elements and sets returned by this static algorithm. At this stage, all the different schedulers completely agree with each other regarding the level of each element and set. Specifically, consider any level $k \in [L]$. At this point in time, we have $\S^{(k)} = \{ s \in \S : \ell(s) \leq k\}$, $\U^{(k)} = \{ e \in \U : \ell(e) \leq k\}$ and $\D^{(k)} = \emptyset$. For all elements $e \in \U^{(k)}$ and sets $s \in \S^{(k)}$,  we have $\ell^{(k)}(e) = \ell(e)$, $w^{(k)}_{e} = w_e$ and $\ell^{(k)}(s) = \ell(s)$. For all $s \in \S^{(k)}$, we set  $\delta^{(k)}_{s} = \sum_{e \in s : \ell(e) > k} w_e$ and $\phi^{(k)}_s = 0$. Hence,  we have  $w^{*(k)}_s = w^{(k)}_{s} = w_s$ for all $s \in \S^{(k)}$ just after  the end of preprocessing. 

\smallskip
\noindent {\bf Invariants:}
We now describe three important invariants that are satisfied by our dynamic algorithm. Invariant~\ref{main:inv:worstcase:tight} closely mirrors Invariant~\ref{main:inv:deletions:weights} from Section~\ref{main:sec:deletions}, and it clearly holds at the end of preprocessing.  Invariant~\ref{main:inv:worstcase:trigger} dictates that the number of dead elements in the control of a {\sc Scheduler}$(k)$ is really small compared  to the number of elements in $\U^{(k)}$. At the end of preprocessing, this invariant trivially holds because $D^{(k)} = \emptyset$. Invariant~\ref{main:inv:laminar} says that the sets $\S^{(k)}, \U^{(k)}$ and $D^{(k)}$ form a very nice laminar structure as $k$ ranges from $L$ to $1$. Specifically, the  sets/elements a {\sc Scheduler}$(k-1)$ is responsible for are precisely the ones that lie at level $\leq k-1$ according to the next {\sc Scheduler}$(k)$. Furthermore, {\sc Scheduler}$(L)$ is responsible for all the elements and sets. Again, it is easy to check that this invariant holds at the end of preprocessing. 


 \begin{invariant}
 \label{main:inv:worstcase:tight}
  Consider any  $k \in [L]$. Every set $s \in \S^{(k)}$ at level $\ell^{(k)}(s) > 0$  has total-weight $w^{*(k)} = 1$. Furthermore, every set $s \in \S^{(k)}$ at level $\ell^{(k)}(s) = 0$ has total-weight $w^{*(k)} \leq 1$.  
  \end{invariant}
  
 \begin{invariant}
\label{main:inv:worstcase:trigger}
Consider any $k \in [L]$. We have $\left| D^{(k)} \right| \leq 2\epsilon \cdot \left| \U^{(k)} \right|$.
\end{invariant}

 \begin{invariant}
 \label{main:inv:laminar}
For every $k \in [2, L]$, we have $\S^{(k-1)} = \{ s \in \S^{(k)} : \ell^{(k)}(s) \leq k-1\}$, $\U^{(k-1)} = \{ e \in \U^{(k)} : \ell^{(k)}(e) \leq k-1\}$ and $\D^{(k-1)} = \{ e \in \D^{(k)} : \ell^{(k)}(e) \leq k-1\}$. Furthermore, we have  $\S^{(L)} = \S$ and $\U^{(L)} = \U$. 
 \end{invariant}

 \noindent{\bf Ownerships:} We  say that an element $e \in \U$ (resp. $e \in D$) is {\em owned} by a level $k \in [L]$ iff $e \in \U^{(k)}$ (resp. $e \in D^{(k)}$) and $e \notin \U^{(j)}$ (resp. $e \in D^{(j)}$) for all $j \in [k-1]$. A set $s \in \S$ is {\em owned} by a level $k \in [L]$ iff $s \in \S^{(k)}$ and $s \notin \S^{(j)}$ for all $j \in [k-1]$.  Note that each element/set is owned by a unique level. We now describe how to handle a sequence of element deletions after preprocessing.

\smallskip
 \noindent {\bf Handling the deletion of an element:} Suppose that an element $e$, which was owned by level $j$, gets deleted.  Accordingly, we {\em feed} this deletion to {\sc Scheduler}$(j)$, \ldots, {\sc Scheduler}$(L)$.  Note that these are precisely the schedulers that are responsible for this element and are affected by the concerned deletion. Each of these affected schedulers works within its own local memory, independently of others. We describe the actions taken by  {\sc Scheduler}$(k)$, for any $k \in \{j, \ldots, L\}$.
 
Suppose that the element $e$ was at level $\ell^{(k)}(e) = i$  just before its deletion.  {\sc Scheduler}$(k)$ moves the element $e$ from $\U^{(k)}$ to $D^{(k)}$, without changing the its level $\ell^{(k)}(e) = i$ or weight $w^{(k)}_e = (1+\epsilon)^{-i}$.  For each set $s\in \S^{(k)}$ containing the element $e$, this reduces its real-weight $w^{(k)}_s$  by $(1+\epsilon)^{-i}$. To compensate for this loss, we increase its  dead-weight $\phi^{(k)}_s$ by the same amount $(1+\epsilon)^{-i}$. We do not change the extra-weights of the sets. Thus, the total-weight of every set also remains unchanged.

 \smallskip
 \noindent {\bf Triggering a rebuild:} It is easy to check that the above actions do not violate Invariant~\ref{main:inv:worstcase:tight} and Invariant~\ref{main:inv:laminar}. However, if we keep acting in this lazy manner, then the number of dead elements keep growing with time, and so after some number of updates Invariant~\ref{main:inv:worstcase:trigger} will get violated. Furthermore, unlike in Section~\ref{main:sec:algo} here we cannot even afford to wait until the moment Invariant~\ref{main:inv:worstcase:trigger} is violated before triggering a rebuild, because we are shooting for worst-case update time and the rebuild needs to occur {\em in the background} -- a few steps at a time. Accordingly, for each $k \in [L]$, whenever {\sc Scheduler}$(k)$ finds that $|D^{(k)}| \geq \epsilon \cdot |\U^{(k)}|$ (note that this is still {\em $\epsilon$-far} from violating Invariant~\ref{main:inv:worstcase:trigger}),  it starts {\em rebuilding} its part of the input $(\U^{(k)}, \S^{(k)}, D^{(k)})$ {\em in the background} in a separate memory location that is not affected by the happenings elsewhere.  We now explain this rebuilding procedure in a bit more details.
 
 \smallskip
 \noindent {\bf Rebuilding done by {\sc Scheduler}$(k)$:} Suppose that {\sc Scheduler}$(k)$ has triggered a rebuild at the present moment because $|D^{(k)}| = \epsilon \cdot |\U^{(k)}|$. In order to see the high level idea, for now assume that {\sc Scheduler}$(k)$ does not need to handle an element deletion due to any external update operation while it is performing the rebuild.\footnote{According to this assumption, during the same time-interval when {\sc Scheduler}$(k)$ is performing a rebuild, some other {\sc Scheduler}$(j)$ with $j > k$ might still have to handle element deletions due to external updates.} Then the rebuild subroutine will {\em clean-up} all the dead elements in $D^{(k)}$, by resetting $D^{(k)} = \emptyset$ and $\phi^{(k)}_s = 0$ for all $s \in \S^{(k)}$, and then run the static algorithm (see Section~\ref{main:sec:algo}) on input $(\S^{(k)}, \U^{(k)})$ starting from time $t = -k$ onward. During this run of the static algorithm, the extra-weights $\delta_{s}^{(k)}$ of the sets $s \in \S^{(k)}$ will not change, because these extra-weights are coming from elements that are at levels $> k$. Note that this is exactly the same principle that underpins the rebuild subroutine described in Section~\ref{main:sec:deletions}. At the end of this static algorithm, we will get the following guarantees: Invariant~\ref{main:inv:worstcase:tight} is satisfied by {\sc Scheduler}$(k)$, and $D^{(k)} = \emptyset$. At this point, {\sc Scheduler}$(k)$ will {\em order} all the schedulers corresponding to levels $j \in [k-1]= \{1, \ldots, k-1\}$ to abort whatever they are doing and synchronize their perspectives with the perspective of {\sc Scheduler}$(k)$. We refer to this as the {\em synchronization} event. At the end of this,  each {\sc Scheduler}$(j), j \in [k-1],$ will satisfy:  $\S^{(j)} = \{ s \in \S^{(k)} : \ell^{(k)}(s) \leq j\}$,  $\U^{(j)} = \{ e \in \U^{(k)} : \ell^{(k)}(s) \leq j \}$, $D^{(j)}  = \emptyset$, $\ell^{(j)}(e) = \ell^{(k)}(e)$ for all $e \in \U^{(j)}$, and $w^{*(j)}_s = w^{*(k)}_s$ for all  $s \in \S^{(j)}$. Our algorithm ensures that this synchronization is achieved  in $O(k)$ time {\em on the fly}, as follows. The rebuild subroutine of {\sc Scheduler}$(k)$ will run $k$ different {\em threads} -- one for each level $j \in \{1, \ldots, k\}$. Each of these threads will run in a designated separate region of memory, independent of others. It will be the responsibility of thread $j$ to prepare everything  in its allocated memory region, which can then be handed over to {\sc Scheduler}$(j)$ at the time of synchronization by simply passing a pointer to the beginning of the concerned memory region. Thus, synchronization involves the passing of $k$ pointers, and hence takes $O(k) = O(L) = O(\log n/\epsilon)$ time. It is easy to check that Invariants~\ref{main:inv:worstcase:tight} and~\ref{main:inv:laminar} continue to remain satisfied at the end of synchronization. Furthermore, the remaining Invariant~\ref{main:inv:worstcase:trigger} is trivially satisfied under our working assumption that no element deletion occurs in $\U^{(k)}$ as {\sc Scheduler}$(k)$ is rebuilding in the background. Next, we  analyze the worst-case update time of this algorithm, and outline what happens when this working assumption does not hold.

  \smallskip
  \noindent {\bf Worst-case update time:} We measure the time taken to implement a procedure in terms of {\em units of work}. The update time of our algorithm is dominated by the time spent on the rebuild subroutines of the individual schedulers. There are $L$ schedulers running in the background, and we  ensure that each of these individual schedulers perform $O(f L/\epsilon)$ units of work after every external update (element deletion). This implies a worst-case update time of $O(L \cdot f L/\epsilon) =  O(f \log^2 n/\epsilon^3)$. 
  
  We now explain why each {\sc Scheduler}$(k)$ needs to perform $O(f L/\epsilon)$ units of work per update. The factor $O(L)$ comes from the fact that the rebuild subroutine of  {\sc Scheduler}$(k)$ needs to run $k = O(L)$ threads in the background so as to ensure that it can synchronize on the fly when it finishes execution. It now remains to explain the rationale behind  the remaining $O(f/\epsilon)$ factor.  Suppose that $d = |D^{(k)}|$ and $u = |\U^{(k)}|$ when the rebuild subroutine of {\sc Scheduler}$(k)$ gets triggered. We have $d = \epsilon u$, and hence the subroutine needs to perform $O(f (u+d)) = O(fu)$ units of work overall (see Lemma~\ref{main:lm:rebuild:runtime}). According to our scheme, the subroutine  splits this work across a sequence of external updates (element deletions), performing $c f/\epsilon$ units of work per update for some large constant $c > 1$. This ensures that in the general setting (when the working assumption in the previous paragraph does not hold), at most $fu/(cf/\epsilon) = \epsilon u/c$ many elements get deleted from $\U^{(k)}$ when the rebuild subroutine is in progress. In the full version, we show that there is a way to handle these incoming deletions {\em on the fly} during the rebuild subroutine. However, this comes at a cost: When an element $e$ gets deleted that has already been processed by the rebuild subroutine, it gets classified as a dead element. But since the subroutine is working at a sufficient fast rate, at most $\epsilon u/c$ many new dead elements might get created in this manner at the time of synchronization, whereas the old dead elements that were present at the time the rebuild subroutine was triggered are removed anyway. Taking everything together, just after synchronization we end up having $|D^{(j)}| \leq (\epsilon/c) \cdot |\U^{(j)}|$ for all $j \in [k]$ and hence Invariant~\ref{main:inv:worstcase:trigger} continues to remain satisfied.

 \smallskip
  \noindent {\bf Approximation ratio:} The main challenge here is to show that all the same element/set might get assigned to different levels by different schedulers, there is a way to come up with a {\em consistent} assignment of levels to all the elements in $\U \cup D$ and all the sets in $\S$. Furthermore, the sets with weight $=1$ in this consistent assignment form a $(1+\epsilon)f$-approximate minimum set cover in $(\U, \S)$. 
  
The consistent assignment is as follows: Each set $s \in \S$ gets assigned to the level that owns it. Let $\ell(s)$ denote the level of a set $s \in \S$ in the consistent assignment. This automatically defines the level $\ell(e)$ and weight $w_e$ of every element $e \in \U \cup D$, since $\ell(e) = \max_{s \in \S: e \in s} \ell(s)$ and $w_e = (1+\epsilon)^{-\ell(s)}$. For all $s \in \S$, we  define its weight to be $w_s^{(\U \cup \D)} = \sum_{e \in \U \cup D : e \in s} w_e$. 


For each $k \in [L]$, define $\S(\leq k) = \{ s \in \S : \ell(s) \leq k\}$, $D(\leq k) = \{ e \in D : \ell(e) \leq k\}$ and $\U(\leq k) = \{ e \in \U : \ell(e) \leq k\}$.   The lemma below shows that the levels of elements/sets in the consistent assignment are nicely aligned with the levels of the same elements/sets according to the individual schedulers. Lemma~\ref{main:lm:worstcase:approx:1} and Invariant~\ref{main:inv:worstcase:trigger} together imply Corollary~\ref{main:cor:worstcase:approx:1}.
\begin{lemma}
\label{main:lm:worstcase:approx:1}
For each $k \in [L]$, we have $\S(\leq k) = \S^{(k)}$, $\U(\leq k) = \U^{(k)}$, and $D(\leq k) = D^{(k)}$.
\end{lemma}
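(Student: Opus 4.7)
The plan is to first establish the set statement $\S(\leq k) = \S^{(k)}$, and then bootstrap from it to prove the two element statements $\U(\leq k) = \U^{(k)}$ and $D(\leq k) = D^{(k)}$ via a downward induction on $k$. The key structural fact I will lean on throughout is Invariant~\ref{main:inv:laminar}, which immediately supplies the nested chains $\S^{(1)} \subseteq \S^{(2)} \subseteq \cdots \subseteq \S^{(L)} = \S$ (and analogously $\U^{(1)} \subseteq \cdots \subseteq \U^{(L)} = \U$ and $D^{(1)} \subseteq \cdots \subseteq D^{(L)} = D$), since $\S^{(k-1)}$ is characterized there as the subset of $\S^{(k)}$ with Scheduler-$k$ level at most $k-1$.

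First, for the set statement, I would invoke the definition of ownership: each $s \in \S$ is owned by the smallest $j \in [L]$ with $s \in \S^{(j)}$, and the consistent assignment then sets $\ell(s) = j$. Combining this with the nested chain gives $s \in \S(\leq k) \Leftrightarrow \ell(s) \leq k \Leftrightarrow s \in \S^{(j)}$ for some $j \leq k \Leftrightarrow s \in \S^{(k)}$. Second, for the element statement, I would prove by downward induction on $k$ (from $L$ down to $1$) the following strengthened claim: \emph{$e \in \U^{(k)}$ iff $e \in \U$ and every set $s \in \S$ containing $e$ belongs to $\S^{(k)}$} (and the analogous statement with $D$ in place of $\U$). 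The base case $k = L$ is immediate since $\S^{(L)} = \S$ and $\U^{(L)} = \U$. For the step from $k$ to $k-1$, both directions exploit the fact that by Invariant~\ref{main:inv:laminar}, $e \in \U^{(k-1)}$ iff $e \in \U^{(k)}$ and $\ell^{(k)}(e) \leq k-1$; i.e., every set $s \in \S^{(k)}$ containing $e$ satisfies $\ell^{(k)}(s) \leq k-1$, which by the same invariant is equivalent to $s \in \S^{(k-1)}$. Under the inductive hypothesis that every set in $\S$ containing $e$ is already in $\S^{(k)}$, this promotes the local quantifier ``for all $s \in \S^{(k)}$'' to the desired global ``for all $s \in \S$'', closing the induction. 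Once the strengthened claim is in hand, the set statement translates ``every set containing $e$ is in $\S^{(k)}$'' into ``$\ell(e) = \max_{s \in \S : e \in s} \ell(s) \leq k$'', yielding $\U(\leq k) = \U^{(k)}$; the argument for $D(\leq k) = D^{(k)}$ is identical, using the third clause of Invariant~\ref{main:inv:laminar}.

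The hard part will be precisely this mismatch between the global level $\ell(e) = \max_{s \in \S : e \in s} \ell(s)$ used in the consistent assignment and the local level $\ell^{(k)}(e) = \max_{s \in \S^{(k)} : e \in s} \ell^{(k)}(s)$ that Scheduler $k$ actually knows about. Without the strengthened inductive hypothesis, one cannot conclude from $\ell^{(k)}(e) \leq k-1$ alone that every set in $\S$ (as opposed to every set currently in $\S^{(k)}$) containing $e$ has in fact migrated down into $\S^{(k-1)}$; all of the subtlety of the proof lies in certifying that, as $e$ descends through the scheduler hierarchy, no set containing $e$ is ever left behind at a strictly higher level of responsibility.
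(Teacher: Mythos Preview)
Your argument is correct, but it takes a genuinely different route from the paper. The paper proves Lemma~\ref{main:lm:worstcase:approx:1} by a \emph{temporal} induction over the sequence of events: it checks that the three equalities hold right after preprocessing, that they are preserved when an element is deleted (since each affected {\sc Scheduler}$(k)$ simply shuttles $e$ from $\U^{(k)}$ to $D^{(k)}$ without touching levels), and that they survive a synchronization step. Your argument, by contrast, is purely \emph{structural}: it freezes time, takes Invariant~\ref{main:inv:laminar} as a black box, and derives the lemma from it via the nested chain $\S^{(1)}\subseteq\cdots\subseteq\S^{(L)}=\S$ together with a downward induction on $k$ that promotes the local quantifier ``for all $s\in\S^{(k)}$'' to the global one ``for all $s\in\S$''. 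This is arguably cleaner: it makes explicit that the lemma is a corollary of Invariant~\ref{main:inv:laminar} alone, so all the dynamic reasoning is confined to maintaining that single invariant. The paper's event-by-event check, on the other hand, is more self-contained in that it does not rely on Invariant~\ref{main:inv:laminar} having already been verified.

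One small point to watch in your treatment of $D$: your inductive step translates ``$\ell^{(k)}(e)\le k-1$'' into ``every $s\in\S^{(k)}$ containing $e$ has $\ell^{(k)}(s)\le k-1$'', which presumes the identity $\ell^{(k)}(e)=\max_{s\in\S^{(k)}:\,e\in s}\ell^{(k)}(s)$ for \emph{dead} $e$ as well. The paper only states this identity explicitly for $e\in\U^{(k)}$; for $e\in D^{(k)}$ the level is frozen at deletion time. In the decremental setting this identity does continue to hold (set levels in view $k$ change only at a synchronization of some {\sc Scheduler}$(k')$ with $k'\ge k$, which simultaneously resets $D^{(k)}$), but you should say so rather than leave it implicit.
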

\begin{proof}(Sketch)
The lemma follows from induction: It is easy to check that the lemma holds just after preprocessing. When an element $e$ gets deleted, each {\sc Scheduler}$(k)$ responsible for  $e$ simply moves it from $\U^{(k)}$ to $D^{(k)}$ without changing its weight or level. Hence, the lemma continues to remain satisfied. Finally, a moment's thought reveals that the lemma continues to hold after a rebuild subroutine of some {\sc Scheduler}$(k)$ executes its synchronization step. 
\end{proof}

 \begin{corollary}
  \label{main:cor:worstcase:approx:1}
  For each $k \in [L]$, we have $\left| D(\leq k) \right| \leq 2\epsilon \cdot \left| \U(\leq k) \right|$.
  \end{corollary}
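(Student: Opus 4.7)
The plan is straightforward: chain Lemma~\ref{main:lm:worstcase:approx:1} with Invariant~\ref{main:inv:worstcase:trigger} in two substitution steps. For each $k \in [L]$, Lemma~\ref{main:lm:worstcase:approx:1} gives the set-theoretic identities $D(\leq k) = D^{(k)}$ and $\U(\leq k) = \U^{(k)}$, and therefore the cardinality identities $|D(\leq k)| = |D^{(k)}|$ and $|\U(\leq k)| = |\U^{(k)}|$. Invariant~\ref{main:inv:worstcase:trigger} asserts that $|D^{(k)}| \leq 2\epsilon \cdot |\U^{(k)}|$; substituting the two cardinality identities into this inequality yields $|D(\leq k)| \leq 2\epsilon \cdot |\U(\leq k)|$, which is exactly the claim. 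The quantification matches since both Lemma~\ref{main:lm:worstcase:approx:1} and Invariant~\ref{main:inv:worstcase:trigger} hold for every $k \in [L]$.

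There is no substantive obstacle once Lemma~\ref{main:lm:worstcase:approx:1} is in hand. That lemma is doing all the real work: it certifies that the global filtration $\{\S(\leq k), \U(\leq k), D(\leq k)\}_{k \in [L]}$ defined via the ownership-based levels $\ell(\cdot)$ coincides with the per-scheduler filtration $\{\S^{(k)}, \U^{(k)}, D^{(k)}\}_{k \in [L]}$. Invariant~\ref{main:inv:worstcase:trigger} is preserved by construction of the algorithm (the rebuild at \textsc{Scheduler}$(k)$ is triggered as soon as $|D^{(k)}|$ reaches $\epsilon \cdot |\U^{(k)}|$, and the worst-case update time analysis bounds the additional dead elements created during a rebuild by at most $\epsilon |\U^{(k)}|/c$, so the ratio never exceeds $2\epsilon$). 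Consequently, the proof of the corollary collapses to the two-line substitution sketched above, and the only thing worth stating explicitly in the write-up is the appeal to both preceding statements at a common value of $k$.
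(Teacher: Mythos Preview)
Your proposal is correct and matches the paper's approach exactly: the paper simply states that Lemma~\ref{main:lm:worstcase:approx:1} and Invariant~\ref{main:inv:worstcase:trigger} together imply the corollary, which is precisely the two-step substitution you spell out.
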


   \begin{corollary}
  \label{main:cor:lm:worstcase:approx:2}
  We have $\sum_{e \in D} w_e \leq 2 \epsilon (1+\epsilon) \cdot \sum_{e \in \U} w_e$. 
  \end{corollary}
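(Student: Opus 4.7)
The plan is to convert the cumulative count bound of Corollary~\ref{main:cor:worstcase:approx:1} into the desired weighted-sum bound via summation by parts. Let me set $\alpha_k := (1+\epsilon)^{-k}$, $D_k := |D(\leq k)|$, $U_k := |\U(\leq k)|$, and let $d_k := D_k - D_{k-1}$ and $u_k := U_k - U_{k-1}$ be the number of elements in $D$ and $\U$ at level exactly $k$ (with $D_{-1} = U_{-1} := 0$). By Observation~\ref{main:lm:static:level} every level lies in $\{0, 1, \ldots, L-1\}$, so I can write $\sum_{e \in D} w_e = \sum_{k=0}^{L-1} d_k \alpha_k$ and analogously for $\U$.

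First I would apply Abel summation (with the convention $\alpha_L = 0$) to obtain $\sum_{e \in D} w_e = \sum_{k=0}^{L-1} D_k (\alpha_k - \alpha_{k+1})$. Combining the algebraic identity $\alpha_k - \alpha_{k+1} = \epsilon\, \alpha_{k+1}$ with the ``shifted'' form of Corollary~\ref{main:cor:worstcase:approx:1}, namely $D_k \leq D_{k+1} \leq 2\epsilon\, U_{k+1}$ (which is valid for every $k \in \{0,\ldots,L-1\}$ since then $k+1 \in [L]$), this yields $\sum_{e \in D} w_e \leq 2\epsilon^2 \sum_{j=1}^{L} U_j \alpha_j$ after reindexing $j := k+1$.

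Next I would bound $\sum_j U_j \alpha_j$ in terms of $\sum_{e \in \U} w_e$. A second Abel summation applied to $\sum_k u_k \alpha_k$ gives $\sum_{e \in \U} w_e = \epsilon \sum_{j=1}^L U_{j-1} \alpha_j$, so $\sum_{j=1}^L U_{j-1} \alpha_j = \tfrac{1}{\epsilon} \sum_{e \in \U} w_e$. The split $U_j = U_{j-1} + u_j$ then gives $\sum_{j=1}^L U_j \alpha_j = \tfrac{1}{\epsilon} \sum_{e \in \U} w_e + \sum_{j=1}^L u_j \alpha_j \leq \bigl(1 + \tfrac{1}{\epsilon}\bigr) \sum_{e \in \U} w_e$. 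Chaining the two estimates will yield $\sum_{e \in D} w_e \leq 2\epsilon^2 \cdot \tfrac{1+\epsilon}{\epsilon} \cdot \sum_{e \in \U} w_e = 2\epsilon(1+\epsilon)\sum_{e \in \U} w_e$, as required.

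The main obstacle is the boundary behavior at $k = 0$: Corollary~\ref{main:cor:worstcase:approx:1} only bounds $D_k$ for $k \in [L] = \{1,\ldots,L\}$ and gives no direct control over $D_0$. The fix is to exploit the monotonicity of $D_k$ and bound $D_k \leq 2\epsilon\, U_{k+1}$ instead of $D_k \leq 2\epsilon\, U_k$; this one-step index shift (together with the additive term $\sum_j u_j \alpha_j$ arising from $U_j = U_{j-1} + u_j$) is precisely what produces the extra factor of $(1+\epsilon)$ that appears in the statement, since consecutive weights $\alpha_k$ and $\alpha_{k+1}$ differ by a factor of $1+\epsilon$.
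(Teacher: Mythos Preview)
Your summation-by-parts approach is exactly the right mechanism and is what the paper's sketch (which defers to \cite[Lemma~4.8]{focs/BhattacharyaHN19}) is implicitly invoking: both arguments convert the cumulative count bound of Corollary~\ref{main:cor:worstcase:approx:1} into a weighted bound by exploiting that weights at adjacent levels differ by a factor of $1+\epsilon$.

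There is, however, a small but genuine inconsistency at the \emph{upper} boundary. You invoke the convention $\alpha_L = 0$ to make the Abel-summation identity $\sum_{e\in D} w_e = \sum_{k=0}^{L-1} D_k(\alpha_k-\alpha_{k+1})$ hold without a remainder term, and then immediately apply the algebraic identity $\alpha_k-\alpha_{k+1}=\epsilon\,\alpha_{k+1}$ for \emph{all} $k\le L-1$. These two choices are incompatible at $k=L-1$: with the convention you get $\alpha_{L-1}-\alpha_L=\alpha_{L-1}$, not $\epsilon\alpha_L=0$, so the term $D_{L-1}\alpha_{L-1}$ is silently dropped. The same slip recurs in your claimed equality $\sum_{e\in\U}w_e=\epsilon\sum_{j=1}^{L}U_{j-1}\alpha_j$. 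The fix is painless: either (i) extend both Abel sums to $k\ge 0$ with the true $\alpha_k=(1+\epsilon)^{-k}$, so that $\alpha_k\to 0$ kills the boundary term and the identity holds everywhere; or (ii) keep the true $\alpha_L=(1+\epsilon)^{-L}$ and carry the two boundary terms $D_{L-1}\alpha_L$ and $U_{L-1}\alpha_L$ explicitly---they cancel in the final chaining and you recover exactly $2\epsilon(1+\epsilon)\sum_{e\in\U}w_e$. Your treatment of the $k=0$ boundary via the shifted bound $D_k\le 2\epsilon\,U_{k+1}$ is correct and, as you note, is precisely what produces the extra $(1+\epsilon)$ factor.
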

  \begin{proof}(Sketch)
  The proof is almost the same as the proof of Lemma 4.8 in the Arxiv version of~\cite{focs/BhattacharyaHN19}. Basically, consider any dead-element $e' \in D$ and any actual element $e \in \U$ such that both their levels lie within the interval $[k-1, k]$, i.e., $k-1 \leq \ell(e), \ell(e') \leq k$. Then it follows that their weights $w_{e'}$ and $w_{e}$ are within a $(1+\epsilon)$ multiplicative factor of each other.  This observation, along with Corollary~\ref{main:cor:worstcase:approx:1}, is sufficient to ensure that $\sum_{e \in D} w_e \leq 2 \epsilon (1+\epsilon) \cdot \sum_{e \in \U} w_e$.
 \end{proof}

  \begin{lemma}
  \label{main:lm:worstcase:approx:2}
  For every set $s \in \S$ at level $\ell(s) > 0$, we have $w_s^{(\U \cup \D)} = 1$. Furthermore, for every set $s \in \S$ at level $\ell(s) = 0$, we have $w_s^{(\U \cup \D)} \leq 1$.
  \end{lemma}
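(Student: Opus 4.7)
The plan is to reduce the claim to Invariant~\ref{main:inv:worstcase:tight} by identifying $w_s^{(\U \cup \D)}$ with the total-weight $w^{*(k)}_s$ maintained by \textsc{Scheduler}$(k)$, where $k \in [L]$ is the level that owns $s$. Since ownership forces $s \in \S^{(k)} \setminus \S^{(k-1)}$, combining Invariant~\ref{main:inv:laminar} with the built-in bound $\ell^{(k)}(s) \leq k$ yields $\ell^{(k)}(s) = \ell(s)$. Thus, once the weight identification is established, Invariant~\ref{main:inv:worstcase:tight} applied to \textsc{Scheduler}$(k)$ gives both conclusions of the lemma: $w_s^{(\U \cup \D)} = w^{*(k)}_s = 1$ whenever $\ell(s) > 0$, and $w_s^{(\U \cup \D)} = w^{*(k)}_s \leq 1$ whenever $\ell(s) = 0$.

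To establish $w_s^{(\U \cup \D)} = w^{*(k)}_s$, I would partition $s \cap (\U \cup D)$ into three pieces using Lemma~\ref{main:lm:worstcase:approx:1}: the alive elements in $s \cap \U^{(k)}$, the dead elements in $s \cap D^{(k)}$, and the elements $e \in s$ with $\ell(e) > k$ (which by the same lemma lie outside $\U^{(k)} \cup D^{(k)}$). These three pieces should match, respectively, the three summands of $w^{*(k)}_s = \sum_{e \in s \cap \U^{(k)}} w^{(k)}_e + \delta^{(k)}_s + \phi^{(k)}_s$. For the first piece, the set $s$ itself (or, when $\ell(e) < k$, the set realizing the maximum in the definition of $\ell(e)$) serves as a witness that \textsc{Scheduler}$(k)$'s level $\ell^{(k)}(e)$ agrees with the consistent level $\ell(e)$, so that $w_e = w^{(k)}_e$. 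For the second piece, each time an element $e \in s$ migrates from $\U^{(k)}$ into $D^{(k)}$ the deletion rule deposits exactly $(1+\epsilon)^{-\ell^{(k)}(e)} = w_e$ into $\phi^{(k)}_s$, and this contribution is never altered while $e$ remains dead. For the third piece, the intended semantics of $\delta^{(k)}_s$, set up at preprocessing and re-established after each synchronization, is exactly $\sum_{e \in s,\ \ell(e) > k} w_e$.

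The main obstacle is verifying the two structural identities underlying the weight matching: that $\ell^{(k)}(e) = \ell(e)$ on $\U^{(k)}$ (whose failure would break the alive-element contribution), and that $\delta^{(k)}_s$ faithfully records the weight contributed by elements at consistent level above $k$ (whose failure would break the invisible-element contribution). Because the $L$ schedulers proceed asynchronously and may hold diverging local views, these identities are not immediate from the invariants stated in the extended abstract; verifying them requires an induction over the sequence of external updates and internal rebuild events, exploiting the fact that ordinary deletions leave $\ell^{(k)}(e)$ and $\delta^{(k)}_s$ unchanged while every synchronization explicitly resets the affected lower schedulers to agree with the synchronizing one. The careful bookkeeping for these identities is the real work that the full version of the paper is responsible for supplying.
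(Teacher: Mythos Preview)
Your proposal is correct and follows essentially the same route as the paper: reduce to Invariant~\ref{main:inv:worstcase:tight} at the owning level $k$ by arguing that $w_s^{(\U\cup D)} = w^{*(k)}_s$, with the crux being that $\delta^{(k)}_s$ faithfully records the contribution from elements at consistent level above $k$ (which the full version establishes via the telescoping identity $\delta_k(s)-\delta_{k+1}(s)=\sum_{e\in s\cap E_{k+1}(k+1)} w(e)$, verified by looking at the last synchronization that touched $\delta_k(s)$). Your three-piece decomposition and your identification of the two structural identities needing an inductive argument over updates and synchronizations are exactly the content of the paper's proof.
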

  
Lemma~\ref{main:lm:worstcase:approx:2} (whose proof follows from induction) closely mirrors Lemma~\ref{main:cor:static:dual} from Section~\ref{main:sec:prelim}. Consider the collection of sets $\T = \{ s \in \S : w_s^{(\U \cup D)} = 1\}$.  Lemma~\ref{main:lm:worstcase:approx:2} implies that  (see the proof of Lemma~\ref{main:cor:static:dual}) the element-weights $\{w_e\}$ form a valid fractional packing in $(\U \cup D, \S)$ and  $\T$ forms a valid set cover in $(\U \cup D, \S)$. Furthermore, following the proof of Lemma~\ref{main:static:approx}, we get: $f \cdot \sum_{e \in \U \cup D} w_e \geq |\T|$. Applying Corollary~\ref{main:cor:lm:worstcase:approx:2}, we now derive that:
\begin{equation}
\label{main:eq:last}
f \cdot \sum_{e \in \U} w_e \geq (1+2\epsilon(1+\epsilon))^{-1} \cdot f \cdot \sum_{e \in \U \cup D} w_e  \geq (1+2\epsilon(1+\epsilon))^{-1} \cdot |\T|.
\end{equation}
Since $\T$ is a set cover in $(\U \cup D, \S)$, it  also forms a set cover in $(\U, \S)$. Similarly, since the element-weights  $\{w_e\}_{e \in \U \cup D}$ form a fractional packing in $(\U \cup D, \S)$, the weights $\{w_e\}_{e \in \U}$ form a fractional packing in $(\U, \S)$. Thus, we have a set cover $\T$ and a fractional packing $\{w_e\}$ in $(\U, \S)$ whose sizes are within a $(1+2\epsilon(1+\epsilon))f$ multiplicative factor of each other, according to~(\ref{main:eq:last}). Hence, Lemma~\ref{main:lm:dual} implies that $\T$ forms a $(1+2\epsilon(1+\epsilon))f$-approximate minimum set cover in $(\U, \S)$. 

\newpage

\part{FULL VERSION}


\appendix

\paragraph{Remark:} There are a few  minor notational inconsistencies between the extended abstract and this part of the paper. However, we emphasize that {\em this part consists of a  self-contained full version of our two dynamic algorithms}, with every necessary notation and concept defined from scratch.

\paragraph{Organization:}
In Section~\ref{sec:preliminaries} we describe the preliminary concepts and notations, along with a static primal-dual algorithm we will be building upon in subsequent sections. We present our dynamic algorithm for amortized update time in Section~\ref{sec:amortized}, where Section~\ref{ssec:amortized-algorithm} contains the algorithm description and Section~\ref{ssec:potential-analysis} analyzes its  amortized update time. Our dynamic algorithm for worst case update time is presented in Section~\ref{sec:worst-case}.

\section{Preliminaries} \label{sec:preliminaries}

In the {\em minimum set cover} problem, we get a universe of elements $\U$ and a collection of sets $\S \subseteq 2^{\U}$  as input, where $\bigcup_{s \in \S} s = \U$ and each set $s \in \S$ has a {\em cost} $c_s > 0$ associated with it.
Without loss of generality, we assume there exists a parameter $C > 1$ such that $1/C < c_s < 1$ for all sets $s \in \S$, 
A collection of sets $\S' \subseteq \S$ forms a {\em set-cover} of $\U$ iff $\bigcup_{s \in \S'} s = \U$. The goal is to compute a set cover $\S'$ of $\U$ with minimum total cost $c(\S') = \sum_{s \in \S'} c_s$. 

\paragraph{Dynamic Set Cover.}
Initially, the algorithm receives as input a universe of elements $\U$, a collection of sets $\S \subseteq 2^{\U}$ with $\bigcup_{s \in \S} s = \U$, and a cost $c_s \geq 0$ for each set $s \in \S$. 
Subsequently, the input keeps changing via a sequence of updates, where each update either (1) deletes an element $e$ from the universe $\U$ and from every set $s \in \S$ that contains $e$, or (2) inserts an element $e$ into the universe $\U$ and specifies the sets in $\S$ that the element $e$ belongs to.
After each update, we would like to maintain an approximate cost of the optimal set cover of the updated set system.
The time taken by a dynamic algorithm to handle the preprocessing step is referred to as its {\em preprocessing time}.
We say that a dynamic algorithm has an {\em amortized update time} of $O(\lambda)$ iff the algorithm takes $O(\Gamma + t \cdot \lambda)$ total time to handle any sequence of $t \geq 1$ updates after preprocessing, where $O(\Gamma)$ is the preprocessing time.
We want to design a dynamic algorithm whose approximation ratio and amortized update time are as small as possible.

\medskip

Through out the remaining part of this paper, we use $m$ to denote the number of sets in $\S$, and $n$ to denote the maximum number of elements in the universe $\U$ across all the updates.
We use $f$ to denote an upper bound on the maximum frequency of any element across all the updates, where the frequency of an element $e \in \U$ is defined as the number of sets in $\S$ that contain $e$.

Fix $\epsilon > 0$ to be a sufficiently small constant, e.g., $\epsilon < 0.1$.

\paragraph{Hierarchical Decomposition.}
Let $L = \lceil \log_{1+\epsilon}(C n) \rceil + 1$ and $[L]=\{ 0,1,\ldots,L \}$.
We call $L$ the \emph{highest} level and $0$ the \emph{lowest}.
In the following we describe a static primal-dual algorithm assigns a level $l(s) \in [L]$ to every set $s\in \S$.
We also define the level $l(e)$ of element $e\in\U$ as $l(e) = \max_{s\in\S: e\in S} \{l(s)\}$.
The algorithm guarantees that each element $e$ at level $l(e) = k$ has weight $(1+\epsilon)^{-k}$.
We use $w(s) = \sum_{e\in s}w(e)$ to denote the weight of $s\in \S$, which is the total weight of elements it contains.
We call a set $s$ \emph{tight} if $w(s) > \frac{c_s}{1+\epsilon}$.

Initially, we set $l(e) \leftarrow L$ and $w(e) \leftarrow (1+\epsilon)^{-L}$ for all elements $e\in\U$ and let $S\leftarrow \S$ and $E \leftarrow \U$.
Note that at this point any set $s\in \S$ has weight 
\begin{equation*}
w(s) = \sum_{e\in s} w(e) \leq n\cdot (1+\epsilon)^{-L} \leq \frac{1}{C} \leq c_s.
\end{equation*}
Subsequently, the algorithm proceeds in rounds $i=L,L-1,\ldots,0$.
In round $i$, we identify sets in $S$ that become tight, and move them from $S$ to $S_i$.
We set $l(s) \leftarrow i$ for all $s\in S_i$.
Then we move the elements contained in the newly tight sets from $E$ to $E_i$.
Let $l(e) \leftarrow i$ for each $e\in E_i$.
Then we raise the weights of all the remaining elements in $E$ by a factor of $(1+\epsilon)$, and proceed to the next round.
The process stops when $E$ becomes empty.
Observe that  in round $0$, all elements in $E$ has weight $(1+\epsilon)^{0} = 1 > c_s$.
Hence every element will eventually be assigned some level in $[L]$.
We claim that the collection of tight sets is a valid set cover and is a $(1+\epsilon)f$-approximation.


\paragraph{Primal Dual Analysis.}
We abuse the notation slightly and use $w(X)$ to denote the total weight for any collection $X$ of sets or elements.
For any collection $X$ of sets, we use $c(X)$ to denote their total cost.
Let $\opt$ be the cost of the minimum set cover. We have the following lemma.

\begin{lemma}\label{lemma:primal-dual}
	If we have $w(s) \leq (1+\epsilon)\cdot c_s$ for every set $s\in \S$, then $w(\U) \leq (1+\epsilon)\cdot\opt$.
\end{lemma}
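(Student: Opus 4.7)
}

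The plan is to recognize $\{w(e)\}_{e \in \U}$, after a uniform $(1+\epsilon)$-scaling, as a feasible solution to the fractional packing LP dual to the set cover LP, and then invoke weak LP duality. The minimum set cover problem admits the standard LP relaxation $\min \sum_{s \in \S} c_s x_s$ subject to $\sum_{s : e \in s} x_s \geq 1$ for every $e \in \U$, whose dual asks to $\max \sum_{e \in \U} y_e$ subject to $\sum_{e \in s} y_e \leq c_s$ for every $s \in \S$, with $y_e \geq 0$. Any integer set cover of cost $\opt$ is a feasible (integral) primal solution, so $\opt$ upper-bounds the LP optimum, which in turn upper-bounds the value of any dual-feasible $\{y_e\}$.

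First I would define $y_e := w(e)/(1+\epsilon)$ for every $e \in \U$. The hypothesis $w(s) \leq (1+\epsilon) c_s$ for every $s \in \S$ gives, directly,
\[
\sum_{e \in s} y_e \;=\; \frac{1}{1+\epsilon}\sum_{e \in s} w(e) \;=\; \frac{w(s)}{1+\epsilon} \;\leq\; c_s,
\]
so $\{y_e\}_{e \in \U}$ is dual-feasible. Weak duality (together with feasibility of the minimum set cover for the primal LP) then yields
\[
\frac{w(\U)}{1+\epsilon} \;=\; \sum_{e \in \U} y_e \;\leq\; \opt,
\]
which is exactly $w(\U) \leq (1+\epsilon)\cdot \opt$.

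There is essentially no obstacle here: the whole content of the lemma is the observation that the tightness condition in the primal-dual hierarchical decomposition has been relaxed by a factor of $(1+\epsilon)$, and this relaxation propagates to a $(1+\epsilon)$-factor loss between the dual value $w(\U)$ and $\opt$. In particular, the argument does not use the structure of the level assignment at all, only the single numeric hypothesis $w(s) \leq (1+\epsilon) c_s$. If one prefers to avoid explicit invocation of LP duality, an equivalent self-contained derivation is to let $\S^\ast$ be any minimum set cover and write $\opt = \sum_{s \in \S^\ast} c_s \geq \tfrac{1}{1+\epsilon}\sum_{s \in \S^\ast} w(s) \geq \tfrac{1}{1+\epsilon} \sum_{e \in \U} w(e) = w(\U)/(1+\epsilon)$, where the last inequality holds because every element $e \in \U$ is covered by at least one set of $\S^\ast$ and $w(e) \geq 0$.
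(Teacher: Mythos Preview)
Your proposal is correct, and in fact your ``self-contained derivation'' at the end is exactly the paper's proof: take a minimum set cover $\S^\ast$, bound $w(\U) \leq \sum_{s\in\S^\ast} w(s) \leq (1+\epsilon)\sum_{s\in\S^\ast} c_s = (1+\epsilon)\opt$. The explicit LP-duality packaging you give first is equivalent but slightly more roundabout than needed here.
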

\begin{proof}
	Let $\S^*$ be the minimum set cover, i.e., $\opt = c(\S^*)$.
	Then we have
	\begin{equation*}
	w(\U) = \sum_{e\in \U}w(e) \leq \sum_{s\in \S^*}\sum_{e \in s}w(e) \leq (1+\epsilon)\cdot c(\S^*) = (1+\epsilon)\cdot \opt.
	\end{equation*}
	The first inequality comes from the fact that $\bigcup_{s\in \S^*} s = \U$, and the second inequality comes from the assumption of the lemma.
\end{proof}


Hence any set cover $S$ with $c(S)\leq \alpha\cdot w(\U)$ is an $\alpha(1+\epsilon)$-approximation.
In particular, in the above hierarchical decomposition, the collection of tight sets $T\subseteq \S$ is a valid set cover since each element is contained in at least one tight set.
Moreover, for every $s\in T$, we have $w(s) > \frac{c_s}{1+\epsilon}$.
Hence $T$ is a $(1+\epsilon)^2 f$-approximate set cover since
\begin{equation*}
c(T) \leq (1+\epsilon)\sum_{s\in T} w(s) \leq (1+\epsilon)f\cdot \sum_{e\in \U}w(e) \leq (1+\epsilon)^2 f\cdot \opt.
\end{equation*}

We will show that our dynamic algorithms maintains a similar hierarchical decomposition described above, and the collection of tight sets is a $(1+O(\epsilon))f$-approximate set cover.


\section{Our Algorithm for Amortized Update Time} \label{sec:amortized}

In this section we present the data structure and the algorithm with amortized $O\left(\frac{f^2}{\epsilon^3}+\frac{f}{\epsilon^2}\log C\right)$ update time.
Specifically, we prove the following theorem.

\begin{theorem} \label{th:amortized_time}
	There are deterministic dynamic algorithms for the minimum set cover problem with $(1+\epsilon)f$-approximation ratio and an amortized update time of $O\left(\frac{f^2}{\epsilon^3}+\frac{f}{\epsilon^2}\log C\right)$.
\end{theorem}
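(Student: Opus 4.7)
The plan is to implement a discretized weighted version of the hybrid scheme sketched in Section~\ref{main:sec:algo}. Levels are integers in $[L]=\{0,1,\ldots,L\}$ with $L=\lceil \log_{1+\epsilon}(Cn)\rceil+1$. For every set $s\in\S$ we store a level $\ell(s)$, a real-weight $w_s=\sum_{e\in s}w_e$, a dead-weight $\phi_s\geq 0$, and a total-weight $w^*_s=w_s+\phi_s$; every element is assigned $\ell(e)=\max_{s\ni e}\ell(s)$ and $w_e=(1+\epsilon)^{-\ell(e)}$. In the weighted setting we say $s$ is \emph{tight} when $w^*_s\in\bigl[c_s/(1+\epsilon),c_s\bigr]$ and \emph{slack} when $w^*_s<c_s/(1+\epsilon)$. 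Preprocessing runs the primal-dual procedure from Section~\ref{sec:preliminaries}, which assigns initial levels/weights and sets $\phi_s\leftarrow 0$; appropriate priority queues keyed by $(\ell(s),w^*_s)$ allow the rebuild subroutine to be implemented in $O(f\cdot|\U'_k|)$ time, as in Lemma~\ref{main:lm:rebuild:runtime}.

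We maintain two invariants that generalize Invariants~\ref{main:inv:deletions:weights} and~\ref{main:inv:deletions:rebuild}: (I1) $w^*_s\leq c_s$ for every $s\in\S$, and every $s$ at level $\ell(s)>0$ is tight; (I2) $\sum_{s\in\S}\phi_s\leq \epsilon f\cdot\sum_{e\in\U}w_e$. A deletion of $e$ is handled by transferring $w_e$ from $w_s$ into $\phi_s$ for every $s\ni e$, which preserves $w^*_s$ and hence (I1); if (I2) is violated afterwards, we identify the smallest $k$ with $\sum_{\ell(s)\leq k}\phi_s>\epsilon f\sum_{\ell(e)\leq k}w_e$ and invoke \textsc{Rebuild}$(\leq k)$ exactly as described before. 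An insertion of $e'$ raises $\ell(e')$ to $\max_{s\ni e'}\ell(s)$, then invokes \textsc{Fix}$(s)$ on each $s\in\S_{e'}$: \textsc{Fix}$(s)$ first absorbs the excess $w^*_s-c_s$ into $\phi_s$ if possible, otherwise sets $\phi_s\leftarrow 0$ and walks $s$ up to the smallest integer level $j$ with $w_s(s\to j)\leq c_s/(1+\epsilon)$, converting each infinitesimal loss of real-weight at a neighbour $s'$ into dead-weight once $w_{s'}\leq c_{s'}/(1+\epsilon)$. Finally we re-check (I2) and call \textsc{Rebuild} if necessary.

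For the approximation ratio I would reproduce the proof of Theorem~\ref{main:th:approx}: (I1) yields a set cover by tight sets together with a feasible fractional packing formed by $\{w_e\}$, and (I2) plus $f$-frequency give $(1+\epsilon)f\cdot\sum_e w_e\geq|\T^*|$, so Lemma~\ref{lemma:primal-dual} closes the bound. For the amortized update time I would use the down-token potential
\[
\Phi=\sum_{s\in\S}(1+\epsilon)^{\ell(s)}\,\phi_s.
\]
Each deletion raises $\Phi$ by at most $\sum_{s\ni e}(1+\epsilon)^{\ell(s)-\ell(e)}w_e\leq f$; a single \textsc{Fix}$(s)$ raises $\Phi$ by at most $O(f^2)$ and performs $O(f^2/\epsilon)$ units of discrete work, exactly as in Lemma~\ref{main:lm:insertions:2} and Corollary~\ref{main:cor:lm:insertions:1}, because the target level is bounded by $\ell(e')$; and every call \textsc{Rebuild}$(\leq k)$ costs $O(f\cdot|\U'_k|)$ but, by the weighted analogue of Lemma~\ref{main:lm:deletions} and the level-selection rule combined with Claim~\ref{main:cl:imp}, decreases $\Phi$ by at least $\Omega(\epsilon f\cdot|\U'_k|)$. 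Together these charges give amortized $O(f^2/\epsilon^3)$ per update outside of preprocessing; the additional $O((f/\epsilon^2)\log C)$ term is introduced by the fact that with general costs the \emph{level gap} between a newly inserted element $e'$ and the lowest level at which any set containing $e'$ lies can be as large as $\Theta(\log_{1+\epsilon} C)$, so a single \textsc{Fix} call may cascade through $\Theta((\log C)/\epsilon)$ integer levels before reaching a tight configuration, costing $O(f/\epsilon\cdot\log_{1+\epsilon}C)$ work not accounted for by $\Phi$.

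The main obstacle I anticipate is discharging Assumption~\ref{main:assume:rebuild} (that every set at level $\leq k$ still contains some element at level $\leq k$), because deletions can leave sets at high levels whose only low-level witnesses have disappeared; the fix is to insert a preliminary pass inside \textsc{Rebuild}$(\leq k)$ that ``demotes'' such empty-witness sets to level $0$ and zeroes their dead-weight, charging the cost to the very deletions that emptied them (each deletion demotes at most $f$ sets, each in $O(1)$ time using incidence lists). The second delicate point is the discretization: because levels are now integers, \textsc{Fix}$(s)$ must jump $s$ from its current level directly to the first integer $j$ satisfying $w_s(s\to j)\leq c_s/(1+\epsilon)$, and one must verify that this jump still admits the integrated fractional-work bound from Lemma~\ref{main:lm:insertions:1}; this follows because the discrete jump is dominated by the corresponding continuous integral up to a $(1+\epsilon)$ factor, which is absorbed into $\gamma_\epsilon$. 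Once these two points are handled, combining the charging arguments above yields Theorem~\ref{th:amortized_time}.
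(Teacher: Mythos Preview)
Your high-level plan is the paper's plan: the hybrid of a local ``Fix/Promote'' rule for over-full sets with a global \textsc{Rebuild} triggered by a dead-weight invariant, and the down-token quantity $\sum_s(1+\epsilon)^{\ell(s)}\phi_s$ is indeed one of the potentials the full proof uses. So the skeleton is right.

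The gap is in the weighted accounting, specifically where the $O\bigl(\tfrac{f}{\epsilon^2}\log C\bigr)$ term comes from. Your explanation --- that a single \textsc{Fix}$(s)$ may walk through $\Theta(\log_{1+\epsilon}C)$ integer levels, costing $O(f\cdot\log_{1+\epsilon}C)=O(\tfrac{f}{\epsilon}\log C)$ extra work --- does not give the stated bound: there are up to $f$ calls to \textsc{Fix} per insertion, so your argument yields $O(\tfrac{f^2}{\epsilon^2}\log C)$, an $f$ too many. The paper avoids this overcount by two ingredients you are missing. First, it introduces a \emph{base level} $b(s)=\lfloor\log_{1+\epsilon}(1/c_s)\rfloor$ and a \textsf{Lift-Up} step that jumps a set with $E_{\ell(s)}(s)=\emptyset$ straight from level~$0$ to $b(s)$ in $O(1)$ time; the many-level walk you budget for never actually happens. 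Second, before inserting $e'$ the paper runs a ``make all containing sets \emph{good}'' phase (sets are processed in order of $c_s-w(s)$), which guarantees that after insertion the target level $\hat\ell(s)$ satisfies $(1+\epsilon)^{\hat\ell(s)-\ell(s)}=O(1)$; this is what caps the per-set potential increase at $O\bigl(\tfrac{f}{\epsilon^3}+\tfrac{1}{\epsilon^2}\log C\bigr)$ rather than something growing with $\hat\ell(s)-\ell(s)$.

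Correspondingly, the paper does not run the amortization on your single down-token potential. It uses \emph{two} potentials: an up-potential for sets with $w(s)>c_s$ with conversion rate $\alpha_i=\bigl(\tfrac{f}{\epsilon^3}+\tfrac{1}{\epsilon^2}\log C\bigr)(1+\epsilon)^{i+1}$, and a down-potential $\beta_i=\alpha_i/f$ on dead weight. The $\log C$ is baked into the rates because (i) \textsc{Rebuild}$(k)$ carries an additive $\tfrac{1}{\epsilon}\log C$ cost (coming from the $k'=\min\{k,\log_{1+\epsilon}(2C|E'|/\epsilon)\}$ shortcut inside \textsf{Fix-Level}), and (ii) the down-potential must also repay accumulated lift-up charges $\sum_{s\in T_{\le k}}\max\{\ell(s)-b(s)+1,0\}$. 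To make that repayment possible, the global dead-weight invariant the paper maintains is $\Phi\le \epsilon\bigl(c(T)+f\cdot w(\U)\bigr)$, not your $\Phi\le \epsilon f\sum_e w_e$; the extra $c(T)$ term is exactly what funds the lift-up costs in the proof of Lemma~\ref{lemma:potential-released-dead}. Your proposed fix for Assumption~\ref{main:assume:rebuild} (demoting empty-witness sets) is also not what the paper does; instead it keeps a linked list over nonempty $T_i$ to skip empty levels, and the $O(\tfrac{f}{\epsilon}\log C)$ cost of maintaining that list when a set jumps to its base level is charged directly to the insertion.
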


Our algorithm maintains the $(1+\epsilon)f$-approximate set cover \emph{value} after every update and can return it in constant time.
The algorithm also maintains a \emph{solution} of such value throughout the updates, and can output the change of the maintained solution after every update.

\paragraph{Notations.}
In addition to the \emph{real weight} $w(s) = \sum_{e\in s} w(e)$, we also maintain for every set $s$ a \emph{dead weight} $\phi(s)$.
We use $w^*(s) = w(s)+\phi(s)$ to denote its total weight.
The idea of introducing a dead weight for each set $s$ is to keep track of the weight that set $s$ has lost due to decreases of element weight and element deletions.
More specifically, when $w(s)$ decreases, we increase $\phi(s)$ such that $w(s)+\phi(s)$ does not decrease too much (so that we do not need to update our data structure immediately).
As long as the total dead weight of sets is small, e.g., at most an $\epsilon$ fraction of the total weight of sets, then our data structure maintains a solution with bounded approximation ratio.
We rebuild part of the data structure only when the fraction of dead weight gets too large.

\subsection{Invariants}

We define the weight of set $s$ \emph{at level $i$} as
\begin{equation*}
w(s,i) = \sum_{e\in s} (1+\epsilon)^{ -\max\{i, \max_{s'\neq s: e\in s'} l(s') \} }.
\end{equation*}
In other words, $w(s,i)$ is the weight of $s$ if we place $s$ at level $i$.
Note that $w(s) = w(s,l(s))$.

We maintain the following invariants.

\begin{invariant}[Bounded Weight Invariant]\label{inv:set-weight}
	$\forall s\in \S$, $w(s,l(s)+1) < c_s$.
\end{invariant}

For convenience we extend the definition of $w(s,i)$ to all positive integers $i$.
Hence the invariant holds for all sets $s$ at level $L$ since $w(s,L+1) \leq n\cdot (1+\epsilon)^{-(L+1)} < \frac{1}{C} < c_s$.

\begin{definition}[Tight Sets]
	We call a set $s$ \emph{tight} if $w(s)+\phi(s) > \frac{c_s}{1+\epsilon}$; \emph{slack} otherwise.
\end{definition}

\begin{invariant}[Tightness Invariant]\label{inv:level-of-tight}
	All sets at level at least $1$ are tight.
\end{invariant}

Note that a tight set can be at level $0$, but slack sets cannot be at level other than $0$.
With Invariant~\ref{inv:level-of-tight} maintained, the collection of tight sets is a feasible set cover.

\begin{corollary}\label{lemma:contain-tight}
	Every element is contained in at least one tight set.
\end{corollary}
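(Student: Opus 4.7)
The approach is a direct two-case analysis based on the level $l(e) = \max_{s \in \S: e \in s} l(s)$ of an arbitrary element $e \in \U$. The goal is to exhibit at least one set $s \ni e$ with $w^*(s) > c_s/(1+\epsilon)$.

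First I would dispose of the easy case $l(e) \geq 1$. By the definition of the element level, there is some set $s \in \S$ containing $e$ with $l(s) = l(e) \geq 1$. Invariant~\ref{inv:level-of-tight} (Tightness) then immediately implies that $s$ is tight, and we are done.

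Next, in the case $l(e) = 0$, every set containing $e$ lies at level $0$, and by the level-to-weight formula $w(e) = (1+\epsilon)^{-l(e)} = (1+\epsilon)^0 = 1$. Pick any $s \ni e$; then $w(s) = \sum_{e' \in s} w(e') \geq w(e) = 1$. Since the preliminaries normalize costs so that $c_s < 1$ for every $s \in \S$, we get
\[
w^*(s) = w(s) + \phi(s) \geq w(s) \geq 1 > c_s > \tfrac{c_s}{1+\epsilon},
\]
so $s$ is tight and contains $e$.

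The argument is essentially routine; the only subtle point is the second case, which relies critically on the cost normalization $c_s \leq 1$ so that a single element whose weight has already been rounded up to $(1+\epsilon)^0 = 1$ suffices to make every containing set tight. Without that normalization, one would be forced to invoke the Bounded Weight Invariant~\ref{inv:set-weight} at level $l(s)+1 = 1$ and run a separate argument to rule out every set containing $e$ being slack — but since the normalization is in force, no such argument is needed.
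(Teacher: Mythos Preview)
Your proof is correct and follows essentially the same approach as the paper's: the paper argues by contradiction (if every set containing $e$ were slack, Invariant~\ref{inv:level-of-tight} forces $l(e)=0$, hence $w(e)=1 \geq c_s$, contradiction), while you phrase the same idea as a direct two-case split on $l(e)$. The key ingredients---Invariant~\ref{inv:level-of-tight} for the positive-level case and the cost normalization $c_s < 1$ together with $w(e)=(1+\epsilon)^0=1$ for the level-$0$ case---are identical.
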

\begin{proof}
	Suppose there exits an element $e$ such that all sets containing it are slack, then by Invariant~\ref{inv:level-of-tight}, we have $l(e) = 0$ and $w(e) = 1$.
	Hence each set $s$ containing $e$ has weight at least $1\geq c_s$, which contradicts the definition of slack sets.
\end{proof}

\begin{invariant}[Local $\phi$ Invariant]\label{inv:local-phi}
	If $w(s)+\phi(s) > c_s$ then $\phi(s) =0$.
\end{invariant}

Invariant~\ref{inv:set-weight} and~\ref{inv:local-phi} imply the following immediately.

\begin{corollary}\label{corollary:bounded-set-weight}
	We have $w(s) + \phi(s) < (1+\epsilon)\cdot c_s$ for all sets $s\in \S$.
\end{corollary}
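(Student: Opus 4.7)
The plan is to split into two cases based on the hypothesis of Invariant~\ref{inv:local-phi}. If $w(s) + \phi(s) \leq c_s$, the bound is immediate. Otherwise, Invariant~\ref{inv:local-phi} kicks in, forcing $\phi(s) = 0$, so it suffices to show $w(s) < (1+\epsilon)\, c_s$. This is where I would invoke Invariant~\ref{inv:set-weight}, which gives $w(s, l(s)+1) < c_s$, together with a per-element comparison between $w(s) = w(s, l(s))$ and $w(s, l(s)+1)$.

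Concretely, first I would unfold the definition of $w(s, i)$. For each $e \in s$, let $m_e = \max_{s' \neq s : e \in s'} l(s')$. Then the contribution of $e$ to $w(s, i)$ is $(1+\epsilon)^{-\max\{i, m_e\}}$. Comparing the per-element contributions at levels $l(s)$ and $l(s)+1$, the exponents differ by either $0$ (when $m_e \geq l(s)+1$) or $1$ (when $m_e \leq l(s)$). In either case
\[
(1+\epsilon)^{-\max\{l(s), m_e\}} \;\leq\; (1+\epsilon)\cdot (1+\epsilon)^{-\max\{l(s)+1, m_e\}}.
\]
Summing over $e \in s$ yields $w(s) = w(s, l(s)) \leq (1+\epsilon)\cdot w(s, l(s)+1)$. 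Combined with Invariant~\ref{inv:set-weight}, this gives $w(s) < (1+\epsilon)\, c_s$, and since $\phi(s) = 0$ in this case, the corollary follows.

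There is no real obstacle here: the only subtle point is the monotonicity step $w(s, l(s)) \leq (1+\epsilon)\, w(s, l(s)+1)$, which only uses that raising $s$ by one level can scale each element's contribution by at most $(1+\epsilon)$ (and never more), because the other sets covering $e$ are held fixed. The case split on whether $w(s) + \phi(s) > c_s$ is the mechanism that makes Invariant~\ref{inv:local-phi} usable at all; otherwise the conclusion follows trivially from $w(s)+\phi(s) \leq c_s$. So the whole argument is a one-line case split plus the elementary level-raising inequality above.
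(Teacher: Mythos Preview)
Your proof is correct and follows essentially the same approach as the paper: the same case split on whether $w(s)+\phi(s)\le c_s$, the same invocation of Invariant~\ref{inv:local-phi} to get $\phi(s)=0$, and the same key inequality $w(s,l(s))\le(1+\epsilon)\,w(s,l(s)+1)$ combined with Invariant~\ref{inv:set-weight}. You have simply spelled out the per-element justification for that inequality, which the paper leaves as ``easy to prove.''
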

\begin{proof}
	If $w(s)+\phi(s) \leq c_s$ then the corollary trivially holds.
	Otherwise by Invariant~\ref{inv:local-phi} we have $\phi(s) = 0$.
	By definition it is easy to prove that $w(s,i) \leq (1+\epsilon)\cdot w(s,i+1)$ for all $i\in [L-1]$.
	Hence by Invariant~\ref{inv:set-weight} we have $w(s) = w(s,l(s))\leq (1+\epsilon)\cdot w(s,l(s)+1) < (1+\epsilon)\cdot c_s$.
\end{proof}

Let $T_i$ be the collection of tight sets at level $i$.
Let $\Phi_i = \sum_{s: l(s)=i}\phi(s)$ be the total dead weight of sets at level $i$.
Let $\Phi_{\leq k} = \sum_{i= 0}^k \Phi_i$ and $T_{\leq k} = \sum_{i = 0}^k T_i$.
Let $\Phi = \Phi_{\leq L}$ and $T = T_{\leq L}$.
We maintain the following invariant, which guarantees that the total dead weight is bounded.

\begin{invariant}[Global $\phi$ Invariant]\label{inv:global-phi}
	We have $\Phi \leq \epsilon\cdot \big(c(T)+f\cdot w(\U) \big)$.
\end{invariant}

With all the invariants maintained, we have the following approximation ratio guarantee.

\begin{lemma}\label{lemma:approximation-ratio}
	The collection of tight sets $T$ is a $(1+5\epsilon)f$-approximate set cover.
\end{lemma}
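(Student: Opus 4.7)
The plan is to combine the definition of tightness with Lemma~\ref{lemma:primal-dual} via a simple algebraic bootstrap, using Invariant~\ref{inv:global-phi} to absorb the dead weight. First, by Corollary~\ref{lemma:contain-tight}, $T$ is a feasible set cover, so only the cost bound $c(T) \leq (1+5\epsilon) f \cdot \opt$ needs to be proved.

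For each $s \in T$, the definition of tightness gives $c_s < (1+\epsilon)(w(s)+\phi(s))$. Summing over $T$ yields
\begin{equation*}
c(T) < (1+\epsilon)\left(\sum_{s \in T} w(s) + \Phi\right).
\end{equation*}
Since each element lies in at most $f$ sets, $\sum_{s \in T} w(s) \leq \sum_{s \in \S} w(s) \leq f \cdot w(\U)$, so
\begin{equation*}
c(T) \leq (1+\epsilon)\bigl( f \cdot w(\U) + \Phi \bigr).
\end{equation*}

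Next I plug in Invariant~\ref{inv:global-phi}, $\Phi \leq \epsilon(c(T) + f \cdot w(\U))$, to obtain
\begin{equation*}
c(T) \leq (1+\epsilon)(1+\epsilon) f \cdot w(\U) + (1+\epsilon)\epsilon \cdot c(T),
\end{equation*}
and rearrange to $c(T) \cdot (1 - \epsilon(1+\epsilon)) \leq (1+\epsilon)^2 f \cdot w(\U)$. Finally, Corollary~\ref{corollary:bounded-set-weight} shows $w(s) \leq w(s)+\phi(s) < (1+\epsilon) c_s$ for all $s \in \S$, so the hypothesis of Lemma~\ref{lemma:primal-dual} is satisfied and $w(\U) \leq (1+\epsilon) \opt$. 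Therefore
\begin{equation*}
c(T) \leq \frac{(1+\epsilon)^3}{1 - \epsilon(1+\epsilon)} \cdot f \cdot \opt.
\end{equation*}

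The only remaining step is a routine calculation: for $\epsilon < 0.1$, the prefactor $(1+\epsilon)^3 / (1 - \epsilon(1+\epsilon))$ is at most $1 + 5\epsilon$ (the leading-order expansion gives $1 + 3\epsilon + \epsilon + O(\epsilon^2) = 1 + 4\epsilon + O(\epsilon^2)$, and the constant-hiding slack absorbs the $O(\epsilon^2)$ term). Since every inequality used is tight only up to $(1+\epsilon)$ factors, I do not expect any real obstacle here; the only subtlety is being careful that the dead-weight appears on both sides of the bootstrap and that the coefficient $1 - \epsilon(1+\epsilon)$ stays bounded away from zero, which is why the assumption $\epsilon < 0.1$ stated earlier in the paper is used.
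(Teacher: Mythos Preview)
Your proof is correct and follows essentially the same approach as the paper: sum the tightness inequality over $T$, bound $\sum_{s\in T} w(s)$ by $f\cdot w(\U)$, absorb $\Phi$ via Invariant~\ref{inv:global-phi}, solve the resulting bootstrap for $c(T)$, and finish with Lemma~\ref{lemma:primal-dual}. If anything, you are slightly more explicit than the paper in citing Corollary~\ref{lemma:contain-tight} for feasibility and Corollary~\ref{corollary:bounded-set-weight} to justify the hypothesis of Lemma~\ref{lemma:primal-dual}.
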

\begin{proof}
	By definition of tight sets, we have $w(s)+\phi(s) > \frac{c_s}{1+\epsilon}$ for all $s\in T$.
	Hence the cost of the set cover is (where the third inequality comes from Invariant~\ref{inv:global-phi})
	\begin{align*}
	c(T) & < (1+\epsilon)\cdot\sum_{s\in T}(w(s)+\phi(s)) \leq (1+\epsilon)\cdot w(\S) + (1+\epsilon)\cdot \Phi \\
	& \leq (1+\epsilon)f\cdot w(\U) + \epsilon(1+\epsilon)\cdot c(T) + \epsilon(1+\epsilon)f\cdot w(\U) \\
	& = (1+\epsilon)^2 f\cdot w(\U) + \epsilon(1+\epsilon)\cdot c(T).
	\end{align*}
	
	Hence we have (for every $\epsilon\in (0,0.1)$)
	\begin{equation*}
	c(T) < \frac{(1+\epsilon)^2 f}{1-\epsilon(1+\epsilon)}\cdot w(\U) \leq \frac{(1+\epsilon)^3 f}{1-\epsilon(1+\epsilon)}\cdot \opt \leq (1+5\epsilon)f\cdot \opt,
	\end{equation*}
	where the second inequality comes from the Lemma~\ref{lemma:primal-dual}
\end{proof}

Recall that $E_i = \{e\in\U : l(e)=i\}$ contains the elements at level $i$.
Let $E_i(s) = s\cap E_i$ be elements in $s$ that are at level $i$.
We maintain for every $s$ the sets $E_i(s)$ for all $i\in[L]$.
Recall that each element appears in at most $f$ sets, and our data structure maintains for each element a pointer to each set containing it.
It is easy to check throughout the algorithm that the data structure can be maintained efficiently when elements change their levels.
Note that for all $i<l(s)$, $E_i(s) = \emptyset$.

Finally, we introduce the notion of base level.

\begin{definition}[Base Level]
	For each set $s$, let $b(s) = \lfloor \log_{1+\epsilon}\frac{1}{c_s} \rfloor$ be the \emph{base level} of $s$.
\end{definition}

Note that all base levels are at most $\log_{1+\epsilon}C = O(\frac{1}{\epsilon}\log C)$, since $c_s\in (\frac{1}{C},1)$ for all $s\in \S$.

\begin{lemma}\label{lemma:not-below-base}
	Invariant~\ref{inv:set-weight} implies $E_{i}(s) = \emptyset$ for all $s\in \S$ and $i < b(s)$.
\end{lemma}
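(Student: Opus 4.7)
The plan is to prove the contrapositive: assume for contradiction that there is a set $s \in \S$ and a level $i < b(s)$ with some element $e \in E_i(s)$, and then derive that $w(s, l(s)+1) \geq c_s$, violating Invariant~\ref{inv:set-weight}.

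First I would record the basic inequalities satisfied by the base level. Since $b(s) = \lfloor \log_{1+\epsilon}(1/c_s) \rfloor$, we have $(1+\epsilon)^{b(s)} \leq 1/c_s$, and therefore $(1+\epsilon)^{-b(s)} \geq c_s$. This is the one fact about $b(s)$ I need.

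Next, I would use the assumption $e \in E_i(s)$ with $i < b(s)$ to constrain the level of $s$. Since $l(e) = \max_{s' : e \in s'} l(s')$ and $e \in s$, we get $l(s) \leq l(e) = i$, so $l(s)+1 \leq i+1 \leq b(s)$. Similarly, for $\ell_{-s}(e) := \max_{s' \neq s : e \in s'} l(s')$, we have $\ell_{-s}(e) \leq l(e) = i$.

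Now I would isolate the contribution of $e$ to $w(s, l(s)+1)$ and lower-bound it:
\begin{equation*}
w(s, l(s)+1) \;\geq\; (1+\epsilon)^{-\max\{l(s)+1,\, \ell_{-s}(e)\}} \;\geq\; (1+\epsilon)^{-(i+1)} \;\geq\; (1+\epsilon)^{-b(s)} \;\geq\; c_s,
\end{equation*}
where the first inequality drops all other (nonnegative) element contributions, the second uses the bounds $l(s)+1 \leq i+1$ and $\ell_{-s}(e) \leq i$, the third uses $i+1 \leq b(s)$, and the fourth is the base-level inequality above. This directly contradicts Invariant~\ref{inv:set-weight}, completing the proof.

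No step looks like a serious obstacle; the only thing to be careful about is the two cases for where the level $i$ of $e$ comes from (from $s$ itself or from another set $s' \neq s$ containing $e$). Both cases are handled uniformly by the $\max$ inside the definition of $w(s, l(s)+1)$, so no case analysis is actually needed once one passes through $\ell_{-s}(e) \leq l(e)$.
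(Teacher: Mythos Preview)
Your proof is correct and follows essentially the same idea as the paper: assume some $e \in E_i(s)$ with $i < b(s)$, lower-bound the weight contribution of that single element using $(1+\epsilon)^{-b(s)} \geq c_s$, and derive a contradiction. The only minor difference is that the paper bounds $w(s) \geq (1+\epsilon)c_s$ and then invokes Corollary~\ref{corollary:bounded-set-weight}, whereas you bound $w(s,l(s)+1) \geq c_s$ and contradict Invariant~\ref{inv:set-weight} directly; your version thus matches the lemma's hypothesis more literally, but the argument is the same in spirit.
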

\begin{proof}
	Fix any set $s$. Each element at level $i \leq b(s)-1$ has weight at least
	\begin{equation*}
	(1+\epsilon)^{-(b(s)-1)} \geq (1+\epsilon)\cdot c_s.
	\end{equation*}
	
	Hence if $E_{i}(s)$ is not empty, then $w(s) \geq (1+\epsilon)\cdot c_s$, which violates Corollary~\ref{corollary:bounded-set-weight}.
\end{proof}

As we will present in the next section, our main algorithm consists of four subroutines, namely \textsf{Insert}, \textsf{Delete}, \textsf{Promote} and \textsf{Rebuild}.
We summarize the invariants in the following table.

\begin{center}
	\begin{tabular}{ |c|c|c|c| }
		\hline
		Invariant & Property & Affected by & Fixed (by) \\
		\hline
		Bounded Weight (Inv.~\ref{inv:set-weight}) & $\forall s$: $w(s,l(s)+1)<c_s$ & \textsf{Insert} & \textsf{Promote} \\
		\hline
		Tightness (Inv.~\ref{inv:level-of-tight}) & $\forall u$: $l(s)\geq 1 \rightarrow w(s)+\phi(s) >\frac{c_s}{1+\epsilon}$ & All subroutines & immediately \\
		\hline
		Local $\phi$ (Inv.~\ref{inv:local-phi}) & $\forall s$: $w(s)+\phi(s)>c_s \rightarrow \phi(s)=0$ & All subroutines & immediately \\
		\hline
		Global $\phi$ (Inv.~\ref{inv:global-phi}) & $\Phi \leq \epsilon\cdot \big(c(T)+f\cdot w(\U) \big)$ & \textsf{Promote}, \textsf{Delete} & \textsf{Rebuild} \\
		\hline
	\end{tabular}
\end{center}

\subsection{Algorithm} \label{ssec:amortized-algorithm}

In this section we describe the algorithms to handle the updates.
In addition to \textsf{Insert}($e$) and \textsf{Delete}($e$) that handle element insertions and deletions, we introduce two subroutines, \textsf{Promote}$(s)$ and \textsf{Rebuild}$(k)$, the former of which promotes a set $s$ when its total weight gets too large, and the later rebuilds all levels on or below $k$ when there are too much dead weight at levels at most $k$.
Roughly speaking, the promotions of sets maintain the bounded weight invariant (Invariant~\ref{inv:set-weight}) and the rebuilds of levels maintain the global $\phi$ invariant (Invariant~\ref{inv:global-phi}).
We maintain the tightness invariant (Invariant~\ref{inv:level-of-tight}) and local $\phi$ invariant (Invariant~\ref{inv:local-phi}) in all subroutines (namely, \textsf{Delete}, \textsf{Insert}, \textsf{Promote} and \textsf{Rebuild}).

The main algorithm is shown as follows (refer to Algorithm~\ref{alg:complete}).

\begin{algorithm}[H]
	\caption{\textsf{DynamicsetCover}}
	\label{alg:complete}
	\begin{algorithmic}[1]
		\State initialize $l(s) = 0$ for all $s\in \S$
		\For{each update}
		\If{element $e$ is deleted}
		\State \textsf{Delete}$(e)$.
		\EndIf
		\If{element $e$ is inserted}
		\State \textsf{Insert}$(e)$.
		\While{exits set $s$ for which Invariant~\ref{inv:set-weight} is violated}
		\State \textsf{Promote}$(s)$.
		\EndWhile
		\EndIf
		\While{Invariant~\ref{inv:global-phi} is violated}
		\State find the smallest $k$ for which $\Phi_{\leq k} > \epsilon\cdot (c(T_{\leq k})+f\cdot w(E_{\leq k}))$.
		\State \textsf{Rebuild}($k$).
		\EndWhile
		\EndFor
	\end{algorithmic}
\end{algorithm}

As we will show later, the bounded weight invariant (Invariant~\ref{inv:set-weight}) can be violated for a set $s$ only after an element contained in $s$ is inserted.
In other words, line 8 of Algorithm~\ref{alg:complete} will be executed on set $s$ only if $s$ contains the element $e$ that is inserted.

In the following we describe the four subroutines one by one.

\subsubsection{Deletions}

Suppose element $e$ is deleted.
By definition, each $s$ containing $e$ has its real weight $w(s)$ decreases by $w(e)$.
On the other hand, we increase its dead weight $\phi(s)$ accordingly.
We also exclude $e$ from the list of elements $E_{l(e)}(s)$ maintained by each set $s$ containing $e$\footnote{Every time when an element changes its level, or gets inserted or deleted, we change these element collections (and necessary pointers) accordingly. For ease of presentation we omit these steps in the description of all our algorithms.}.
Since each element is contained in at most $f$ sets and it takes $O(1)$ time to handle each set containing $e$, the following algorithm runs in $O(f)$ time.

\begin{algorithm}
	\caption{\textsf{Delete$(e)$}}
	\label{alg:deletion}
	\begin{algorithmic}[1]
		\For{each $s$ containing $e$}
		\State $w(s) \leftarrow w(s)-w(e)$.
		\If{$l(s)>0$}
		\State $\phi(s) \leftarrow \phi(s) + w(e)$.
		\If{$w(s)+\phi(s) > c_s$}
		\State $\phi(s) \leftarrow \max\{0,c_s-w(s)\}$.
		\EndIf
		\EndIf
		\EndFor
	\end{algorithmic}
\end{algorithm}

When an element $e$ is deleted, for each $s$ containing $e$ our algorithm changes $w(e)$ unit of real weight of $s$ to dead weight.
If local $\phi$ invariant (Invariant~\ref{inv:local-phi}) is violated then we decreases $\phi(s)$: we have either $w(s)+\phi(s) \leq c_s$ or $\phi(s) = 0$ after line 6 of Algorithm~\ref{alg:deletion}.
Note that if a set is at level $0$, then we do not increase its dead weight (which stays $0$) since we do not need to maintain its tightness.
Consequently each set at level at least $1$ remains tight after the deletion.
That is, the tightness invariant (Invariant~\ref{inv:level-of-tight}) is maintained.
Note that the bounded weight invariant (Invariant~\ref{inv:set-weight}) also holds since $w(s,l(s)+1)$ does not increase after the deletion.
However, since the total dead weight is increased, the global $\phi$ invariant (Invariant~\ref{inv:global-phi}) can possibly be violated.
As we will show later, the invariant will be maintain by rebuilding some levels.

\subsubsection{Insertions}

Suppose element $e$ is inserted.
We consider any $s$ containing $e$.
If we have $w(s) \leq c_s$ after the insertion then we can easily fix all the invariants (regarding set $s$) by decreasing $\phi(s)$ (if necessary).
Suppose that after the insertion $w(s) > c_s$.
As we will show in our potential analysis, we can upper bound the increase in the potential of $s$ by $O(\frac{f^2}{\epsilon^3}+\frac{f}{\epsilon^2}\log C)$ if $E_{l(s)}(s)\neq \emptyset$ \emph{before} the insertion.

We call a set \emph{good} if one of the following properties holds, and \emph{bad} otherwise.

\begin{definition}[Good Sets]
	Suppose we insert a new element $e$ at level $k$. We call set $s$ containing $e$ \emph{good} before the insertion if (1) $w(s)+(1+\epsilon)^{-k} \leq c_s$; or (2) $E_{l(s)}(s)\neq \emptyset$.
\end{definition}

Before inserting a new element $e$, we would like to make sure that all sets containing $e$ are good.
Note that the definition of ``good'' depends on the level the new element is inserted into.
The higher level the element $e$ is inserted into, the more sets containing $e$ will be good.
We can insert $e$ at a high level only by increasing the levels of sets containing $e$, before the element is inserted.

Our algorithm increases the level of each bad set $s$ to the lowest level $i$ on which it becomes good.
As $s$ does not contain any element at the level on which it is bad, increasing $l(s)$ does not change the weight of any element or set, and thus can be done in $O(1)$ time.

We call such an increase of $l(s)$ a \emph{lift-up} of $s$, and we charge $s$ one unit of \emph{lift-up cost}.
Note that if $l(s) < b(s)$, then we can directly increase $l(s)$ to $b(s)$, since by Lemma~\ref{lemma:not-below-base} $E_i(s) = \emptyset$ for all $i < b(s)$, and $w(s)+(1+\epsilon)^{-i}>c_s$ for all $i<b(s)$.

We denote the operation (which increases $l(s)$ by at least one) by \textsf{Lift-Up}$(s)$.

\begin{algorithm}
	\caption{\textsf{Lift-Up$(s)$}}
	\label{alg:lift-up}
	\begin{algorithmic}[1]
		\If{$l(s) < b(s)$}
		\State $l(s) \leftarrow b(s)$.
		\Else
		\State $l(s) \leftarrow l(s)+1$.
		\EndIf
	\end{algorithmic}
\end{algorithm}

A lift-up of $s$ will be called only when $E_{l(s)}(s) = \emptyset$.
In addition, for potential analysis purpose (which will be clear later), we lift a set $s$ up only when $\phi(s) = 0$.
Thus before lifting the set $s$ up, we need to remove its dead weight.
On the other hand, since $w(s)+\phi(s)$ is decreased when we set $\phi(s)$ to $0$, we need to ensure that $s$ is tight after the insertion of the element, as we have $l(s) > 0$ after the lift-up.
More specifically, suppose we remove the dead weight of $s$ and its real weight before the insertion is $w(s)$.
Then our algorithm guarantees that the element is inserted at a level $k$ such that $w(s)+(1+\epsilon)^{-k} >\frac{c_s}{1+\epsilon}$, i.e., $s$ is tight after the insertion.

The detailed description can be found in Algorithm~\ref{alg:insertion}.
We lift sets up in a carefully chosen order, such that all sets are good before the element is inserted.

More specifically, let $k$ be the tentative level for the new element.
As the first step, we identify the collection of bad sets $B$ containing $e$.
Note that since our algorithm does not decrease the tentative level $k$, all sets containing $e$ that are not in $B$ will be good when $k$ settles at its final level.
We consider the sets in $B$ one by one in non-decreasing order of $c_s - w(s)$.
Intuitively, the set $s$ with the smallest $c_s-w(s)$ is most sensitive to the increase in weight due to the element insertion, and hence should be considered first.

Suppose a set $s\in B$ is considered.
We lift $s$ up until it becomes good, i.e., when either (1) $E_{l(s)}(s)\neq \emptyset$; or (2) the lift-up of $s$ increases the tentative level $k$ of $e$ such that $w(s)+(1+\epsilon)^{-k} \leq c_{s}$.
In the first case $s$ stops at level $l(s)$ and will be good no matter what the final value of $k$ will be.
In the later case all sets contained in $B$ are good, as $s$ is considered as the bad set with the minimum $c_{s}-w(s)$, and with the current tentative level $k$ it holds that $w(s) + (1+\epsilon)^{-k} \leq c_{s}$.
In other words, $k$ is so large that each set $s' \in B$ not considered yet has $w(s')+(1+\epsilon)^{-k} \leq c_{s'}$, i.e., $s'$ is good.
Hence no more lift-up will happen.
This is the reason why we consider sets in the order specified in line~3 of Algorithm~\ref{alg:insertion}.

\begin{algorithm}[htb]
	\caption{\textsf{Insert$(e)$}}
	\label{alg:insertion}
	\begin{algorithmic}[1]
		\State $k \leftarrow \max_{s: e\in s}\{l(s)\}$.
		\State Let $B = \{s_1,s_2,\ldots,s_b\}$ be the collection of bad sets containing $e$, where $b = |B|\leq f$.
		\State Sort the sets in $B$ and assume w.l.o.g. that $c_{s_1}-w(s_1)\leq c_{s_2}-w(s_2)\leq \ldots \leq c_{s_b}-w(s_b)$.
		\For{$i=1,2,\ldots,b$}
		\While{$s_i$ is bad}
		\State $\phi(s_i) \leftarrow 0$, \textsf{Lift-Up}$(s_i)$.
		\State $k \leftarrow \max\{k,l(s_i)\}$.
		\EndWhile
		\EndFor
		\State $l(e) \leftarrow k, w(e) \leftarrow (1+\epsilon)^{-k}$.
		\For{each $s$ containing $e$}
		\State $w(s) \leftarrow w(s) + w(e)$.
		\If{$w(s)+\phi(s) > c_s$}
		\State $\phi(s) \leftarrow \max\{0,c_s-w(s)\}$.
		\EndIf
		\EndFor
	\end{algorithmic}
\end{algorithm}

Note that it takes $O(f\log f)$ time to sort the sets in $B$ in line 3 of Algorithm~\ref{alg:insertion}.
Hence excluding the lift-up costs (which are charged to the sets), Algorithm~\ref{alg:insertion} runs in $O(f \log f)$ time.
Note that to ease the analysis we also consider line 7 to be part of the list-up cost.

Since the insertion does not create any dead weight or change the level of any existing element, the global $\phi$ invariant (Invariant~\ref{inv:global-phi}) remains valid.
The local $\phi$ invariant (Invariant~\ref{inv:local-phi}) is maintained because if $w(s)+\phi(s) > c_s$ after the insertion then we decrease $\phi(s)$ such that either $\phi(s)= 0$ or $w(s)+\phi(s) = c_s$.

Next we show that the tightness invariant (Invariant~\ref{inv:level-of-tight}) is maintained, and all sets are good before the element is inserted.
Recall that we remove the dead weight of each set that is lifted up, which might make a set slack.
We show in Lemma~\ref{lemma:before-insertion} that each set whose dead weight is removed will be tight after the insertion of the new element (which maintains the tightness invariant).

\begin{lemma}\label{lemma:before-insertion}
	Before the new element $e$ is inserted (in line 8 of Algorithm~\ref{alg:insertion}), all sets containing $e$ are good.
	Moreover, the tightness invariant is maintained after the insertion.
\end{lemma}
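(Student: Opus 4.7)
The plan is to exploit the crucial feature of Algorithm~\ref{alg:lift-up}: a lift-up of $s$ is triggered only when $E_{l(s)}(s)=\emptyset$, so since $l(e)=\max_{s'\ni e}l(s')$ for every element $e\in s$ already exceeds $l(s)$, increasing $l(s)$ leaves every $l(e)$ and hence every $w(e)$ unchanged. As a direct consequence, throughout the for loop in Algorithm~\ref{alg:insertion} the real weight $w(s)$ of every set and the sorted order $c_{s_1}-w(s_1)\le \cdots\le c_{s_b}-w(s_b)$ remain invariant, while the tentative insertion level $k$ is monotonically non-decreasing. Sets containing $e$ but outside $B$ were good at the outset and remain so: increasing $k$ only strengthens condition~(1), and lift-ups of other sets preserve condition~(2).

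For the first assertion I would induct on the loop index $i$, proving that after processing $s_i$ every set in $\{s_1,\ldots,s_i\}$ is good. The crucial inductive step is: once some $s_i$ exits its while loop via condition~(1) at tentative level $k^\star$, i.e.\ $w(s_i)+(1+\epsilon)^{-k^\star}\le c_{s_i}$, the sorting yields $c_{s_j}-w(s_j)\ge c_{s_i}-w(s_i)\ge (1+\epsilon)^{-k^\star}\ge (1+\epsilon)^{-k}$ for every $j>i$ and every future $k\ge k^\star$, so $s_j$ is already good when examined, its while loop is skipped, and $k$ cannot grow past $k^\star$. If instead $s_i$ exits via condition~(2), that condition persists throughout the remainder of the algorithm (lift-ups of other sets do not alter $E_{l(s_i)}(s_i)$), and the same argument resumes at the next $s_j$ that exits via~(1). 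Sets not in $B$ continue to be good as observed above.

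For the tightness assertion I would split into three cases. Sets not containing $e$ are entirely unaffected. A set $s\ni e$ with $s\notin B$ retains its level; after line~12, $w(s)+\phi(s)$ equals either the pre-line-12 value (strictly larger than before, hence tight by the pre-insertion invariant) or $\max(w(s),c_s)\ge c_s>c_s/(1+\epsilon)$. For $s\in B$ we have $\phi(s)=0$, $l(s)\ge 1$, and we must show $w(s)+(1+\epsilon)^{-k_{\mathrm{fin}}}>c_s/(1+\epsilon)$. Writing $k_{\mathrm{pre}},l_{\mathrm{pre}}$ and $k_{\mathrm{post}},l_{\mathrm{post}}$ for the values immediately before and after the last lift-up of $s$, badness gives $w(s)+(1+\epsilon)^{-k_{\mathrm{pre}}}>c_s$. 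I would then handle the four subcases defined by (exit via~(1) vs.\ (2)) $\times$ (jump lift-up vs.\ increment): the jump cases are immediate from $(1+\epsilon)^{-b(s)}\ge c_s$; the incremental (1)-exit reduces to a direct algebraic estimate starting from $w(s)+(1+\epsilon)^{-k_{\mathrm{post}}}>c_s-\tfrac{\epsilon}{1+\epsilon}(1+\epsilon)^{-k_{\mathrm{pre}}}$ combined with $k_{\mathrm{pre}}\ge b(s)$; and the incremental (2)-exit uses the witness lower bound $w(s)\ge (1+\epsilon)^{-l_{\mathrm{post}}}$ together with $(1+\epsilon)^{-k_{\mathrm{post}}}>0$, invoking Lemma~\ref{lemma:not-below-base} to control the range of $l_{\mathrm{post}}$.

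The main obstacle is the boundary bookkeeping in the incremental branches when $k_{\mathrm{pre}}$ lies near $b(s)$ and the quantities $(1+\epsilon)^{-b(s)}$ and $c_s$ are close: the naive inequality $c_s>(1+\epsilon)^{-k_{\mathrm{pre}}}$ degenerates to equality when $\log_{1+\epsilon}(1/c_s)$ is an integer, and in the (2)-exit with large $l_{\mathrm{post}}$ the witness bound $(1+\epsilon)^{-l_{\mathrm{post}}}$ can fall below $c_s/(1+\epsilon)$. In both scenarios one must rescue the argument by combining the strictness of badness with the presence of additional positive terms (either $(1+\epsilon)^{-k_{\mathrm{fin}}}>0$ or the strict version $(1+\epsilon)^{-b(s)}>c_s$ whenever $\log_{1+\epsilon}(1/c_s)$ is non-integral). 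Organizing these boundary subcases coherently, and verifying which combinations of (jump/increment) and (exit via~(1)/(2)) can actually co-occur given the interaction between the sorting and the lift-up rule, is the most delicate part of the proof.
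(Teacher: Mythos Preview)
Your argument for the first claim (all sets containing $e$ are good before insertion) is correct and essentially matches the paper's reasoning; in particular your structural observation that lift-ups leave every element weight, every $w(s)$, and hence the sorted order unchanged is exactly the right foundation.

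The gap is in your tightness argument for a lifted-up set $s\in B$ that exits its while loop via condition~(2), i.e.\ $E_{l_{\mathrm{post}}}(s)\neq\emptyset$. Your plan is to bound $w(s)+(1+\epsilon)^{-k_{\mathrm{fin}}}$ using only \emph{local} information about $s$: the badness inequality $w(s)+(1+\epsilon)^{-k_{\mathrm{pre}}}>c_s$ and the witness bound $w(s)\ge(1+\epsilon)^{-l_{\mathrm{post}}}$. But neither controls the gap between $k_{\mathrm{pre}}$ (or $l_{\mathrm{post}}$) and $k_{\mathrm{fin}}$: after $s$ exits via~(2), \emph{later} sets in $B$ can still be lifted up and drive $k$ much higher. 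Concretely, badness only gives $w(s)>c_s-(1+\epsilon)^{-k_{\mathrm{pre}}}$ with $k_{\mathrm{pre}}\ge b(s)$, and since $(1+\epsilon)^{-b(s)}$ can be as large as $(1+\epsilon)c_s$ this is nearly vacuous; the witness bound $(1+\epsilon)^{-l_{\mathrm{post}}}$ can likewise fall far below $c_s/(1+\epsilon)$. The ``rescue'' you sketch (strictness plus positive residual terms) cannot close this gap once $(1+\epsilon)^{-k_{\mathrm{fin}}}$ is made negligible by later lift-ups.

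The paper's proof avoids exactly this difficulty by \emph{not} analyzing each lifted-up set in isolation. It identifies the single set $s_i$ whose lift-up caused the \emph{last} increase of $k$ (to $k^*=k_{\mathrm{fin}}$). For this pivot set the pre-lift-up tentative level is forced to be $k^*-1$ (increment case) or the post-lift-up level is $b(s_i)=k^*$ (jump case), so badness directly yields $\frac{w(s_i)}{1+\epsilon}+(1+\epsilon)^{-k^*}>\frac{c_{s_i}}{1+\epsilon}$. Every other lifted-up set $s_j$ is then handled via the sorting: for $j>i$ any lift-up of $s_j$ happens with $k=k^*$ already, so badness gives $w(s_j)+(1+\epsilon)^{-k^*}>c_{s_j}$; for $j<i$ one uses $c_{s_j}-w(s_j)\le c_{s_i}-w(s_i)$ to transfer the pivot's bound, obtaining $w(s_j)+(1+\epsilon)^{-k^*}\ge\frac{c_{s_j}-c_{s_i}+w(s_i)}{1+\epsilon}+(1+\epsilon)^{-k^*}>\frac{c_{s_j}}{1+\epsilon}$. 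That use of the sorted order in the tightness argument --- not only in the goodness argument --- is the missing ingredient in your case~(2) analysis.
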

\begin{proof}
	As we have argued in the above discussion, since the tentative level $k$ does not decrease, if a set is good at some point, then it remains good afterwards.
	Hence our algorithm ensures that all sets containing $e$ are good before the element is inserted.
	
	Next we prove that the tightness invariant is maintained by showing that all sets lifted up are tight after the element insertion.
	More specifically, let $k^*$ be the final level of $e$, i.e., $w(e) = (1+\epsilon)^{-k^*}$.
	We show that if we lift set $s$ up, then $w(s)+w(e) > \frac{c_s}{1+\epsilon}$ at the end of the algorithm.
	For sets not lifted up, the tightness invariant trivially holds since if a set is at level at least $1$ then it is tight before the element is inserted, and the algorithm does not decrease its real weight or dead weight.
	
	Recall that in the while loop in which $s$ is chosen, our algorithm increases the level $l(s)$ of $s$ (which may also increase $k$) until either (1) $E_{l(s)}(s) \neq \emptyset$, or (2) $w(s) + (1+\epsilon)^{-k} \leq c_s$.
	
	If the algorithm never increases $k$, then the lemma easily follows since each set $s_i \in B$ is bad initially, which implies $w(s_i)+(1+\epsilon)^{-k^*} > c_{s_i}$.
	Otherwise we consider the last while loop in which $k$ is increased (to $k^*$) because some set $s_i$ is lifted up.
	
	If $s_i$ is lifted up from level $0$ to its base level, then we have
	\begin{equation}\label{equation:s_i}
		w(s_i) + (1+\epsilon)^{-k^*} = w(s_i) + (1+\epsilon)^{-b(s)} \geq c_{s_i} > \frac{c_{s_i}}{1+\epsilon}.
	\end{equation}
	
	Otherwise $s_i$ is lifted up by exactly one level, and we have
	\begin{equation*}
		w(s_i) + (1+\epsilon)^{-(k^*-1)} > c_{s_i},
	\end{equation*}
	which implies
	\begin{equation*}
		w(s_i) + (1+\epsilon)^{-k^*} \geq \frac{w(s_i)}{1+\epsilon} + (1+\epsilon)^{-k^*} > \frac{c_{s_i}}{1+\epsilon}.
	\end{equation*}
	
	In both case we have $w(s_i) + (1+\epsilon)^{-k^*} > \frac{c_{s_i}}{1+\epsilon}$.
	That is, $s_i$ is tight after the insertion.
	Next we show that every $s_j\neq s_i$ that is lifted up is also tight after the insertion. 
	
	Consider any $s_j$ with $j>i$.
	That is, $s_j$ is considered after $s_i$.
	Since $k=k^*$ when $s_j$ is considered in the while loop, if $s_j$ is lifted up, then $s_j$ is bad and we have $w(s_j)+(1+\epsilon)^{-k^*} > c_{s_j}$, as required.
	
	Now consider any $s_j$ with $j<i$, i.e., $s_j$ is considered before $s_i$.
	Then we have $c_{s_j} - w(s_j) \leq c_{s_i}-w(s_i)$, which implies
	\begin{equation*}
		w(s_j) + (1+\epsilon)^{-k^*} \geq \frac{w(s_j)}{1+\epsilon} + (1+\epsilon)^{-k^*}
		\geq \frac{c_{s_j} - c_{s_i}+w(s_i)}{1+\epsilon} + (1+\epsilon)^{-k^*} > \frac{c_{s_j}}{1+\epsilon},
	\end{equation*}
	where the last inequality comes from Inequality~\eqref{equation:s_i}.
\end{proof}

The bounded weight invariant (Invariant~\ref{inv:set-weight}) might be violated due to the increase of set weight, in which case we invoke the promotion subroutine as follows.

\subsubsection{Promotion}

Recall that if the bounded weight invariant does not hold for set $s$, then we have $w(s,l(s)+1)\geq c_s$ and $\phi(s) = 0$ (by the local $\phi$ invariant).
In this case we call \textsf{Promote}($s$), which increases the level of $s$, and decreases $w(e)$ for all $e\in E_{l(s)}(s)$.
In the meanwhile, the subroutine increases the dead weight $\phi(s')$ of each $s'\neq s$ containing $e$ accordingly to compensate for the decrease in $w(s')$, which maintains the tightness invariant (Invariant~\ref{inv:level-of-tight}).

If $E_{l(s)}(s) = \emptyset$ then we first lift $s$ up to a level $i$ such that $E_i(s)\neq \emptyset$. 
As before, we charge the lift-up cost to set $s$.

\begin{algorithm}[hbt]
	\caption{\textsf{Promote$(s)$}}
	\label{alg:promotion}
	\begin{algorithmic}[1]\While{$E_{l(s)}(s)=\emptyset$}
		\State \textsf{Lift-Up}$(s)$.
		\EndWhile
		\State $k \leftarrow l(s)$.
		\While{$w(s,k+1)\geq c_s$}
		\For{each $e\in E_{k}(s)$}
		\State $l(e) \leftarrow k+1$, $w(e) \leftarrow (1+\epsilon)^{-(k+1)}$.
		\State $w(s) \leftarrow w(s) - \epsilon(1+\epsilon)^{-(k+1)}$.
		\For{each $s'\neq s$ containing $e$}
		\State $w(s') \leftarrow w(s') - \epsilon(1+\epsilon)^{-(k+1)}$
		\If{$l(s')>0$}
		\State $\phi(s') \leftarrow \phi(s')+\epsilon(1+\epsilon)^{-(k+1)}$.
		\If{$w(s')+\phi(s') > c_{s'}$}
		\State $\phi(s') \leftarrow \max\{0,c_{s'}-w(s')\}$.
		\EndIf
		\EndIf
		\EndFor
		\EndFor
		\State $l(s)\leftarrow k+1$, $k\leftarrow k+1$.
		\EndWhile
	\end{algorithmic}
\end{algorithm}

If $E_{k}(s)\neq \emptyset$, the update time for promoting set $s$ from level $k$ to $k+1$ is $O(f\cdot |E_k(s)|)$.

As before, the local $\phi$ invariant (Invariant~\ref{inv:local-phi}) is maintained since if $w(s')+\phi(s') > c_{s'}$ then we decrease $\phi(s')$ such that either $\phi(s') = 0$ or $w(s')+\phi(s') = c_{s'}$ (in line 13 of Algorithm~\ref{alg:promotion}).
As the promotion does not decrease $w(x)+\phi(x)$ to below $c_x$ for $x=s$ or any set at level at least $1$ intersecting $s$, the tightness invariant (Invariant~\ref{inv:level-of-tight}) remains valid.
By repeatedly calling the promotion subroutine, we can also maintain the bounded weight invariant (Invariant~\ref{inv:set-weight}).

The global $\phi$ invariant (Invariant~\ref{inv:global-phi}) might be violated due to the increase of dead weight, in which case we invoke the rebuild subroutine as follows.

\subsubsection{Rebuild}\label{ssec:rebuild}

Our algorithm maintains $\Phi_i$, $T_i$, $E_i$ and $c(T_i)$ explicitly for all $i\in[L]$.
In addition, we maintain $\Phi = \Phi_{\leq L}$, $c(T) = c(T_{\leq L})$ and $w(\U) = \sum_{e\in \U}w(e)$.
Recall that we perform a rebuild only when the global $\phi$ invariant (Invariant~\ref{inv:global-phi}) does not hold, i.e., $\Phi > \epsilon\cdot (c(T)+f\cdot w(\U))$, which can be checked in $O(1)$ time.
When $\Phi > \epsilon\cdot (c(T)+f\cdot w(\U))$, we find the smallest $k$ for which $\Phi_{\leq k} > \epsilon\cdot (c(T_{\leq k})+f\cdot w(E_{\leq k}))$, and rebuild all levels $i\leq k$.
Trivially such a level can be found in $O(k)$ time (by scanning the levels in a bottom-up manner).
In the following, we show that such a level can be found in $O(|T_{\leq k}|)$ time (which is smaller than $O(k)$ when most of the $T_i$'s are empty).
Note that $O(|T_{\leq k}|) = O(f\cdot |E_{\leq k}|)$.

As we will show later, \textsf{Rebuild}$(k)$ takes time $O(\frac{f}{\epsilon^2}\cdot |E_{\leq k}| + \frac{1}{\epsilon}\log C)$, and we consider the time spent on finding the level $k$ to rebuild part of the rebuild cost.

\paragraph{Skipping Empty Levels.}
We keep the pointers to $T_0,T_1,\ldots,T_L$ as an array of size $L+1$.
Additionally we organize the pointers that point to non-empty sets in a doubly linked list.
Note that if there exists an element at level $i$ then there must also exists a set at level $i$.
Hence for all $i\geq 1$ if $E_i$ is non-empty then $T_i$ is also non-empty.
In other words, we only need to look at the levels $i$ for which $T_i$ are non-empty (following the linked list), and compute $\Phi_{\leq k}$, $c(T_{\leq k})$ and $w(E_{\leq k}) = \sum_{i=1}^k |E_i|\cdot (1+\epsilon)^{-i}$ in $O(|T_{\leq k}|)$ time.

\paragraph{Maintaining the Linked List.}
Note that deletions can change a set from tight to slack only if the set is at level $0$.
Hence it suffices to check whether $T_0$ is empty after each deletion, and update the linked list in $O(1)$ time.
After \textsf{Rebuild}$(k)$, we can reconstruct $T_i$ for all $i\in[k]$, and update the linked list in $O(f\cdot |E_{\leq k}|)$ time (which becomes part of the rebuild cost).
Similar to deletions, the insertion of an element can only change $T_0$, and the linked list can be updated in $O(1)$ time.
However, the insertion may trigger lift-ups or promotions of sets, which may change the collections of tight sets and the linked list.

The case when a set changes its level by one is easy.
Suppose a set $s$ is promoted or lifted up from level $k$ to $k+1$.
Then only $s$ changes its level, and only tight sets in level $0$ can change to slack.
Moreover, the collections of tight sets at each level $i\geq 1$ remain unchanged.
Thus it suffices to check $T_0$, $T_k$ and $T_{k+1}$, and update the linked list in $O(1)$ time.

Now suppose a set $s$ at level $0$ is lifted up to its base level $b(s)$, which happens when a new element $e$ contained in $s$ is inserted, and $w(s) + w(e) > 1$.
If $s$ was tight at level $0$ then we first check whether $T_0$ is empty after the lift-up, and update the linked list.
If $T_{b(s)}$ was non-empty before $s$ is lifted up then we can simply include $s$ in $T_{b(s)}$ without modifying the linked list further.
Otherwise we find the largest $i \leq b(s)$ for which $T_i$ is non-empty (which takes $O(b(s))$ time), and insert the pointer to $T_{b(s)}$ to the linked list in $O(1)$ time.
Recall that $O(b(s)) = O(\frac{1}{\epsilon}\log C)$, and we charge this update cost to the insertion.
Since each insertion causes $O(f)$ sets at level $0$ being lifted up, each insertion is charged at most $O(\frac{f}{\epsilon}\log C)$ due to the maintenance of the linked list. 

\paragraph{Properties after Rebuild.}
Let $S$ be the sets at level at most $k$ before \textsf{Rebuild}($k$) is triggered.
We show that after \textsf{Rebuild}($k$), we have the following properties, where $l(s)$ denotes the level of set $s$ after the rebuild.
\begin{itemize}
	\item For all $s\in  S$, $l(s)\leq k$ and $\phi(s) = 0$.
	
	\item For all $s\in  S$ and $l(s) = k$, $w(s) > \frac{c_s}{1+\epsilon}$ and $w(s,k+1)<c_s$.
	
	\item For all $s\in  S$ and $l(s) < k$, $w(s) < c_s$. Moreover, if $l(s)>0$ then $w(s) > \frac{c_s}{1+\epsilon}$.
\end{itemize}

Consequently, the bounded weight invariant, tightness invariant and local $\phi$ invariant are maintained.
Moreover, since the subroutine removes all dead weight at level at most $k$, by repeatedly calling the subroutine (at higher and higher levels), the global $\phi$ invariant can also be maintained. 

\medskip

Next, we describe the details of the rebuild subroutine, which is very similar to the rebuild subroutine of~\cite{focs/BhattacharyaHN19}.
On a high level, the subroutine has two steps.
In the first step, we put all elements and sets at level $k$, and remove all dead weight, which decrease the total weight of some sets.
Then in the second step we move the slack sets to lower levels (which increases the weight of elements and sets) until they become tight, or reach level $0$.
The second step is done by calling the subroutine \textsf{Fix-Level}, which we borrow from~\cite{focs/BhattacharyaHN19}.

\begin{lemma}\cite[Section B.1]{focs/BhattacharyaHN19}
	There exists a subroutine \textsf{Fix-Level} with the following properties.
	\textsf{Fix-Level}$(k,{S'},{E'})$ takes as input a collection of sets ${S'}$ and a collection of elements ${E'}$, both are at level $k$.
	Moreover, each set $s\in {S'}$ has $w(s) < c_s$ and $\phi(s) = 0$.
	The subroutine places each set $s\in {S'}$ at level $l(s)$ such that (1) $w(s) < c_s$ and (2) if $l(s)>0$ then $w(s) > \frac{c_s}{1+\epsilon}$.
	The subroutine runs in time $O(f\cdot |{E'}| + k)$.
\end{lemma}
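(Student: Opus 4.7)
The plan is to implement \textsf{Fix-Level}$(k, S', E')$ by running a restricted version of the static primal-dual algorithm from Section~\ref{sec:preliminaries}, starting at the given level $k$ (instead of the global top level $L$) and operating only on the input sets $S'$ and input elements $E'$. The key point is that sets in $S'$ can also contain elements outside $E'$ (those at levels $>k$); the contribution of such elements to $w(s)$ is a fixed \emph{extra-weight} $\delta_s$ that does not change during the subroutine, so we can treat the problem as running the primal-dual on $(S', E')$ with a per-set additive offset $\delta_s$ against the threshold $c_s$.

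First, I would initialize a ``current level'' $i = k$. All sets in $S'$ start at level $i$, all elements in $E'$ start at level $i$ with weight $(1+\epsilon)^{-i}$. For each active set $s$ I maintain a counter $n_s$ equal to the number of active elements of $E'$ currently in $s$ (i.e., elements whose max-level containing set in $S'$ is still unfixed), together with $\delta_s$, so that $w(s) = \delta_s + n_s \cdot (1+\epsilon)^{-i}$ can be read off in $O(1)$. Next I check whether any set is already tight at level $k$; such sets are fixed at level $k$, their active elements are removed from the active pool (their $l(\cdot)$ stays at $k$), and for every other set $s'$ sharing an element $e$ with a newly fixed set $s$, I decrement $n_{s'}$ and update $\delta_{s'}$ by $+w(e)$.

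Then I enter the main loop, decrementing $i$ from $k-1$ down to $0$. At each step, the weights of all still-active elements (which all equal $(1+\epsilon)^{-i}$ after the decrement) increase by a factor of $(1+\epsilon)$, so each active set's weight rises to $\delta_s + n_s \cdot (1+\epsilon)^{-i}$. A set becomes tight when this quantity crosses $c_s/(1+\epsilon)$. To avoid scanning all active sets every level, I pre-compute, for each active set, the largest level $i^*(s) \leq k$ at which $\delta_s + n_s \cdot (1+\epsilon)^{-i^*(s)} > c_s/(1+\epsilon)$ (a closed-form expression), and bucket sets by this value; I also maintain a linked list of non-empty buckets so that levels with no work can be skipped in amortized constant time. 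When a set is fixed at level $i$, I set $l(s) = i$, set $l(e) = i$ for its active elements, and for each set $s'$ sharing such an element I decrement $n_{s'}$, update $\delta_{s'}$, and recompute/rebucket $i^*(s')$. If $i$ reaches $0$ with some sets still active, they are fixed at level $0$.

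Correctness of the two output properties---$w(s) < c_s$ always and $w(s) > c_s/(1+\epsilon)$ whenever $l(s) > 0$---follows because (a) sets are only lowered from a state $w(s) \leq c_s/(1+\epsilon)$, and a one-level drop multiplies the active contribution by at most $(1+\epsilon)$, so $w(s) \leq c_s$; and (b) we only stop a set at level $i > 0$ precisely when it first crosses $c_s/(1+\epsilon)$. For the running time, every element of $E'$ is removed from the active pool at most once, costing $O(f)$ for the $f$ sets that contain it (for updating $n_{s'}$, $\delta_{s'}$, and the bucket of $i^*(s')$), giving $O(f\cdot|E'|)$ total. The remaining overhead is advancing $i$ through levels that contain no bucket activity, which using the linked list of non-empty buckets costs $O(k)$ across the entire run. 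The main obstacle is carefully orchestrating the bucketing data structure so that bucket changes happen in $O(1)$ per update; this mirrors the construction of the analogous subroutine in~\cite{focs/BhattacharyaHN19}, to which we defer for the low-level details.
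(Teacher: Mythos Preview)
The paper does not prove this lemma at all; it simply quotes it from~\cite[Section~B.1]{focs/BhattacharyaHN19}. Your proposal is therefore not competing against a proof in the paper but against the cited external construction, and your sketch is essentially that construction: freeze the contribution from elements above level $k$ as a per-set offset $\delta_s$, run the discrete primal-dual downward on the remaining active elements, and use target-level bucketing so that each element is touched $O(f)$ times when its level settles, with an additive $O(k)$ for the bucket array. This is also the same idea that reappears later in this paper as \textsf{EfficientRebuild} in Section~\ref{sec:improved-algorithm} (Algorithm~\ref{alg:efficient_rebuild}), so your description aligns with both the cited source and the paper's own later exposition.

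One small point worth tightening: your correctness argument for property~(1) concludes $w(s)\le c_s$ rather than the strict $w(s)<c_s$ stated in the lemma. The strictness does hold, because when a set is moved down one level only the active-element contribution is scaled by $(1+\epsilon)$ while the offset $\delta_s$ is not; hence the new weight is strictly less than $(1+\epsilon)$ times the old weight whenever $\delta_s>0$, and when $\delta_s=0$ the old weight is at most $\frac{c_s}{1+\epsilon}$ with strict inequality unless the set would already have been declared tight at the previous level. Either way one gets $w(s)<c_s$. This is a cosmetic fix, not a gap in the approach.
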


We call an element \emph{slack} if all sets containing it are slack.

\begin{algorithm}[hbt]
	\caption{\textsf{Rebuild$(k)$}}
	\label{alg:rebuild}
	\begin{algorithmic}[1]
		\State let $\hat{S} = \{s: l(s)\leq k\}$ be the sets at levels at most $k$.
		\Comment{$|\hat{S}|=O(f\cdot |E_{\leq k}|)$}
		\For{each $s\in \hat{S}$}
		\State $\phi(s) \leftarrow 0$, $l(s) \leftarrow k$.
		\EndFor
		\For{each $e\in E_{\leq k}$}
		\State $l(e) \leftarrow k$, $w(e) \leftarrow (1+\epsilon)^{-k}$ (and decrease $w(s)$ by the same amount for all $s$ containing $e$).
		\EndFor
		\State let $S' \leftarrow \{ s\in \hat{S}: w(s) \leq \frac{c_s}{1+\epsilon} \}$ be the collection of slack sets.
		\State let $E'\subseteq E_{\leq k}$ be slack elements.
		\State let $k' = \min\{ k, \log_{1+\epsilon}\frac{2C\cdot |E'|}{\epsilon} \}$.
		\For{each $e\in S'$}
		\State $l(s) \leftarrow k'$.
		\EndFor
		\For{each $e\in E'$}
		\State $l(e) \leftarrow k'$, $w(e) \leftarrow (1+\epsilon)^{-k'}$ (and decrease $w(s)$ by the same amount for all $s$ containing $e$).
		\EndFor
		\State \textsf{Fix-Level}$(k',S',E')$. \Comment{$O(f\cdot |E'| + k')$ time}
	\end{algorithmic}
\end{algorithm}

A key difference between our algorithm and that of~\cite{focs/BhattacharyaHN19} is, before calling the subroutine \textsf{Fix-Level} on the collection of slack sets $S'$ and the slack elements $E'$, we first move all elements in $E'$ and sets in $S'$ to level $k' = \min\{ k, \log_{1+\epsilon}\frac{2C\cdot |E'|}{\epsilon} \}$.
As we will show in Lemma~\ref{lemma:before-fix-level}, the operation does not break any invariants, but helps improve the update time for \textsf{Fix-Level}.

Note that since each set $s\in S'$ has no dead weight, its total weight equals to its real weight.

\begin{lemma}\label{lemma:before-fix-level}
	Before calling \textsf{Fix-Level}$(k',S',E')$ in line 13, each $s\in S'$ has weight less than $c_s$.
\end{lemma}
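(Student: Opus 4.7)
The plan is to split into two cases depending on whether $k'=k$ or $k'<k$, and in the nontrivial case to carefully account for how much the weight of a slack set $s\in S'$ can grow when the slack sets and slack elements are all moved from level $k$ down to level $k'$.

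First I would observe that at the moment we compute $S'$ in line~7 of Algorithm~\ref{alg:rebuild}, every set $s\in\hat S$ sits at level $k$ with $\phi(s)=0$, and $s\in S'$ means $w(s)\leq c_s/(1+\epsilon)$ at that time. If $k'=k$, then lines~10--12 are no-ops for all levels and weights, so the inequality $w(s)<c_s$ is immediate from $w(s)\leq c_s/(1+\epsilon)<c_s$. For the remainder of the proof I assume $k'<k$, so that $k'=\log_{1+\epsilon}(2C|E'|/\epsilon)$ and hence $(1+\epsilon)^{-k'}\leq \epsilon/(2C|E'|)$.

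Next I would carefully identify which element weights actually change during lines~10--12. Partition the elements of $s$ into three groups: (i) $E'_s=E'\cap s$, the slack elements in $s$; (ii) elements of $s$ at level exactly $k$ that are not slack, i.e.\ contained in some \emph{tight} set which (being tight and at level $\leq k$) remains at level $k$; and (iii) elements of $s$ at level strictly greater than $k$, which are untouched since the rebuild only moves things at levels $\leq k$. The elements in groups (ii) and (iii) keep their levels (and hence weights) unchanged, because in each case some other set containing them stays at a level at least as high as $k$. Only elements in $E'_s$ actually move, from level $k$ down to $k'$. This gives the clean identity
\begin{equation*}
w(s,k') \;=\; w(s) \;+\; |E'_s|\cdot\bigl((1+\epsilon)^{-k'}-(1+\epsilon)^{-k}\bigr).
\end{equation*}

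Finally I would plug in the bound on $(1+\epsilon)^{-k'}$. Since $|E'_s|\leq |E'|$ and $c_s\geq 1/C$,
\begin{equation*}
|E'_s|\cdot(1+\epsilon)^{-k'} \;\leq\; |E'|\cdot \frac{\epsilon}{2C|E'|} \;=\; \frac{\epsilon}{2C} \;\leq\; \frac{\epsilon\,c_s}{2}.
\end{equation*}
Combined with $w(s)\leq c_s/(1+\epsilon)$, this yields $w(s,k')\leq c_s\bigl(\tfrac{1}{1+\epsilon}+\tfrac{\epsilon}{2}\bigr)$, and an elementary check shows $\tfrac{1}{1+\epsilon}+\tfrac{\epsilon}{2}<1$ whenever $\epsilon<1$, which is well within the assumption $\epsilon<0.1$. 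Hence $w(s)<c_s$ at the moment \textsf{Fix-Level} is called.

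The only mildly delicate step is the partitioning argument: one must be sure that an element $e\in s$ that is not slack really does keep its level at $k$ after lines~10--12, which relies on the fact that a non-slack element has some tight containing set, and tight sets are not moved by the rebuild (they are not in $S'$). Everything else is a one-line calculation driven by the choice $k'=\log_{1+\epsilon}(2C|E'|/\epsilon)$, which is precisely engineered so that the total added weight $|E'|\cdot(1+\epsilon)^{-k'}$ is an $O(\epsilon/C)$ quantity and therefore absorbs into the slack $c_s-c_s/(1+\epsilon)=\Theta(\epsilon\,c_s)$ that every slack set has in hand.
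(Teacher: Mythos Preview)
Your proof is correct and follows essentially the same approach as the paper's: split on whether $k'=k$, and in the nontrivial case bound the weight gain of a slack set by $|E'|\cdot(1+\epsilon)^{-k'}\leq \epsilon/(2C)\leq \epsilon c_s/2$, then combine with $w(s)\leq c_s/(1+\epsilon)$ to get $w(s)\leq c_s(\tfrac{1}{1+\epsilon}+\tfrac{\epsilon}{2})<c_s$. The paper's proof is more terse---it simply asserts that ``at most $|E'|$ elements contained in $s$ are moved to level $k'$''---whereas you supply the explicit three-way partition of $s$'s elements and justify why non-slack elements stay put; this is a welcome elaboration rather than a different argument. One minor notational quibble: your use of $w(s,k')$ does not quite match the paper's definition (which moves only $s$, not all of $S'$), but your intended meaning is clear and the computation is correct.
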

\begin{proof}
	In line 6 of Algorithm~\ref{alg:rebuild}, we identify the slack sets, i.e., sets $s$ with $w(s)\leq \frac{c_s}{1+\epsilon}$.
	We then move all slack sets and elements containing only slack sets to level $k'$.
	If $k'= k$ then no movement is made and the lemma trivially holds.
	
	Suppose $k' = \log_{1+\epsilon}\frac{2C\cdot |E'|}{\epsilon} < k$.
	Note that each $e\in E'$ has weight $(1+\epsilon)^{-k'} = \frac{\epsilon}{2C\cdot |E'|}$ on level $k'$.
	Hence the movement increases the weight of each $s\in  S'$ by at most $\frac{\epsilon}{2C}\leq \frac{\epsilon\cdot c_s}{2}$, as there are at most $|E'|$ elements contained in $s$ that are moved to level $k'$.
	Since $w(s) \leq \frac{c_s}{1+\epsilon}$ before the movement, set $s$ has weight $\leq c_s \big( \frac{1}{1+\epsilon}+\frac{\epsilon}{2} \big) < c_s$ after the movement of elements, as claimed.
\end{proof}

By definition $k' \leq \log_{1+\epsilon}\frac{2C|E'|}{\epsilon} = O(\frac{|E'|}{\epsilon^2}+\frac{1}{\epsilon}\log C)$, and thus \textsf{Fix-Level}$(k',S',E')$ runs in time $O(f\cdot |E'| + k') = O(\frac{f}{\epsilon^2}\cdot |E'|+\frac{1}{\epsilon}\log C)$.
In summary, \textsf{Rebuild}$(k)$ runs in $O(\frac{f}{\epsilon^2}\cdot |E_{\leq k}|+\frac{1}{\epsilon}\log C)$ time, and cleans up all dead weight at levels at most $k$.
Moreover, the weight invariant and the tightness invariant are maintained.
By repeatedly invoking the subroutine, we also maintain the global $\phi$ invariant.

\subsection{Potential Analysis and Amortized Update Time} \label{ssec:potential-analysis}

We use a potential analysis to bound the total update time.
On a high level, we have two types of potentials.
Each set $s$ has potential if $w(s) > c_s$, which we call \emph{up potential}.
Each unit of dead weight also has some potential, which we call \emph{down potential}.
We show that
\begin{itemize}
	\item one unit of down potential will be used to pay for $O(1)$ update time (in Section~\ref{ssec:potential-rebuild});
	\item one unit of up potential will be transformed into at most one unit of down potential and in the meantime pay for $O(1)$ update time (in Section~\ref{ssec:potential-promote}).
\end{itemize}

Specifically, the up potential is released to pay for the update cost of promotions of sets and the increase in down potential due to the promotions.
The down potential is released to pay for update time of the rebuilds of levels and the lift-up costs.
In other words, the up and down potential are consumed during promotions and rebuilds.
Hence it suffices to upper bound the amount of potential we gain from element insertions and deletions.
We show in Section~\ref{ssec:potential-insert-delete} that
\begin{itemize}
	\item each insertion does not increase the total down potential, and increases the total up potential by $O(\frac{f^2}{\epsilon^3}+\frac{f}{\epsilon^2}\log C)$;
	\item each deletion does not increase the total up potential, and increases the total down potential by $O(\frac{f}{\epsilon^3}+\frac{1}{\epsilon^2}\log C)$.
\end{itemize}

Since each unit of potential is used to pay for $O(1)$ units of update time, the amortized update time is bounded by $O(\frac{f^2}{\epsilon^3}+\frac{f}{\epsilon^2}\log C)$.

\subsubsection{Potential Function}

We first define the up potential, i.e., potentials of sets $s$ with weight $w(s) > c_s$.
Recall that the local $\phi$ invariant guarantees $\phi(s) = 0$ for each set $s$ with $w(s) > c_s$.

If set $s$ has weight $w(s) \leq c_s$ then it has $0$ potential; otherwise its potential is given as follows.
Recall that $w(s,i)$ is the weight of $s$ if we set $l(s) = i$.
Hence $w(s,i)$ is a non-increasing function of $i$.
Suppose $w(s) = w(s,l(s)) > c_s$.
Let $\hat{l}(s)$ be the highest level such that $w(s,\hat{l}(s))\geq c_s$.
Let
\begin{equation*}
	\alpha_i = \left( \frac{f}{\epsilon^3}+\frac{1}{\epsilon^2}\log C\right) (1+\epsilon)^{i+1}
\end{equation*}
be the \emph{conversion ratio} from set weight to up potential at level $i$.
We define the potential of $s$ as
\begin{equation*}
	\sum_{i = l(s)}^{\hat{l}(s)-1} \big(w(s,i)-w(s,i+1)\big)\cdot \alpha_i + \big( w(s,\hat{l}(s))-c_s \big)\cdot \alpha_{\hat{l}(s)}.
\end{equation*}

As it will be clear from our later analysis, the up potential of set $s$ is defined such that the potential released by \textsf{Promote}$(s)$ is at least the update time and the increase in total down potential due to the promotion.

The down potential is defined as follows.
Let
\begin{equation*}
	\beta_i = \frac{1}{f}\cdot \alpha_i = \left( \frac{1}{\epsilon^3}+\frac{1}{\epsilon^2 f}\log C\right) (1+\epsilon)^{i+1}
\end{equation*}
be the \emph{conversion ratio} from dead weight to down potential at level $i$.
The dead weight $\phi(s)$ of set $s$ has potential $\phi(s)\cdot \beta_{l(s)}$.
In other words, each unit of dead weight at level $i$ has potential $\beta_i$.

\subsubsection{Rebuilds}\label{ssec:potential-rebuild}

Recall that \textsf{Rebuild}$(k)$ takes time $O(\frac{f}{\epsilon^2}\cdot |E_{\leq k}|+\frac{1}{\epsilon}\log C)$, and cleans up the dead weights at level at most $k$.
Also recall that a set $s$ is charged one unit of lift-up cost each time \textsf{Lift-Up}$(s)$ is called, which increases $l(s)$ by at least one.
Moreover, if $l(s)<b(s)$, then \textsf{Lift-Up}$(s)$ increases $l(s)$ to $b(s)$ immediately.
Hence from the last time set $s$ is involved in some rebuild, at most $\max\{l(s)-b(s)+1,0\}$ units of lift-up costs are charged to $s$.

We show that (1) \textsf{Rebuild}$(k)$ does not increase the up potential of any set; (2) the down potential released by dead weight $\Phi_{\leq k}$ is sufficient to pay for the rebuild cost and the lift-up cost charged to sets in $T_{\leq k}$.

\begin{lemma}\label{lemma:potential-from-rebuild}
	\textsf{Rebuild}$(k)$ does not increase the up potential of any set.
\end{lemma}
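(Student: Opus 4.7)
The plan is to case-split each set $s \in \S$ according to whether $s \in \hat{S}$ (i.e., $l(s)\leq k$ at the moment \textsf{Rebuild}$(k)$ is triggered) and, for $s\in \hat{S}$, whether $s$ is slack ($s\in S'$) or tight ($s\in \hat{S}\setminus S'$) when line~6 of Algorithm~\ref{alg:rebuild} is executed. Let $l_0(\cdot)$, $w_0(\cdot,\cdot)$, $\hat{l}_0(\cdot)$ denote these quantities just before the rebuild, and $l(\cdot)$, $w(\cdot,\cdot)$, $\hat{l}(\cdot)$ their values just after. I will rely throughout on the bounded weight invariant (Invariant~\ref{inv:set-weight}), which holds both before and after the rebuild, and on the Rebuild post-conditions listed at the start of Section~\ref{ssec:rebuild}.

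For $s\notin \hat{S}$ I would first observe that $l(s)=l_0(s)>k$, and any $e\in s$ satisfies $l_0(e)\geq l_0(s)>k$, so $e\notin E_{\leq k}$ and is never touched in lines~4--5 or~11--12. Moreover every other set $s''\neq s$ containing $e$ either lies in $\hat{S}$ (and hence has $l(s''),l_0(s'')\leq k<i$ for every $i\geq l(s)$) or lies outside $\hat{S}$ (in which case $l(s'')=l_0(s'')$). Plugging into $w(s,i)=\sum_{e\in s}(1+\epsilon)^{-\max\{i,\,\max_{s''\neq s:\, e\in s''} l(s'')\}}$ shows $w(s,i)=w_0(s,i)$ for every $i\geq l(s)$, hence $\hat{l}(s)=\hat{l}_0(s)$, and the up potential of $s$ is unchanged.

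Next, for $s\in S'$, the \textsf{Fix-Level} guarantee yields $w(s)<c_s$ after the rebuild, so the new potential of $s$ is $0$ and we are done. For $s\in \hat{S}\setminus S'$, \textsf{Fix-Level} does not touch $s$, so $l(s)=k$. Any $e\in s\cap E_{\leq k}$ is non-slack, since the non-slack set $s$ contains it, so $e\notin E'$ and retains the level $k$ assigned in line~5; any $e\in s$ with $l_0(e)>k$ is left untouched throughout the rebuild by the same argument as in the previous paragraph. A direct computation then gives $w(s)=w_0(s,k)$ after the rebuild.

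The final step resolves the tight subcase. If $w_0(s,k)\leq c_s$ the new potential is $0$ and we are done. Otherwise, monotonicity of $w_0(s,\cdot)$ combined with the bounded weight invariant $w_0(s,l_0(s)+1)<c_s$ rules out $l_0(s)<k$ (which would force $w_0(s,k)\leq w_0(s,l_0(s)+1)<c_s$, a contradiction); therefore $l_0(s)=k$. Then $w_0(s)=w_0(s,l_0(s))=w_0(s,k)=w(s)$ and $l_0(s)=l(s)=k$, so both the old and the new potentials equal $(w(s)-c_s)\alpha_k$. I expect the main subtlety to lie in this last step: one must recognize that the bounded weight invariant is precisely what forbids the otherwise dangerous scenario $w_0(s,k)>c_s$ with $l_0(s)<k$, which would have let the potential of a tight set in $\hat{S}\setminus S'$ grow rather than stay flat.
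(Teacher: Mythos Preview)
Your proof is correct and follows essentially the same strategy as the paper's: show that every set either ends up with weight at most $c_s$ (hence zero up potential) or has its situation unchanged, with the bounded weight invariant ruling out the one dangerous case. The only difference is cosmetic: the paper case-splits on the level of $s$ \emph{after} the rebuild (level $>k$, level $\leq k-1$, level $k$ with old level $<k$, level $k$ with old level $k$), whereas you case-split along the algorithm's branches ($s\notin\hat{S}$, $s\in S'$, $s\in\hat{S}\setminus S'$). One small point to make explicit in your final case: the formula $(w(s)-c_s)\alpha_k$ for the potential presumes $\hat{l}(s)=\hat{l}_0(s)=k$, which you should state follows from $w_0(s,k+1)<c_s$ (bounded weight invariant before the rebuild, since $l_0(s)=k$) and $w(s,k+1)<c_s$ (the Rebuild post-condition for sets ending at level $k$).
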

\begin{proof}
	Observe that \textsf{Rebuild}$(k)$ does not change the up potential of sets at levels higher than $k$.
	Moreover, after the rebuild, each set $s$ at level at most $k-1$ has weight at most $c_s$, and has $0$ up potential.
	Consider any set $s$ that is at level $k$ after the rebuild:
	\begin{itemize}
		\item if set $s$ was at level $i \leq k-1$ before the rebuild, then we show that we have $w(s)< c_s$ after the rebuild.
		Recall that in the first step of rebuild (line 2 of Algorithm~\ref{alg:rebuild}), we put all elements and sets at level $k$.
		By the bounded weight invariant, we have $w(s,k)\leq w(s,i+1) < c_s$.
		Since $s$ stays at level $k$ afterwards, its weight does not increase in the later steps of Rebuild($k$).
		Hence after the rebuild we have $w(s) < c_s$, which implies that $s$ has $0$ potential.
		
		\item if set $s$ was at level $k$ before the rebuild, then the first step of rebuild does not change $w(s)$ or the level of any element contained in $s$, since they were at level $k$ or higher.
		Hence $s$ have the same potential after the rebuild.
	\end{itemize}
	
	In summary, \textsf{Rebuild}$(k)$ does not increase the up potential of any set.
\end{proof}

\begin{lemma}\label{lemma:potential-released-dead}
	The down potential of $\Phi_{\leq k}$ is at least
	\begin{equation*}
		\frac{f}{\epsilon^2}\cdot |E_{\leq k}|+\frac{1}{\epsilon}\log C+\sum_{s\in T_{\leq k}}\max\{l(s)-b(s)+1,0\}.
	\end{equation*}
\end{lemma}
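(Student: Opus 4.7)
The plan is to combine the minimality of $k$ with the definition of $\beta_i$, using Claim~\ref{main:cl:imp} to weight by $(1+\epsilon)^i$, and then route the two components of the resulting inequality (the $c(T_i)$ and $f \cdot w(E_i)$ parts) to the three terms on the right-hand side of the target bound.

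First, by the choice made in Algorithm~\ref{alg:complete} to call \textsf{Rebuild}$(k)$, the level $k$ is the smallest index with $\Phi_{\leq k} > \epsilon\bigl(c(T_{\leq k}) + f \cdot w(E_{\leq k})\bigr)$. I will apply Claim~\ref{main:cl:imp} with $\alpha_i = \Phi_i$, the ``upper bound'' sequence $\epsilon(c(T_i) + f \cdot w(E_i))$, and weights $\lambda_i = i$. This yields
\[
\sum_{i=0}^{k}(1+\epsilon)^{i}\Phi_i \;\geq\; \epsilon\sum_{i=0}^{k}(1+\epsilon)^{i}\bigl(c(T_i) + f \cdot w(E_i)\bigr).
\]
Since $\beta_i = (1/\epsilon^3 + \log C/(\epsilon^2 f))(1+\epsilon)^{i+1}$, the down potential of $\Phi_{\leq k}$ equals $(1+\epsilon)(1/\epsilon^3 + \log C/(\epsilon^2 f))\sum_{i\leq k}(1+\epsilon)^i\Phi_i$. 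Combining with the display above,
\[
\text{downpot}(\Phi_{\leq k}) \;\geq\; (1+\epsilon)\!\left(\frac{1}{\epsilon^2} + \frac{\log C}{\epsilon f}\right)\!\sum_{i=0}^{k}(1+\epsilon)^{i}\bigl(c(T_i) + f \cdot w(E_i)\bigr).
\]

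Next I split the right-hand side into an ``element'' part and a ``set'' part. For the element part, each $e \in E_i$ has weight $(1+\epsilon)^{-i}$, so $\sum_i (1+\epsilon)^i \cdot f \cdot w(E_i) = f \cdot |E_{\leq k}|$. Expanding the coefficient then gives a contribution of at least $(f/\epsilon^2)|E_{\leq k}| + (\log C/\epsilon)|E_{\leq k}|$, which (when $|E_{\leq k}| \geq 1$) already supplies both the $\frac{f}{\epsilon^2}|E_{\leq k}|$ term and the $\frac{1}{\epsilon}\log C$ term of the lemma. For the set part, I use $c_s > (1+\epsilon)^{-b(s)-1}$ (from the definition of $b(s)$), so $(1+\epsilon)^{l(s)+1} c_s > (1+\epsilon)^{l(s)-b(s)}$ for each $s \in T_{\leq k}$. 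When $l(s) \geq b(s)$, Bernoulli's inequality gives $(1+\epsilon)^{l(s)-b(s)} \geq 1 + (l(s)-b(s))\epsilon$, and therefore
\[
\frac{1}{\epsilon^2}(1+\epsilon)^{l(s)+1} c_s \;\geq\; \frac{1}{\epsilon^2} + \frac{l(s)-b(s)}{\epsilon} \;\geq\; l(s) - b(s) + 1,
\]
for $\epsilon \leq 1$. Summing over $s \in T_{\leq k}$ and using that $\max\{l(s)-b(s)+1, 0\} = 0$ whenever $l(s) < b(s)$, this yields the $\sum_{s \in T_{\leq k}} \max\{l(s)-b(s)+1, 0\}$ term from the $1/\epsilon^2$ portion of the coefficient.

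Adding the element and set contributions gives the claimed bound. The main obstacle is the edge case $|E_{\leq k}| = 0$: then the $\log C/\epsilon$ term cannot be routed through the element part, and one must instead extract it from the set part. Here I will observe that the rebuild trigger $\Phi_{\leq k} > \epsilon \cdot c(T_{\leq k})$ in this case forces $T_{\leq k} \neq \emptyset$, and that the $\log C/(\epsilon f)$ coefficient of $\beta_i$ combined with the same Bernoulli-style estimate applied to $\sum_{s \in T_{\leq k}}(1+\epsilon)^{l(s)+1} c_s$ supplies the missing $\log C/\epsilon$ term.
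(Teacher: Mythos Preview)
Your proof is correct and follows essentially the same route as the paper: the paper's minimization argument (``the down potential of $\Phi_{\leq k}$ is minimized when $\Phi_i = \epsilon(c(T_i)+f\,w(E_i))$ for $i<k$'') is precisely what Claim~\ref{main:cl:imp} formalizes, and the subsequent splitting of $\beta_i$ together with the Bernoulli estimate $(1+\epsilon)^x \geq \epsilon(x+1)$ is identical. One cosmetic difference is that the paper extracts the $\frac{1}{\epsilon}\log C$ term by singling out a single element at some nonempty level $t$, whereas you pull it from the full element sum; both arguments need $|E_{\leq k}|\geq 1$, and your proposed patch for $|E_{\leq k}|=0$ does not quite close (the $\log C/(\epsilon f)$ coefficient applied to one tight set yields only $\log C/(\epsilon f)$, not $\log C/\epsilon$), but the paper's own proof simply assumes such a level $t$ exists and so shares this edge-case gap.
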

\begin{proof}
	Recall that the conversion ratio of dead weight to down potential is smaller at lower levels.
	Hence given $\Phi_{\leq k} > \epsilon\cdot (c(T_{\leq k})+f\cdot w(E_{\leq k}))$ and the fact that $k$ is the smallest level for which the inequality holds, the down potential of $\Phi_{\leq k}$ is minimized when $\Phi_i = \epsilon\cdot (c(T_i)+f\cdot w(E_i))$ for all $i\in[k-1]$.
	Besides, we have $\Phi_k > \epsilon\cdot (c(T_k)+f\cdot w(E_k))$.
	
	Let $t\in[k]$ be an arbitrary level for which $E_t \neq \emptyset$.
	The down potential of $\Phi_{\leq k}$ is at least (recall that $\beta_i = \left(\frac{1}{\epsilon^3} + \frac{1}{\epsilon^2 f}\log C\right) (1+\epsilon)^{i+1}$)
	\begin{align*}
		&\epsilon\cdot \sum_{i=0}^k \sum_{s\in T_i} c_s\cdot \beta_i
		+ \epsilon f\cdot \sum_{i=0}^k\sum_{e\in E_{i}} w(e)\cdot \beta_i \\
		\geq &\epsilon\cdot \sum_{i=0}^k \sum_{s\in T_i} c_s\cdot \frac{(1+\epsilon)^{i+1}}{\epsilon^3} 
		+ \epsilon f\cdot \sum_{i=0}^k\sum_{e\in E_{i}} w(e)\cdot \frac{(1+\epsilon)^{i+1}}{\epsilon^3}
		+ \epsilon f\cdot (1+\epsilon)^{-t}\cdot \frac{(1+\epsilon)^{t+1}}{\epsilon^2 f}\log C\\
		\geq & \sum_{i=0}^k \sum_{s\in T_i} \frac{(1+\epsilon)^{i-b(s)}}{\epsilon^2} 
		+ f\cdot \sum_{i=0}^k\sum_{e\in E_{i}} (1+\epsilon)^{-i}\cdot \frac{(1+\epsilon)^{i+1}}{\epsilon^2}
		+ \frac{1+\epsilon}{\epsilon}\log C \\
		\geq & \sum_{i=0}^k \sum_{s\in T_i} \cdot \frac{\epsilon\cdot \max\{i-b(s)+1,0\} }{\epsilon^2}
		+ \frac{f}{\epsilon^2} |E_{\leq k}| + \frac{1}{\epsilon}\log C \\
		\geq & \sum_{s\in T_{\leq k}} \max\{l(s)-b(s)+1,0\} + \frac{f}{\epsilon^2} |E_{\leq k}|+ \frac{1}{\epsilon}\log C,
	\end{align*}
	where the third inequality comes from $(1+\epsilon)^{x}\geq 1+\epsilon\cdot x\geq \epsilon\cdot(x+1)$ for all $x\geq 0$.
\end{proof}

In summary, \textsf{Rebuild}$(k)$ does not increase the up potential, and releases an amount of down potential that is sufficient to pay for the rebuild cost and the lift-up cost charged to sets in $T_{\leq k}$.

\subsubsection{Promotions}\label{ssec:potential-promote}

Suppose we promote $s$ from level $k$ to $k+1$.
Recall that the update time is $O(f\cdot |E_k(s)|)$.
Also recall that the promotion of $s$ from level $k$ to $k+1$ increases the total dead weight by at most $f\cdot |E_k(s)|\cdot \epsilon(1+\epsilon)^{-(k+1)}$.
Since all these increased dead weights are at level at most $k$, the total down potential is increased by at most
\begin{equation*}
	f\cdot |E_k(s)|\cdot \epsilon(1+\epsilon)^{-(k+1)}\cdot \left(\frac{1}{\epsilon^3}+\frac{1}{\epsilon^2 f}\log C\right) (1+\epsilon)^{k+1} = \left( \frac{f}{\epsilon^2}+\frac{1}{\epsilon}\log C \right)\cdot |E_k(s)|.
\end{equation*}

In the following, we show that promoting $s$ from level $k$ to $k+1$ does not increase the up potential of any set intersecting $s$, and releases at least $\left( \frac{f}{\epsilon^2}+\frac{1}{\epsilon}\log C \right) \cdot |E_k(s)|$ up potential of $s$.

\begin{lemma}\label{lemma:potential-released-promotion}
	Promoting $s$ from level $k$ to $k+1$ releases $\left( \frac{f}{\epsilon^2}+\frac{1}{\epsilon}\log C \right)\cdot |E_k(s)|$ up potential of $s$.
\end{lemma}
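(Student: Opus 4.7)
The plan is to directly measure the decrease in the up potential of $s$ caused by one iteration of the outer while loop in Algorithm~\ref{alg:promotion}, which raises $l(s)$ from $k$ to $k+1$. The key observation is that the function $i \mapsto w(s,i)$ depends only on the element-structure of $s$ and on $\{\max_{s' \neq s : e \in s'} l(s')\}_{e \in s}$; these data are untouched by promoting $s$ itself, since only $l(s)$ changes (the real/dead weights of neighbouring sets change, but not their levels). Hence the values $w(s,i)$ and in particular $\hat{l}(s)$ are the same before and after the step. Moreover, since the while-loop guard forces $w(s,k+1) \geq c_s$, we have $\hat{l}(s) \geq k+1$ at the start of this step.

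I then split into two cases. If $\hat{l}(s) > k+1$, the up potential of $s$ before the step equals $(w(s,k) - w(s,k+1))\alpha_k$ plus the tail $\sum_{i=k+1}^{\hat{l}(s)-1}(w(s,i)-w(s,i+1))\alpha_i + (w(s,\hat{l}(s))-c_s)\alpha_{\hat{l}(s)}$; after the step only the tail remains, releasing exactly $(w(s,k) - w(s,k+1))\alpha_k$. If $\hat{l}(s) = k+1$, the pre-step potential is $(w(s,k) - w(s,k+1))\alpha_k + (w(s,k+1) - c_s)\alpha_{k+1}$, while the post-step potential is at most $(w(s,k+1) - c_s)\alpha_{k+1}$ (and is $0$ in the boundary case $w(s,k+1) = c_s$), again releasing at least $(w(s,k)-w(s,k+1))\alpha_k$.

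It remains to evaluate $w(s,k) - w(s,k+1)$. Writing $M_e = \max_{s' \neq s: e \in s'} l(s')$, the formula $w(s,i) = \sum_{e \in s}(1+\epsilon)^{-\max\{i, M_e\}}$ shows that elements with $M_e \geq k+1$ contribute equally to $w(s,k)$ and $w(s,k+1)$. The remaining elements, those with $M_e \leq k$, are precisely the elements of $E_k(s)$: since $l(s) = k$, an element $e \in s$ satisfies $l(e) = \max(k, M_e) = k$ iff $M_e \leq k$. Each such element contributes $(1+\epsilon)^{-k}$ to $w(s,k)$ and $(1+\epsilon)^{-(k+1)}$ to $w(s,k+1)$, so the total drop is $|E_k(s)| \cdot \epsilon(1+\epsilon)^{-(k+1)}$.

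Combining, the released up potential is at least
\[
|E_k(s)| \cdot \epsilon (1+\epsilon)^{-(k+1)} \cdot \alpha_k \;=\; |E_k(s)| \cdot \epsilon \left(\tfrac{f}{\epsilon^3} + \tfrac{1}{\epsilon^2}\log C\right) \;=\; |E_k(s)| \cdot \left(\tfrac{f}{\epsilon^2} + \tfrac{1}{\epsilon}\log C\right),
\]
matching the bound in the lemma. The only real obstacle is the careful bookkeeping around $\hat{l}(s)$ and the edge case $w(s,k+1) = c_s$, both of which are handled uniformly by the case analysis above; everything else is a mechanical evaluation of the piecewise-linear potential.
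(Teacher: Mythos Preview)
Your proof is correct and follows essentially the same approach as the paper: both observe that the functions $w(s,\cdot)$ and the value $\hat{l}(s)$ are unchanged by the promotion, identify the released potential as $(w(s,k)-w(s,k+1))\alpha_k$, and evaluate $w(s,k)-w(s,k+1)=\epsilon(1+\epsilon)^{-(k+1)}|E_k(s)|$. The paper's version is simply terser, omitting the case split on $\hat{l}(s)$ that you spell out (and the ``at least'' in your case $\hat{l}(s)=k+1$ is in fact an equality, so the match with the paper's exact claim is tight).
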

\begin{proof}
	Note that promoting $s$ from level $k$ to $k+1$ does not change $\hat{l}(s)$ or $w(s,i)$ for all $i\geq k+1$.
	Hence by definition, the up potential released from set $s$ is
	\begin{align*}
		& \big( w(s,k) - w(s,k+1) \big)\cdot \left( \frac{f}{\epsilon^3}+\frac{1}{\epsilon^2}\log C \right) (1+\epsilon)^{k+1} \\
		= & \epsilon(1+\epsilon)^{-(k+1)}\cdot|E_k(s)|\cdot \left( \frac{f}{\epsilon^3}+\frac{1}{\epsilon^2}\log C \right) (1+\epsilon)^{k+1} = \left( \frac{f}{\epsilon^2}+\frac{1}{\epsilon}\log C \right)\cdot|E_k(s)|,
	\end{align*}
	which concludes the lemma.	
\end{proof}

It remains to show that the up potentials of other sets are not increased due to the promotion.
Obviously the promotion does not change the up potential of any set that is disjoint from $s$, or any set at level higher than $k$.

\begin{lemma}\label{lemma:potential-of-neighbor-promotion}
	Promoting $s$ from level $k$ to $k+1$ does not increase the potential of set at levels at most $k$ intersecting $s$.
\end{lemma}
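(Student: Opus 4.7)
My plan is to show pointwise that the hypothetical weights $w(s',i)$ only go down due to the promotion, and then to re-express $\Phi_{\mathrm{up}}(s')$ as a sum with nonnegative coefficients in the variables $w(s',i)$, so that any pointwise decrease forces a weak decrease of the potential. Fix a set $s' \neq s$ with $l(s') \leq k$ and $s' \cap s \neq \emptyset$, and write $m(e) := \max_{s'' \neq s' : e \in s''} l(s'')$, so that $w(s',i) = \sum_{e \in s'} (1+\epsilon)^{-\max\{i, m(e)\}}$.

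First, I will analyze how $m(e)$ changes for $e \in s'$ when $s$ is promoted from level $k$ to $k+1$. Only elements $e \in s' \cap s$ can be affected, since the levels of sets other than $s$ are unchanged. If $e \in s' \cap s$ has $l(e) > k$, then some $s'' \neq s, s'$ containing $e$ already has $l(s'') > k$, so $m(e)$ is unchanged by the promotion. If instead $l(e) = k$ (equivalently $e \in E_k(s)$), every other set containing $e$ sits at level $\leq k$, forcing $m^{\mathrm{old}}(e) = l(s) = k$ and $m^{\mathrm{new}}(e) = k+1$. Tallying contributions level by level and setting $D := |s' \cap s \cap E_k(s)| \cdot \epsilon(1+\epsilon)^{-(k+1)} \geq 0$, this yields $w^{\mathrm{new}}(s',i) = w^{\mathrm{old}}(s',i) - D$ for $i \leq k$ and $w^{\mathrm{new}}(s',i) = w^{\mathrm{old}}(s',i)$ for $i \geq k+1$. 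In particular $w^{\mathrm{new}}(s',i) \leq w^{\mathrm{old}}(s',i)$ for every $i$, and the new profile remains non-increasing in $i$ because the affected elements already account for $D$ of the jump $w^{\mathrm{old}}(s',k) - w^{\mathrm{old}}(s',k+1)$.

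Second, Abel summation applied to the definition of $\Phi_{\mathrm{up}}(s')$ yields the rewrite $\Phi_{\mathrm{up}}(s') = \max\{0, w(s', l(s')) - c_{s'}\}\,\alpha_{l(s')} + \sum_{i > l(s')} \max\{0, w(s', i) - c_{s'}\}\,(\alpha_i - \alpha_{i-1})$, which one can verify against the original definition by splitting on whether $i \leq \hat{l}(s')-1$, $i = \hat{l}(s')$, or $i > \hat{l}(s')$; the same formula correctly returns $0$ when $w(s', l(s')) \leq c_{s'}$. Since $\alpha_i = (f/\epsilon^3 + \epsilon^{-2}\log C)(1+\epsilon)^{i+1}$ is strictly increasing in $i$, every coefficient appearing in the rewrite is positive, and each map $x \mapsto \max\{0, x - c_{s'}\}$ is nondecreasing. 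Substituting the pointwise inequality $w^{\mathrm{new}}(s',i) \leq w^{\mathrm{old}}(s',i)$ term by term now gives $\Phi_{\mathrm{up}}^{\mathrm{new}}(s') \leq \Phi_{\mathrm{up}}^{\mathrm{old}}(s')$, as desired.

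The main subtlety is the case analysis in the first step, in particular tracking what happens to $m(e)$ for the different types of $e \in s' \cap s$; once that is in hand, the remaining argument is routine Abel summation plus a monotonicity observation.
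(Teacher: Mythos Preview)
Your proof is correct and reaches the same conclusion as the paper, but via a genuinely different route. Both arguments begin with the same elementary observation: promoting $s$ shifts the affected elements' contribution so that $w(s',i)$ drops by the same amount $D$ for all $i\le k$ and is unchanged for $i\ge k+1$. From there the two proofs diverge.

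The paper works directly with the original expression for the up potential. It first disposes of the case $\phi(s')>0$ after the promotion (where the local $\phi$ invariant forces $w(s')\le c_{s'}$ and hence zero potential), and in the remaining case observes that the telescoping increments $w(s',j)-w(s',j+1)$ are unchanged for $j\ne k$ and only decrease at $j=k$, while $\hat l(s')$ cannot increase; it then leaves to the reader the (easy but not entirely mechanical) check that these two facts together force the sum defining the potential to weakly decrease even when the summation range changes because $\hat l(s')$ drops.

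Your Abel-summation rewrite
\[
\Phi_{\mathrm{up}}(s')=\max\{0,w(s',l(s'))-c_{s'}\}\,\alpha_{l(s')}+\sum_{i>l(s')}\max\{0,w(s',i)-c_{s'}\}\,(\alpha_i-\alpha_{i-1})
\]
sidesteps both of those steps at once: the $\max\{0,\cdot\}$ absorbs the ``potential is zero when $w(s')\le c_{s'}$'' case (so no separate $\phi(s')>0$ branch is needed), and expressing everything with positive coefficients in the quantities $w(s',i)$ means the monotonicity conclusion is immediate from the pointwise inequality $w^{\mathrm{new}}(s',i)\le w^{\mathrm{old}}(s',i)$. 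This is cleaner and more robust; the paper's version is shorter on the page but hides a small case analysis on whether $\hat l(s')\le k$ or $\hat l(s')>k$. One minor remark: your aside that ``the new profile remains non-increasing'' is true but not actually needed, since $i\mapsto w(s',i)$ is automatically non-increasing for any level assignment, and your rewritten formula does not rely on that monotonicity anyway---only on the positivity of $\alpha_{l(s')}$ and $\alpha_i-\alpha_{i-1}$.
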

\begin{proof}
	Fix any set $s'$ at level $l(s') = i \leq k$ that intersects with $s$.
	Recall that promoting $s$ from level $k$ to $k+1$ decreases $w(s')$ and may possibly increase $\phi(s')$.
	If $\phi(s') > 0$ after the promotion then $s'$ has $0$ potential, and the lemma trivially holds.
	
	Otherwise the promotion moves some elements in $E_k(s')$ to $E_{k+1}(s')$, and does not change the level of any other element contained in $s'$.
	Hence $w(s',j)$ remains unchanged for all $j\geq k+1$, and $w(s',j)$ decreases by the same amount, for all $j\leq k$.
	Consequently, $w(s',j)-w(s',j+1)$ remains unchanged for all $j\neq k$, and $w(s',k)-w(s',k+1)$ decreases.
	Moreover, $\hat{l}(s')$ does not increase.
	Hence the up potential of $s'$ does not increase due to the promotion.
\end{proof}

\subsubsection{Insertions and Deletions}\label{ssec:potential-insert-delete}

Recall that \textsf{Delete}($e$) handles each deletion in $O(f)$ time and \textsf{Insert}($e$) handles each insertion in $O(f\log f)$ time.
Additionally, $O(\frac{f}{\epsilon}\log C)$ time is charged to each insertion for maintaining the linked list of pointers to non-empty collections of tight sets.

Next, we bound the total up and down potential increased due to element insertions and deletion.

\paragraph{Deletion of $e$.}
In Algorithm~\ref{alg:deletion}, we increase $\phi(s)$ by at most $w(e)$ for each $s$ containing $e$.
Hence the total down potential increases by at most
\begin{equation*}
	\sum_{s: e\in s} w(e) \left(\frac{1}{\epsilon^3} + \frac{1}{\epsilon^2 f}\log C\right) (1+\epsilon)^{l(s)+1} \leq
	\sum_{s: e\in s} w(e) \left(\frac{1}{\epsilon^3} + \frac{1}{\epsilon^2 f}\log C\right) (1+\epsilon)^{l(e)+1} = O(\frac{f}{\epsilon^3} +\frac{1}{\epsilon^2}\log C).
\end{equation*}

\paragraph{Insertion of $e$.}
Recall that in Algorithm~\ref{alg:insertion}, we lift sets containing $e$ up until all sets are good, and then insert element $e$.
For all $s$ containing $e$, we increase $w(s)$ by $w(e)\leq (1+\epsilon)^{-l(s)}$.

In the following, we show that $e$ increases the up potential of $s$ by $O(\frac{f}{\epsilon^3}+\frac{1}{\epsilon^2}\log C)$.
Since each element is contained in at most $f$ sets, the $O(\frac{f^2}{\epsilon^3}+\frac{f}{\epsilon^2}\log C)$ upper bound on the increase of total up potential follows.
A similar upper bound on the increase of potential appeared in~\cite{soda/BhattacharyaK19}.

\begin{lemma}\label{lemma:potential-from-insertion}
	An insertion of $e$ increases the potential of each $s$ containing $e$ by $O(\frac{f}{\epsilon^3}+\frac{1}{\epsilon^2}\log C)$.
\end{lemma}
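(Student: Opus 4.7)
The plan is to bound the change in $\Phi_{\text{up}}(s)$ caused only by the actual weight update (lines 10--12 of Algorithm~\ref{alg:insertion}); the preceding lift-ups leave $w(s,\cdot)$ untouched and are paid for separately by down-tokens released during subsequent rebuilds, so I focus only on the insertion of $e$.

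First I will rewrite the potential in a telescoping form via Abel summation to obtain $\Phi_{\text{up}}(s) = (w(s)-c_s)^+\alpha_{l(s)} + \epsilon\sum_{i=l(s)+1}^{\hat{l}(s)}(w(s,i)-c_s)^+\alpha_{i-1}$. Because the bounded weight invariant (Invariant~\ref{inv:set-weight}) holds immediately before the insertion (lift-ups only raise $l(s)$ and do not alter $w(s,\cdot)$), we will have $w(s,i) < c_s$ for all $i>l(s)$ in the old configuration, so the summation portion of $\Phi_{\text{up}}^{\text{old}}(s)$ vanishes. Letting $\Delta_i$ denote the contribution of the inserted element $e$ to $w(s,i)$---equal to $(1+\epsilon)^{-k'}$ for $i\le k'$ and $(1+\epsilon)^{-i}$ for $i>k'$, where $k'=\max_{s'\ne s,\, e\in s'} l(s')$---the two contributions to the increase can be upper bounded as $\Phi_{\text{up}}^{\text{new}}(s) - \Phi_{\text{up}}^{\text{old}}(s) \le \Delta_{l(s)}\alpha_{l(s)} + \epsilon\sum_{i=l(s)+1}^{\hat{l}^{\text{new}}}\Delta_i\,\alpha_{i-1}$, using $(w(s,i)^{\text{new}}-c_s)^+\le\Delta_i$ for $i>l(s)$.

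Next, writing $A=\frac{f}{\epsilon^3}+\frac{1}{\epsilon^2}\log C$ and plugging in $\alpha_i=A(1+\epsilon)^{i+1}$, I will estimate each piece. The first term is $O(A)$ by direct calculation. I will then split the remaining sum at $i=k'$: the ``low'' range $l(s)<i\le k'$ gives a geometric series with ratio $(1+\epsilon)^{-1}$ summing to $O(A/\epsilon)$ (hence $O(A)$ after the factor $\epsilon$), while in the ``high'' range $k'<i\le\hat{l}^{\text{new}}$ each summand equals exactly $\epsilon A$, contributing $\epsilon A\,(\hat{l}^{\text{new}}-k')$ in total.

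The main obstacle will be controlling $\hat{l}^{\text{new}}-k$. To this end I will exploit the goodness property enforced by Algorithm~\ref{alg:insertion}: whenever $\Phi_{\text{up}}(s)$ actually increases we have $w(s)^{\text{new}}>c_s$, which rules out goodness case~(1), so case~(2) must hold, i.e.\ $E_{l(s)}(s)\ne\emptyset$ at the moment of insertion. Lemma~\ref{lemma:not-below-base} then forces $l(s)\ge b(s)=\lfloor\log_{1+\epsilon}(1/c_s)\rfloor$, and the trivial bound $w(s,i)^{\text{new}}\le f(1+\epsilon)^{-i}$ gives $\hat{l}^{\text{new}}\le\log_{1+\epsilon}(f/c_s)$, yielding $\hat{l}^{\text{new}}-l(s)=O(\log_{1+\epsilon} f)$. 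The delicate step will be absorbing this $\log f$ factor into the claimed $O(A)$ bound; this requires replacing the naive estimate $(w(s,i)^{\text{new}}-c_s)^+\le\Delta_i$ in the high range by a tighter bound that uses the geometric decay of $(w(s,i)^{\text{new}}-c_s)^+$ in $i$ together with the element $e^*\in E_{l(s)}(s)$ guaranteed by case~(2), whose contribution $(1+\epsilon)^{-l(s)}$ already saturates much of the tail. This tightening is what I anticipate to be the technically most demanding part of the proof.
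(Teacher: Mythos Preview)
Your Abel-summation rewriting is correct, and the decomposition into the first term plus a low/high split of the sum is fine. The genuine gap is exactly where you flag it: your bound $\hat{l}^{\text{new}}-l(s)=O(\log_{1+\epsilon} f)$, obtained from the crude estimate $w(s,i)\le f(1+\epsilon)^{-i}$ together with $l(s)\ge b(s)$, is too weak by a $\log f$ factor, and your proposed fix (sharpening the per-term bound $(w(s,i)^{\text{new}}-c_s)^+\le\Delta_i$ in the high range) is not the right lever.

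The missing observation is that the element $e^*\in E_{l(s)}(s)$ already pins down $\hat{l}^{\text{new}}$ \emph{directly}, not merely the summands. Write $t=l(s)$ and $t'=\hat{l}^{\text{new}}$. Before the insertion the bounded-weight invariant gives $w(s,t{+}1)<c_s$. After the insertion, by definition of $t'$, $w(s,t')^{\text{new}}\ge c_s$. Comparing $w(s,t')^{\text{new}}$ to $w(s,t{+}1)^{\text{old}}$: the new element $e$ contributes at most $(1+\epsilon)^{-t'}$, while the element $e^*$ (which was at level $t$) loses at least $(1+\epsilon)^{-(t+1)}-(1+\epsilon)^{-t'}$ when we go from level $t{+}1$ to level $t'$; all other elements can only lose weight. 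Hence
\[
c_s \le w(s,t')^{\text{new}} \le w(s,t{+}1)^{\text{old}} - \bigl((1+\epsilon)^{-(t+1)}-(1+\epsilon)^{-t'}\bigr) + (1+\epsilon)^{-t'} < c_s - (1+\epsilon)^{-(t+1)} + 2(1+\epsilon)^{-t'},
\]
which gives $(1+\epsilon)^{t'-t}\le 2(1+\epsilon)$, i.e.\ $t'-t=O(1/\epsilon)$ rather than $O((\log f)/\epsilon)$. Plugging this into your own high-range estimate $\epsilon A(t'-k')\le \epsilon A(t'-t)$ already yields $O(A)$, and the proof closes.

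This also shows that the Abel machinery is unnecessary: once you know $(1+\epsilon)^{t'-t}=O(1)$, the paper simply bounds the entire potential increase by $w(e)\cdot\alpha_{t'}$ in one step, since the telescoping terms sum to $w(s)-c_s$ and each conversion ratio is at most $\alpha_{t'}$. That gives $w(e)\cdot\alpha_{t'}\le (1+\epsilon)^{-t}\cdot A(1+\epsilon)^{t'+1}=O(A)$ immediately.
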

\begin{proof}
	Let $w(s)$ be the weight of $s$ and $t=l(s)$, right before $e$ is inserted.
	If $w(s)+w(e)\leq c_s$ then $s$ has $0$ up potential after the insertion, and the lemma trivially holds.
	Otherwise (since the algorithm lifts $s$ up until it becomes good) we know that before the insertion $E_{t}(s) \neq \emptyset$.
	Moreover, by the bounded weight invariant, we have $w(s,t+1) < c_s$ before the insertion.
	
	Recall that the conversion ratio of set weight to up potential at level $i$ is given by $\alpha_i = \left( \frac{f}{\epsilon^3}+\frac{1}{\epsilon^2}\log C \right) (1+\epsilon)^{i+1}$, and the up potential of $s$ is given by
	\begin{equation*}
		\sum_{i = l(s)}^{\hat{l}(s)-1} \big(w(s,i)-w(s,i+1)\big)\cdot \alpha_i + \big( w(s,\hat{l}(s))-c_s \big)\cdot \alpha_{\hat{l}(s)}.
	\end{equation*}
	
	The insertion of $e$ increases $w(s)$ by $w(e)$, and can possibly increase $\hat{l}(s)$.
	Let $t'$ be the value of value of $\hat{l}(s)$ \emph{after} the insertion.
	By definition of $w(s,i)$, the insertion does not decrease $w(s,i)-w(s,i+1)$ for any $i$.
	Since the insertion increases $w(s) - c_s$ by $w(e)$, the up potential of $s$ is increased by at most $w(e)\cdot \alpha_{t'}$.
	(Recall that the conversion ratio is larger at higher level.)
	
	Since $w(e) \leq (1+\epsilon)^{-t}$, the increment in up potential of $s$ is at most
	\begin{equation*}
		(1+\epsilon)^{-t} \cdot \left( \frac{f}{\epsilon^3}+\frac{1}{\epsilon^2}\log C \right)(1+\epsilon)^{t'+1} =  \left( \frac{f}{\epsilon^3}+\frac{1}{\epsilon^2}\log C \right) (1+\epsilon)^{t'-t+1}.
	\end{equation*}
	
	It remains to show that $(1+\epsilon)^{t'-t} = O(1)$.
	
	Since $E_t(s)\neq \emptyset$ and $w(s,t+1)<c_s$ before the insertion, after the insertion, the weight of $s$ at level $t'$ is at most
	\begin{equation*}
		c_s- |E_t(s)|\big( (1+\epsilon)^{-(t+1)} - (1+\epsilon)^{-t'} \big) + (1+\epsilon)^{-t'} \leq c_s-(1+\epsilon)^{-(t+1)} + 2(1+\epsilon)^{-t'}.
	\end{equation*}
	
	On the other hand, the weight of $s$ at level $t'$ after the insertion is at least $c_s$. Hence we have $c_s-(1+\epsilon)^{-(t+1)} + 2(1+\epsilon)^{-t'} \geq c_s$, which is equivalent to $(1+\epsilon)^{t'-t} \leq 2(1+\epsilon)$, as required.
\end{proof}

\subsubsection{Total lift-up cost} \label{ssec:lift-up-cost}

Since \textsf{Rebuild}$(k)$ removes all lift-up costs charged to the sets in $T_{\leq k}$, each set $s$ at level $i$ is charged at most $\max\{i-b(s)+1,0\}$ units of lift-up cost.
Assume that the final instance contains no element.
Then all sets are slack and are at level $0$ in the final instance.
Hence at the end of update sequence there is no lift-up cost charged to any set.

Next we show that the assumption is without loss of generality.
Given any sequence of $t$ updates, we append to the end of the sequence a deletion for each element that exists after the $t$ updates.
Hence there are in total at most $\Gamma + 2t$ updates, where $\Gamma$ is the number of elements preprocessed.
The total update time of the new sequence of updates, given that the final graph is empty, is at most $O((\Gamma+t)\cdot \lambda)$, where $\lambda = O(\frac{f^2}{\epsilon^3}+\frac{f}{\epsilon^2}\log C)$ is the amortized update time.
Thus the total update time to process the $t$ updates is $O((\Gamma+t)\cdot \lambda)$.

\section{Our Algorithm for Worst Case Update Time} \label{sec:worst-case}


In this section we present the algorithm with a worst case $O(f\log^2 (Cn)/\epsilon^3)$ update time.
Specifically, we prove the following theorem.

\begin{theorem} \label{th:worst_case_time}
	There are deterministic dynamic algorithms for the minimum set cover problem with $(1+\epsilon)f$-approximation ratio and a worst-case update time of $O(f\log^2 (Cn)/\epsilon^3)$.
\end{theorem}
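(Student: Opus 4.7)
The plan is to extend the decremental scheduler-based framework sketched in Section~\ref{main:sec:worstcase} to the fully dynamic setting. We maintain $L = O(\log(Cn)/\epsilon)$ schedulers, where {\sc Scheduler}$(k)$ is responsible for a prefix $[k]$ of levels and operates in its own disjoint memory, storing per-set real-weight $w^{(k)}_s$, extra-weight $\delta^{(k)}_s$ (mass arriving from elements whose $k$-view level is $>k$), dead-weight $\phi^{(k)}_s$, and a dead-element collection $D^{(k)}$. Preprocessing runs the static primal-dual algorithm from Section~\ref{sec:preliminaries} and replicates its output into all schedulers (which initially agree), so that Invariants~\ref{main:inv:worstcase:tight}--\ref{main:inv:laminar} hold trivially. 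The final approximation guarantee will be extracted from the \emph{consistent assignment} (each set placed at the level of its owning scheduler) via Lemmas~\ref{main:lm:worstcase:approx:1}--\ref{main:lm:worstcase:approx:2} and inequality~(\ref{main:eq:last}); rescaling $\epsilon$ by a constant factor then yields $(1+\epsilon)f$.

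Deletions are handled exactly as in the decremental overview: feed the deletion to every scheduler containing the element, move it from $\U^{(k)}$ into $D^{(k)}$ (keeping its level and weight), and convert the lost real-weight of each containing set into dead-weight while leaving extra-weights untouched, which preserves total-weights and hence Invariant~\ref{main:inv:worstcase:tight}. Insertions are handled by combining the scheduler framework with a worst-case variant of the {\sc Insert}/{\sc Fix}/{\sc Promote} machinery of Section~\ref{ssec:amortized-algorithm}: the new element is routed to its target level $k'$ (the highest level among containing sets in the consistent assignment, after lifting bad sets on the fly), fed to every {\sc Scheduler}$(j)$ with $j \geq k'$, and each such scheduler performs local promotions to restore tightness while updating the extra-weights of schedulers at lower levels. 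Each scheduler triggers a background rebuild the instant $|D^{(k)}|$ hits $\epsilon \cdot |\U^{(k)}|$, splits the $O(f|\U^{(k)}|)$ cost across subsequent updates at a rate of $cf L/\epsilon$ units of work per external update for a sufficiently large constant $c$, and spawns $k$ parallel threads (one per target level $\leq k$) so that the final synchronization reduces to passing $O(L)$ pointers on the fly.

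The main obstacle, explicitly flagged at the end of Section~\ref{main:sec:worstcase}, is that insertions break the strict laminar structure of Invariant~\ref{main:inv:laminar}: in the fully dynamic regime {\sc Scheduler}$(k)$ may temporarily be responsible for an element whose $k$-view level is $k+1$, producing intricate cross-scheduler dependencies not present in the decremental case. I would address this by relaxing Invariant~\ref{main:inv:laminar} to a \emph{loose consistency} condition permitting a single level of overlap at each boundary plus an $O(\epsilon)$ additive slack in dead-element counts, with the slack absorbed by a slightly loosened Invariant~\ref{main:inv:worstcase:trigger}. The key technical content is then to show that whenever a {\sc Scheduler}$(k)$ completes a background rebuild, its synchronization step with the lower schedulers restores the consistency conditions required by the approximation analysis, even though new insertions and deletions may have arrived at higher schedulers during the rebuild.

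For the worst-case update time, there are $L$ schedulers, each performing $O(fL/\epsilon)$ units of work per external update (the $L$ factor coming from the parallel threads needed for on-the-fly synchronization), plus $O(L)$ work to propagate the update along the ownership chain; this yields $O(L \cdot fL/\epsilon) = O(f\log^2(Cn)/\epsilon^3)$ worst-case update time. For the approximation bound, the consistent-assignment weights $\{w_e\}_{e \in \U}$ still form a fractional packing in $(\U, \S)$, and the loose-consistency version of Corollary~\ref{main:cor:lm:worstcase:approx:2} still gives $\sum_{e \in D} w_e = O(\epsilon) \cdot \sum_{e \in \U} w_e$, so inequality~(\ref{main:eq:last}) goes through and the owners of the tight sets form a $(1+O(\epsilon))f$-approximate minimum set cover; rescaling $\epsilon$ finishes the proof.
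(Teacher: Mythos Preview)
Your proposal has a genuine gap in how insertions are handled. You propose to route the new element to its target level and then have each affected scheduler ``perform local promotions to restore tightness'' using the {\sc Promote}/{\sc Fix} machinery from the amortized algorithm. But that machinery only admits an \emph{amortized} analysis via the up/down potential functions of Section~\ref{ssec:potential-analysis}; a single promotion of a set $s$ costs $\Theta(f\cdot|E_{l(s)}(s)|)$, which is not worst-case bounded, and one insertion can trigger up to $f$ such promotions. There is no ``worst-case variant'' of {\sc Promote} available, and inventing one is not a minor detail---it is essentially the whole problem. The paper's actual mechanism is completely different and is the crux of the fully dynamic extension: a newly inserted element is marked \emph{passive} and given weight $0$ (or, at level $0$, the largest weight that keeps all containing sets below $c_s$). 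Because its weight is zero, no set weight increases, so the tightness invariant is preserved with \emph{no promotion at all}; the insertion costs $O(f)$ in every local view. Passive elements are then absorbed (converted to active or pushed to level $k{+}1$) the next time a rebuild touches them, and Invariant~\ref{inv:local-element} bounds their number against the active elements, which is what feeds into the approximation analysis via Lemma~\ref{lemma:consistent-hierarchy-element-fraction}.

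A second gap concerns the running time. The $O(fL/\epsilon)$ work-per-scheduler-per-update budget you quote only suffices if a single copy of the rebuild of $H_k$ runs in $O(f\cdot|E_k(\leq k)|)$ time while simultaneously committing the stream of arriving insertions and deletions. The naive level-by-level rebuild (Algorithm~\ref{alg:rebuild_wc}) scans each surviving element once per level and costs $O(fL\cdot|E_k(\leq k)|)$, which yields only $O(f\log^3(Cn)/\epsilon^4)$. To hit the stated bound the paper introduces the target-level rebuild \textsf{EfficientRebuild} (Section~\ref{sec:improved-algorithm}), where each slack set carries a target level and each element is touched once; the delicate part---not addressed in your plan---is committing an arriving insertion mid-rebuild so that no set's target level overshoots the current round (Lemma~\ref{lemma:bounded-target-level}), which again relies on inserting the new element as passive with a carefully chosen weight rather than promoting anything.
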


Our algorithm maintains the $(1+\epsilon)f$-approximate set cover \emph{value} after every update and can return it in constant time.
The algorithm can also output the solution in time linear to the solution size whenever the solution is asked for (similar to the dynamic matching algorithm in \cite{BernsteinFH-soda19}).

\paragraph{Notations.}
We use $[i,j]$ to denote $\{i,i+1,\ldots,j\}$ for any integers $j\geq i\geq 0$.
Our algorithm guarantees that every element $e$ is at level $l(e) = \max_{s\in \S:e\in s}\{l(s)\}$.
Similar to~\cite{focs/BhattacharyaHN19}, we classify the elements into three types, namely active, passive and dead.
Each active element $e$ has weight exactly $(1+\epsilon)^{-l(e)}$; each passive element $e$ has weight at most $(1+\epsilon)^{-l(e)}$.
An element becomes dead at the moment when it gets deleted.
The weight of a dead element is decided by its weight before the deletion.
As it will be clear from the description of the algorithm, each newly inserted element will be passive, which may turn into active only if it is involved in a rebuild of levels.

\paragraph{Worst Case Update Time.}
In standard worst case update time algorithms, we assume that a new update arrives when the previous one is properly handled.
The worst case update time is then measured by the maximum time between two consecutive updates.
In this paper we assume that updates and queries arrive in a fixed rate.
That is, the time between any two consecutive updates is fixed, say, by some parameter $\lambda$.
We show that our algorithm updates the data structure or answers the query before the next update/query arrives.
Our algorithm maintains $L$ parallel data structures, and does $O(L\cdot \lambda)$ units of work between two consecutive updates/queries.
Consequently we have a worst case update time of $O(L\cdot \lambda)$.
Our goal is the design algorithms that work with updates with a high arrival rate, i.e., small $\lambda$.

\medskip

We show that the main technical challenge is to do the rebuild efficiently while committing newly arrived updates to the data structure (hierarchy) that is being rebuild.
Specifically, to guarantee that the number of passive elements and dead elements are bounded, we need to assume that a bounded number of updates arrive when the rebuild is being executed in the background.

We first present in Section~\ref{ssec:rebuild-and-update} a rebuild algorithm that works with $\lambda = c\cdot \frac{1}{\epsilon}\cdot f\cdot L^2$, for some sufficiently large constant $c$, which implies an $O(\frac{f}{\epsilon^4}\log^3 (Cm))$ worst case update time.
We improve the algorithm and obtain an $O(\frac{f}{\epsilon^3}\log^2 (Cm))$ worst case update time in Section~\ref{sec:improved-algorithm}.

\paragraph{Data Structure.}
As before (in the amortized algorithm), for every set we maintain a partition of the elements it contains according to the levels of elements.
Additionally, every element maintains a pointer to every set containing it.

\subsection{Algorithm Framework} \label{ssec:worst-case-algorithm}

We first describe a data structure that consists of $L$ independent hierarchies as follows.

For each $k\in [1,L]$, we maintain a hierarchy $H_k$ that has levels $[0,k]$, which we refer to as the \emph{local view} of level $k$ (local view($k$)).
We use $A_k(i)$ to denote the collection of active elements at level $i$, from the local view$(k)$, and $A_k(\leq i)$ to denote the collection of active elements on or below level $i$, from local view($k$).
We define the sets $P_k(i)$, $P_k(\leq i)$ for passive elements, and $D_k(i)$, $D_k(\leq i)$ for dead elements similarly.
Let $E_k(i) = A_k(i)\cup P_k(i)\cup D_k(i)$, and $E_k(\leq i) = \bigcup_{j\leq i}E_k(j)$.
Let $S_k(i)$ (resp. $S_k(\leq i)$) denote the collection of sets at level $i$ (resp. at most $i$), from local view($k$).
We also maintain a counter for the size of each collection (of elements or sets) we maintain.
Unless otherwise specified, we use a doubly linked list for each of these collections.
Thus given the pointer to an element or a set, the insertion and deletion of the element or set can be done in $O(1)$ time.
Moreover, we can also merge two collections in $O(1)$ time.

Let $l_k(e)$ be the level of element $e$, and $l_k(s)$ be the level of set $s$, in local view($k$).

Additionally, every set in local view($k$) has an \emph{extra weight}, which represents the total weight the set receives from levels $[k+1,L]$ (which are not included in $H_k$).
Let $\delta_k(s)$ denote the extra weight of $s$ in local view($k$).
The weight of set $s$ in local view($k$) is then defined as
\begin{equation*}
w_k(s) = \delta_k(s) + \sum_{e\in s\cap E_k(\leq k)}w_k(e),
\end{equation*}
where $w_k(e)$ denotes the weight of element $e$ in local view($k$).

We call a set $s$ \emph{tight} (with respect to local view($k$)) if $w_k(s) > \frac{c_s}{1+\epsilon}$.
We maintain the invariant that each set at level $[1,k]$ of $H_k$ is tight.

\begin{invariant}[Local Tightness]\label{inv:local-tightness}
	For all $i\in[1,k]$, each $s\in S_k(i)$ has weight $w_k(s) > \frac{c_s}{1+\epsilon}$.
\end{invariant}

Every update (that inserts or deletes an element) becomes $L$ updates, one for each hierarchy.

\paragraph{Handling An Update in Local View($k$).}
Suppose an element $e$ is inserted or deleted.
We first check if all sets containing $e$ exist in $S_k(\leq k)$.
If no, then we ignore the update.
Otherwise
\begin{compactitem}
	\item if the element is deleted, we convert $e$ into a dead element in $H_k$;
	\item if the element is inserted, we insert $e$ as a passive element at level $l_k(e) = \max_{s: e\in s}\{l_k(s)\}$.
	If $l_k(e) > 0$ then we set $w_k(e) = 0$; otherwise $l(e) = 0$, and we set $w_k(e)$ to be $\min_{s: e\in s} \{ c_s - w_k(s) \}$, which is at most $1$.
	Note that by Invariant~\ref{inv:local-tightness} at least one set containing $e$ is tight.
\end{compactitem}
We also update the relevant counters for local view($k$) accordingly.

\medskip

We maintain the following local invariant regarding the total number of passive elements and dead elements in each local view.

\begin{invariant}[Local Element]\label{inv:local-element}
	For every $k\in[L]$, we have $|P_k(\leq k)\cup D_k(\leq k)| < 2\epsilon\cdot |A_k(\leq k)|$.
\end{invariant}

The local element invariant (Invariant~\ref{inv:local-element}) guarantees that the number of dead elements is bounded, which is crucial to bound the approximation ratio of the set cover.

Additionally, we define the \emph{rebuild-triggering event} for local view($k$) as $|P_k(\leq k)\cup D_k(\leq k)| \geq \epsilon\cdot |A_k(\leq k)|$.
Whenever the event is triggered, Scheduler($k$) gets into action and the hierarchy $H_k$ will be completely rebuilt.
We show (in Section~\ref{ssec:maintenance-of-invariant}) that when Scheduler($k$) is running (in the background), Invariant~\ref{inv:local-element} remains valid.
The triggering event defines a stronger version of Invariant~\ref{inv:local-element}.
It is crucial that we rebuild the data structure earlier (before Invariant~\ref{inv:local-element} is violated), since it might take a long time before the rebuild finishes, and we need to guarantee that the invariant is maintained before the rebuild finishes.

\paragraph{Scheduler($k$).}
We first describe the scheduler for level $k$, which rebuilds $H_k$ in the background, in a high level.
The rebuild (which we refer to as Rebuild($k$)) takes the elements in $E_k(\leq k)$ as input and outputs a new hierarchy $H^*_k$, which has levels $[0,k+1]$.
We show that we are able to commit the newly arrived updates to the hierarchy $H^*_k$ while it is being constructed.
Hence when Rebuild($k$) finishes, the output hierarchy $H^*_k$ is up-to-date.
Scheduler($k$) then replaces $H_i$ for all $i \leq k$ based on $H^*_k$.
In addition, Scheduler($k$) updates local view($k+1$) based on $H^*_k$.

\subsubsection{Consistency Invariants}

Initially, in the preprocessing we construct a hierarchy for the input graph, based on which we construct $L$ consistent hierarchies such that $E_k(i)$ is the same for all $k\geq i$.
In general (when there are updates), the local views can be inconsistent.
For example, when levels on or below $k$ are rebuilt in Scheduler($k$), local view($k-1$) will be completely replaced, while local view($k+2$) is oblivious to the rebuild. 
In this case $A_{k+2}(i)$ might be different from $A_{k-1}(i)$ for some $i\leq k$.

Nevertheless, our algorithm maintains the following invariants that guarantee some consistency between the parallel hierarchies.
We show that the invariants are maintained in Section~\ref{ssec:maintenance-of-invariant}.

\begin{invariant}[Lower View Consistency]\label{inv:lower-level-consistency}
	We have $S_{i+1}(\leq i) =  S_{i}(\leq i)$ for all $i\in[1,L-1]$.
\end{invariant}

\begin{invariant}[Upper View Consistency] \label{inv:upper-level-consistency}
	Suppose an element $e$ is passive (resp. dead) w.r.t. local view($i$). Then $e$ is also passive (resp. dead) w.r.t. local view($j$) for all $j > i$.
\end{invariant}

By invariant~\ref{inv:lower-level-consistency}, if a set exists in local view($k$) then it also exists in local view($k+1$).
Hence every update not ignored by local view($k$) will not be ignored by local view($i$) if $i > k$.
Consequently, every passive/dead element in $H_k$ also appears in $H_i$, as stated in Invariant~\ref{inv:upper-level-consistency}. 

\medskip

Next we present the details of Scheduler($k$).

\subsubsection{Rebuild and Synchronization}\label{ssec:synchronization}

Recall that Rebuild($k$) takes the elements in $E_k(\leq k)$ as input and outputs a new hierarchy $H^*_k$.
We will present two algorithms for the construction in Section~\ref{ssec:rebuild-and-update} and Section~\ref{sec:improved-algorithm}, respectively.

\begin{property}\label{property}
	We show that Rebuild($k$) constructs a hierarchy $H^*_k$ with the following properties.
	\begin{itemize}
		\item[(a)] The hierarchy $H^*_k$ has levels $[0,k+1]$, and contains every set in $S_k(\leq k)$.
		If an element $e$ in $A_k(\leq k) \cup P_k(\leq k)$ is not deleted by an update that arrives when Scheduler($k$) is running, then it will be contained in $H^*_k$.
		Moreover, if $e\in A_k(\leq k)$ then it is active in $H^*_k$; if $e\in P_k(\leq k)$ then it either becomes active, or will be passive at level $k+1$ of $H^*_k$. 
		\item[(b)] 
		Each set $s$ in $H^*_k$ has extra weight $\delta^*_k(s) = \delta_k(s)$.
		Each set at level $[1,k+1]$ of $H^*_k$ is tight.
		\item[(c)] When Rebuild($k$) finishes, the hierarchy $H^*_k$ is up-to-date, i.e., all updates that have arrived are committed to the hierarchy.
		\item[(d)] For each $i\leq k$, $|P^*_k(\leq i)\cup D^*_k(\leq i)| < \epsilon\cdot |A^*_k(\leq i)|$, where $P^*_k(\leq i)$, $D^*_k(\leq i)$, $A^*_k(\leq i)$ denote the collection of passive, dead, and active elements, respectively, at levels $[0,i]$ of $H^*_k$.
		\item[(e)] The construction of $H^*_k$ takes time at most $\frac{\epsilon}{2L}\cdot |E_k(\leq k)|\cdot \lambda$.
	\end{itemize}
\end{property}

\medskip

We run $k+1$ copies of Rebuild($k$) in Scheduler($k$), and construct $k+1$ identical hierarchies $H^*_k$.
We name the hierarchies as $H^*_{k\rightarrow 1}, H^*_{k\rightarrow 2},\ldots, H^*_{k\rightarrow k+1}$.
As the name indicates, the hierarchy $H^*_{k\rightarrow i}$ is constructed for the purpose of replacing/updating local view($i$) when the rebuild finishes.

For each $i \leq k$, we take the hierarchy $H^*_{k\rightarrow i}$, remove levels $[i+1,k+1]$ and update the extra weights as follows.
The extra weight $\delta^*_{k\rightarrow i}(s)$ of each set $s$ in $H^*_{k\rightarrow i}$ is defined to be its extra weight $\delta_k(s)$ in local view($k$) plus the total weight $s$ receives from (elements in) levels $[i+1,k+1]$ in $H^*_{k\rightarrow i}$.
Then we replace the old hierarchy $H_i$ with the new hierarchy $H^*_{k\rightarrow i}$, and free the space taken by the old hierarchy.
If Scheduler($i$) is currently running in the background, we terminate it immediately (in $O(1)$ time).

We also update local view($k+1$) as follows.
Take the hierarchy $H^*_{k\rightarrow k+1}$, and include all sets and elements at level $k+1$ of $H_{k+1}$ into level $k+1$ of $H^*_{k\rightarrow k+1}$.
By storing the elements in $A_{k+1}(k+1)$ (resp. $P_{k+1}(k+1)$, $D_{k+1}(k+1)$ and $S_{k+1}(k+1)$) as a linked list, the merging of data structure can be done in $O(1)$ time.
The extra weight $\delta^*_{k\rightarrow k+1}(s)$ of every set $s$ in $H^*_{k\rightarrow k+1}$ is set to be $\delta_{k+1}(s)$.

Let the resulting hierarchy be the new local view of $k+1$.
Consequently, in local view($k+1$), the extra weights of sets do not change after the update.

If Scheduler($k+1$) is currently running in the background, we do not terminate it.
Additionally, if Scheduler($k+1$) is currently reading/duplicating elements\footnote{As we will show later, the first step of Rebuild($k+1$) is making a copy for every set and element in $H_{k+1}$.} in the old hierarchy $H_{k+1}$, then we do not free the space taken by the old hierarchy $H_{k+1}$.
Instead, we keep committing newly arrived updates to the data structure until the duplicating finishes.
As soon as Scheduler($k+1$) finishes reading the old hierarchy $H_{k+1}$, we free the space taken by the old hierarchy.

\paragraph{Data Structure for the Hierarchies.}
We assume that each local view($i$), where $i\in[1,L]$, has a pointer that points to the hierarchy $H_i$.
The hierarchy $H_i$ maintains a pointer for each collection of the elements and sets at each level.
Hence when we need to replace the $H_i$ in local view($i$), it suffices to change the pointer from the old hierarchy to the new hierarchy in $O(1)$ time.

\subsubsection{Maintenance of Invariants}\label{ssec:maintenance-of-invariant}

In this section we show that Invariants~\ref{inv:local-tightness},~\ref{inv:local-element},~\ref{inv:lower-level-consistency} and~\ref{inv:upper-level-consistency} are maintained.

We first show that Invariants~\ref{inv:lower-level-consistency} and~\ref{inv:upper-level-consistency} are maintained.

\begin{lemma}
	The lower view consistency invariant (Invariant~\ref{inv:lower-level-consistency}) is maintained.
\end{lemma}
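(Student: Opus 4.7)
My plan is to prove the invariant by induction on the sequence of events that can modify any local view, namely (i) preprocessing, (ii) element insertions/deletions, and (iii) synchronization events that conclude some \textsf{Scheduler}$(k)$. The base case is immediate: after preprocessing, all $L$ hierarchies are built from a single source with consistent set levels, so $S_{i+1}(\leq i) = S_i(\leq i)$ for every $i$. For the inductive step, I would assume the invariant held just before an event and verify it still holds after.

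Element updates are the easy case. An insertion or deletion, as described in Section~\ref{ssec:worst-case-algorithm}, either inserts a passive element at some level determined by the current set levels of local view$(k)$, or flags an existing element as dead; in neither case is any set moved between levels, nor is any set added to or removed from $S_k(\leq k)$ for any $k$. So every $S_k(\leq k)$ is unchanged and the invariant is inherited from the inductive hypothesis.

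The main case is synchronization when \textsf{Scheduler}$(k)$ finishes. I would split the verification of $S_{i+1}(\leq i) = S_i(\leq i)$ by the position of $i$ relative to $k$. For $i < k$: both new local views $i$ and $i+1$ arise by truncating the same hierarchy $H^*_k$ (via the identical copies $H^*_{k\to i}$ and $H^*_{k\to i+1}$) to levels $[0,i]$ and $[0,i+1]$ respectively, so the sets at levels $\leq i$ coincide on the nose. For $i = k$: new $S_k(\leq k)$ equals the sets at levels $[0,k]$ of $H^*_k$; new $S_{k+1}(\leq k)$ equals the sets at levels $[0,k]$ of the merged local view$(k{+}1)$, but the merge only adds level-$(k{+}1)$ contributions from the old $H_{k+1}$, so restricting to levels $\leq k$ again gives exactly the sets at levels $[0,k]$ of $H^*_k$. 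For $i = k+1$: local view$(k{+}2)$ is untouched by \textsf{Scheduler}$(k)$, so the right-hand side $S_{k+2}(\leq k+1)$ is unchanged; I then check that the new $S_{k+1}(\leq k+1)$ equals the old one by combining Property~\ref{property}(a) (so that all sets of old $S_k(\leq k)$, which by IH equals old $S_{k+1}(\leq k)$, reappear in $H^*_k$ at some level in $[0,k+1]$) with the fact that the synchronization appends exactly the sets at level $k{+}1$ of old $H_{k+1}$. These two contributions partition into old $S_{k+1}(\leq k) \cup $ old $S_{k+1}(k{+}1) = $ old $S_{k+1}(\leq k+1)$, and the IH then transports the equality to $S_{k+2}(\leq k+1)$. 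For $i > k+1$: neither local view $i$ nor $i+1$ is touched, so the IH carries over directly.

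The only subtle point I expect to have to spell out carefully is the $i = k+1$ case, since a set originally at some level $\leq k$ in the old $H_k$ may be relocated to level $k{+}1$ inside $H^*_k$. The accounting I sketched above handles this: such a set is still present in $H^*_{k\to k+1}$ (so it contributes to new $S_{k+1}(\leq k+1)$), and it cannot simultaneously have lived at level $k{+}1$ of the old $H_{k+1}$ (by IH $S_{k+1}(\leq k) = S_k(\leq k)$), so the union decomposition is clean and no set is double-counted. Once this is verified, all cases are dispatched and the invariant is preserved across every event type.
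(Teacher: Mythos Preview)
Your proof is correct and follows essentially the same approach as the paper: induct over events, observe that element updates never move sets, and then do a case analysis on the index $i$ relative to $k$ when \textsf{Scheduler}$(k)$ synchronizes. The paper's own proof is terser --- it simply asserts that $S_{k+1}(\leq k+1)$ does not change and that $S_i(j) = S_{k+1}(j)$ for all $i\le k$ and $j\le k$ --- whereas you spell out the $i=k+1$ case in full (using Property~\ref{property}(a) and the inductive hypothesis to account for sets that may have moved to level $k{+}1$ inside $H^*_k$), which is a welcome elaboration of a step the paper leaves implicit.
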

\begin{proof}
	Initially when all local views are consistent, the invariant trivially holds.
	Note that sets change their levels only when some hierarchy is rebuilt.
	In other words, insertions and deletions of elements do not change levels of sets.
	Consider the point in time when some Scheduler($k$) finishes, where $k\in [1,L]$.
	
	Recall that Scheduler($k$) replaces local view($i$) for every $i\leq k$, and update local view($k+1$) based on the identical hierarchies output by Rebuild($k$).
	Hence $S_i(\leq i)$ and $S_i(\leq i-1)$ does not change for every $i\in [k+2,L]$.
	Additionally, $S_{k+1}(\leq k+1)$ does not change.
	Thus Invariant~\ref{inv:lower-level-consistency} holds for every $i\in [k+1,L-1]$.
	Since after Scheduler($k$) finishes, $S_i(j) = S_{k+1}(j)$ for every $i\in [1,k]$ and $j\in [0,k]$, Invariant~\ref{inv:lower-level-consistency} is also maintained for every $i\in [1,k]$.
\end{proof}

\begin{lemma}
	The upper view consistency invariant (Invariant~\ref{inv:upper-level-consistency}) is maintained.
\end{lemma}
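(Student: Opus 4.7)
My plan is to prove this by induction on the sequence of operations processed by the algorithm, namely element insertions, element deletions, and scheduler completions, taking as the base case the initial state just after preprocessing, where all local views are identical by construction and the invariant therefore holds vacuously. I will assume Invariant~\ref{inv:upper-level-consistency} (together with Invariant~\ref{inv:lower-level-consistency}) holds just before each operation and show it continues to hold after.

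For the insertion of an element $e$, the subroutine inserts $e$ as passive into local view$(k)$ exactly when all sets containing $e$ already lie in $S_k(\leq k)$. The key leverage is the lower view consistency invariant, which gives $S_i(\leq i)\subseteq S_{i+1}(\leq i+1)$ and hence $S_i(\leq i)\subseteq S_j(\leq j)$ for every $j>i$ by iteration. Thus whenever $e$ is inserted as passive into local view$(i)$, the same insertion is executed in every local view$(j)$ with $j>i$, preserving the invariant. For a deletion, the same containment shows that the set of indices $k$ for which $e$ lies in local view$(k)$ is upward closed, and the deletion simultaneously converts $e$ into a dead element in all of them, so the invariant again survives.

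The main obstacle, and the case that drives the statement, will be the completion of a Scheduler$(k)$. At that moment the local views with indices $1,\dots,k$ are replaced by restrictions of $H^*_k$ to levels $[0,1],\dots,[0,k]$, and local view$(k+1)$ is updated by merging $H^*_k$ with the old level $k+1$ of local view$(k+1)$, while local views above $k+1$ are untouched. Consistency \emph{among} the new local views $1,\dots,k+1$ is immediate, because each inherits the status of its elements from the single shared hierarchy $H^*_k$ (plus, for level $k+1$, from the old $H_{k+1}$, whose element statuses were already mutually consistent across views of index $\ge k+1$ by the inductive hypothesis). The delicate part is arguing that a passive or dead element in one of the newly rewritten local views remains consistent with the untouched local views of index $>k+1$.

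Here I would exploit Property~\ref{property}(a) of Rebuild. An element that appears passive at some level $\le k$ in $H^*_k$ cannot have originated from $P_k(\le k)$ (that property forces such elements either to become active or to land at level $k+1$), nor from $A_k(\le k)$ (those remain active). Hence any such passive element must have been inserted into $H^*_k$ by an update arriving during Scheduler$(k)$'s execution, and by the insertion case above that same update also inserted $e$ as passive into every local view of index $>k+1$. A passive element sitting at level $k+1$ of new local view$(k+1)$ either comes from old local view$(k+1)$ (in which case consistency with the untouched higher views follows directly from the inductive hypothesis applied pre-rebuild) or from $H^*_k$ (handled by the same dichotomy). Dead elements are treated by the same chain of reasoning, with deletions replacing insertions in the propagation argument. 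I expect the bulk of the writeup to be spent unwinding these sub-cases cleanly, since the fact that scheduler completion simultaneously rewrites several local views is what makes the consistency bookkeeping intricate.
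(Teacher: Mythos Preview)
Your proof is correct, but it takes a different route from the paper's. You argue by forward induction on the sequence of operations (insertions, deletions, scheduler completions), checking in each case that the invariant is preserved; the scheduler-completion case is where you do the real work, using Property~\ref{property}(a) to classify where passive elements in $H^*_k$ can come from and then tracing each source back to an update that was propagated to the higher local views. The paper instead argues by contradiction: it fixes an element $e$ that is passive in view($i$) but not in some view($j$) with $j>i$, observes that when $e$ was first inserted it became passive in every view of index $\ge i$ (via Invariant~\ref{inv:lower-level-consistency}), and then notes that the only way $e$ could cease to be passive in view($j$) is through the completion of some Scheduler($k'$) with $k'\ge j-1$; but any such completion also replaces local view($i$), so $e$ would no longer be passive there either.

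The paper's argument is shorter because it avoids the explicit case analysis of Property~\ref{property}(a) and instead reasons about the \emph{only} events that can change an element's status in a higher view. Your approach, by contrast, makes the bookkeeping fully explicit and in particular pins down exactly why passive elements appearing at level $\le k$ of $H^*_k$ must have originated from updates during the rebuild; this is a useful structural observation in its own right, even if it makes the proof longer. Both arguments rely on Invariant~\ref{inv:lower-level-consistency} in the same essential way.
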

\begin{proof}
	Suppose $e$ is passive w.r.t. local view($i$) but not passive w.r.t. local view($j$) for some $j > i$.
	Note that when $e$ is inserted to $H_i$, it is also inserted to $H_k$ for every $k > i$ (by Invariant~\ref{inv:lower-level-consistency}), as passive elements.
	The element $e$ stays passive in $H_j$ until either (1) Scheduler($k$) converts $e$ into active, or moves it to level $k+1$, for some $k\geq j$; or (2) Scheduler($j-1$) converts $e$ into active.
	In both cases local view($i$) will be replaced, and $e$ will no longer be passive in $H_i$, which is a contradiction.
	The proof for dead elements is similar.
\end{proof}

Next we show that the local tightness invariant (Invariant~\ref{inv:local-tightness}) is maintained.

\begin{lemma}\label{lemma:local-tightness-maintained}
	The local tightness invariant (Invariant~\ref{inv:local-tightness}) is maintained.
\end{lemma}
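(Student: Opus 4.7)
I would prove this by induction over the atomic events that can change any local view: (i) preprocessing, (ii) the commit of an insertion or a deletion to some local view($k$), and (iii) a synchronization step, in which a finished Scheduler($k$) overwrites $H_i$ for all $i\le k$ via the copies $H^*_{k\to i}$ and updates local view($k+1$) using $H^*_{k\to k+1}$. After each event I must check that every set $s$ at level $l_k(s)\in[1,k]$ in every affected hierarchy still satisfies $w_k(s)>c_s/(1+\epsilon)$.

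\textbf{Easy cases.} The base case holds because preprocessing runs the static primal-dual algorithm of Section~\ref{sec:preliminaries}, which places every set at a positive level only when it is tight. For an insertion committed to local view($k$), the new element is passive with weight either $0$ (if $l_k(e)>0$) or $\min_{s:e\in s}\{c_s-w_k(s)\}$ (if $l_k(e)=0$); in both cases no set weight decreases and no set is newly created at a positive level, so tightness of the existing sets at level $\ge 1$ is preserved. A deletion relabels the element as dead while keeping its weight, so all set weights are unchanged. Hence updates cannot violate Invariant~\ref{inv:local-tightness}.

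\textbf{Synchronization, layers $i\le k$.} Property (b) of Rebuild($k$) (Property~\ref{property}) guarantees that every set $s$ at level $j\in[1,k+1]$ of $H^*_k$ satisfies $w^*_k(s)=\delta_k(s)+\sum_{e\in s\cap E^*_k(\le k+1)} w^*_k(e)>c_s/(1+\epsilon)$. When $H^*_{k\to i}$ is truncated to levels $[0,i]$ and its extra weight is reset to
\[
\delta^*_{k\to i}(s)=\delta_k(s)+\sum_{e\in s\cap E^*_k([i+1,k+1])} w^*_k(e),
\]
the new total weight of $s$ in local view($i$) is identically $w^*_k(s)$, so tightness is inherited directly from Property (b). This takes care of every $i\le k$.

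\textbf{Synchronization, layer $k+1$ (the main obstacle).} Updating local view($k+1$) is delicate because we splice $H^*_{k\to k+1}$ together with the surviving level $k+1$ of the old $H_{k+1}$ and overwrite the extra weight to $\delta_{k+1}(s)$. The first step I would carry out is to prove, as a separate consistency lemma, the extra-weight identity
\[
\delta_k(s)=\delta_{k+1}(s)+\sum_{e\in s\cap E_{k+1}(k+1)} w_{k+1}(e),
\]
using the invariants of Section~\ref{ssec:worst-case-algorithm}, the fact that Scheduler($k$) never modifies level $k+1$ of $H_{k+1}$, and Property (c) which ensures the two sides refer to the same snapshot of the update stream at the moment of synchronization. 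Combined with Invariants~\ref{inv:lower-level-consistency} and~\ref{inv:upper-level-consistency}, which imply that the elements populating $H^*_{k\to k+1}$ are disjoint from $E_{k+1}(k+1)$, a direct substitution then shows that the post-merge weight of any $s\in S_k(\le k)$ in the new local view($k+1$) equals $w^*_k(s)$, hence exceeds $c_s/(1+\epsilon)$ by Property (b); and for $s\in S_{k+1}(k+1)\setminus S_k(\le k)$, neither the level of $s$, nor the elements contributing to $w_{k+1}(s)$, nor $\delta_{k+1}(s)$ changes, so tightness follows from the inductive hypothesis. The principal technical hurdle is establishing the extra-weight identity in the presence of updates arriving during Rebuild($k$); Property (c) is precisely the tool that synchronizes the two local views at completion and makes the identity hold on the same update suffix.
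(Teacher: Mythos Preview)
Your proposal is correct and follows essentially the same decomposition as the paper's proof: easy cases (preprocessing and per-update commits), then synchronization split into layers $i\le k$ (handled via Property~\ref{property}(b) and the definition of truncated extra weights) versus layer $k+1$ (handled via the extra-weight identity $\delta_k(s)-\delta_{k+1}(s)=\sum_{e\in s\cap E_{k+1}(k+1)} w_{k+1}(e)$, after which the post-merge weight of $s$ equals $w^*_k(s)$).

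One refinement: your plan to derive the extra-weight identity from Property~\ref{property}(c) is not quite the right tool. Property~(c) concerns $H^*_k$ being up-to-date, but the identity relates the quantities $\delta_k(s)$, $\delta_{k+1}(s)$, and $E_{k+1}(k+1)$, all of which live in local views $k$ and $k+1$, not in $H^*_k$. The paper's argument instead considers the \emph{last} time $\delta_k(s)$ was modified --- necessarily when some Scheduler($j$) with $j\ge k$ finished --- and observes that at that moment the identity holds by construction of the extra weights during synchronization. Since then, the left-hand side is untouched (extra weights change only at synchronization events), and the right-hand side is untouched because any element newly inserted into $P_{k+1}(k+1)$ receives weight~$0$ and any element moved to $D_{k+1}(k+1)$ retains its former weight. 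This temporal argument is what you should substitute for the appeal to Property~(c); the remainder of your outline matches the paper.
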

\begin{proof}
	We show that after Scheduler($k$) replaces local view($i$) for all $i\leq k$ and updates local view($k+1$), all sets at levels $[1,i]$ of local view($i$) is tight, for all $i\leq k+1$.
	
	For local view($i$), where $i\leq k$, the invariant follows straightforwardly from the fact that every set at level $[1,k+1]$ of $H^*_k$ is tight, and the way we update the extra weight (in Section~\ref{ssec:synchronization}).
	Next we show that every set at level $[1,k+1]$ from the new local view($k+1$) is tight.
	
	Consider the point in time right before we update local view($k+1$).
	Fix any set $s$ at level $[1,k+1]$ of $H_{k+1}$.
	If $s\in S_{k+1}(k+1)$, then the weight $w_{k+1}(s)$ does not change when Scheduler($k$) updates local view($k+1$).
	If $s\in S_{k+1}(\leq k)$ then $s$ also appears in $S_k(\leq k)$ (by Invariant~\ref{inv:lower-level-consistency}) and $H^*_{k\rightarrow k+1}$.
	When Scheduler($k$) updates local view($k+1$), $w^*_{k\rightarrow k+1}(s)$ is updated as follows.
	\begin{itemize}
		\item It increases by
		$\sum_{e\in s\cap E_{k+1}(k+1)} w_{k+1}(e)$ when we include elements in $E_{k+1}(k+1)$ to $H^*_{k\rightarrow k+1}$.
		\item Then it decreases by $\delta_k(s) - \delta_{k+1}(s)$ when we update the extra weight of $s$ in $H^*_{k\rightarrow k+1}$.
	\end{itemize}
	
	We show that $\delta_k(s) - \delta_{k+1}(s) = \sum_{e\in s\cap E_{k+1}(k+1)} w_{k+1}(e)$, which implies that modifying $H^*_{k\rightarrow k+1}$ does not change $w^*_{k\rightarrow k+1}(s)$.
	Hence $s$ is tight in the new local view($k+1$) if $l^*_{k\rightarrow k+1}(s)\geq 1$.
	
	Consider the last point in time when $\delta_k(s)$ (the extra weight of $s$ in local view($k$)) is updated\footnote{Note that if $\delta_{k+1}(s)$ is updated then $\delta_k(s)$ will also be updated.}, which must be the time when Scheduler($j$) finishes, for some $j\geq k$.
	In this case the extra weights are updated (in Section~\ref{ssec:synchronization}) such that $\delta_k(s) - \delta_{k+1}(s) = \sum_{e\in s\cap E_{k+1}(k+1)} w_{k+1}(e)$.
	Since then, the LHS of the equality does not change.
	Furthermore, the RHS of the equality does not change either: if an element is inserted to $P_{k+1}(k+1)$, then it has weight $0$; if an element is deleted and moved to $D_{k+1}(k+1)$, then its contribution to $w_{k+1}(s)$ does not change.
	Hence we have $\delta_k(s) - \delta_{k+1}(s) = \sum_{e\in s\cap E_{k+1}(k+1)} w_{k+1}(e)$ right before we update local view($k+1$), and $s$ is tight if $l^*_{k\rightarrow k+1}(s)\geq 1$ after local view($k+1$) is updated.
\end{proof}

Finally, we show that the local element invariant (Invariant~\ref{inv:local-element}) is maintained.

\begin{lemma}\label{lemma:local-element-maintained}
	The local element invariant (Invariant~\ref{inv:local-element}) is maintained.
\end{lemma}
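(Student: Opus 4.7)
The plan is to induct on the sequence of events affecting local view($k$) for each fixed $k \in [L]$. There are essentially three event types: (i) external updates (insertions or deletions of an element), which can contribute at most $+1$ to $|P_k(\leq k)\cup D_k(\leq k)|$ and $-1$ to $|A_k(\leq k)|$ each; (ii) completion of some Scheduler($k'$) with $k' \geq k$, which wholly replaces local view($k$) by $H^*_{k'\rightarrow k}$; and (iii) completion of Scheduler($k-1$), which merges data from $H^*_{k-1\rightarrow k}$ into local view($k$). Initially $P_k(\leq k) = D_k(\leq k) = \emptyset$, so the invariant holds trivially.

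Events of type (ii) are the easy case: Property (d) of Rebuild($k'$), applied with $i = k$, gives $|P_k(\leq k)\cup D_k(\leq k)| < \epsilon \cdot |A_k(\leq k)|$ in the new local view($k$), a factor of two stronger than the invariant demands. The critical case is type (i) during the execution of Scheduler($k$). By the triggering rule, Scheduler($k$) starts exactly at the moment $|P_k(\leq k)\cup D_k(\leq k)| = \epsilon \cdot |A_k(\leq k)|$; let $a_0$ denote this value of $|A_k(\leq k)|$, so that $|E_k(\leq k)| \leq (1+\epsilon)a_0$ at the trigger. By Property (e), the total work for Scheduler($k$) across its $k+1$ copies of Rebuild($k$) is at most a constant times $\epsilon\cdot |E_k(\leq k)|\cdot \lambda$, and since the scheduler is allotted $\lambda$ units of work per arriving update, it completes within at most $N \leq \tfrac{\epsilon}{2}(1+\epsilon)\,a_0$ updates. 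In the worst case each update bumps the numerator up by one and the denominator down by one, yielding a ratio at most $(\epsilon a_0 + N)/(a_0 - N)$; a short calculation reduces the requirement that this stay below $2\epsilon$ to $(1+\epsilon)(1+2\epsilon) \leq 2$, which holds for $\epsilon < 0.1$. If Scheduler($k'$) with $k' > k$ instead completes first, then Scheduler($k$) is terminated, local view($k$) is reset by event (ii) to the margin-$\epsilon$ regime, and the analysis restarts from a clean slate.

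The main obstacle I expect is case (iii) — the partial merge performed when Scheduler($k-1$) finishes. Here the levels $<k$ of the new local view($k$) come from $H^*_{k-1\rightarrow k}$, which satisfies the strong bound from Property (d) at $i = k-1$, while the level-$k$ portion is inherited from the previous local view($k$). One has to combine these two pieces and show that the aggregate ratio $|P_k(\leq k)\cup D_k(\leq k)|/|A_k(\leq k)|$ stays strictly below $2\epsilon$. The delicate point is that $|A^*_{k-1}(\leq k-1)|$ in the merged hierarchy is not a priori comparable to $|A_k(\leq k-1)|$ in the old local view($k$), so one must invoke the consistency invariants (Invariants~\ref{inv:lower-level-consistency} and~\ref{inv:upper-level-consistency}) to transfer bounds between local views before the ratio estimate can be closed out. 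The numerical inequality in the Scheduler($k$) case is short; it is this interaction between concurrently running schedulers and the merging mechanic that demands the most care.
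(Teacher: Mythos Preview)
Your proposal is correct and matches the paper's proof almost exactly: the same three-case decomposition, the same numerical inequality for updates arriving while Scheduler($k$) runs, and the same appeal to the consistency invariants for the merge case. The only sharpening is that case (iii) is easier than you anticipate---rather than combining two pieces into an aggregate ratio estimate, the paper observes directly (via Invariants~\ref{inv:lower-level-consistency} and~\ref{inv:upper-level-consistency}) that the merge cannot increase $|P_k(\leq k)\cup D_k(\leq k)|$ (every passive/dead element in $H^*_{k-1\rightarrow k}$ was already passive/dead in the old local view($k$)) and cannot decrease $|A_k(\leq k)|$ (every element of the old $A_k(\leq k-1)$ survives as active in $H^*_{k-1\rightarrow k}$), so the ratio can only improve and no fresh computation is needed.
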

\begin{proof}
	By Property~\ref{property}(d), when Scheduler($k$) replaces the local view($i$) for every $i\leq k$, Invariant~\ref{inv:local-element} is maintained.
	Indeed, right after local view($i$) is replaced, we can guarantee that the rebuild-triggering event is not triggered, for all $i\leq k$.
	
	Recall that we also update local view($k+1$).
	However, by Invariant~\ref{inv:lower-level-consistency} and~\ref{inv:upper-level-consistency}, before local view($k+1$) is updated, every passive/dead element contained in $H^*_{k\rightarrow k+1}$ is also contained in $P_{k+1}(\leq k)\cup D_{k+1}(\leq k)$, since every update committed to local view($k$) and Scheduler($k$) is also committed to local view($k+1$).
	Hence updating local view($k+1$) does not increase $|P_{k+1}(\leq k+1)\cup D_{k+1}(\leq k+1)|$.
	Moreover, updating local view($k+1$) does not decrease $|A_{k+1}(\leq k+1)|$, because every active element in $A_{k+1}(\leq k)$ is also contained in $H^*_{k\rightarrow k+1}$.
	Hence Invariant~\ref{inv:local-element} will not be violated when Scheduler($k$) updates local view($k+1$).
	
	Finally, we show that Invariant~\ref{inv:local-element} is maintained in local view($k$) before Scheduler($k$) finishes.
	By Property~\ref{property}(d), Scheduler($k$) takes at most $\frac{\epsilon}{2}\cdot |E_k(\leq k)|\cdot \lambda$ time to construct the identical hierarchies, during which $x \leq \frac{\epsilon}{2}\cdot |E_k(\leq k)|$ updates arrive.
	
	Let $a$ be the size of $P_k(\leq k)\cup D_k(\leq k)$, and $b$ be the size of $A_k(\leq k)$ when Scheduler($k$) starts.
	By the definition of the rebuild triggering event, we have $a = \epsilon\cdot b$.
	Hence we have $x \leq \frac{\epsilon}{2}\cdot (a + b) =  \frac{\epsilon+\epsilon^2}{2}\cdot b$.
	Since each update decreases $|A_k(\leq k)|$ by at most one, before Scheduler($k$) finishes we have $|A_k(\leq k)| \geq b-x \geq (1-\frac{\epsilon+\epsilon^2}{2})\cdot b$.
	Since each update increases $|P_k(\leq k)\cup D_k(\leq k)|$ by at most one, before Scheduler($k$) finishes we have
	\begin{equation*}
	|P_k(\leq k)\cup D_k(\leq k)| \leq a+x \leq (\epsilon + \frac{\epsilon+\epsilon^2}{2})\cdot b
	\leq \frac{\epsilon + \frac{\epsilon+\epsilon^2}{2}}{1-\frac{\epsilon+\epsilon^2}{2}}\cdot |A_k(\leq k)|
	< 2\epsilon \cdot |A_k(\leq k)|.
	\end{equation*}	
	
	Note that while Scheduler($k$) is running in the background, we might update local view($k$) (when Scheduler($k-1$) finishes).
	However, as argued above, updating local view($k$) does not increase $|P_{k}(\leq k)\cup D_{k}(\leq k)|$ nor decrease $|A_{k}(\leq k)|$.
	Hence the above upper bound still holds.
\end{proof}

\subsubsection{Answering a Query}\label{ssec:answer-query}

We answer the query on the size of set cover as follows.
All slack sets in $S_1(0)$, i.e., sets $s$ at level $0$ of $H_1$ with $w_1(s) \leq \frac{c_s}{1+\epsilon}$, are not in the set cover. All other sets are in the set cover.
The correctness follows from the approximation ratio analysis in the next section.

\subsection{Consistent Hierarchy and Approximation Ratio}\label{sec:approximation}

We first show that while the local views are inconsistent, by Invariant~\ref{inv:lower-level-consistency}, there is a natural way of partitioning the sets into $L+1$ levels, which induces a consistent hierarchy.

\begin{lemma}\label{lemma:sets-partition}
	The collections $\{S_k(k)\}_{k\in [L]}$ form a partition of all sets into $L+1$ levels.
\end{lemma}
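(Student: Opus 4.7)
The plan is to use the Lower View Consistency invariant (Invariant~\ref{inv:lower-level-consistency}) as the main workhorse. Intuitively, this invariant says that as we walk from hierarchy $H_L$ downwards, a set either stays at the same level or disappears entirely from the hierarchy; it cannot reappear at a different location. So the natural ``canonical level'' of a set $s$ is the smallest $k$ at which $s$ still lives in $H_k$, and at that $k$ the set must sit exactly at the top level $k$ of $H_k$ (otherwise by lower view consistency it would also belong to $H_{k-1}$).

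The first step I would carry out is showing that every set $s\in\S$ appears in $H_L$, i.e.\ $s\in S_L(\leq L)$. This holds immediately after preprocessing by construction, and the only events that can change the collection of sets stored by a hierarchy are: an insertion/deletion of an element (which by the ``Handling An Update in Local View($k$)'' rules never adds or removes a \emph{set} from any $H_k$), and a synchronization event triggered by some Scheduler($k$), which by Property~\ref{property}(a) carries all sets in $S_k(\leq k)$ over to $H^*_k$ and merges them into local view($k{+}1$) without discarding any set already at level $k{+}1$ of $H_{k+1}$. So $H_L$ always contains all sets.

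The second step is to define, for every $s\in\S$, the index $k(s):=\min\{k\in[1,L]: s\in S_k(\leq k)\}$, which is well-defined by Step~1. I would then argue that $s\in S_{k(s)}(k(s))$. If $k(s)\geq 2$, then by minimality $s\notin S_{k(s)-1}(\leq k(s)-1)$, and Invariant~\ref{inv:lower-level-consistency} gives $S_{k(s)}(\leq k(s)-1)=S_{k(s)-1}(\leq k(s)-1)$, so $s$ cannot sit at any level $\leq k(s)-1$ of $H_{k(s)}$ and must therefore belong to $S_{k(s)}(k(s))$. If $k(s)=1$, then $s$ sits either at level $0$ or at level $1$ of $H_1$, which accounts for the level-$0$ piece of the partition (where by the paper's convention $S_0(0)$ is identified with $S_1(0)$, so that the indexing $k\in[L]=\{0,1,\dots,L\}$ gives exactly $L+1$ classes).

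For the third and final step I would verify disjointness. Suppose $s\in S_k(k)\cap S_{k'}(k')$ with $k<k'$. Then $s\in S_k(\leq k)$, and applying Invariant~\ref{inv:lower-level-consistency} successively at levels $k,k+1,\dots,k'-1$ yields $s\in S_{k'}(\leq k)$, contradicting $s\in S_{k'}(k')$ since $k'>k$. Combined with Step~2 this shows that $\{S_k(k)\}_{k\in[L]}$ is both exhaustive and pairwise disjoint, i.e.\ a partition. The only conceptual subtlety to flag is the level-$0$ boundary case (since no $H_0$ is actually maintained); my plan is to treat $S_1(0)$ as the level-$0$ class of the partition, which is consistent with Invariant~\ref{inv:lower-level-consistency} and with the ``$L+1$ levels'' count stated in the lemma. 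I do not anticipate further technical obstacles beyond cleanly bookkeeping this boundary.
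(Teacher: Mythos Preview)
Your proposal is correct and uses essentially the same ingredient as the paper, namely the Lower View Consistency invariant (Invariant~\ref{inv:lower-level-consistency}). The paper packages the argument as a short induction on $k$---showing $\{S_i(i)\}_{i\in[k]}$ partitions $S_k(\leq k)$, with the inductive step using $S_{k+1}(\leq k)=S_k(\leq k)$---whereas you unroll this into a direct argument via the canonical-level map $k(s)$; your Step~1 (arguing $S_L(\leq L)=\S$) and the level-$0$ bookkeeping are handled implicitly in the paper but made explicit in your write-up.
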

\begin{proof}
	We prove by induction on $k$ that $\{S_i(i)\}_{i\in [k]}$ is a partition of the sets $S_k(\leq k)$.
	
	The statement is trivially true for $k=1$.
	For the case $k+1$, we have $S_{k+1}(k+1) \cap S_{k+1}(\leq k) = \emptyset$.
	By Invariant~\ref{inv:lower-level-consistency}, we have $S_k(\leq k) = S_{k+1}(\leq k)$.
	By induction hypothesis $\{S_i(i)\}_{i\in [k]}$ is a partition of the sets in $S_{k+1}(\leq k)$.
	Hence $\{S_i(i)\}_{i\in [k+1]}$ is a partition of the sets in $S_{k+1}(\leq k+1)$.
\end{proof}

We show that at any point in time, there exists a hierarchy (with levels $[0,L]$) containing all elements and sets, which we refer to as the \emph{consistent hierarchy}, such that every set at level $[1,L]$ is tight, and the collection of sets at level $0$ is $S_1(0)$.
Moreover, set $s\in S(0) = S_1(0)$ is tight in the consistent hierarchy if and only if it is tight w.r.t. local view($1$).
Hence the collection of tight sets is a feasible set cover (which implies the correctness for answering a query in Section~\ref{ssec:answer-query})

We construct the consistent hierarchy as follows.
For all $i\in[1,L]$, let the sets and elements at level $i$ of the hierarchy be defined as follows:
\begin{equation*}
S(i) = S_i(i),\quad A(i) = A_i(i),\quad P(i) = P_i(i)\quad \text{and} \quad D(i) = D_i(i).
\end{equation*}

Let $S(0) = S_1(0)$, $A(0) = A_1(0)$, $P(0) = P_1(0)$ and $D(0) = D_1(0)$.
Let $E(i) = A(i)\cup P(i)\cup D(i)$.

We first show that the hierarchy is well defined, i.e., every set and element (that is not deleted) appears exactly once in the hierarchy.
By Lemma~\ref{lemma:sets-partition}, every set appears exactly once in the consistent hierarchy.
Let $l(s)$ (resp. $l(e)$) be the level of set $s$ (resp. element $e$) in the consistent hierarchy.
Note that for each $l(s) = i$, we have $l_i(s) = i$.
Consider any element $e$ that exists, e.g., inserted and not deleted.
Let $\max_{s: e\in s} \{l(s)\} = k$.
Hence $e$ exists in local view($k$) since all sets containing in $e$ appear in local view($k$).
Moreover, we have $e\in E_k(k) = E(k)$, which implies that $l(e) = \max_{s: e\in s} \{l(s)\}$. 
Moreover, (1) $e\notin E_i(i)$ for all $i>k$ since each element in $E_i(i)$ contains a set in $S_i(i)$, while the maximum level of sets containing $e$ is $k$; (2) since for all $i<k$, there exists some set containing $e$ that is not in $S_i(\leq i)$, $e$ does not appear in any of $\{E_i(i)\}_{i < k}$.
Hence $e$ appears exactly once (at level $k$) in the consistent hierarchy.

Since there is no extra weight attached to each set, the real weight $w(s)$ of a set $s$ in the consistent hierarchy is defined as the total weight of the elements it contains.
In the following, we call a set tight if it is tight w.r.t. the consistent hierarchy.

\begin{lemma}[Global Tightness]\label{lemma:consistent-hierarchy-tightness}
	Every set at level $[1,L]$ in the consistent hierarchy is tight.
	Set $s$ at level $0$ in the consistent hierarchy is tight if and only if it is tight w.r.t. local view($1$).
\end{lemma}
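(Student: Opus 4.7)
The plan is to reduce the lemma to the identity $w(s) = w_{l(s)}(s)$ for every set $s$ in the consistent hierarchy. Once this identity is in hand, the first statement follows immediately from Invariant~\ref{inv:local-tightness} applied to local view($k$) with $k = l(s) \geq 1$, while the second follows from the fact that every $s \in S(0) = S_1(0)$ satisfies $w(s) = w_1(s)$, so $s$ is tight in the consistent hierarchy exactly when it is tight with respect to local view($1$).

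My first step is to establish that the level partitions across local views are mutually compatible with the consistent hierarchy. Iterating Invariant~\ref{inv:lower-level-consistency} gives $S_k(\leq j) = S_j(\leq j)$ for all $j \leq k$, whence a telescoping set-difference argument yields $S_k(j) = S_j(j) = S(j)$ for all $j \leq k$. Combined with the fact that an element's level is the maximum level of the sets containing it, and that the active/passive/dead classification of any element is preserved across local views by Invariants~\ref{inv:lower-level-consistency} and~\ref{inv:upper-level-consistency}, this lets me conclude $l_k(e) = l(e)$ and $w_k(e) = w(e)$ whenever $l(e) \leq k$.

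The crux of the argument will be the telescoping identity
\[
\delta_k(s) = \sum_{j=k+1}^{L} \sum_{e \in s \cap E(j)} w(e),
\]
valid for every set $s$ present in local view($k$). I plan to prove this by induction over the sequence of events. The base case is preprocessing, where all views are consistent by construction. Insertions and deletions that reach local view($k$) change both sides by equal amounts, while those ignored by local view($k$) do not affect the right side either since the corresponding elements fall outside $S_k(\leq k)$. Each scheduler synchronization maintains the identity by design of the extra-weight update rule in Section~\ref{ssec:synchronization}, in which the rebuilt hierarchy's extra weight is defined as $\delta_k(s)$ plus the weight $s$ receives from the levels that are being stripped off: this is exactly what is needed to keep both sides of the identity in sync. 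Once the identity holds, splitting $w(s) = \sum_j \sum_{e \in s \cap E(j)} w(e)$ at $j = k = l(s)$ matches the low-level part with $\sum_{e \in s \cap E_k(\leq k)} w_k(e)$ and the high-level part with $\delta_k(s)$, yielding $w(s) = w_k(s)$.

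The main obstacle I anticipate lies in the case where some Scheduler($j$) with $j > k$ finishes: this does not modify local view($k$) at all, yet it may reshuffle elements in the consistent hierarchy at levels $[1, j]$ because the level $S_i(i) = S(i)$ at which each set lives can change for $i \leq j$. I will need to show that this reshuffling preserves the aggregate weight that $s$ receives from levels $[k+1, L]$ in the consistent hierarchy, which should follow from Property~\ref{property}(b) stating that every set's extra weight is preserved throughout Rebuild($j$), together with the accounting done by Scheduler($j$) when it updates local view($j+1$) -- essentially the same computation already carried out in the proof of Lemma~\ref{lemma:local-tightness-maintained}.
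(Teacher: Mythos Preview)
Your high-level strategy --- reduce to $w(s)=w_{l(s)}(s)$ via the telescoping identity $\delta_k(s)=\sum_{j>k}\sum_{e\in s\cap E(j)}w(e)$ --- is exactly the paper's approach. But two of your intermediate steps are wrong.

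First, the claim that iterating Invariant~\ref{inv:lower-level-consistency} yields $S_k(\leq j)=S_j(\leq j)$ for all $j\leq k$ is false. The invariant only says $S_{i+1}(\leq i)=S_i(\leq i)$; it gives no control over $S_{i+1}(\leq j)$ for $j<i$. Concretely: when Scheduler($k-1$) finishes and places some set $s$ at level $k$ in $H^*_{k-1}$, local views $1,\ldots,k$ are updated but local view($k+1$) is not. Then $s\in S_{k+1}(j)$ for its old level $j<k$, while $s\notin S_j(\leq j)$ at all (it has been pushed above level $j$ there). So your telescoped conclusion $S_k(j)=S(j)$ fails, and with it the claim $l_k(e)=l(e)$ whenever $l(e)\leq k$. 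Fortunately you do not need this: for $s$ with $l(s)=k$, every $e\in s$ present in local view($k$) is forced to have $l_k(e)=k$ (since $l_k(e)\geq l_k(s)=k$ and $H_k$ has no higher level), and $E(k)=E_k(k)$ by definition. That is all the paper uses.

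Second, your ``main obstacle'' paragraph has the direction reversed. Scheduler($j$) with $j>k$ \emph{does} replace local view($k$) entirely (and hence resets $\delta_k(s)$); it is Scheduler($j$) with $j\leq k-2$ that leaves local view($k$) untouched. But those small-$j$ schedulers also leave $E(j')=E_{j'}(j')$ unchanged for every $j'>k$, so the right-hand side of your identity is unaffected and there is nothing to check. The genuinely nontrivial case is $j\geq k$, and there the paper argues exactly as in Lemma~\ref{lemma:local-tightness-maintained}: at the moment Scheduler($j$) finishes, the extra weights are set so that $\delta_k(s)-\delta_{k+1}(s)=\sum_{e\in s\cap E_{k+1}(k+1)}w_{k+1}(e)$, and neither side changes until $\delta_k(s)$ is next updated.
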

\begin{proof}
	Consider any $s \in S(i)$, where $i\geq 1$, in the consistent hierarchy.
	By the local tightness invariant (Invariant~\ref{inv:local-tightness}), from local view($i$), the weight $s$ receives from $E_i(i)$ plus $\delta_i(s)$ is more than $\frac{c_s}{1+\epsilon}$.
	Since we define $E(i) = E_i(i)$, it suffices to show that the extra weight $\delta_i(s)$ equals to the weight $s$ receives from levels $[i+1,L]$ in the consistent hierarchy.
	
	To prove that, we show that for every $k\in [i,L-1]$ (note that $\delta_L(s) = 0$),
	\begin{equation*}
	\delta_k(s) - \delta_{k+1}(s) = \sum_{e\in s\cap E_{k+1}(k+1)}w(e).
	\end{equation*}
	
	Consider the last point in time when $\delta_k(s)$ is changed, which must be the time when Scheduler($j$) finishes, for some $j\geq k$.
	When $\delta_k(s)$ changes, our algorithm guarantees that $\delta_k(s) - \delta_{k+1}(s) = \sum_{e\in s\cap E_{k+1}(k+1)}w_{k+1}(e)$.
	Since then no Rebuild($j$) finishes, for all $j\geq k$, and both the LHS and RHS of the equality does not change, which concludes the proof.
	
	Using the same argument, for every set $s\in S(0) = S_1(0)$, the weight $s$ receives from levels $[i+1,L]$ in the consistent hierarchy is $\delta_1(s)$.
	Since $E(0) = E_1(0)$, i.e., every element in level $0$ of local view($1$) is preserved in the consistent hierarchy, set $s\in S(0)$ is tight in the consistent hierarchy if and only if it is tight w.r.t. local view($1$). 
\end{proof}

Lemma~\ref{lemma:consistent-hierarchy-tightness} implies that every element $e$ in the consistent hierarchy contains at least one tight set: if $l(e) \geq 1$ then it contains a set at level $\geq 1$, which is tight; if $l(e) = 0$ then it contains a set $s\in S_1(0)$ with weight $c_s$, which is tight.  
Hence the collection of tight sets of the consistent hierarchy is a feasible set cover.
Next we analyze the approximation ratio.

\medskip

We show that in the consistent hierarchy, the number of dead elements is bounded.
We prove the following stronger statement, which upper bounds the total number of passive/dead elements.

\begin{lemma}\label{lemma:consistent-hierarchy-element-fraction}
	In the consistent hierarchy, for all $i\in[L]$ we have
	\begin{equation*}
	|P(\leq i)\cup D(\leq i)| \leq 2\epsilon\cdot |A(\leq i)|.
	\end{equation*}
\end{lemma}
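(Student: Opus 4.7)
The plan is to reduce this to Invariant~\ref{inv:local-element} applied at index $i$. I aim to establish two intermediate claims: (a) $P(\leq i) \cup D(\leq i) \subseteq P_i(\leq i) \cup D_i(\leq i)$, and (b) $E(\leq i) = E_i(\leq i)$. Since active elements are disjoint from passive/dead elements both in the consistent hierarchy and in local view($i$), claims (a) and (b) together yield $|A(\leq i)| \geq |A_i(\leq i)|$ by a simple counting argument, and then Invariant~\ref{inv:local-element} gives
\[
|P(\leq i) \cup D(\leq i)| \;\leq\; |P_i(\leq i) \cup D_i(\leq i)| \;<\; 2\epsilon \cdot |A_i(\leq i)| \;\leq\; 2\epsilon \cdot |A(\leq i)|,
\]
as required.

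For the forward direction of (b), the starting point is to unfold Invariant~\ref{inv:lower-level-consistency} by induction on $j$ to obtain $S_j(k) = S_i(k)$ for every $k \leq j \leq i$; in particular, each set in $S_j(\leq j)$ sits at the same level in local view($j$) and in local view($i$). Consequently, for any element $e \in E_j(j)$ with $j \leq i$, every set containing $e$ lies in $S_j(\leq j) = S_i(\leq j)$ at matching levels in both local views, so $l_i(e) = l_j(e) = j$, and hence $e \in E_i(j) \subseteq E_i(\leq i)$. This proves $E(\leq i) \subseteq E_i(\leq i)$.

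For (a), I invoke Invariant~\ref{inv:upper-level-consistency} in its natural direction: passive (resp.\ dead) in a lower-indexed view propagates to every higher-indexed view. If $e \in P(j) = P_j(j)$ with $j \leq i$, then $e$ is passive in local view($j$); the invariant (trivially if $j = i$, and by direct application if $j < i$) then forces $e$ to be passive in local view($i$) as well. Combined with the fact $l_i(e) = j$ derived in the previous paragraph, this yields $e \in P_i(\leq i)$. The argument for $D(\leq i) \subseteq D_i(\leq i)$ is identical.

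The most delicate step I expect is the reverse direction $E_i(\leq i) \subseteq E(\leq i)$ of (b). The subtle issue is that an element $e$ currently at level $j \leq i$ of local view($i$) need not have been installed at level $j$ of local view($j$) at insertion time: the sets containing $e$ might have sat at levels greater than $j$ in local view($i$) back then, causing the insertion to be ignored in local view($j$). Establishing the containment therefore calls for a careful induction along the algorithm's timeline, arguing that whenever some Scheduler($k$) with $k \geq j$ finishes its synchronization step, the resulting local view($j$) correctly contains every element currently at levels $\leq j$ of local view($i$) for all $i \geq j$. This hinges on the guarantees captured by Property~\ref{property}(a) for Rebuild($k$) and on how Scheduler($k$)'s synchronization dispatches the hierarchies $H^*_{k \rightarrow j}$ to the lower local views.
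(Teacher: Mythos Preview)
Your claim (b), that $E(\leq i) = E_i(\leq i)$, is false in general, and this is exactly the scenario the paper flags in its proof. A dead element $e$ can sit in $D_i(\leq i)$ at some level $l_i(e) < i$ while a lower Scheduler (say Scheduler($i-2$)) has already cleaned $e$ out of every local view $j \leq i-1$; then $e\notin D_j(j)$ for any $j\leq i-1$ and $e\notin D_i(i)$, so $e\notin D(\leq i)$, and since $e$ has been deleted it cannot lie in $A(\leq i)\cup P(\leq i)$ either. Thus $e\in E_i(\leq i)\setminus E(\leq i)$, and the reverse containment of (b) fails. Your forward-direction argument is also off: Invariant~\ref{inv:lower-level-consistency} only asserts $S_{j+1}(\leq j)=S_j(\leq j)$ as \emph{collections}; it does not give level-by-level equality $S_j(k)=S_i(k)$, and in fact after a Scheduler finishes the local views above $k+1$ can disagree with the lower ones about which level each set occupies.

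The paper avoids all of this by not going through (b). Instead of comparing the full $E$-collections, it proves the set inclusion $A_i(\leq i)\subseteq A(\leq i)$ directly: if $e\in A_i(\leq i)$ then $e$ currently exists and every set containing it lies in $S_i(\leq i)=S(\leq i)$ (by Lemma~\ref{lemma:sets-partition}), so $l(e)\leq i$ in the consistent hierarchy; and the contrapositive of Invariant~\ref{inv:upper-level-consistency} forces $e$ to be active there, since if it were passive or dead in local view($l(e)$) it would already be passive or dead in local view($i$). This one-line containment, together with your (a), finishes the proof without any need to match up dead elements across views.
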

\begin{proof}
	By the local element invariant (Invariant~\ref{inv:local-element}), we have $|P_i(\leq i)\cup D_i(\leq i)| \leq 2\epsilon\cdot |A_i(\leq i)|$.
	In the following we show that (1) $P(\leq i)\cup D(\leq i)\subseteq P_i(\leq i)\cup D_i(\leq i)$; (2) $A_i(\leq i)\subseteq A(\leq i)$.
		
	We first prove (1).
	Note that every $e\in P(\leq i)\cup D(\leq i)$ must appear as a passive or dead element in $P_j(j)\cup D_j(j)$ for some $j\leq i$.
	By Invariant~\ref{inv:upper-level-consistency}, $e$ is also contained in $P_i(\leq i)\cup D_i(\leq i)$.
	We remark that $P(\leq i)\cup D(\leq i)$ can be a proper subset of $P_i(\leq i)\cup D_i(\leq i)$.
	For example, when some dead element in $D_i(\leq i)$ is cleaned up by Scheduler($i-2$), it is still contained in $P_i(\leq i)\cup D_i(\leq i)$, but not in $P(\leq i)\cup D(\leq i)$.
	
	Statement (2) follows immediately from the fact that every active element appears exactly once (as an active element) in the consistent hierarchy.
	Observe that if $e\in A_i(\leq i)$, then all sets containing $e$ appear in $S(\leq i)$ in the consistent hierarchy, which implies $e\in A(\leq i)$.
	We remark that $A_i(\leq i)$ can be a proper subset of $A(\leq i)$.
	For example, when some passive element in $P_{i-2}(\leq i-2)$ is converted to active by Scheduler($i-2$), it will be contained in $A(\leq i)$, but not in $A_i(\leq i)$.
\end{proof}

Following the proof of~\cite[Lemma 4.8]{focs/BhattacharyaHN19}, Lemma~\ref{lemma:consistent-hierarchy-element-fraction} implies that the total weight of dead elements is at most $O(\epsilon)$ times the total weight of active elements.
This implies the approximation ratio.

\subsection{A Simple Rebuild Algorithm}\label{ssec:rebuild-and-update}

In this section we present a simple rebuild algorithm for constructing the hierarchy $H^*_k$ in Scheduler($k$) that works with $\lambda = c\cdot \frac{1}{\epsilon}\cdot f\cdot L^2$, for some sufficiently large constant $c$.

Recall the static rebuild algorithm from~\cite{focs/BhattacharyaHN19} as follows.
Roughly speaking, the algorithm puts all elements to level $k+1$ (line 1-7 in Algorithm~\ref{alg:rebuild_wc}), and then gradually moves the elements to lower levels until every set becomes tight or reaches level $0$  (line 8-17 in Algorithm~\ref{alg:rebuild_wc}).
Note that since the rebuild is done in the background, in the first phase (putting elements to level $k+1$), we need to make a copy for every active/passive element in $E_k(\leq k)$.
We also make a copy for each set in $S_k(\leq k)$.
Note that each set will be copied only once.
Moreover, when it is copied, its weight and extra weight are also copied.

In the following, we use $w^*_k(e)$ and $l^*_k(e)$ to denote the weight and level of element $e$ in $H^*_k$. The other notations, e.g., $w^*_k(s)$, $\delta^*_k(s)$, $l^*_k(s)$ and $A^*_k(i)$, are defined similarly.
We call a set $s$ tight if $w^*_k(s) > \frac{c_s}{1+\epsilon}$.
We call an element \emph{tight} if it is contained in at least one tight set; \emph{slack} otherwise.

Whenever we change the weight of an element we also update the weight of sets containing it.
Specifically, if we change $w^*_k(e)$ from $(1+\epsilon)^{-i}$ to $(1+\epsilon)^{-j}$, then we increase $w^*_k(s)$ by $(1+\epsilon)^{-j} - (1+\epsilon)^{-i}$ for each $s$ containing $e$.
For ease of presentation, we do not state it explicitly in the pseudocode.

Note that every element that is moved to level $k+1$ either has weight $(1+\epsilon)^{-(k+1)}$ (which becomes active), or is tight (and stays passive at level $k+1$).
In Round-$i$, where $i = k+1,k,\ldots,1$, we identify the sets that become tight, and move all slack sets and elements to level $i-1$.
Since each slack set has weight at most $\frac{c_s}{1+\epsilon}$ and slack elements increase their weights by a factor of $(1+\epsilon)$ when moved one level down, we can guarantee that each set has weight at most $c_s$, and all sets at levels $[1,k+1]$ are tight. 
Additionally, there is no dead element, and passive elements only appear at level $k+1$.

\begin{algorithm}[H]
	\caption{\textsf{Rebuild($k$)}}
	\label{alg:rebuild_wc}
	\begin{algorithmic}[1]
		\State Initialize $E^*_k \leftarrow \emptyset$ and $S^*_k \leftarrow \emptyset$.
		\For{each $e\in A_k(\leq k)$}
		\State $E^*_k \leftarrow E^*_k \cup \{e\}$, $S^*_k \leftarrow S^*_k\cup e$, $w_k^*(e) \leftarrow (1+\epsilon)^{-(k+1)}$.
		\EndFor
		\For{each $e\in P_k(\leq k)$}
		\State $E^*_k \leftarrow E^*_k \cup \{e\}$, $S^*_k \leftarrow S^*_k\cup e$, $w_k^*(e) \leftarrow \min\{ (1+\epsilon)^{-(k+1)}, \min_{s: e\in s}\{c_s - w_k^*(s) \} \}$.
		\EndFor
		\For{each $e\in E^*_k$ and $s\in S^*_k$}
		\State $l^*_k(e) \leftarrow \perp$, $l^*_k(s) \leftarrow \perp$.
		\Comment{their levels are undecided}
		\EndFor
		\For{$i$ from $k+1$ to $1$ (Round-$i$)}
		\For{each $s\in S^*_k$} \Comment{settle the tight sets at level $i$}
		\If{$w^*_k(s) > \frac{c_s}{1+\epsilon}$}
		\State $l^*_k(s) \leftarrow i$, $S^*_k \leftarrow S^*_k\setminus\{s\}$.
		\For{each $e \in s\cap E^*_k$}
		\State $l^*_k(e) \leftarrow i$, $E^*_k \leftarrow E^*_k\setminus\{e\}$.
		\EndFor
		\EndIf
		\EndFor
		\For{each $e\in E^*_k$}
		\Comment{move slack elements one level down}
		\State $w^*_k(e) \leftarrow (1+\epsilon)^{-(i-1)}$, and update $w^*_k(s)$ accordingly for all $s$ containing $e$.
		\EndFor
		\EndFor
		\For{each $s\in S^*_k$}
		\State $l^*_k(s) \leftarrow 0$.
		\EndFor
	\end{algorithmic}
\end{algorithm}

While $H^*_k$ are being rebuilt by Scheduler($k$) in the background, each update that arrives will be handled in both local view($k$) and Scheduler($k$).
Recall that we have already described the way an update is handled in local view($k$).
Next we describe how to handle an update in Scheduler($k$).

\paragraph{Handling An Update in Rebuild($k$).}
Consider any update that arrives in Round-$i$, for $i\in [k+1]$ (for ease of argument, we consider line 1-7 part of Round-$(k+1)$).

Suppose element $e$ is deleted:
\begin{itemize}
	\item If $l^*_k(e) = \perp$, i.e., we have not decided the final level of $e$, then $e$ will simply be removed, which also decreases the weight of the sets containing $e$.
	\item If $e$ has already settled its level (at some $j\in[i,k+1]$), then we convert it into a dead element (with the same weight) at level $j$.
\end{itemize}

Suppose element $e$ is inserted:
\begin{itemize}
	\item If $l^*_k(s) = \perp$ for all $s$ containing $e$, then we set the weight of $e$ as
	\begin{equation*}
	w^*_k(e) \leftarrow \min \left\{ (1+\epsilon)^{-i}, \min_{s: e\in s}\{ c_s-w^*_k(s) \} \right\}.
	\end{equation*}
	That is, imagine that we increase $w^*_k(e)$ gradually until either (1) some set $s$ containing $e$ has weight $c_s$, or (2) $w^*_k(e) = (1+\epsilon)^{-i}$.
	In case (1), $e$ will be passive and at least one set containing $e$ becomes tight in Round-$i$.
	Hence we set $l^*_k(e) \leftarrow i$ when we identify the tight sets containing $e$.
	In case (2), $e$ is active, and its level remains undecided.
	\item If there exists $l^*_k(s) \neq \perp$ for some $s$ containing $e$, then we set $l^*_k(e) \leftarrow \max_{s: e\in s \wedge l^*_k(s)\neq \perp} \{l^*_k(s)\}$ and $e$ is passive.
	If $l^*_k(e) \geq 1$ then we set $w^*_k(e) \leftarrow 0$.
	Otherwise $l^*_k(e) = 0$, which means that $e$ arrives in Round-$0$ and all sets containing $e$ are slack. Then we set $w^*_k(e) \leftarrow \min_{s: e\in s}\{ c_s-w^*_k(s) \}$, and identify the sets containing $e$ that become tight and mark them as tight.
\end{itemize}

Note that an update arriving when Scheduler($k$) is running does not necessarily introduce a passive or dead element.
The element $e$ that is inserted or deleted may become active, or be removed immediately. 
However, if $e$ arrives in Round-$i$ and becomes passive or dead, then $l^*_k(e) \geq i$.

\paragraph{Properties of $H^*_k$.}
It is easy to check that Property~\ref{property}(a) (b) and (c) are satisfied by the construction.
Moreover, all elements in $D^*_k(\leq k+1)$ and $P^*_k(\leq k)$ are from newly arrived deletions and insertions.
Next, we show that Property~\ref{property}(d) is satisfied, i.e., $|P^*_k(\leq i)\cup D^*_k(\leq i)| < \epsilon\cdot |A^*_k(\leq i)|$.
We remark that the statement also holds for $k+1$ but we do not need it.

\begin{lemma}\label{lemma:local-element-when-scheduler-finishes}
	When Rebuild($k$) finishes, we have $|P^*_k(\leq i)\cup D^*_k(\leq i)| < \epsilon\cdot |A^*_k(\leq i)|$ for all $i \leq k$.
\end{lemma}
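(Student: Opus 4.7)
The plan is to show that in the output hierarchy $H^*_k$, every passive or dead element at levels $\leq k$ must originate from an update arriving concurrently with Rebuild($k$), and that the number of such updates is strictly smaller than $\epsilon$ times the active population at the same levels.

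First I will observe that a hypothetical update-free execution of Rebuild($k$) produces no dead elements and places all passive elements at level $k{+}1$; this is immediate from Algorithm~\ref{alg:rebuild_wc}, since in the static case every element of $A_k(\leq k) \cup P_k(\leq k)$ either becomes active or is passive at level $k{+}1$. Consequently every element of $P^*_k(\leq i) \cup D^*_k(\leq i)$ in the actual (dynamic) run is caused by an update arriving during Rebuild($k$). Moreover, by inspecting the update-handling rules in Section~\ref{ssec:rebuild-and-update}, an update arriving in Round-$j$ can only generate a passive or dead element at some level $\geq j$, since all sets settled by the time of arrival are at level $\geq j$. Hence, letting $b_i$ denote the number of updates arriving during Rounds $0, 1, \ldots, i$, we have $|P^*_k(\leq i) \cup D^*_k(\leq i)| \leq b_i$.

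Next I will bound $b_i$ via the time Scheduler($k$) spends in Rounds $0$ through $i$. Writing $n_j$ for the number of elements of $E_k(\leq k)$ that settle at levels $\leq j$ in the static execution, the working collection $E^*_k$ at the start of Round-$j$ contains $n_j$ elements, so Round-$j$ performs $O(f n_j)$ work on element movement plus lower-order work for iterating over unsettled sets. Summed over $j \in [0, i]$, the total is $O(f L \cdot n_i)$, which after dividing by $\lambda = c\, f L^2 / \epsilon$ yields $b_i \leq (\epsilon/2) \cdot n_i$ for a sufficiently large absolute constant $c$. I will then relate $n_i$ to $|A^*_k(\leq i)|$: an element counted by $n_i$ is either still active at level $\leq i$ in $H^*_k$, or it was deleted during the rebuild and is charged to one of the $b_i$ updates (deletions in later rounds of not-yet-settled elements simply remove them rather than producing a dead element at level $\leq i$). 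This gives $|A^*_k(\leq i)| \geq n_i - b_i$, and substituting into $b_i \leq (\epsilon/2) n_i$ yields $b_i < \epsilon \cdot |A^*_k(\leq i)|$ whenever $n_i > 0$.

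The hard part will be the degenerate case $n_i = 0$, in which the counting argument above is vacuous. Here I will argue structurally that $|P^*_k(\leq i) \cup D^*_k(\leq i)| = 0$ as well: any passive or dead element at level $j \leq i$ in $H^*_k$ must live inside a set $s$ that the rebuild settled at level $j$, and for such a set the extra weight $\delta^*_k(s)$ alone cannot exceed $c_s/(1+\epsilon)$ (otherwise $s$ would already be tight in Round-$k{+}1$ and settle at level $k{+}1$), so $s$ must contain an element of $E_k(\leq k)$ that settles at level $\leq j \leq i$, contradicting $n_i = 0$. Combining the two cases gives the strict inequality claimed in the lemma.
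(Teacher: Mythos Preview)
Your argument has a genuine gap: you work with the \emph{static} quantity $n_j$ (the number of original elements that would settle at level $\leq j$ if no updates arrived), but both of the places where you use it break down once updates perturb the execution. First, the working collection at the start of Round-$j$ in the \emph{actual} run need not have size $n_j$: a single deletion during an earlier round can lighten a set so that it fails to become tight until a much lower level, pushing many original elements into the working set at Round-$j$ even though statically none would be there. Conversely, a single insertion can make a set tight early, emptying the working set below that level. So the time bound $O(fLn_i)$ for Rounds $[0,i]$ is not justified. Second, and for the same reason, the inequality $|A^*_k(\leq i)|\geq n_i-b_i$ fails: an element counted by $n_i$ can end up active at a level strictly above $i$ in $H^*_k$ because an insertion during some Round-$j>i$ made a containing set tight there; this is not a deletion, and it is not an update during Rounds $[0,i]$, so it is not charged to $b_i$. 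A concrete scenario: all $100$ original elements lie in one set $s$ that statically becomes tight at level $5$, but one insertion during Round-$10$ tips $s$ tight at level $10$; then $n_5=100$, $b_5=0$, yet $|A^*_k(\leq 5)|=0$.

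The paper avoids this by never comparing to a static run. It bounds the time for Rounds $[0,i]$ directly by $O(fL\cdot|E^*_k(\leq i)|)$, using the observation that every element scanned in Round-$j$ (in the actual execution) either ends up in $E^*_k(\leq j)$ or is later removed; this yields $|P^*_k(\leq i)\cup D^*_k(\leq i)|\leq \tfrac{\epsilon}{3}\,|E^*_k(\leq i)|$, and since $E^*_k(\leq i)=A^*_k(\leq i)\cup P^*_k(\leq i)\cup D^*_k(\leq i)$ one can rearrange to get the claim. This self-referential inequality in the \emph{final} count is the key device, and it also makes your separate treatment of the degenerate case unnecessary (that argument, incidentally, has the same flaw: ``$n_i=0$'' says nothing about whether elements settle at level $\leq i$ dynamically).
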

\begin{proof}
	Let $E^*_k(i) = A^*_k(i)\cup P^*_k(i)\cup D^*_k(i)$ be the elements at level $i$ of $H^*_k$.
	We first show that for every $i\in [k]$, it takes $(i+1)\cdot (|S^*_k(\leq i)|+f\cdot |E^*_k(\leq i)|) = O(f\cdot L\cdot |E^*_k(\leq i)|)$ time to construct the levels $[0,i]$ (recall that every element is contained in at most $f$ sets).
	To construct each level $i$, we first scan through the collection of slack sets and elements (to identify the tight ones), and then move every slack element one level down (and update the set weights).
	Since each slack set/element that is scanned in Round-$i$ eventually settles at level at most $i$, it takes $O(|S^*_k(\leq i)|+f\cdot |E^*_k(\leq i)|) = O(f\cdot |E^*_k(\leq i)|)$ time to construct level $i$.
	Since there are $i+1$ levels on or below $i$, the claim follows.
	
	Recall that Scheduler($k$) runs $k+1 = O(L)$ copies of Rebuild($k$) simultaneously. Hence it takes $O(f\cdot L^2\cdot |E^*_k(\leq i)|)$ total time to construct the levels on or below $i$.
	
	Observe that every $e\in P^*_k(\leq i)\cup D^*_k(\leq i)$ comes from an update that arrives after Round-$(i+1)$ finishes: suppose otherwise, i.e., $e$ arrives in Round-$j$ for some $j\geq i+1$, then we have $l^*_k(e) \geq i+1$, which is a contradiction. 
	
	\begin{claim}\label{claim:number-of-arriving-updates}
		At most $\frac{\epsilon}{3}\cdot |E^*_k(\leq i)|$ updates arrive while Scheduler($k$) is constructing levels $[0,i]$.
	\end{claim}
	\begin{proof}
		Since it takes $O(f\cdot L^2 \cdot |E^*_k(\leq i)|)$ time to construct levels on or below $i$ and updates arrive every $c\cdot\frac{1}{\epsilon} f\cdot L^2$ time (for some sufficiently large constant $c$), we can assume that at most $\frac{\epsilon}{3}\cdot |E^*_k(\leq i)|$ updates arrive when levels $[0,i]$ are being constructed.
	\end{proof}
		
	Since each update increases $|P^*_k(\leq i)\cup D^*_k(\leq i)|$ by at most one. We have
	\begin{equation*}
	|P^*_k(\leq i)\cup D^*_k(\leq i)| \leq \frac{\epsilon}{3}\cdot |E^*_k(\leq i)| =
	\frac{\epsilon}{3}\cdot |A^*_k(\leq i)\cup P^*_k(\leq i) \cup D^*_k(\leq i)|.
	\end{equation*}
	
	Reordering the inequality, we have $|P^*_k(\leq i)\cup D^*_k(\leq i)| \leq \frac{\epsilon}{3-\epsilon}\cdot |A^*_k(\leq i)| < \epsilon \cdot |A^*_k(\leq i)|$.
\end{proof}

Note that if $x$ updates arrive while Scheduler($k$) is running in the background, then we have $|E^*_k(\leq k)| \leq |E_k(\leq k)| + x$, where $|E_k(\leq k)|$ denote the number of elements in $H_k$ when Scheduler($k$) starts.
Hence Claim~\ref{claim:number-of-arriving-updates} implies that at most
\begin{equation*}
\frac{\epsilon}{3}\cdot |E^*_k(\leq i)|\leq \frac{\epsilon}{3(1-\frac{\epsilon}{3})}\cdot |E_k(\leq k)|\leq \frac{\epsilon}{2}\cdot |E_k(\leq k)|
\end{equation*}
updates arrive while Scheduler($k$) is running in the background, which implies Property~\ref{property}(e).

Applying the rebuild algorithm to the general framework, we obtain an algorithm with worst case $O(\frac{f}{\epsilon^4}\log^3 (Cm))$ update time.

\subsection{Improving the Worst Case Update Time} \label{sec:improved-algorithm}

In this section we present an improved rebuild algorithm that works with $\lambda = c\cdot\frac{1}{\epsilon}\cdot f\cdot L$, for some sufficiently large constant $c$, which implies a worst case update time of $O(\frac{f}{\epsilon^3}\log^2 (Cm))$.
Specifically, we present an efficient Rebuild($k$) subroutine that runs in $O(f\cdot |E_k(\leq k)|+k)$ time.
We develop our algorithm based on the Fix-Levels($k$) algorithm from~\cite{focs/BhattacharyaHN19}.
Note that we can assume without loss of generality that $O(f\cdot |E_k(\leq k)|+k) = O(f\cdot |E_k(\leq k)|)$, since otherwise we can finish the rebuild in $O(L)$ time, before the next update arrives. 
The main challenge is to commit the updates to Scheduler($k$) when the new hierarchy is being rebuilt, while guaranteeing Property~\ref{property}(d).

\subsubsection{Static Efficient Rebuild($k$)}

Recall the rebuild algorithm from~\cite{focs/BhattacharyaHN19} as follows (refer to Algorithm~\ref{alg:efficient_rebuild}).
The first phase of the new algorithm is the same as Algorithm~\ref{alg:rebuild_wc}:
we make a copy of every active and passive element in $H_k$, and put them at level $k+1$ (with appropriate weight).
Then we identify the tight sets and elements, which stay at level $k+1$.
However, instead of moving the slack elements down level by level (in which case an element might get scanned $O(k)$ times), we scan the elements in a specific order such that every element that gets scanned will have its level decided immediately.
By doing so, the update time spent on each element is $O(f)$, which implies an $O(f\cdot |E_k(\leq k)|)$ total rebuild time.

Specifically, suppose $E^*_k$ contains the slack elements (each of which has weight $(1+\epsilon)^{-(k+1)}$), we define the target level $l_T(s)$ of every slack set $s$ as follows.
If $s\cap E^*_k = \emptyset$ then $l_T(s) = 0$; otherwise $l_T(s)$ is the maximum $i$ such that
\begin{equation*}
w_k^*(s) + \left((1+\epsilon)^{-i}-(1+\epsilon)^{-(k+1)}\right)\cdot |s\cap E^*_k| > \frac{c_s}{1+\epsilon}.
\end{equation*}

Note that we have $l_T(s)\leq x$ if and only if
\begin{equation*}
w_k^*(s) + \left((1+\epsilon)^{-(x+1)}-(1+\epsilon)^{-(k+1)}\right)\cdot |s\cap E^*_k| \leq \frac{c_s}{1+\epsilon}.
\end{equation*}

When Round-$k$ begins, we have $l_T(s)\leq k$ since each slack set $s$ has weight $\leq \frac{c_s}{1+\epsilon}$.

Roughly speaking, $l_T(s)$ is the highest level $i$ at which $s$ is tight if we move $s$ and the elements it contains to level $i$.
Note that the target level of $s$ may change when the collection of slack elements changes.
It is shown in~\cite{focs/BhattacharyaHN19} that when the number of slack elements decreases, the target level of each slack set does not increase.
This is sufficient as~\cite{focs/BhattacharyaHN19} shows only an amortized update time guarantee.
However, to guarantee a worst case update time we need to commit the updates to the hierarchy that is being rebuild.
In this case the above property is not guaranteed.
Instead, we show in our algorithm that all slack sets in Round-$i$ have target level at most $i$.

\begin{algorithm}[H]
	\caption{\textsf{EfficientRebuild($k$)}}
	\label{alg:efficient_rebuild}
	\begin{algorithmic}[1]
		\State Copy elements in $A_k(\leq k)\cup P_k(\leq k)$ to $E^*_k$, and sets in $S_k(\leq k)$ to $S^*_k$.
		\State Initialize their weights as in line 2-5 of Algorithm~\ref{alg:rebuild_wc}, and set their levels to $\perp$ (undecided).
		\State Identify the tight sets and elements, assign them level $k+1$ and exclude them from $S^*_k$ and $E^*_k$.
		\For{$i$ from $k$ to $0$ (Round-$i$)}
		\Comment{settle the level of sets with target level $i$}
		\While{exists $s\in S^*_k$ with $l_T(s)=i$}
		\State $l^*_k(s) \leftarrow i$, $S^*_k \leftarrow S^*_k\setminus\{s\}$.
		\For{each $e \in s\cap E^*_k$}
		\State $w^*_k(e) \leftarrow (1+\epsilon)^{-i}$, $l^*_k(e) \leftarrow i$.
		\State $E^*_k \leftarrow E^*_k\setminus\{e\}$, and update $l_T(s')$ for all $s'$ containing $e$.
		\EndFor
		\EndWhile
		\EndFor
	\end{algorithmic}
\end{algorithm}

We maintain an array $\Gamma[0,\ldots,k]$ such that $\Gamma[i]$ points to a linked list of sets with target level $i$.
Hence in Round-$i$, we can identify a set with target level $i$ (line 5 of Algorithm~\ref{alg:efficient_rebuild}) in $O(1)$ time.

\subsubsection{Handling a Newly Arrived Update}

Next we describe the algorithm that handles an update arriving in Round-$i$.
For convenience we refer to line 1-3 of Algorithm~\ref{alg:efficient_rebuild} as Round-$(k+1)$.
Suppose element $e$ is deleted:
\begin{itemize}
	\item If $l^*_k(e) = \perp$, i.e., we have not decided the final level of element $e$, then $e$ will simply be removed.
	Note that for each $s$ containing $e$, the removal decreases the weight $w^*_k(s)$ of $s$ and possibly the target level $l_T(s)$.
	\item If $e$ has already settled its level (at some $j\in[i,k+1]$), then we convert it into a dead element (with the same weight) at level $j$.
\end{itemize}

Suppose element $e$ is inserted:
\begin{itemize}
	\item If $l^*_k(s) = \perp$ for all $s$ containing $e$, then we first try to include $e$ to $E^*_k$ with weight $(1+\epsilon)^{-(k+1)}$.
	Note that $e$ increases $w^*_k(s)$ by $(1+\epsilon)^{-(k+1)}$ and $|s\cap E^*_k|$ by one, for every $s$ containing $e$.
	If after inserting $e$ to $E^*_k$ we have $l_T(s)\leq i$ for all $s$ containing $e$ then we include $e$ in $E^*_k$ as stated above, and mark $e$ active.
	Recall that for each set $s$ we have $l_T(s) \leq i$ if
	\begin{equation*}
	w^*_k(s) + \left( (1+\epsilon)^{-(i+1)}-(1+\epsilon)^{-(k+1)} \right)\cdot |s\cap E^*_k| < \frac{c_s}{1+\epsilon}.
	\end{equation*}
	
	Otherwise we insert $e$ as a passive element with appropriate weight such that after the insertion all sets containing $e$ have target level at most $i$, and at least one of them has target level $i$.
	In this case, $e$ is not included in $E^*$, and will be passive at level $i$.
	
	Specifically, we maintain the invariant that all sets in $S^*_k$ have target level at most $i$ in Round-$i$.
	If there exists $s$ containing $e$ with $l_T(s) = i$ before the insertion then we insert $e$ as a passive element at level $i$ with $w^*_k(e) \leftarrow 0$.
	Otherwise we set $w^*_k(e)$ to be slightly larger than\footnote{Here ``slightly larger than'' means larger than but arbitrarily close to}
	\begin{equation*}
	\min_{s: e\in s} \left\{ \frac{c_s}{1+\epsilon} - w^*_k(s) - \left((1+\epsilon)^{-i}-(1+\epsilon)^{-(k+1)}\right)\cdot |s\cap E^*_k| \right\},
	\end{equation*}
	so that the maximum target level of sets containing $e$ becomes $i$.
	Set $e$ as a passive element at level $i$.
	Note that we can guarantee $w^*_k(e) > 0$ since $w^*_k(s) - \left((1+\epsilon)^{-i}-(1+\epsilon)^{-(k+1)}\right)\cdot |s\cap E^*_k| \leq \frac{c_s}{1+\epsilon}$ for all $s$ containing $e$ before the insertion.
	We can guarantee $w^*_k(e) \leq (1+\epsilon)^{-i}$ since otherwise for all $s$ containing $e$:
	\begin{align*}
	& \left(w^*_k(s) + (1+\epsilon)^{-(k+1)}\right) + \left((1+\epsilon)^{-(i+1)}-(1+\epsilon)^{-(k+1)}\right)\cdot (|s\cap E^*_k|+1) \\
	< & w^*_k(s) + \left((1+\epsilon)^{-i}-(1+\epsilon)^{-(k+1)}\right)\cdot |s\cap E^*_k| + (1+\epsilon)^{-i} < \frac{c_s}{1+\epsilon},
	\end{align*}
	which implies that $e$ should have been included in $E^*_k$ and become active.
	
	\item If there exists $l^*_k(s) \neq \perp$ for some $s$ containing $e$, then we set $l^*_k(e) \leftarrow \max_{s: e\in s \wedge l^*_k(s)\neq \perp} \{l^*_k(s)\}$ and $e$ is passive.
	If $l^*_k(e) \geq 1$ then we set $w^*_k(e) \leftarrow 0$.
	Otherwise $l^*_k(e) = 0$, and we set $w^*_k(e) \leftarrow \min_{s: e\in s}\{c_s-w^*_k(s)\} \leq 1$, and identify the sets containing $e$ that become tight and mark them as tight.
\end{itemize}

As before, if $e$ arrives in Round-$i$ and becomes passive or dead, then we have $l^*_k(e) \geq i$.

\paragraph{Correctness.}
We show that every set (and element) will be assigned a level in $[k+1]$ when Rebuild($k$) finishes.
Specifically, we show that in Round-$i$, all slack sets have level at most $i$.
Since we identify the sets with target level $i$ in Round-$i$ and settle their levels, the following lemma implies that all slack sets have target level $0$ in Round-$0$, and will have their level decided after Round-$0$.

\begin{lemma}\label{lemma:bounded-target-level}
	In Round-$i$, every set in $S^*_k$ has target level at most $i$.
\end{lemma}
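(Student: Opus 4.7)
The plan is to prove the lemma by reverse induction on $i$, starting from $i = k$ and going down to $i = 0$, and to show that every event that can modify a target level preserves the invariant ``all slack sets have target level at most $i$'' throughout Round-$i$.

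For the base case, consider the start of Round-$k$ (i.e., just after Round-$(k+1)$ finishes). Every set $s$ still in $S^*_k$ was not tight at level $k+1$, so $w^*_k(s) \le c_s/(1+\epsilon)$, and since $(1+\epsilon)^{-(k+1)}-(1+\epsilon)^{-(k+1)} = 0$, the definition of target level immediately yields $l_T(s) \le k$.

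For the inductive step, assume the invariant holds in Round-$(i+1)$ and consider the transition into Round-$i$. Two kinds of events occur. First, during Round-$(i+1)$ we settle sets $s$ with $l_T(s) = i+1$, which removes the slack elements $e \in s \cap E^*_k$ from $E^*_k$ and assigns them weight $(1+\epsilon)^{-(i+1)}$; for any other slack set $s'$ sharing such an element, $|s' \cap E^*_k|$ decreases by the number of shared elements while $w^*_k(s')$ increases by exactly $((1+\epsilon)^{-(i+1)} - (1+\epsilon)^{-(k+1)})$ times that same number. A direct algebraic check shows that the quantity $w^*_k(s') + ((1+\epsilon)^{-(i+1)} - (1+\epsilon)^{-(k+1)}) \cdot |s' \cap E^*_k|$ is preserved under this move, so $l_T(s') \le i+1$ is preserved; in fact, for the target level to exceed $i$ one would need the strict inequality at $x = i$ to fail, and monotonicity of this expression in the element weights guarantees that it cannot suddenly become larger. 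Hence at the moment Round-$i$ begins, every surviving slack set has $l_T(s') \le i$, because otherwise it would already have been settled in Round-$(i+1)$.

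Second, we must handle updates arriving during Round-$i$. A deletion of $e$ either removes $e$ from $E^*_k$ (decreasing $w^*_k(s')$ and $|s' \cap E^*_k|$, which can only decrease the left-hand side of the target-level inequality and hence $l_T(s')$) or converts $e$ into a dead element at an already settled level (which changes nothing for slack sets). An insertion of $e$ is explicitly adjudicated by the update handler precisely so as to preserve the invariant: if inserting $e$ as an active element with weight $(1+\epsilon)^{-(k+1)}$ keeps $l_T(s') \le i$ for every $s' \ni e$, we do so; otherwise we insert $e$ as a passive element and pick $w^*_k(e) \in (0, (1+\epsilon)^{-i}]$ to be just large enough that the resulting target levels of all $s' \ni e$ are at most $i$, using the explicit formula in the update handler. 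The verification that this choice keeps every $l_T(s') \le i$ (and that the required weight indeed lies in $(0,(1+\epsilon)^{-i}]$) is exactly the calculation already carried out in the description of the update handler, so combining the inductive hypothesis with these case analyses closes the induction.

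The main obstacle I expect is the second case: checking carefully that the weight chosen for a newly inserted passive element simultaneously (a) lies in the admissible range $(0,(1+\epsilon)^{-i}]$ and (b) pins every target level of a containing set down to at most $i$ without accidentally forcing some other containing set above $i$. This reduces to the monotonicity of the target-level condition in $w^*_k(s')$ and $|s' \cap E^*_k|$, but it is the one place where the bookkeeping must be done explicitly rather than by a black-box invariance argument.
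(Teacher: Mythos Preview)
Your overall strategy matches the paper's: induct downward on $i$, and show that every event during a round preserves the invariant ``all slack sets have target level at most the current round index''. The base case and the treatment of deletions/insertions are essentially the same as the paper's cases (2) and (3).

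However, your inductive step has a structural gap. You organize it as: (i) analyze settling during Round-$(i+1)$ to conclude that at the start of Round-$i$ every surviving slack set has $l_T \le i$; (ii) analyze updates arriving during Round-$i$. What is missing is the effect of \emph{settling during Round-$i$ itself}. When a set $s$ with $l_T(s)=i$ is settled in Round-$i$, its elements are moved to level $i$ with weight $(1+\epsilon)^{-i}$, and this changes both $w^*_k(s')$ and $|s'\cap E^*_k|$ for every other slack $s'$ sharing those elements. You must argue that this keeps $l_T(s')\le i$; the paper handles this explicitly as its case (1), and it is not subsumed by your Round-$(i+1)$ analysis.

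A related slip: your ``direct algebraic check'' for Round-$(i+1)$ settling actually preserves the quantity
\[
w^*_k(s') + \bigl((1+\epsilon)^{-(i+1)} - (1+\epsilon)^{-(k+1)}\bigr)\cdot |s'\cap E^*_k|,
\]
which is the test for $l_T(s')\le i$, not $l_T(s')\le i+1$ as you wrote. This is harmless for the transition into Round-$i$, but if you try to reuse the identical computation for Round-$i$ settling (moving an element to weight $(1+\epsilon)^{-i}$), you only preserve the test for $l_T(s')\le i-1$, which is too weak when $l_T(s')=i$. So case (1) in Round-$i$ genuinely requires its own argument; the paper supplies one by arguing directly that, given $l_T(s')\le i$ beforehand, the simultaneous increase in $w^*_k(s')$ and decrease in $|s'\cap E^*_k|$ cannot push the target level up.
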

\begin{proof}
	We prove that for every $i\in[0,k]$, if when Round-$i$ begins all sets in $S^*_k$ have target level at most $i$, then during Round-$i$ all sets in $S^*_k$ have target level at most $i$.
	Note that Round-$i$ ends only if there is no set with target level $i$, we can guarantee that all sets in $S^*_k$ have target level at most $i-1$ when Round-$i$ ends.
	As we have shown, when Round-$k$ begins we have $l_T(s)\leq k$ for all $s\in S^*_k$.
	Hence the above statement implies the lemma.
	
	Suppose that every set in $S^*_k$ has target level at most $i$ when Round-$i$ begins.
	We show that there does not exist any set with target level larger than $i$ during Round-$i$.
	
	Consider any set $s\in S^*_k$ in Round-$i$. 
	The target level $l_T(s)$ of $s$ changes either (1) when some element in $s$ is moved to level $i$ in Round-$i$, or (2) when an element in $s$ is deleted and removed in Round-$i$; or (3) when a new element is inserted and contained in $s$ in Round-$i$.
	\begin{itemize}
		\item In case (1), $w^*_k(s)$ increases by $(1+\epsilon)^{-i} - (1+\epsilon)^{-(k+1)}$, and $|s\cap E^*_k|$ decreases by one.
		By definition of target level, $l_T(s)$ does not increase, given that originally $l_T(s)\leq i$.
		\item In case (2), both $w^*_k(s)$ and $|s\cap E^*_k|$ decrease, and thus $l_T(s)$ does not increase.
		\item In case (3), suppose $e$ is inserted.
		If $e$ is inserted as an active element in $E^*_k$ then our algorithm guarantees that $l_T(s) \leq i$ for all $s$ containing $e$ after the insertion.
		If $e$ is inserted as a passive element with weight $0$ then $l_T(s)$ does not change.
		Otherwise $e$ is inserted as a passive element with weight larger than (but arbitrarily close to)
		\begin{equation*}
		\min_{s': e\in s'} \left\{ \frac{c_{s'}}{1+\epsilon} - w^*_k(s') - \left((1+\epsilon)^{-i}-(1+\epsilon)^{-(k+1)}\right)\cdot |s'\cap E^*_k| \right\}.
		\end{equation*}
		
		Consider any $s'$ (including $s$) containing $e$ after the insertion.
		If $|s'\cap E^*_k| = 0$, when by definition we have $l_T(s') = 0\leq i$ (before and after the insertion of $e$).
		Otherwise $|s'\cap E^*_k| \geq 1$ and we have
		\begin{align*}
		& w^*_k(e) + w^*_k(s') + ((1+\epsilon)^{-(i+1)}-(1+\epsilon)^{-(k+1)})\cdot |s'\cap E^*_k| \\
		\leq & w^*_k(e) + w^*_k(s') + ((1+\epsilon)^{-i}-(1+\epsilon)^{-(k+1)})\cdot |s'\cap E^*_k| - ((1+\epsilon)^{-i}-(1+\epsilon)^{-(i+1)}) < \frac{c_{s'}}{1+\epsilon},
		\end{align*}
		which implies $l_T(s') \leq i$.
	\end{itemize}

	Hence set $s$ have target level at most $i$ during Round-$i$, which implies the lemma.
\end{proof}

\subsubsection{Properties of the Hierarchy}

It is easy to check that Property~\ref{property}(a) and (c) are satisfied by the construction.
Next we show that Property~\ref{property}(b) is satisfied.
Recall that we set $\delta^*_k(s) = \delta_k(s)$ for all $s$ in $H^*_k$.

\begin{lemma}
	When EfficientRebuild($k$) finishes, all sets at levels $[1,k+1]$ are tight.
\end{lemma}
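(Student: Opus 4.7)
The plan is to show, for each set $s$ that ultimately lies at some level $j \in [1,k{+}1]$ in $H^*_k$, two things: (i) the moment $s$ is assigned its final level the inequality $w^*_k(s) > c_s/(1{+}\epsilon)$ already holds, and (ii) no subsequent event during EfficientRebuild($k$) can drive $w^*_k(s)$ back below that threshold. The first claim uses the definition of target level (together with Lemma~\ref{lemma:bounded-target-level}); the second claim amounts to a case analysis of the update-handling rules.

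For claim (i), the case $j=k{+}1$ is immediate: sets assigned level $k{+}1$ in line 3 of Algorithm~\ref{alg:efficient_rebuild} are exactly those that passed the explicit tightness test. For $j \in [1,k]$, a set $s$ is settled in Round-$j$ precisely when $l_T(s)=j$, which by the definition of target level means
\[
w^*_k(s) + \bigl((1{+}\epsilon)^{-j} - (1{+}\epsilon)^{-(k{+}1)}\bigr)\cdot |s \cap E^*_k| \;>\; \frac{c_s}{1{+}\epsilon}
\]
at that instant. Settling $s$ then moves every element in $s \cap E^*_k$ from level $k{+}1$ (weight $(1{+}\epsilon)^{-(k{+}1)}$) down to level $j$ (weight $(1{+}\epsilon)^{-j}$), which increases $w^*_k(s)$ by exactly $((1{+}\epsilon)^{-j}-(1{+}\epsilon)^{-(k{+}1)})\cdot |s\cap E^*_k|$. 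Hence $w^*_k(s) > c_s/(1{+}\epsilon)$ immediately after $s$ is settled.

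For claim (ii), observe that right after $s$ is settled every element of $s$ has a settled level: the elements in $s\cap E^*_k$ at the moment of settling are placed at level $j$, while every other element of $s$ was already removed from $E^*_k$ earlier (either as tight at level $k{+}1$ or when some other set containing it was settled). After this point, the only events that can touch $w^*_k(s)$ are (a) a deletion of a settled element of $s$, which merely converts it to a dead element at the same level and weight, leaving its contribution to $w^*_k(s)$ unchanged; (b) an insertion of a new element $e \in s$, which by the update rules receives level $l^*_k(e) = \max_{s':\,e \in s',\,l^*_k(s')\neq \perp}\{l^*_k(s')\} \geq j \geq 1$ and is therefore assigned weight $w^*_k(e) = 0$; and (c) changes in $\delta^*_k(s)$, which by Property~\ref{property}(b) are fixed to $\delta_k(s)$ throughout the rebuild. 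In every case $w^*_k(s)$ does not decrease, so the tightness $w^*_k(s) > c_s/(1{+}\epsilon)$ persists until EfficientRebuild($k$) terminates.

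The main obstacle is verifying claim (ii) in the presence of concurrent updates — in particular, ruling out the scenario where an insertion arriving after $s$ is settled could drop the weight or where a cascade of deletions among still-unsettled elements of another set could indirectly affect $s$. This is resolved by the two structural facts noted above: once $s$ has a settled level $j \geq 1$, the insertion rule forces every newly inserted element of $s$ to have weight $0$, and deletions never strip away weight because dead elements retain their full weight contribution. A small auxiliary check — that when $s$ is settled the unsettled elements of $s$ really are exactly those in $s\cap E^*_k$, so that no “orphaned” unsettled element lingers in $s$ to be demoted later — follows by induction on the order in which sets are settled, and completes the argument.
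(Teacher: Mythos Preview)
Your proof is correct and follows essentially the same two-step approach as the paper: establish tightness at the moment a set is settled (via the definition of target level), then argue that the weight cannot drop below the threshold afterward. The paper's own proof is considerably terser---it simply asserts that ``the weight of $s$ does not change afterwards''---whereas your case analysis of concurrent insertions and deletions makes this claim explicit and is a welcome elaboration.
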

\begin{proof}
	Note that each set $s$ settles its level at $i$ if and only if it has target level $i$.
	By definition of target level, when we set $l^*_k(s) = i > 0$, the weight of $s$ is more than $\frac{c_s}{1+\epsilon}$.
	Moreover, by definition of target level, if we set $l^*_k(s) = i+1$ then $w^*_k(s) \leq \frac{c_s}{1+\epsilon}$.
	Hence we have $w^*_k(s) \in (\frac{c_s}{1+\epsilon}, 1]$ when we set $l^*_k(s) = i$.
	Since the weight of $s$ does not change afterwards, $s$ remains tight when EfficientRebuild($k$) finishes.
\end{proof}

Next we show that Property~\ref{property} (e) is satisfied.
By Lemma~\ref{lemma:bounded-target-level}, every element will be scanned exactly once in EfficientRebuild($k$).
Moreover, the algorithm spends $O(f)$ time to settle an element and update the data structure.
Hence immediately we have the following.

\begin{claim}\label{claim:efficient-rebuild-time}
	EfficientRebuild($k$) spends $O(f\cdot |E^*_k(\leq i)|)$ time to construct levels $[0,i]$.
\end{claim}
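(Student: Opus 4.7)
The plan is to charge all work done by the algorithm across Rounds $i, i-1, \ldots, 0$ to the elements that ultimately settle at these levels, and show that each such element pays only $O(f)$.

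First, I would argue that each element $e \in E^*_k$ is extracted from $E^*_k$ exactly once during the entire execution. By Lemma~\ref{lemma:bounded-target-level}, every set remaining in $S^*_k$ at the start of Round-$j$ has target level at most $j$, and Round-$j$ only terminates when no set has target level exactly $j$. So the sets settled during Round-$j$ are precisely those $s$ with $l^*_k(s) = j$, and the elements removed from $E^*_k$ in Round-$j$ are precisely those belonging to such sets, i.e., the elements that settle at level $j$. Consequently, the elements processed during Rounds $i, i-1, \ldots, 0$ form a subset of $E^*_k(\leq i)$.

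Next, I would analyze the cost per element extraction. When an element $e$ is removed from $E^*_k$ (lines 7--10), the algorithm updates $w^*_k(s')$ and $|s' \cap E^*_k|$, and recomputes $l_T(s')$ for each of the at most $f$ sets $s'$ containing $e$. With the auxiliary array $\Gamma$, moving $s'$ between the linked list $\Gamma[l_T(s')]$ buckets takes $O(1)$ per update. Hence each element extraction costs $O(f)$, contributing $O(f \cdot |E^*_k(\leq i)|)$ in total. The cost of actually settling a set (removing it from $S^*_k$, writing $l^*_k(s) \leftarrow i$) is $O(1)$ per set, and the number of sets settled at levels $\leq i$ is bounded by $\sum_{e \in E^*_k(\leq i)} |\{s : e \in s\}| \leq f \cdot |E^*_k(\leq i)|$, together with possibly some sets whose settlement was triggered by the initial scan at level $k+1$ (tight on extra-weight alone), which the algorithm may skip via a small bookkeeping trick or amortize against the $O(k)$ term already absorbed by the assumption $O(f|E_k(\leq k)| + k) = O(f|E_k(\leq k)|)$.

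The main subtlety is verifying that the round-loop overhead (advancing the outer index through potentially empty levels) does not blow up the bound. The resolution is to maintain the buckets $\Gamma[\cdot]$ as entries of a doubly linked list of non-empty target levels, so we skip from one non-empty $\Gamma[j]$ directly to the next, paying $O(1)$ per non-empty level visited. Combined with the per-element and per-set analyses above, the total work done during Rounds $i, i-1, \ldots, 0$ is $O(f \cdot |E^*_k(\leq i)|)$, establishing the claim.
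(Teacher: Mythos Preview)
Your proposal is correct and follows essentially the same approach as the paper: each element is settled exactly once (via Lemma~\ref{lemma:bounded-target-level}), settling an element costs $O(f)$ to update the containing sets and their target levels, and the elements settled in Rounds $i,\ldots,0$ are exactly $E^*_k(\leq i)$. The paper's own argument is in fact just the two sentences preceding the claim; you have filled in more detail, in particular the round-loop overhead and the linked-list trick for skipping empty $\Gamma$ buckets, which the paper silently absorbs into the ambient $+k$ term it declared negligible earlier.
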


Specifically, construction of $H^*_k$ takes total time $O(f\cdot |E^*_k(\leq k+1)|) = O(f\cdot |E_k(\leq k)|) \leq \frac{\epsilon}{2L}\cdot |E_k(\leq k)|\cdot \lambda$, if we assume $\lambda = c\cdot\frac{1}{\epsilon}\cdot f\cdot L$, for some sufficiently large constant $c$.

\medskip

Finally, we show that Property~\ref{property}(d) is satisfied.
Recall that Scheduler($k$) runs $L$ parallel copies of EfficientRebuild($k$).

\begin{lemma}
	When EfficientRebuild($k$) finishes, for all $i \leq k$, $|P^*_k(\leq i)\cup D^*_k(\leq i)| < \epsilon\cdot |A^*_k(\leq i)|$.
\end{lemma}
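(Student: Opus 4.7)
The proof plan is to mirror the structure of Lemma~\ref{lemma:local-element-when-scheduler-finishes}, plugging in the improved running time from Claim~\ref{claim:efficient-rebuild-time}. There are two ingredients: (i) a structural fact charging each element in $P^*_k(\leq i)\cup D^*_k(\leq i)$ to a distinct external update that arrived while EfficientRebuild($k$) was constructing the levels $[0,i]$, and (ii) a counting bound showing that the number of such updates is at most $\tfrac{\epsilon}{3}\cdot |E^*_k(\leq i)|$.

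For the structural fact, I would trace through the update handlers of EfficientRebuild($k$) case by case. Suppose $e\in P^*_k(j)$ with $j\leq i$. Examining the insertion handler, $e$ can become passive only in the two sub-cases ``$l^*_k(s)=\perp$ for all $s\ni e$, but inserting $e$ into $E^*_k$ would violate the target-level invariant'' (which places $e$ at the level of the current round $j'$), and ``$l^*_k(s)\neq\perp$ for some $s\ni e$'' (which sets $l^*_k(e)$ to the maximum of the already-settled levels, all of which are $\geq j'$). Either way, $l^*_k(e)\geq j'$ where $j'$ is the round during which the insertion arrives, so $j\leq i$ forces $j'\leq i$, i.e., the insertion arrives while levels $[0,i]$ are being constructed. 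Analogously, for $e\in D^*_k(j)$ with $j\leq i$, the deletion handler converts $e$ into a dead element at some already-settled level; since $e$'s level settled during Round-$j\leq i$, the deletion event (which must follow the settlement) also arrives during construction of levels $[0,i]$. Mapping each $e$ to its insertion event (if passive) or its deletion event (if dead) gives an injection into the set of external updates arriving during this window.

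For the counting bound, by Claim~\ref{claim:efficient-rebuild-time} a single copy of EfficientRebuild($k$) spends only $O(f\cdot|E^*_k(\leq i)|)$ units of work to construct levels $[0,i]$. Scheduler($k$) runs $O(L)$ copies in parallel, so by the allocation of work implicit in $\lambda = c\cdot\tfrac{1}{\epsilon}\cdot fL$ (with $c$ chosen sufficiently large), each copy completes at least $\Omega(f/\epsilon)$ units of work per elapsed time $\lambda$. Hence the wall-clock duration of constructing levels $[0,i]$ is at most $\tfrac{\epsilon}{3}\cdot|E^*_k(\leq i)|\cdot\lambda$, during which at most $\tfrac{\epsilon}{3}\cdot|E^*_k(\leq i)|$ external updates arrive. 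Combining this with the injection yields
\[
|P^*_k(\leq i)\cup D^*_k(\leq i)|\;\leq\;\tfrac{\epsilon}{3}\cdot|E^*_k(\leq i)|\;=\;\tfrac{\epsilon}{3}\cdot\bigl(|A^*_k(\leq i)|+|P^*_k(\leq i)\cup D^*_k(\leq i)|\bigr),
\]
which rearranges to $|P^*_k(\leq i)\cup D^*_k(\leq i)|\leq \tfrac{\epsilon}{3-\epsilon}\cdot |A^*_k(\leq i)|<\epsilon\cdot |A^*_k(\leq i)|$, exactly as in the closing step of Lemma~\ref{lemma:local-element-when-scheduler-finishes}.

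The main obstacle I expect is the structural fact. Unlike the simple rebuild of Algorithm~\ref{alg:rebuild_wc}, EfficientRebuild does not process elements strictly round-by-round: its insertion handler may place a newly-inserted active element into $E^*_k$ with undecided level, possibly causing it to settle at a level well below the round in which it arrived, and later be deleted to become dead at that lower level. The delicate point is verifying that in every such scenario, the update event we are charging (either insertion for passives or deletion for deads) provably lies within the ``levels $[0,i]$ construction'' window, and not before it. This relies crucially on Lemma~\ref{lemma:bounded-target-level} (all settled levels in Round-$j$ are $\leq j$) together with the fact that rounds proceed in decreasing order, so that any level $\leq i$ settlement implies the global clock has reached Round-$\leq i$.
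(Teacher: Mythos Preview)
Your proposal is correct and follows essentially the same approach as the paper's own proof: establish that any passive or dead element at level $\leq i$ must originate from an update arriving during the construction of levels $[0,i]$ (the paper phrases this as the contrapositive, ``an update arriving in Round-$j$ that creates a passive or dead element $e$ satisfies $l^*_k(e)\geq j$''), then combine Claim~\ref{claim:efficient-rebuild-time} with the $O(L)$ parallel copies and the choice of $\lambda$ to bound the number of such updates, and finish with the same $\tfrac{\epsilon}{3-\epsilon}$ arithmetic as in Lemma~\ref{lemma:local-element-when-scheduler-finishes}. The obstacle you anticipate---an element inserted as active, later settled at a low level, then deleted---is handled exactly as you describe (and as the paper does implicitly): you charge the dead element to its \emph{deletion} event, which necessarily occurs after the level is settled and hence in a round $\leq i$.
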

\begin{proof}
	We first show that if an update that arrives in Round-$j$ creates a passive or dead element $e$, then we have $l^*_k(e) \geq j$. 
	Claim~\ref{claim:efficient-rebuild-time} implies that Scheduler($k$) takes $O(f\cdot L\cdot |E^*_k(\leq i)|)$ time to construct levels $[0,i]$.
	Since updates arrive every $\lambda = O(\frac{f}{\epsilon}\cdot L)$ time, the lemma follows immediately.
	
	Consider any update that arrives in Round-$j$ and inserts or deletes element $e$.
	
	If $e$ is deleted, then either $e$ will be removed immediately, or it will be assign to the level of some set $s$ containing $e$ whose level is already decided. In the later case we have $l^*_k(e) \geq j$.
	If $e$ is inserted, then either (1) $e$ becomes active; or (2) $e$ is assign to the level of some set $s$ containing $e$ whose level is already decided, which has level at least $j$; or (3) $e$ is given an appropriate weight such that at least one set $s$ containing $e$ has target level $i$.
	In the last case $e$ will be assigned level $j$ in Round-$j$, which completes the case analysis.	
\end{proof}

\paragraph{Acknowledgements:} 
The project has received funding from the Engineering and Physical Sciences Research Council, UK (EPSRC) under Grant Ref: EP/S03353X/1.

The research leading to these results has received funding from the European Research Council under the
European Union's Seventh Framework Programme (FP/2007-2013) / ERC Grant Agreement no. 340506. 

This project has received funding from the European Research Council (ERC) under the European Union's Horizon 2020 research and innovation programme under grant agreement No 715672. Nanongkai
was also supported by the Swedish Research Council (Reg. No. 2015-04659).

Funded by The Science and Technology Development Fund, Macau SAR (File no. SKL-IOTSC-2018-2020), the Start-up Research Grant of University of Macau (File no. SRG2020-00020-IOTSC).

{
	\bibliography{setcover}
	\bibliographystyle{alpha}
}

\end{document}